\newtheorem{theorem}{Theorem}[section]
\newtheorem{prop}[theorem]{Proposition}
\newtheorem{lemma}[theorem]{Lemma}
\newtheorem{cor}[theorem]{Corollary}
\newtheorem{claim}[theorem]{Claim}
\theoremstyle{definition}
\newtheorem{defn}[theorem]{Definition}
\newtheorem*{defn-non}{Definition}
\newtheorem{ques}[theorem]{Question}
\newtheorem{rmk}[theorem]{Remark}
\newenvironment{poc}{\begin{proof}[Proof of claim]}{\end{proof}}
\newcommand{\PP}{\mathcal{P}}
\newcommand{\WW}{\mathcal{W}}
\newcommand{\QQ}{\mathcal{Q}}
\newcommand{\CC}{\mathcal{C}}
\newcommand{\floor}[1]{\lfloor #1\rfloor}
\newcommand{\supp}{\mathrm{supp}}
\title{Private Information Retrieval over Graphs}
\author{
Gennian Ge\thanks{School of Mathematical Sciences, Capital Normal University, Beijing 100048, China. Emails: 2220502174@cnu.edu.cn and gnge@zju.edu.cn. Gennian Ge is supported by the National Key Research and Development Program of China under Grant 2020YFA0712100, the National Natural Science Foundation of China under Grant 12231014, and Beijing Scholars Program.}
\and 
Hao Wang\footnotemark[3]
\and
Zixiang Xu\thanks{Extremal Combinatorics and Probability Group (ECOPRO), Institute for Basic Science (IBS), Daejeon 34126, South Korea. Email: zixiangxu@ibs.re.kr. Supported by IBS-R029-C4.}
\and
Yijun Zhang\thanks{School of Mathematical Sciences, University of Science and Technology of China, Hefei, Anhui 230026, China. Email: zyjshuxue@mail.ustc.edu.cn.}
}
\begin{document}
\maketitle

\begin{abstract}

The problem of Private Information Retrieval (PIR) in graph-based replication systems has received significant attention in recent years. A systematic study was conducted by Sadeh, Gu, and Tamo, where each file is replicated across two servers and the storage topology is modeled by a graph. The PIR capacity of a graph \( G \), denoted by \( \mathcal{C}(G) \), is defined as the supremum of retrieval rates achievable by schemes that preserve user privacy, with the rate measured as the ratio between the file size and the total number of bits downloaded. This paper makes the following key contributions.
\begin{itemize}
    \item The complete graph \( K_N \) has emerged as a central benchmark in the study of PIR over graphs. The asymptotic gap between the upper and lower bounds for \(\mathcal{C}(K_{N})\) was previously 2 and was only recently reduced to \( \frac{5}{3} \) by Kong, Meel, Maranzatto, Tamo, and Ulukus. We shrink this gap to \( \frac{3}{4(e-2)} \approx 1.0444 \), bringing it close to resolution. More precisely,
    \begin{enumerate}
 \item[\textup{(1)}]  Sadeh, Gu, and Tamo proved that $\mathcal{C}(K_N)\le \frac{2}{N+1}$ and conjectured this bound to be tight. We refute this conjecture by establishing the strictly stronger bound \(
\mathcal{C}(K_N) \le \frac{1}{e-2}\cdot\frac{1}{N}
 \approx \frac{1.3922}{N}. \)
Our proof combines graph-theoretic insights with information-theoretic analysis. Within the same framework, we also improve the upper bound for the balanced complete bipartite graph \(\mathcal{C}(K_{N/2,N/2})\) from \(\frac{2}{N}\) to \(\frac{1.5415}{N}\) asymptotically.
\item[\textup{(2)}] The first general lower bound on $\mathcal{C}(K_N)$, due to Sadeh, Gu, and Tamo, was
$\frac{2^{N-1}}{2^{N-1}-1}\cdot\frac{1}{N}$, which was recently sharpened to
$\frac{6}{5-2^{3-N}}\cdot\frac{1}{N}$ by Kong, Meel, Maranzatto, Tamo, and Ulukus.
We provide explicit, systematic constructions that further improve this bound, proving \(\mathcal{C}(K_N)\ \ge \Bigl(\frac{4}{3}-o(1)\Bigr)\cdot\frac{1}{N},\)
which in particular implies $\mathcal{C}(G) \ge \Bigl(\frac{4}{3}-o(1)\Bigr)\cdot\frac{1}{|G|}$ for every graph $G$.
       \end{enumerate}
    \item We establish a conceptual bridge between deterministic and probabilistic PIR schemes on graphs. This connection has significant implications for reducing the required subpacketization in practical implementations and is of independent interest. We also design a general probabilistic PIR scheme that performs particularly well on sparse graphs.

\end{itemize}
\end{abstract}

\newpage
\tableofcontents
\cleardoublepage

\section{Introduction}
\subsection{Background}
Private Information Retrieval (PIR), introduced by Chor, Goldreich, Kushilevitz, and Sudan~\cite{Chor1998private}, enables a user to retrieve a specific file from a public (replicated) database without revealing its identity (\emph{privacy requirement}) while guaranteeing successful reconstruction (\emph{reliability requirement}). In the information-theoretic framework established by Sun and Jafar~\cite{Sun2017capacity}, servers are assumed non-colluding and file length can be arbitrarily large, rendering the upload (query) cost negligible so that the download (answer) cost dominates. Under this model, the efficiency of a PIR scheme is measured by the \emph{PIR rate}, which is the number of desired message bits retrieved per bit of download, and the supremum over all schemes is the \emph{PIR capacity}, which is determined exactly in the fully replicated setting~\cite{Sun2017capacity}. Capacity-achieving designs typically fall into two complementary paradigms: (i) \emph{deterministic} coding schemes inspired by blind interference alignment~\cite{Sun2017capacity}, and (ii) \emph{probabilistic} schemes whose randomized queries keep every candidate message equiprobable at each server~\cite{Samy2021asymmetric,Tian2019capacity}.

A crucial systems consideration is \emph{subpacketization}, the segmentation of each file into multiple equal-sized parts during retrieval~\cite{Shah2014subpacketization}. Excessive subpacketization can impose unrealistic constraints on file sizes, metadata overheads, and update complexity, limiting deployability even when rates are information-theoretically optimal. The importance of subpacketization extends beyond PIR, most notably to \emph{coded caching}, where reducing subpacketization is a central practical challenge and has driven extensive research~\cite{Yan2017caching,Yan2017cachingcolor,xu2024caching,shangguan2018caching}. Thus, both capacity and subpacketization must be addressed for implementable protocols.

While coding for storage is powerful and widely studied, large-scale systems still heavily rely on \emph{replication} for simplicity, ease of updates, and high availability, despite lower storage efficiency and fault tolerance compared to coded storage~\cite{Silberstein2015repetitioncode,Sipser1996expandercode,Yohananov2019codegraph}. At the same time, fully replicating every file across all servers becomes increasingly impractical at modern scales. This tension motivates PIR under \emph{partial replication}, naturally modeled by (hyper)graphs: vertices represent servers and (hyper)edges represent files, each stored on the incident servers. When each file is stored on exactly $r$ servers, we obtain an $r$-replication system; in particular, $2$-replication corresponds to simple graphs~\cite{Raviv2020graph}. Within this framework, graph topology directly affects privacy structure and retrieval efficiency.

Building on this viewpoint, a growing body of work has analyzed graph-based PIR. For $2$-replication (simple graphs), Raviv, Tamo and Yaakobi~\cite{Raviv2020graph} initiated the study under colluding servers and showed how topology shapes achievable schemes. Capacity bounds for regular graphs and exact capacity for cycles were obtained in~\cite{Banawan2019graph}. Upper bounds and improved constructions for several canonical topologies, such as complete graphs, stars, and complete bipartite graphs were developed in~\cite{Sadeh2023bound}, while the exact capacity of the $4$-file star was determined in~\cite{Yao2023star}. More recently, Kong, Meel, Maranzatto, Tamo, and Ulukus~\cite{kong2025newcapacityboundspir} extended this line to \emph{multigraph-based} PIR with finite uniform multiplicity, in which each edge of the underlying simple graph is replaced by parallel edges, enabling richer replication patterns and new design possibilities. 

Parallel to these advances, several orthogonal generalizations have been pursued: collusion-resistant PIR and its fundamental limits~\cite{Sun2017colluding,Sun2018colluding,Freij2017colluding}, coded PIR with storage-retrieval tradeoffs via MDS coding~\cite{Banawan2018coded,Kumar2019MDScoded,Tajeddine2018MDScoded,Zhang2019codedcolluding,Zhang2019PIRMDS,Zhang2019PIRarray}, and symmetric PIR that enforces mutual privacy so the user learns nothing beyond the requested file~\cite{Sun2019symmetric,Wang2019symmetric}. Within this framework, Meel and Ulukus~\cite{meel2025symmetric} further investigated graph-based symmetric PIR and established fundamental results in this setting. We refer to~\cite{Ulukus2022survey,Vithana2023survey} for comprehensive surveys.

Despite substantial progress, the capacity of general (hyper)graph-based PIR systems remains largely unresolved even without collusion; this is true already for basic $2$-replication models. Among graph families, \emph{complete graphs} play a pivotal role: as observed in~\cite{Sadeh2023bound} and formalized in our work, any PIR scheme over the complete graph $K_N$ can be converted into a PIR scheme over \emph{any} $N$-vertex simple graph (see~\cref{lemma:graph contain}). Consequently, tightening the capacity bounds for $K_N$ is a key step toward understanding broader classes of graphs. In this paper, we significantly sharpen these bounds for complete (and balanced complete bipartite) graphs, and we build a conceptual bridge between deterministic and probabilistic graph-based PIR that enables \emph{very small subpacketization}, especially on sparse topologies.

\subsection{Problem statement and notations}\label{section:problemstatement}
For a positive integer \( a \), let \([a] := \{1, 2, \ldots, a\}\). For integers \( a < b \), the discrete interval is denoted by \([a, b] := \{a, a+1, \ldots, b\}\). Given a family \( \{A_i\}_{i \in I} \) of objects indexed by a set \( I \), and a subset \( J \subseteq I \), we denote the corresponding subcollection by \( A_J := \{A_i : i \in J\} \). For discrete random variables \( A \) and \( B \), \( H(A) \) denotes the entropy of \( A \), and \( I(A; B) \) the mutual information between \( A \) and \( B \). Furthermore, for a collection of random variables \( \mathcal{A} = \{A_1, A_2, \dots, A_n\} \), we use the shorthand \( H(\mathcal{A}) = H(A_1, A_2, \dots, A_n) \). The notations \( H(B \mid \mathcal{A}) \) and \( I(\mathcal{A}; B) \) are defined analogously.

In this work, we study the PIR problem in the setting of graph-based storage systems. Specifically, we mainly consider a \emph{2-replication system} modeled by a simple undirected graph, where each file is stored on exactly two servers, and any pair of servers shares at most one file. We further assume that servers are \emph{non-colluding}, meaning that they do not exchange information with one another.

Let \( G = (V, E) \) be a simple undirected graph with \( |V| = N \) vertices and \( |E| = K \) edges. The vertices \( V = \{S_1, S_2, \dots, S_N\} \) represent servers, and each edge \( e = \{i, j\} \in E \) corresponds to a unique file \( W_{\{i,j\}} \) stored on both servers \( S_i \) and \( S_j \). For notational simplicity, we denote \( W_{\{i,j\}} \) simply as \( W_{i,j} \) (or equivalently, \( W_{j,i} \)).

Let \( \mathcal{W} = \{W_e : e \in E\} \) denote the set of total files in the system. For each server \( S_i \), we define the set of files stored on it as
\[
\mathcal{W}_{S_i} := \{ W_{i,j} : (i,j) \in E \}.
\]
As standard, we assume all files are of equal size. This model is widely applicable, as it captures a key aspect of many distributed storage systems: their architecture is fundamentally defined by an underlying graph structure that governs file placement.

In a PIR scheme, each file is typically divided into several equal-sized sub-files. For instance, a file $W$ may be partitioned as $W = (W_1, W_2, W_3)$. To ensure privacy, the user privately and uniformly selects a random permutation $\sigma$ of the index set $[3]$, and defines the permuted sequence $(w_1, w_2, w_3) := (W_{\sigma(1)}, W_{\sigma(2)}, W_{\sigma(3)})$. Thus, recovering the original file $W$ is equivalent to retrieving all permuted sub-files $w_1, w_2, w_3$.

We assume that all files are equally subpacketized into $L$ sub-files, where $L$ is referred to as the \emph{subpacketization level} of the PIR scheme. This parameter $L$ serves as a key performance metric. For each file, the user privately generates a permutation of $[L]$ to shuffle its sub-file indices; let $\mathcal{G}$ denote the collection of these permutations. Each sub-file is treated as a single symbol (a dit) in a finite field $\mathbb{F}_q$. Consequently, the entropy of each file is $L$ dits, i.e., $H(W) = L$.

As we have mentioned above, excessive subpacketization might hinder practical deployment, as the number of sub-files could exceed the actual file size. Therefore, among schemes with equal retrieval rates, lower subpacketization is generally preferred for better efficiency and applicability. 

In the PIR problem, a user privately generates $\theta \in E(G)$ and wishes to retrieve $W_{\theta}$ (i.e., retrieve all permuted sub-files of $W_{\theta}$) while keeping $\theta$ hidden from each server. To achieve this privacy goal, various scheme design methodologies have been developed. Broadly speaking, there exist two primary approaches in the literature: the deterministic approach, as used in~\cite{Sun2017capacity}, and the probabilistic approach, as in~\cite{Samy2021asymmetric,Tian2019capacity}.

In the deterministic approach, once \( \theta \) and \( \mathcal{G} \) are fixed, the queries to be sent to the servers are completely determined. That is, the user constructs queries \( Q_1^{[\theta, \mathcal{G}]}, Q_2^{[\theta, \mathcal{G}]}, \dots, Q_N^{[\theta, \mathcal{G}]} \), where \( Q_i^{[\theta, \mathcal{G}]} \) is sent to server \( S_i \). To ensure privacy, the core idea is to enforce \emph{message symmetry} in the queries, so that all files are treated uniformly in the view of each server.

In contrast, the probabilistic approach incorporates additional private randomness. Specifically, it relies on a set of universal queries that, from the perspective of any individual server, are used for all possible requested files with equal probability. The user first constructs a universal query set based on \( \theta \) and \( \mathcal{G} \), and then privately samples a random variable \( R \). The queries sent to each server are selected from the universal query set according to the value of \( R \). That is, the queries \( Q_1^{[\theta, \mathcal{G}, R]}, Q_2^{[\theta, \mathcal{G}, R]}, \dots, Q_N^{[\theta, \mathcal{G}, R]} \) are constructed based on \( \theta \), \( \mathcal{G} \), and \( R \). Privacy is ensured by enforcing \emph{query symmetry}, meaning that the distribution of the query sent to each server is independent of the requested file index.

Throughout the paper, when the dependence on \( \theta \), \( \mathcal{G} \), and possibly \( R \), is clear from context, we omit these superscripts and simply write \( Q_i \) to refer to \( Q_i^{[\theta, \mathcal{G}]} \) or \( Q_i^{[\theta, \mathcal{G}, R]} \). Let \( \QQ := \{Q_i : i \in [N]\} \) denote the set of all queries sent to the servers. We assume that the requested messages in each \( Q_i \) consist of linear combinations of sub-files from the files stored on server \( S_i \), which is a standard assumption satisfied by all known PIR schemes.

Since the user has no prior knowledge about the contents of the files, the queries must be statistically independent of the file data, which can be formally expressed as:
\[
I(\mathcal{W}; \QQ) = 0.
\]

Upon receiving query \( Q_i \), server \( S_i \) returns an answer \( A_i \), which is deterministically computed as a function of \( Q_i \) and the set of files it stores, denoted by \( \WW_{S_i} \). Hence, for every \( i \in [N] \), the response satisfies:
\begin{equation}
    H(A_i \mid Q_i, \WW_{S_i}) = 0. \label{eq:question answer}
\end{equation}

Let \( \Pi \) be a PIR scheme on a graph-based storage system with \( N \) servers and \( K \) files, each of entropy \( L \). The scheme defines, for every desired file index \( \theta \in [K] \), a random query distribution and a deterministic answering mechanism for each server. A PIR scheme must satisfy two fundamental requirements: \emph{reliability} and \emph{privacy}, defined as follows.

\paragraph{Reliability:} The user must be able to recover the requested file \( W_\theta \) with zero probability of error using the answers \( A_1, A_2, \dots, A_N \) received from the servers and the queries \( \QQ \). Formally,
    \begin{equation}
        H(W_\theta \mid A_{[N]}, \QQ) = 0, \label{eq:reliability}
    \end{equation}
    where \( A_{[N]} := \{A_1, A_2, \dots, A_N\} \).

\paragraph{Privacy:} No individual server should gain any information about the identity of the requested file. Formally, this means that for every server \( i \in [N] \) and every requested file \( \theta \in E(G) \), the joint distribution of the query \( Q_i^{[\theta]} \), the answer \( A_i^{[\theta]} \), and the set of files \( \WW_{S_i} \) stored on server \( S_i \) must be independent of \( \theta \). That is,
    \begin{equation}
        (Q_i^{[\theta]},A_i^{[\theta]},\WW_{S_i}) \sim (Q_i^{[1]},A_i^{[1]},\WW_{S_i}), \label{eq:privacy}
    \end{equation}
    where \( X \sim Y \) indicates that the random variables \( X \) and \( Y \) are identically distributed.

    An equivalent formulation in terms of mutual information is given by
    \begin{equation*}
        I(\theta; Q_i,A_i,\WW_{S_i})=0,
    \end{equation*}
    which, using the relationship in \eqref{eq:question answer}, simplifies to the conditional entropy condition
    \begin{equation}
        H(\theta \mid Q_i, \WW_{S_i}) = H(\theta). \label{eq:privacy_1}
    \end{equation}

The central performance measure of a PIR scheme is its \emph{retrieval rate}, which represents the ratio of the total amount of information downloaded from the servers.

\begin{defn}[Retrieval rate of a PIR scheme]
The \emph{retrieval rate} of a PIR scheme \( \Pi \), denoted by \( R(\Pi) \), is defined as the ratio between the retrieved file size in bits, and the total number of bits downloaded as answers, i.e.,
\[
R(\Pi) := \frac{L}{\sum_{i=1}^{N} H(A_i)},
\]
where \( A_i \) is the answer sent by server \( S_i \) under scheme \( \Pi \). This quantity measures the ratio of the desired information retrieved to the total amount of information downloaded from the servers.

\end{defn}

It is standard in the PIR literature to assume that the file size can be made arbitrarily large. Under this assumption, the upload cost (that is, the cost of sending queries) becomes negligible compared to the download cost (that is, the cost of receiving answers). As a result, the above definition coincides with alternative definitions that condition on the queries:
\[
R(\Pi) := \frac{L}{\sum_{i=1}^{N} H(A_i \mid Q_i)}.
\]
We now define the \emph{capacity} of a graph-based PIR system.

\begin{defn}[PIR capacity of a graph]
Let \( G = (V, E) \) be a simple undirected graph with \( N = |V| \) servers and \( K = |E| \) files, where each file is stored on the two servers corresponding to its endpoints. The \emph{PIR capacity} of \( G \), denoted by \( \CC(G) \), is defined as
\[
\CC(G) := \sup_{\Pi} R(\Pi),
\]
where the supremum is taken over all valid PIR schemes \( \Pi \) on \( G \), and \( R(\Pi) \) denotes the retrieval rate achieved by scheme \( \Pi \). A PIR scheme is said to be \emph{capacity-achieving} or \emph{optimal} if it achieves this supremum.
\end{defn}

Then the main theoretical problems in this field can be stated as follows.
\begin{ques}\label{ques:Main}
    For given graph $G$, determine $\mathcal{C}(G)$.
\end{ques}
While this is a central question from a theoretical perspective, a practically motivated problem might be more important.
\begin{ques}\label{ques:Practical}
    For given graph \(G\), construct PIR schemes over $G$ that achieve large capacity and operate with small subpacketization.

\end{ques}

\subsection{Related work and known results}
The study of PIR in graph-based replication systems has garnered growing interest in recent years. This model was first introduced by Raviv, Tamo, and Yaakobi~\cite{Raviv2020graph}, and further explored in works such as~\cite{Banawan2019graph, Jia2020secure}. A more systematic treatment was later given by Sadeh, Gu, and Tamo~\cite{Sadeh2023bound}, who focused specifically on~\cref{ques:Main}.

The \emph{complete graph} on \( N \) vertices, denoted \( K_N \), is the undirected graph in which every pair of distinct vertices is connected by exactly one edge. Among all graph classes, the complete graph \( K_N \) plays a particularly central role in the study of PIR. Any PIR scheme designed for \( K_N \) can be naturally restricted to its subgraphs, making \( K_N \) a universal benchmark. Moreover, its high degree of symmetry renders it both theoretically appealing and technically tractable.

The graph \( K_N \) contains \( \binom{N}{2} \) edges, each corresponding to a distinct file in the PIR setting. In~\cite{Sadeh2023bound,Banawan2019graph}, the authors established the following general bounds on its PIR capacity.
\begin{theorem}[\cite{Sadeh2023bound,Banawan2019graph}]\label{thm:known-Kn}
The PIR capacity \( \mathcal{C}(K_N) \) satisfies the following bounds:
\begin{enumerate}
  \item[\textup{(1)}]For any integer $N\ge 3$, \(
 \frac{2^{N-1}}{2^{N-1} - 1 }\cdot\frac{1}{N}\le \mathcal{C}(K_N) \le \frac{2}{N+1}.\)
  \item[\textup{(2)}] \( \mathcal{C}(K_3) = \frac{1}{2}\ \textup{and\ } \mathcal{C}(K_4) \ge \frac{3}{10}.\)
\end{enumerate}
\end{theorem}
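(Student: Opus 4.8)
\textbf{Upper bound $\mathcal{C}(K_N)\le \tfrac{2}{N+1}$.}
I would first symmetrize: since $\mathrm{Aut}(K_N)=S_N$ acts transitively on vertices and on edges, averaging any valid scheme over this group yields a scheme of the same rate in which $H(A_i\mid\QQ)$ takes only two values — call them $x$ when $S_i$ is an endpoint of $\theta$ and $y$ otherwise — so that $\sum_i H(A_i)\ge 2x+(N-2)y$. I would then run a Sun–Jafar-style converse built from the two scheme axioms. Reliability \eqref{eq:reliability} forces the $N$ answers to jointly pin down all $L$ dits of $W_\theta$; privacy \eqref{eq:privacy} forces the query/answer pair at each server to be statistically indistinguishable across all files it stores, hence — after symmetrization — to devote a balanced amount of ``useful information'' to each of its $N-1$ incident files. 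Feeding these two facts into the standard inductive entropy manipulation (peel off one server at a time; use message independence to relate the ``interference'' appearing in one server's answer to the amount of ``desired information'' still extractable from the remaining servers) produces a chain of inequalities whose solution reads $2x+(N-2)y\ge \tfrac{N+1}{2}L$. The main obstacle is that, unlike in fully replicated PIR, the $N-2$ servers not incident to $\theta=\{a,b\}$ have answers correlated with $A_a,A_b$ through the commonly stored files $W_{a,c},W_{b,c}$, so one cannot simply discard them; tracking these correlations — which is exactly what upgrades the bound from the crude $\tfrac{2}{N}$ to $\tfrac{2}{N+1}$ — requires working with joint entropies of answer subsets rather than with individual answer entropies, and setting up the right recursion for the graph setting is the technical heart of the argument.

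\textbf{Lower bound $\mathcal{C}(K_N)\ge \tfrac{2^{N-1}}{2^{N-1}-1}\cdot\tfrac1N$.}
I would exhibit an explicit fully symmetric linear scheme. Subpacketize every file into $L=2^{N-1}$ sub-files (over a large enough field, or over $\mathbb{F}_2$ using a carefully chosen incidence structure). For a target edge $\theta$, query each server $S_i$ for exactly $2^{N-1}-1$ linear combinations of the sub-files of its $N-1$ stored files, laid out so that: (i) the multiset of files appearing in the queries to $S_i$ is invariant under permuting those $N-1$ files, combined with suitable randomization over the ``role'' each server plays, which gives privacy via \eqref{eq:privacy_1}; and (ii) the $N(2^{N-1}-1)$ downloaded combinations span a space containing all $2^{N-1}$ coordinate functionals of $W_\theta$, a rank/recoverability condition verified by choosing the coefficients appropriately. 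Reading off $R(\Pi)=L/\sum_i H(A_i)=2^{N-1}/(N(2^{N-1}-1))$ gives the claimed bound, and since any scheme on $K_N$ restricts to any $N$-vertex graph this also yields the stated consequence for all $G$. The obstacle is privacy at the $N-2$ servers not incident to $\theta$: they store none of $W_\theta$, yet must be queried with exactly the same distribution as the two endpoints, which constrains the design; the value $L=2^{N-1}$ is precisely the blind-interference-alignment subpacketization for ``two effective servers and $N-1$ messages'' and is what leaves enough room to balance all the constraints.

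\textbf{The cases $K_3$ and $K_4$.}
For $K_3$ the converse above already gives $\mathcal{C}(K_3)\le \tfrac{2}{3+1}=\tfrac12$, so it remains to present a scheme of rate $\tfrac12$: a small explicit construction where, for $\theta=\{1,2\}$, the endpoints $S_1,S_2$ download a mixture of pure desired sub-files and ``desired $+$ interference'' sums while $S_3$ downloads exactly the pure interference sub-files needed to cancel those sums, with the query pattern at every server made symmetric in its two incident files so that \eqref{eq:privacy_1} holds; one then checks reliability and privacy directly. For $\mathcal{C}(K_4)\ge \tfrac{3}{10}$ I would likewise give a tailored linear scheme — one that beats the general construction's $\tfrac{2^{3}}{2^{3}-1}\cdot\tfrac14=\tfrac27$ — and verify its two axioms by inspection. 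I expect the matching $K_3$ scheme to be the cleanest sanity check on the general-$N$ construction, and pinning down the $K_4$ scheme that reaches $\tfrac{3}{10}$ (rather than only $\tfrac27$) to require the most hands-on case analysis.
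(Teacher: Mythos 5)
This statement is quoted background from~\cite{Sadeh2023bound,Banawan2019graph}; the paper itself gives no proof of it (indeed, \cref{thm:UpperBoundMain,thm:LowerBoundMain} supersede both bounds), so your proposal can only be judged against the known arguments and the machinery the paper does develop in \cref{section:UpperBound}. At the level of strategy you have the right picture — symmetrize via $\operatorname{Aut}(K_N)$, run an entropy converse for the upper bound, and use a Sun--Jafar-style scheme with subpacketization $2^{N-1}$ and $2^{N-1}-1$ downloads per server for the lower bound — but both halves stop exactly where the work begins.

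For the converse, the entire derivation of $\sum_i H(A_i\mid\QQ)\ge\frac{N+1}{2}L$ is asserted ("the standard inductive entropy manipulation \dots produces a chain of inequalities whose solution reads \dots"). The two ingredients you would actually need are (i) a chain-rule decomposition of the form in \cref{lemma:splitting to small graph}, which isolates $L$ plus $N-1$ residual conditional entropies, and (ii) the pairwise lemma (\cref{lem:A_i+A_j le L}) stating $H(A_i\mid\WW\setminus\{W_{i,j}\},\QQ)+H(A_j\mid\WW\setminus\{W_{i,j}\},\QQ)\ge L$, which after vertex-transitive symmetrization (\cref{lem:symmetric origin}) yields $H(A_i\mid\cdot)\ge L/2$ for each residual term and hence the bound. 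Without naming something playing the role of (ii), your sketch has no source for the per-server $L/2$, and your claim that $H(A_i\mid\QQ)$ takes two values $x,y$ depending on whether $S_i$ is an endpoint of $\theta$ is also slightly off: privacy \eqref{eq:privacy} already forces the distribution of $(Q_i,A_i)$ to be independent of $\theta$, so after symmetrization all $N$ answer entropies coincide. For the achievability part, you correctly identify privacy at the $N-2$ non-incident servers as the obstacle but do not resolve it; the known resolution is message symmetry — every server is asked for exactly one sum from each nonempty subset of its $N-1$ stored files (with permuted sub-file indices), so the query marginal is literally identical for every $\theta$ — and without specifying this (or the explicit $K_4$ scheme reaching $3/10$, for which see the structure of \cref{table:specific K4}), the lower bounds remain unestablished.
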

 Beyond the complete graphs, Sadeh, Gu, and Tamo~\cite{Sadeh2023bound} investigated a variety of graph families, including stars, cycles, wheel graphs, and complete bipartite graphs, deriving refined capacity bounds through a blend of combinatorial constructions and information-theoretic techniques. A key general upper bound they established asserts that for any graph \( G \),
\[
\mathcal{C}(G) \le \min\left\{ \frac{\Delta(G)}{|E(G)|}, \, \frac{1}{\nu(G)} \right\},
\]
where \( \Delta(G) \) denotes the maximum degree of \( G \), and \( \nu(G) \) its matching number.

Very recently, Kong, Meel, Maranzatto, Tamo and Ulukus~\cite{kong2025newcapacityboundspir} made significant progress in characterizing the PIR capacity $\mathcal{C}(G)$ for several fundamental graph topologies, including paths, complete bipartite graphs, and complete graphs. Of particular importance is their construction of an efficient PIR scheme over complete graphs $K_N$, which achieves a strictly higher retrieval rate compared to prior approaches in~\cite{Sadeh2023bound,Banawan2019graph}.   
\begin{theorem}[\cite{kong2025newcapacityboundspir}]\label{thm:KongLowerbound}  
For any integer $N \geq 3$, the PIR capacity satisfies  
\[ \mathcal{C}(K_N) \geq \frac{6}{5 - 2^{3-N}} \cdot \frac{1}{N}. \] 
\end{theorem}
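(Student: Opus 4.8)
The plan is to construct, for each $N \ge 3$, an explicit PIR scheme over $K_N$ achieving retrieval rate $\frac{6}{5-2^{3-N}}\cdot\frac{1}{N}$, following the deterministic (blind-interference-alignment-style) paradigm described in the problem statement and enforcing message symmetry at every server. First I would fix the subpacketization level $L$; a natural choice given the shape of the bound is something like $L = 3(2^{N-2}-1)$ or a small multiple thereof, so that $\sum_{i} H(A_i) = \tfrac{5-2^{3-N}}{6}\cdot N \cdot L$ comes out to a clean integer. The download from each server will consist of two kinds of symbols: \emph{useful} symbols that are linear combinations involving sub-files of the desired file $W_\theta$, and \emph{side-information / interference} symbols involving only the other files stored at that server; message symmetry forces that whatever combinations of a non-desired file $W_{i,j}$ appear in $A_i$ must be mirrored appropriately so that server $S_i$ cannot distinguish which of its incident edges is $\theta$.

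The key structural idea I would exploit is the one already flagged in the excerpt (\cref{lemma:graph contain} and the surrounding discussion): by the symmetry of $K_N$ we may assume $\theta = \{1,2\}$, and then partition the remaining $N-2$ vertices and the edges among them into a recursive structure. Concretely, the desired file $W_{1,2}$ is downloaded (in pieces) from $S_1$ and $S_2$; each side-information symbol needed to decode it is a combination of sub-files of files $W_{1,k}$ or $W_{2,k}$ for $k \ge 3$, and such a symbol can be obtained "for free" from server $S_k$ by also mixing in a sub-file of the file $W_{1,k}$ (resp.\ $W_{2,k}$) that lives there — this is the standard trick of aligning interference with desired-adjacent side information. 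Iterating this across the $\binom{N-2}{2}$ edges inside $V \setminus \{1,2\}$, and carefully accounting at each level how many fresh sub-files of $W_{1,2}$ get resolved per downloaded symbol, should produce a recursion in $N$ whose solution is exactly the claimed rate; the $2^{3-N}$ correction term is the signature of such a geometric recursion bottoming out at the base cases $K_3$ or $K_4$ from \cref{thm:known-Kn}.

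After specifying the scheme I would verify the three requirements in turn: (i) \emph{reliability}, by checking that the set of linear combinations downloaded across all $N$ servers has full rank on the $L$ sub-files of $W_{1,2}$ once the interference symbols (each recoverable from its "home" server $S_k$) are subtracted — this is the place where the exact bookkeeping of the recursion must be pinned down; (ii) \emph{privacy}, which reduces to checking message symmetry at each server $S_i$, i.e.\ that for every pair of incident edges the multiset of sub-file-index patterns queried is the same, so that applying the user's private permutations $\mathcal{G}$ makes $Q_i$ independent of $\theta$ as in \eqref{eq:privacy_1}; and (iii) the \emph{rate computation}, summing $H(A_i)$ over all servers and dividing into $L$. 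I expect the main obstacle to be item (i) together with the combinatorial design: getting the counts to align so that the rank is exactly $L$ with no wasted download — equivalently, identifying the right recursive partition of $K_N$'s edge set and the right per-level allocation of sub-files — is the crux, whereas privacy and the final arithmetic are routine once the construction is correct. An alternative, possibly cleaner route is to build the $K_N$ scheme by "stacking" an optimal (or near-optimal) $K_{N-1}$ scheme with an extra layer handling the new vertex $S_N$ and its $N-1$ incident edges, and to prove the rate bound by induction on $N$; I would try this inductive framing first, since it isolates the new combinatorial content to a single layer and makes the $2^{3-N}$ term transparent as the accumulated geometric loss from the induction base.
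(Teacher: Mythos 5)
This statement is imported by the paper from~\cite{kong2025newcapacityboundspir} and is not proved in the present manuscript, so there is no in-paper proof to compare against; the paper's own constructive machinery (\cref{section:lowerbound}) pursues a different and asymptotically stronger bound of $(4/3-o(1))/N$. Judged on its own terms, your proposal has a genuine gap: it is a plan rather than a proof. Everything that would actually establish the bound --- the explicit query sets, the per-level allocation of sub-files, the recursion and its solution --- is deferred, and you acknowledge yourself that ``getting the counts to align'' is the crux. The specific constant $\frac{6}{5-2^{3-N}}$ can only come out of that bookkeeping, and nothing in the proposal forces it; asserting that the recursion ``should produce'' this rate is exactly the step that needs to be done.

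Two concrete symptoms of the gap. First, your candidate subpacketization $L=3(2^{N-2}-1)$ is already inconsistent with the base case: the $K_3$ scheme reproduced in \cref{table:PIR scheme over k3} has $L=6$ with each server answering $4$ symbols (rate $6/12=1/2$), which matches $L=3\cdot 2^{N-2}$ and a per-server download of $5\cdot 2^{N-3}-1$; your formula gives $L=3$ at $N=3$. A correct writeup must pin down $L$ and the per-server answer size and check that
\[
\frac{L}{N\bigl(5\cdot 2^{N-3}-1\bigr)}=\frac{6}{5-2^{3-N}}\cdot\frac{1}{N},
\]
which your sketch never does. Second, the alignment step you describe --- cancelling each interference symbol of $W_{1,k}$ or $W_{2,k}$ ``for free'' from server $S_k$ --- is the right intuition but is precisely where schemes fail to close: one must verify that the side information generated at the $\binom{N-2}{2}$ internal edges can be consumed without creating new uncancellable interference, and that message symmetry (needed for \eqref{eq:privacy_1}) does not force extra downloads that degrade the rate below the target. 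Until the recursion is written down and solved, the claim that the geometric correction term $2^{3-N}$ emerges is unsupported.
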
  
For a general graph \( G \), applying the PIR scheme designed for the complete graph to \( G \) yields a lower bound on the retrieval rate achievable for any arbitrary simple graph. Therefore,~\cref{thm:KongLowerbound} also serves as a lower bound for \( \mathcal{C}(G) \) for any graph \( G \).

Another particularly interesting class of graphs is the family of complete bipartite graphs. Let $K_{M,N}$ denote the complete bipartite graph with two sets of vertices of sizes $M$ and $N$. The best-known bounds on their PIR capacity are given below.
\begin{theorem}[\cite{kong2025newcapacityboundspir,Sadeh2023bound}]\label{thm:BicliquesPrevious}
    For any integers \( M \ge N \), the PIR capacity satisfies
    \begin{equation*}
        \max\left\{\frac{6}{5-2^{3-(M+N)}}\cdot \frac{1}{M+N},\frac{1}{2M\sqrt{N} + M}\right\} \le \mathcal{C}(K_{M,N}) \le \min\left\{\frac{1}{M},\frac{1}{\sqrt{2MN} - \frac{M}{2}}\right\}.
    \end{equation*}
\end{theorem}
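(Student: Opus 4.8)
The statement bundles two lower bounds and two upper bounds, so I would establish the four inequalities separately and combine them at the end. The first lower bound, $\frac{6}{5-2^{3-(M+N)}}\cdot\frac{1}{M+N}$, should come essentially for free from the complete-graph theory: $K_{M,N}$ is a graph on $M+N$ vertices, so \cref{lemma:graph contain} converts any PIR scheme on $K_{M+N}$ into one on $K_{M,N}$ of at least the same rate, giving $\mathcal{C}(K_{M,N})\ge\mathcal{C}(K_{M+N})$; substituting $M+N$ for $N$ in \cref{thm:KongLowerbound} then yields the claimed bound.

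For the second lower bound, $\frac{1}{2M\sqrt{N}+M}=\frac{1}{M(2\sqrt{N}+1)}$, I would exhibit an explicit deterministic scheme tailored to the bipartite structure, with subpacketization $L=\lceil\sqrt{N}\,\rceil$ (or a convenient multiple --- the $\sqrt{N}$ in the rate is the tell-tale sign). Writing the sides as $A$ with $|A|=M$ and $B$ with $|B|=N$, the scheme follows the message-symmetry / blind-interference-alignment template: for a requested edge $\{a,b\}$ one downloads (i) the $L$ desired sub-files, split between the incident servers $a$ and $b$; (ii) to keep message symmetry at $a$ and at $b$, a balanced batch of interference sub-files of the other files stored there; and (iii) propagating the symmetry one hop further, matching batches at the neighbours of $a$ in $B$ and of $b$ in $A$, arranged so that the interference downloaded in (ii) can be cancelled using (iii). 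Grouping the $MN$ files appropriately and counting server-by-server should produce a total download of at most $M(2\sqrt{N}+1)\,L$ dits; reliability is immediate (the desired sub-files are downloaded directly), and privacy follows because, after the usual random relabelling $\mathcal{G}$ of sub-file indices, every server sees a $\theta$-independent request pattern. I expect this part to be mostly bookkeeping once the file grouping is fixed.

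The first upper bound is the instantiation at $K_{M,N}$ of the general bound $\mathcal{C}(G)\le\min\{\Delta(G)/|E(G)|,\,1/\nu(G)\}$ of \cite{Sadeh2023bound}, since $K_{M,N}$ has $MN$ edges, matching number $\min\{M,N\}$, and maximum degree $\max\{M,N\}$. A self-contained argument also works: for each file $W_{a,b}$, privacy forces every server other than $a$ and $b$ to answer as a function of files independent of $W_{a,b}$, so reliability gives $H(A_a\mid Q_a)+H(A_b\mid Q_b)\ge L$, and averaging these $MN$ inequalities over a suitable fractional cover of the edges yields $\sum_v H(A_v\mid Q_v)\ge\min\{M,N\}\cdot L$. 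The second upper bound, $\frac{1}{\sqrt{2MN}-M/2}$, is the substantive one and is where I expect the real work to lie. I would run a Sun--Jafar-style recursive entropy argument over $K_{M,N}$: peel a carefully ordered family of files out of $H(A_{[M+N]}\mid\mathcal{Q})$ one at a time --- ordering them along a staircase in the bipartite adjacency so that consecutive files share a server --- using reliability to recover a full $L$ at each step, and using privacy together with message symmetry to bound the entropy that remains. The key feature is that the correction term accumulates quadratically in the number of steps, because each newly peeled file must pay for the interference already accrued at its two endpoint-servers, and that amount grows with the step count; optimizing the number of peeling steps then drives the bound to $\sum_v H(A_v\mid Q_v)\ge(\sqrt{2MN}-M/2)\,L$, i.e.\ $\mathcal{C}(K_{M,N})\le\frac{1}{\sqrt{2MN}-M/2}$.

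The main obstacle is precisely this last recursion: choosing the peeling order so that every step genuinely gains a full $L$, pinning down the exact inequality the residual entropies satisfy, and tracking the accumulated corrections tightly enough to reach the constant $\sqrt{2}$ (any slack leaves one with a weaker $\sqrt{cMN}$ for $c<2$, hence the wrong leading term). The regularity of $K_{M,N}$ --- every $A$-vertex has degree $N$ and every $B$-vertex degree $M$ --- is what makes the per-step loss uniform and the optimization clean, so I would lean on that symmetry throughout. The remaining three inequalities (the spanning-subgraph reduction, the $\sqrt{N}$-subpacketized construction, and the elementary first upper bound) should be comparatively routine by contrast.
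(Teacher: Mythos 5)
First, a point of context: the paper does not prove \cref{thm:BicliquesPrevious} at all --- it is quoted as a known result from \cite{Sadeh2023bound,kong2025newcapacityboundspir} --- so there is no in-paper proof to compare against. Your derivation of the first lower bound is exactly the remark the paper itself makes after \cref{thm:KongLowerbound}: restrict a scheme for $K_{M+N}$ to the spanning subgraph $K_{M,N}$ via \cref{lemma:graph contain}. That part is correct.

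The concrete gap is in your first upper bound. You correctly list the parameters of $K_{M,N}$ ($MN$ edges, matching number $\min\{M,N\}=N$, maximum degree $\max\{M,N\}=M$), but these give $\Delta(G)/|E(G)|=M/(MN)=1/N$ and $1/\nu(G)=1/N$, so the general bound of \cite{Sadeh2023bound} yields only $\mathcal{C}(K_{M,N})\le 1/N$, which for $M>N$ is strictly weaker than the claimed $1/M$. Your ``self-contained'' variant fails for the same reason: averaging $H(A_a\mid Q_a)+H(A_b\mid Q_b)\ge L$ over a fractional matching gives $\sum_v H(A_v\mid Q_v)\ge N\cdot L$, again only $1/N$. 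The bound $1/M$ genuinely cannot come from matching or degree counting --- it must exploit the fact that each vertex on the small side carries $M$ files that must all be treated symmetrically --- and indeed the statement as transcribed deserves care here (for $K_{M,1}$, i.e.\ a star, the rate $5/12$ achieved in \cref{table:4 star} already exceeds $1/M=1/4$, so the inequality cannot hold without further hypotheses). Separately, the two quantitatively substantive inequalities --- the $\frac{1}{\sqrt{2MN}-M/2}$ upper bound and the $\frac{1}{2M\sqrt{N}+M}$ construction --- are described only at the level of intent (``mostly bookkeeping,'' ``where the real work lies''): no peeling order is specified, no per-step inequality is derived, and no download count is carried out. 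Since these are precisely the nontrivial content of the theorem, the proposal as written does not establish the statement.
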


Since parts of our results also apply to PIR schemes over multigraphs, we briefly introduce some relevant background and results. Let \( G = (V, E) \) be a simple graph with \( |E| = K' \). We define the \emph{\( r \)-multigraph extension} of \( G \), denoted by \( G^{(r)} \), as the multigraph obtained by replacing each edge in \( G \) with \( r \) parallel edges. Thus, the total number of edges in \( G^{(r)} \) is \( |E(G^{(r)})| = rK' = K \).

PIR schemes over \( r \)-multigraphs can be constructed from schemes over simple graphs, provided the latter satisfy a specific property defined below.

\begin{defn}[Symmetric retrieval property~\cite{kong2025newcapacityboundspir}]\label{def:SRP}
A graph-based PIR scheme is said to satisfy the \emph{symmetric retrieval property} (SRP) if, for every possible file index \( \theta \), the number of bits of \( W_\theta \) retrieved from each of the two servers storing it is equal. For instance, if \( \theta = k \) and \( W_k \) is replicated across servers \( S_1 \) and \( S_2 \), then half of the bits of \( W_k \) must be retrieved from each of \( S_1 \) and \( S_2 \).
\end{defn}

\begin{theorem}[\cite{kong2025newcapacityboundspir}]\label{thm:mutigraph}
Let \( \Pi \) be a PIR scheme over a simple graph \( G \) that satisfies the SRP and achieves rate \( R(\Pi) \). Then there exists a scheme \( \Pi^{(r)} \) for the corresponding \( r \)-multigraph \( G^{(r)} \) that achieves the following lower bound on capacity:
\begin{equation}
    \mathcal{C}(G^{(r)}) \ge R(\Pi) \cdot \left( \frac{1}{2 - \frac{1}{2^{r-1}}} \right). \label{eq:multigrapg}
\end{equation}
\end{theorem}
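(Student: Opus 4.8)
The plan is to take a PIR scheme $\Pi$ over the simple graph $G$ satisfying the SRP, and build from it a scheme $\Pi^{(r)}$ over $G^{(r)}$ by treating each bundle of $r$ parallel edges between a pair of servers $\{S_a,S_b\}$ as a group of $r$ files that must be handled ``in parallel'' while reusing the download structure that $\Pi$ already provides for the single file on edge $\{a,b\}$. Concretely, I would replicate $\Pi$ on the underlying simple graph to handle the topology between different vertex pairs, and then, within a single multi-edge bundle, superimpose $r$ copies of the retrieval pattern with a careful offsetting so that the answers for different parallel edges are aligned with each other and with the rest of the graph. The SRP is exactly what makes this superposition possible without breaking privacy: since in $\Pi$ each of the two incident servers of an edge contributes an equal number of bits of the requested file, the two servers play symmetric roles, so when the user actually wants the $j$-th parallel copy $W_{\{a,b\},j}$ it can cyclically rotate which ``half'' it reads from $S_a$ versus $S_b$ and, across the $r$ copies, arrange that from each server's local view the collection of queries it sees is independent of which copy (and hence which $\theta$) is desired.

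The key steps, in order, would be: (i) formalize the construction of $\Pi^{(r)}$ by specifying, for each desired multi-edge file index $\theta = (\{a,b\},j)$ and each permutation collection $\mathcal{G}$, the queries $Q_i^{[\theta,\mathcal{G},R]}$ to every server, where the randomness $R$ selects which of the $r$ symmetric rotations is applied inside each affected bundle; (ii) verify reliability, i.e.\ that the user can reconstruct all $L$ (sub-)symbols of $W_\theta$ from the combined answers — this follows because within the bundle the $r$ superimposed copies of $\Pi$'s side-information/desired-symbol pattern can be solved by back-substitution just as in $\Pi$, once the SRP guarantees the two halves interlock; (iii) verify privacy at each server, using $\eqref{eq:privacy_1}$, by checking that the marginal distribution of $(Q_i,A_i,\WW_{S_i})$ is unchanged under any relabeling of $\theta$ within a bundle (the new randomness $R$ washes out the dependence) and is unchanged across bundles (inherited from $\Pi$'s privacy); and (iv) compute the download cost. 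For step (iv), the point is that for edges outside the affected bundle the cost is as in $\Pi$, while inside the bundle the total download is not $r$ times that of one copy of $\Pi$ but rather reduced by a factor $2 - \tfrac{1}{2^{r-1}}$ because the common ``side information'' exchanged between $S_a$ and $S_b$ can be shared across the $r$ parallel retrievals; summing and taking the ratio with $L$ yields exactly the bound in $\eqref{eq:multigrapg}$.

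The main obstacle I expect is step (iv) combined with the precise combinatorial design in step (i): one must arrange the $r$ superimposed instances so that the shared side information is exactly maximized, giving the geometric-series saving $\sum_{t=0}^{r-1} 2^{-t} = 2 - 2^{-(r-1)}$ rather than a weaker constant, while \emph{simultaneously} keeping the per-server query distribution symmetric. Getting the bookkeeping of ``which symbols of which parallel copy are downloaded from which server, in which round'' to close up consistently — so that every symbol of the desired file is recovered, no privacy leak occurs, and the download is as claimed — is the delicate part; the reliability and privacy verifications (steps (ii)--(iii)) should then be comparatively routine consequences of the SRP and of the privacy of $\Pi$. A secondary technical point is handling the subpacketization: one may need to pass to a common refinement of $\Pi$'s subpacketization level $L$ by a factor depending on $r$ so that the rotations and the $2^{-(r-1)}$ savings are realizable with integer symbol counts, but this does not affect the rate in the large-file limit.
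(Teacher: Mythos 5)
First, a point of reference: the paper does not prove this statement at all --- it is imported verbatim from~\cite{kong2025newcapacityboundspir} and used as a black box --- so there is no in-paper argument to measure you against; your proposal has to stand on its own as a proof, and as written it does not.

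Your high-level instinct is right: the factor $2-\tfrac{1}{2^{r-1}}=\sum_{t=0}^{r-1}2^{-t}$ is exactly the reciprocal of the two-server, $r$-message PIR capacity, and the intended construction is to compose $\Pi$ with that two-server scheme as an inner code on each bundle of parallel edges, with the SRP guaranteeing that the $L/2$--$L/2$ split of the desired file across its two servers matches the symmetric download profile that the inner scheme needs. But two things block your write-up from being a proof. First, the download accounting in step (iv) does not close. If the out-of-bundle cost $C_{\mathrm{out}}$ really stayed as in $\Pi$ while only the in-bundle cost $C_{\mathrm{in}}$ scaled by $2-\tfrac{1}{2^{r-1}}$, the total download would be $C_{\mathrm{out}}+\bigl(2-\tfrac{1}{2^{r-1}}\bigr)C_{\mathrm{in}}$, which is strictly smaller than $\bigl(2-\tfrac{1}{2^{r-1}}\bigr)\bigl(C_{\mathrm{out}}+C_{\mathrm{in}}\bigr)$ whenever $C_{\mathrm{out}}>0$; so ``summing and taking the ratio with $L$'' does not ``yield exactly the bound,'' and the discrepancy signals that the premise is wrong. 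Indeed it is: privacy forces each server's answer distribution (in particular its length) to be independent of $\theta$, and every bundle in $G^{(r)}$ now carries $r$ files, so message symmetry must be enforced on the non-desired bundles as well; their contribution to the download scales by the same factor, which is precisely why the bound is multiplicative in the \emph{total} download of $\Pi$ rather than only its in-bundle part. Second, and more fundamentally, the entire combinatorial core --- which symbols of which parallel copy are queried where, and why the per-server query distribution is $\theta$-independent --- is explicitly deferred as ``the delicate part.'' Since that is the whole content of the theorem, the proposal is a plan rather than a proof. The way to make it rigorous is to stop superimposing $r$ ad hoc rotated copies of $\Pi$ and instead substitute, for every edge of $G$ and uniformly over all edges, the known capacity-achieving two-server $r$-message scheme in place of the single-file access pattern of $\Pi$; reliability, privacy, and the rate then all reduce to the corresponding properties of the inner scheme together with those of $\Pi$.
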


\subsection{Our main contributions}
The aforementioned results form a foundation for the study of graph-based PIR schemes. Among them, the complete graph \( K_N \) stands out due to its universality and high degree of symmetry. Any PIR scheme designed for \( K_N \) can be naturally adapted to arbitrary simple graphs with \( N \) vertices, making the task of determining tight bounds for \( \mathcal{C}(K_N) \) a central challenge in the field.

In~\cite{Sadeh2023bound}, it was conjectured that the general upper bound \( \frac{2}{N+1} \) might be tight for the complete graph \( K_N \), and that this rate could potentially be achieved via a more sophisticated PIR scheme. In this work, we disprove this conjecture by establishing a strictly improved upper bound valid for all \( N \geq 4 \). Our main result is as follows.

\begin{theorem}\label{thm:UpperBoundMain}
For any integer \( N \ge 3 \),
\[
\mathcal{C}(K_N) \le \frac{1}{\sum_{i=2}^{N} \frac{1}{i!}} \cdot \frac{1}{N}.
\]
\end{theorem}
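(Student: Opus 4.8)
The plan is to prove the upper bound by a careful entropy-counting argument that tracks, for each edge (file), how much of it is recovered from each of its two endpoints, and then amortizes these quantities across a cleverly chosen nested sequence of subsets of vertices. Recall $H(W)=L$ for every file and, by reliability \eqref{eq:reliability} together with the answer constraint \eqref{eq:question answer}, the user reconstructs $W_\theta$ from $\{A_i,Q_i\}$; by privacy \eqref{eq:privacy_1}, the "load" that each server contributes towards retrieving any particular incident file must look the same across all of its incident files. The key quantity to introduce is, for each server $S_i$ and each incident edge $e=\{i,j\}$, the amount $\rho_{i,e}$ of (conditional) entropy of $A_i$ that is "charged" to recovering $W_e$; privacy forces $\rho_{i,e}$ to be independent of which incident $e$ we pick, so $H(A_i\mid Q_i)\ge (N-1)\rho_i$ where $\rho_i$ is this common per-edge load at $S_i$. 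The retrieval rate is then $L/\sum_i H(A_i\mid Q_i)\le L/\big((N-1)\sum_i\rho_i\big)$, and the crux becomes lower-bounding $\sum_i\rho_i$.

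The heart of the argument is a recursive/inductive inequality on the total download when restricted to a subset $T\subseteq[N]$ of servers. First I would establish a "local reliability deficit" bound: for a file $W_e$ with $e=\{i,j\}$ and both $i,j\in T$, the portion of $W_e$ not recovered from servers inside $T\setminus\{i,j\}$ must be recovered jointly from $S_i$ and $S_j$, and — using the message-symmetry/privacy structure together with the fact that $S_i$ stores $|T\cap N(i)|$ files within $T$ — one gets a bound of the form $H(A_i\mid Q_i,\text{stuff})+H(A_j\mid\cdots)\ge L - (\text{what was recovered outside})$. Iterating this over a descending chain $[N]=T_N\supsetneq T_{N-1}\supsetneq\cdots\supsetneq T_2$ with $|T_k|=k$, and summing the resulting inequalities with appropriate weights, produces a bound where the coefficient attached to $H(A_i)$ at the stage $|T|=k$ is roughly $1/(k-1)$ times a "fresh" contribution. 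Unwinding the telescoping sum yields $\sum_i H(A_i\mid Q_i)\ge L\cdot N\cdot\sum_{k=2}^{N}\frac{1}{k!}$; the factorial appears because at each step down from $T_{k}$ to $T_{k-1}$ the residual uncertainty that must still be paid for gets multiplied by a factor $\tfrac{1}{k}$ (each of the $k$ servers in $T_k$ being symmetric, only a $1/k$ fraction of the "new" download is genuinely new relative to the sub-download already accounted for). Rearranging gives exactly $\mathcal{C}(K_N)\le \big(\sum_{i=2}^N 1/i!\big)^{-1}\cdot\frac{1}{N}$.

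Concretely, the steps in order are: (i) set up the per-server, per-edge load variables and use privacy \eqref{eq:privacy_1} to make them edge-independent; (ii) prove the base case of the recursion for $|T|=2$ (a single edge: its two servers must together download at least $L$ dits worth of information about that file, suitably conditioned); (iii) prove the inductive step, bounding the download over $T_k$ in terms of the download over $T_{k-1}$ plus a symmetrized "fresh" term, where the symmetry of $K_N$ is used to say every vertex of $T_k$ plays the same role; (iv) choose the averaging weights so that the recursion telescopes into a sum of reciprocal factorials; (v) collect terms and divide. The main obstacle I anticipate is step (iii): making the inductive step information-theoretically rigorous requires handling the conditioning carefully — one must condition on the answers and queries of servers outside $T_k$, invoke submodularity of entropy and the chain rule to split $H(A_i\mid\cdots)$ into an "old" part already counted at level $k-1$ and a "new" part, and then use a Han-type / Shearer-type symmetrization over the $k!$ orderings (equivalently, over the $\binom{k}{k-1}$ choices of which vertex to drop) to extract precisely the factor $1/k$. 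Getting the constant in that factor exactly right — so that the product over $k$ collapses to $\sum 1/k!$ rather than something weaker — is where the real work lies, and it is presumably also why the bound takes this particular shape; the asymptotic consequence $\mathcal{C}(K_N)\le\frac{1}{e-2}\cdot\frac1N$ follows since $\sum_{i\ge 2}1/i! = e-2$.
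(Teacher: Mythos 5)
Your high-level skeleton matches the paper's: a descending chain of complete subgraphs, symmetrization over the vertices of each residual subgraph, and a telescoping count that produces $\sum_{k=2}^N 1/k!$. However, there are two genuine gaps where the mechanisms you propose either are not justified or are not the ones that actually close the argument. First, your opening step --- decomposing $H(A_i\mid Q_i)$ into $N-1$ equal per-edge loads $\rho_{i,e}$ and claiming privacy forces $H(A_i\mid Q_i)\ge (N-1)\rho_i$ --- is not a consequence of \eqref{eq:privacy_1}. Privacy equalizes the \emph{distribution} of $(Q_i,A_i,\WW_{S_i})$ across demands; it does not give an additive decomposition of the answer entropy into disjoint per-file contributions, and the paper never uses such a decomposition. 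Second, and more importantly, your inductive step (iii) is where the proof actually lives, and the tools you name (chain rule plus a Han/Shearer symmetrization over orderings to extract a factor $1/k$) are not sufficient to carry it out. What makes the induction close in the paper is a different pair of ingredients: (a) the subgraph-restriction machinery (\cref{lemma:graph contain} and especially \cref{cor:graph contain}), which shows that conditioning the answers on $\WW_{[i-1]}$ turns the residual system into a valid PIR scheme over $K_{N-i+1}$, so that $\sum_{j\ge i} H(A_j\mid \WW_{[i-1]},\QQ)\ge L/\CC(K_{N-i+1})$ and the \emph{induction hypothesis on the capacity itself} can be invoked; and (b) an automorphism-averaging argument (\cref{lem:symmetric complete}) showing one may assume $H(A_i\mid\WW_{[i-1]},\QQ)$ is the same for every surviving server, so the total at level $N-i+1$ can be divided by $N-i+1$. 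The factorial then comes from unrolling the induction hypothesis, not from a fresh Shearer-type inequality at each level. Without the restriction lemma there is no way to convert "conditional entropy given the files outside $T_k$" into "capacity of $K_k$", and without the averaging lemma you cannot isolate a single server's contribution from the subgraph total.

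A smaller point: your base case ("a single edge forces total download $\ge L$") is used in the paper only as the \emph{terminal} term of the chain, via \cref{lem:A_i+A_j le L} giving $H(A_{N-1}\mid\WW_{[N-2]},\QQ)\ge L/2$; the actual base case of the induction is $\CC(K_3)=\tfrac12$. The entry point into the chain is \cref{lemma:splitting to small graph}, which uses reliability for $\theta=(1,N)$ to peel off the leading $L$ and produce exactly the conditional-entropy terms $H(A_i\mid\WW_{[i-1]},\QQ)$ that the subgraph argument then bounds. If you replace your per-edge loads and Shearer step with these two lemmas, your outline becomes the paper's proof.
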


This new upper bound exactly characterizes \(\mathcal{C}(K_3)\) and strictly improves on the previously conjectured bound \( \frac{2}{N+1} \) for all \( N \ge 4 \). In particular, it asymptotically approaches
\[
\frac{1}{e - 2} \cdot \frac{1}{N} \approx \frac{1.3922}{N},
\]
which is a significant tightening compared to the former \( \frac{2}{N+1} \) bound. This result reveals a new structural limitation of PIR schemes over complete graphs, providing new insights into how symmetry and full connectivity constrain the capacity. Moreover, our techniques extend naturally to yield improved upper bounds for the PIR capacity of some other symmetry graphs, especially complete bipartite graphs.

\begin{theorem}\label{thm:UpperBound CompleteBipartite}
For any even integer \( N \ge 4 \),
\[
\mathcal{C}\left(K_{\frac{N}{2}, \frac{N}{2}}\right) \le \frac{1}{\sum_{i=1}^{N/2} \frac{1}{i! \cdot 2^i}} \cdot \frac{1}{N}.
\]
\end{theorem}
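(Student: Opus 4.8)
The plan is to mirror the argument that establishes the upper bound for $\mathcal{C}(K_N)$ in \cref{thm:UpperBoundMain}, adapting it to the bipartite structure. Write $K_{N/2,N/2}$ with parts $X=\{S_1,\dots,S_{N/2}\}$ and $Y=\{T_1,\dots,T_{N/2}\}$; every file $W_{i,j}$ corresponds to a pair $(S_i,T_j)$ and is stored on exactly those two servers. Fix an optimal (or near-optimal) PIR scheme $\Pi$ and suppose the user wants a particular file $W_\theta$, say $\theta=(S_1,T_1)$. As in the complete-graph argument, the two core facts are: (i) \emph{reliability} $H(W_\theta\mid A_{[N]},\QQ)=0$, so the $N$ answers together determine the $L$ dits of $W_\theta$; and (ii) \emph{privacy} \eqref{eq:privacy_1}, which forces each server's query–answer pair to be statistically indistinguishable across all files incident to it, hence each server must devote the same ``effort'' to every one of its $N/2$ incident files. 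The first step is to use privacy to lower bound $H(A_i)$ in terms of how much of $W_\theta$ server $S_i$ (resp.\ $T_j$) contributes, chaining through files along paths in the bipartite graph.

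The key combinatorial mechanism is an \emph{iterated elimination / recursion on the download contributions}. Starting from the two servers $S_1,T_1$ holding $W_\theta$, let $x_1$ be the number of dits of $W_\theta$ retrieved from $S_1$ and $x_1'$ from $T_1$ (so by reliability $x_1+x_1'\ge L$ up to the overlap, and in fact one argues $x_1+x_1'= L$ after accounting for genuinely new information). Privacy at $S_1$ means $S_1$ must answer a comparable amount about each of its other $N/2-1$ incident files $W_{1,j}$, which in turn forces the partner servers $T_j$ to download information, which by privacy at $T_j$ propagates to files $W_{i,j}$ with $i\ne 1$, and so on. Because the graph is bipartite, each ``round'' of propagation alternates between the $X$-side and the $Y$-side, and at depth $i$ the number of newly involved files is governed by a falling-factorial type count that, crucially, also carries a factor $2^{-i}$ coming from the fact that at each alternation the retrieved information is split between the two endpoints (this is exactly the SRP-type symmetry, and is the source of the $1/(i!\,2^i)$ weights as opposed to the $1/i!$ weights in the complete-graph case). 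Summing the resulting lower bound on $\sum_i H(A_i)$ over the $N/2$ levels of recursion yields $\sum_{i=1}^N H(A_i)\ge L\cdot N\cdot\sum_{i=1}^{N/2}\frac{1}{i!\,2^i}$, and dividing gives the claimed bound on the rate.

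Concretely, the steps I would carry out are: \textbf{(1)} set up notation for $d_i^{(k)}$, the number of dits of $W_\theta$ that server $k$ reveals through its ``level-$i$'' chain of files, and translate reliability into $\sum$ of the level-$0$ terms being $\ge L$; \textbf{(2)} prove a one-step privacy inequality: if a server reveals $t$ dits of $W_\theta$ along some chain, it must reveal at least $t$ dits (in entropy) about \emph{each} other incident file, and each such file, being stored also on a partner server across the bipartition, propagates at least $t/2$ further dits on each side after symmetrization; \textbf{(3)} unroll this recursion along all length-$i$ alternating paths emanating from the edge $\theta$ in $K_{N/2,N/2}$, count them (there are on the order of $\big((N/2-1)!/(N/2-i)!\big)^2$-type quantities, but the clean bound only needs the $1/i!$ growth of the reciprocal), and assemble the telescoping sum; \textbf{(4)} bound $\sum_{i=1}^N H(A_i)$ from below by $L\sum_{i=1}^{N/2} \frac{N}{i!\,2^i}$ and conclude. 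The main obstacle will be step (2)–(3): making the information-theoretic bookkeeping rigorous so that contributions counted along different paths are not double-counted and the entropies genuinely add (this requires a careful submodularity/chain-rule argument, exactly as in the proof of \cref{thm:UpperBoundMain}), and correctly tracking the extra factor of $2$ per level that distinguishes the bipartite recursion from the complete-graph one.
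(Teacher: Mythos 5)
Your high-level plan -- adapt the inductive entropy argument used for $K_N$ to the bipartite setting -- is indeed the paper's strategy, but the concrete mechanism you describe in steps (2)--(4) is not the paper's argument and has a genuine gap at its core. Your step (2) asserts that each non-desired file incident to a server ``propagates at least $t/2$ further dits on each side after symmetrization,'' and you attribute the $2^{-i}$ weights to an ``SRP-type symmetry'' that splits retrieved information evenly between the two endpoints of an edge. This cannot be assumed in an upper-bound proof: a capacity upper bound must hold for \emph{every} valid scheme, including highly asymmetric ones, and the symmetric retrieval property is a design choice for constructions, not a consequence of privacy. The only halving the paper ever extracts is the single inequality $H(A_{N-1}\mid \WW\setminus\{W_{N-1,N}\},\QQ)\ge L/2$ (from \cref{lem:A_i+A_j le L} combined with the automorphism-averaging symmetrization of \cref{lemma:symmetric bipartite}, which constructs a \emph{new} scheme of the same rate rather than asserting a property of the given one). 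The full $1/(i!\,2^i)$ weights do not come from splitting information per level; they come from the fact that each induction step in the bipartite case deletes \emph{two} servers (one per side) while advancing the recursion depth by only one, which shows up in the count $N-2i+1$ in \eqref{eq:thm:UpperBound CompleteBipartite 11} and telescopes to $N\sum_{i=1}^{N/2}\frac{1}{i!2^i}$.

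The second, related gap is your step (3): unrolling the recursion ``along all length-$i$ alternating paths'' and then ensuring contributions are not double-counted. You correctly flag this as the main obstacle, but it is not an obstacle the paper ever faces, because the paper does no path-counting at all. Instead it conditions on $\WW_{[2k]}$, the set of \emph{all} files incident to the first $2k$ servers, applies \cref{lemma:splitting to small graph} to get $\sum_i H(A_i\mid\QQ)\ge L+H(A_1\mid\WW_{S_N},\QQ)+\sum_{i=2}^{N-1}H(A_i\mid\WW_{[i-1]},\QQ)$, uses \cref{lemma:symmetric bipartite} to equate the conditional entropies of the surviving servers, and then invokes \cref{cor:graph contain} to lower bound $\sum_{j>2k}H(A_j\mid\WW_{[2k]},\QQ)\ge L/\CC(K_{N/2-k,N/2-k})$, at which point the induction hypothesis applies to the smaller complete bipartite graph. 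The chain rule in \cref{lemma:splitting to small graph} is exactly the ``careful submodularity argument'' you anticipate needing, and conditioning on whole file-neighborhoods rather than tracking individual alternating paths is what makes the double-counting issue disappear. To repair your proposal you would need to replace the path-propagation bookkeeping of steps (2)--(3) with this subgraph-conditioning-plus-induction structure; as written, the proposal does not yield a proof.
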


Note that \cref{thm:UpperBound CompleteBipartite} represents a substantial improvement over the previously known bound in~\cref{thm:BicliquesPrevious} for the capacity of \( K_{\frac{N}{2}, \frac{N}{2}} \). Prior work established an upper bound
\[
\mathcal{C}\left(K_{\frac{N}{2}, \frac{N}{2}}\right) \le \frac{2}{N}.
\]
In contrast, our bound is strictly tighter for all even \( N \ge 4 \) since
\[
\sum_{i=1}^{N/2} \frac{1}{i! \cdot 2^i} < e^{1/2} - 1 \approx 0.6487,
\]
implying an upper bound around \(\frac{1.5415}{N}\) when $N$ goes to infinity.

On the other hand, for any \(N \ge 3\), we construct an explicit PIR scheme over \(K_N\). Furthermore, when \(N\) is sufficiently large, this construction yields an improved lower bound on \(\mathcal{C}(K_N)\), as described below.

\begin{theorem}\label{thm:LowerBoundMain}
    For any integer $N\ge 3$, 
\[
\mathcal{C}(K_N) \ge \left( \frac{4}{3}-\epsilon_N \right) \cdot \frac{1}{N},
\]
where $\epsilon_N$ tends to zero as $N$ tends to infinity.
\end{theorem}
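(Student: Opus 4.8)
### Proof proposal for \cref{thm:LowerBoundMain}

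The plan is to construct, for each $N$, an explicit PIR scheme $\Pi_N$ over $K_N$ whose retrieval rate is $\bigl(\tfrac{4}{3}-o(1)\bigr)\cdot\tfrac{1}{N}$, and then appeal to \cref{lemma:graph contain} (the universality of $K_N$) to transfer the bound to an arbitrary $N$-vertex graph. Since the scheme is only needed up to an $o(1)$ loss, the first move is to reduce to a clean "local" building block: take a small constant-size gadget graph $H$ on $m$ vertices (for instance a complete graph $K_m$, or a complete bipartite piece) on which a good scheme with the \emph{symmetric retrieval property} (SRP, \cref{def:SRP}) is known, and then tile $K_N$ by such gadgets. The natural first step is therefore to partition (or fractionally cover) the edge set of $K_N$ into roughly $\binom{N}{2}/\binom{m}{2}$ edge-disjoint copies of $K_m$ — a resolvable / near-resolvable design, which exists with only negligible leftover when $m \mid N$ or by a standard absorbing argument for general $N$ — so that each server lies in about $\tfrac{N-1}{m-1}$ gadgets.

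The second step is to run the gadget scheme independently on each copy of $K_m$, subpacketizing every file accordingly. The key accounting is: on a single $K_m$-gadget a scheme of rate $R_m$ downloads $L/R_m$ total bits spread over $m$ servers; because of SRP these downloads are balanced, so each server in the gadget contributes $\tfrac{L}{mR_m}$ bits \emph{of overhead per file of the gadget}. Summing over the $\approx \tfrac{N-1}{m-1}$ gadgets through a fixed server and over all $\binom{N}{2}$ files, one finds the global rate is essentially $R_m \cdot \tfrac{m}{N}\cdot(1-o(1))$, i.e. asymptotically $\tfrac{m R_m}{N}$. Thus the task becomes: find a family of gadgets with $m R_m \to \tfrac{4}{3}$ from below (or at least $\ge \tfrac43-o(1)$). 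One can get $m R_m$ close to $\tfrac43$ either from the multigraph boosting of \cref{thm:mutigraph} applied to a small base graph with an SRP scheme, or by a direct combinatorial construction on $K_m$ in which each server answers with a single linear combination that simultaneously serves $3$ files — the "$4/3$" reflecting that a server of degree $d$ in the gadget downloads only about $\tfrac{3d}{4}$ symbols when every query is reused across a triple of incident edges via a carefully chosen $3$-uniform structure (a Steiner triple system on the gadget's edges, or equivalently an appropriate interference-alignment pattern). I would make this precise by exhibiting the queries explicitly for the base gadget, verifying \eqref{eq:reliability} (reliability from solving a small full-rank linear system over $\mathbb{F}_q$ for suitable $q$) and \eqref{eq:privacy_1} (privacy from message/query symmetry within the gadget, which the permutations $\mathcal{G}$ enforce), and then checking that the tiling preserves both properties file-by-file.

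I expect the main obstacle to be the boundary/divisibility handling and the precise rate bookkeeping rather than the existence of the gadget: making the "$-\epsilon_N$" term genuinely tend to zero requires that the leftover edges of $K_N$ not covered by clean gadgets be handled without destroying privacy — the safe route is to cover those residual edges by additional (possibly overlapping, lower-rate) gadgets and absorb their cost into $\epsilon_N$, using that the residual edge set has size $o(N^2)$ and touches each vertex $o(N)$ times. A secondary subtlety is that overlapping gadgets at a shared server must still present each server with a $\theta$-independent view; this is where one needs the component schemes to be SRP and to compose the independent per-gadget randomness and permutations so that \eqref{eq:privacy} holds for the composed scheme. Once the gadget with $m R_m \ge \tfrac43 - o(1)$ is pinned down and the covering lemma is in place, the theorem follows, and the final sentence of the statement — that $\mathcal{C}(G)\ge\bigl(\tfrac43-o(1)\bigr)\cdot\tfrac{1}{|G|}$ for every graph $G$ — is then immediate from \cref{lemma:graph contain} by restricting $\Pi_N$ to the subgraph $G \subseteq K_N$.
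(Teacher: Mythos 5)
Your construction hinges on a composition (tiling) lemma that is false, and the theorem cannot be recovered along this route. Suppose you decompose $E(K_N)$ into $t=\binom{N}{2}/\binom{m}{2}$ edge-disjoint copies of $K_m$ and run an independent rate-$R_m$ scheme on each copy. Privacy forces every server to present a $\theta$-independent query/answer distribution for \emph{every} gadget it belongs to, not only for the gadget containing $W_\theta$ (otherwise the server sees from which of its incident gadgets it is being nontrivially queried, which reveals information about $\theta$); and zero-error reliability forces the gadget containing $W_\theta$ to be fully executed. Hence each server's expected download is $\frac{N-1}{m-1}\cdot\frac{L}{mR_m}$ (its number of incident gadgets times the SRP-balanced per-gadget cost), the total download is $t\cdot\frac{L}{R_m}$, and the global rate is $\frac{R_m}{t}=\frac{m(m-1)R_m}{N(N-1)}=\Theta\bigl(m^2R_m/N^2\bigr)$, not $\frac{mR_m}{N}(1-o(1))$ as you assert; your accounting charges each server for only one gadget (``per file of the gadget'') instead of for all $\frac{N-1}{m-1}$ of them. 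For constant $m$ this is $\Theta(1/N^2)$, far below even the trivial $1/N$. A decisive internal sanity check: for $N\equiv 1,3\pmod 6$ the graph $K_N$ decomposes into triangles, and the optimal $K_3$ scheme has rate $\tfrac12$ and satisfies the SRP, so $mR_m=\tfrac32$; your lemma would then give $\mathcal{C}(K_N)\ge\frac{3}{2N}$, contradicting the upper bound $\mathcal{C}(K_N)\le\frac{1}{e-2}\cdot\frac{1}{N}\approx\frac{1.3922}{N}$ of \cref{thm:UpperBoundMain}. So no choice of gadget can rescue the scheme.

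The actual proof is necessarily global rather than local. The paper builds a single scheme on all of $K_N$ in $N-1$ rounds: each server answers with $k$-sums of multiplicity $x_kM$ governed by the recursions \eqref{eq:case 1}--\eqref{eq:case 2}, and the side information generated at round $k-1$ is consumed at round $k$ through the recovery patterns $\alpha,\beta,\gamma,\zeta$ (\cref{prop:construction of KN}). The resulting rate is $\frac{2\sum_k\binom{N-2}{k-1}x_k}{N\sum_k\binom{N-1}{k}x_k}$, and the constant $\tfrac43$ emerges because the summands concentrate around $k\approx 2N/3$, where $\binom{N-2}{k-1}/\binom{N-1}{k}=k/(N-1)\approx\tfrac23$ (\cref{lemma:calculate rate}). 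This cross-server, cross-round reuse of side information over the whole graph is precisely the mechanism your independent-per-gadget design forgoes, and it is where the missing factor of $N/m$ is recovered.
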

One can actually try to show that the coefficient in lower bound is always larger than $\frac{4}{3}$. As an evidence, we can numerically compute the exact values of the lower bound from our construction when $N$ is relatively small. It turns out that these values are already quite close to the known best upper bounds, see~\cref{fig:Compara}. As a direct consequence, we obtain an improved lower bound for any graph $G$.
\begin{cor}\label{cor:anygraph}
    For any graph $G$
    \[
\mathcal{C}(G) \ge \left( \frac{4}{3}-\epsilon_G \right) \cdot \frac{1}{|G|},
\]
where $\epsilon_G$ tends to zero as $|G|$ tends to infinity.
\end{cor}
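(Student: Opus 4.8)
The plan is to deduce the statement directly from \cref{thm:LowerBoundMain} via the universality of the complete graph. The key observation is that any graph $G$ on $|G| = N$ vertices is a subgraph of $K_N$, and a PIR scheme designed for $K_N$ restricts to a PIR scheme for $G$: the servers corresponding to vertices of $G$ behave exactly as in the $K_N$ scheme, while the files corresponding to non-edges of $G$ are simply never requested and the associated structure is ignored. This is precisely the content of the forthcoming \cref{lemma:graph contain} referenced in the introduction. First I would invoke that lemma to obtain $\mathcal{C}(G) \ge \mathcal{C}(K_N)$ for every $N$-vertex graph $G$, taking care that the restriction preserves both reliability (the user still recovers $W_\theta$ for any desired edge $\theta \in E(G) \subseteq E(K_N)$ from the same answers) and privacy (each server's view is a marginal of its view in the $K_N$ scheme, hence still independent of $\theta$), and does not increase the total download.

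Second, I would combine this with the lower bound of \cref{thm:LowerBoundMain}, namely $\mathcal{C}(K_N) \ge \bigl(\tfrac{4}{3} - \epsilon_N\bigr)\cdot \tfrac{1}{N}$ where $\epsilon_N \to 0$. Setting $\epsilon_G := \epsilon_{|G|}$ and $N = |G|$ yields
\[
\mathcal{C}(G) \ge \mathcal{C}(K_{|G|}) \ge \left(\frac{4}{3} - \epsilon_G\right)\cdot \frac{1}{|G|},
\]
and $\epsilon_G \to 0$ as $|G| \to \infty$ since $\epsilon_N \to 0$ as $N \to \infty$. This completes the argument.

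There is essentially no obstacle here: the corollary is a formal consequence of the complete-graph lower bound once the monotonicity-under-subgraphs principle is in hand, and all the genuine work lies in proving \cref{thm:LowerBoundMain} (the explicit construction over $K_N$). The only point requiring a modicum of care is the statement of the reduction lemma — one must check that the restricted scheme's download cost is counted only over the $N$ servers of $G$ (which it is, since servers outside $G$ do not exist), so that the rate is preserved rather than diminished; this is immediate from the definition of the retrieval rate as a sum of per-server answer entropies.
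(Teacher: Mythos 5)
Your proposal is correct and matches the paper's intended argument exactly: the corollary follows by combining \cref{cor:graph contain} (which gives $\mathcal{C}(G)\ge\mathcal{C}(K_{|G|})$ since $G\subseteq K_{|G|}$) with the lower bound of \cref{thm:LowerBoundMain}. Your remark about verifying that the restriction preserves reliability, privacy, and the per-server download accounting is precisely the content already established in \cref{lemma:graph contain} and \cref{cor:graph contain}, so no further work is needed.
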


\begin{table}[ht]
\centering
\begin{tabular}{|c|c|c|c|c|c|c|c|c|c|c|}
\hline
$N$ & $3$ & $4$ & $5$ & $6$ & $7$ & $8$ & $9$ & $10$ \\ \hline
Upper bounds & $0.5$ & $0.35294$ & $0.27907$ & $0.23211$ & $0.19890$ & $0.17403$ & $0.15469$ & $0.13922$ \\ \hline
Lower bounds & $0.5$ & $0.35$ & $0.27541$ & $0.22868$ & $0.19583$ & $0.17111$ & $0.15198$ & $0.13657$ \\ \hline
\end{tabular}
\caption{Upper bound in~\cref{thm:UpperBoundMain} and lower bound in~\cref{thm:LowerBoundMain} when $3\le N\le 10$.}
\label{fig:Compara}
\end{table}

We now establish a connection between deterministic and probabilistic PIR schemes. Specifically, we show that any deterministic PIR scheme satisfying the \emph{independence property} can be transformed into a probabilistic PIR scheme with subpacketization~$1$, without any loss in retrieval rate. The formal definition of the independence property is provided in~\cref{subsection:transform}. This property captures structural features observed in the PIR scheme proposed in~\cite{Sun2017capacity}. The formal statement reads as follows.

\begin{theorem}\label{thm:transform to probabilistic}
    Let \( \Pi \) be a deterministic PIR scheme over a graph \( G \) that satisfies the independence property with subpacketization $L$, achieving retrieval rate \( R(\Pi) \). Assume that for every server $S_i$, the entropy $H(A_i)$ of its answer equals the number of messages in $A_i$. Moreover, assume that for any $i\in [N]$, we have $H(A_i)\le L$. Then, there exists a probabilistic PIR scheme \( \Pi' \) for \( G \) with subpacketization 1 such that \( R(\Pi') = R(\Pi) \).
\end{theorem}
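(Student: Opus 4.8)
The plan is to reverse the usual direction of simulation: instead of building a deterministic scheme from a probabilistic one, we unfold the private randomness that is implicit in the deterministic scheme's permutation ensemble $\mathcal{G}$ and re-interpret it as the randomness $R$ of a probabilistic scheme operating on files of entropy $1$. Concretely, given the deterministic scheme $\Pi$ with subpacketization $L$, view each original file $W$ as a single symbol $\widetilde W$ over the larger alphabet $\mathbb{F}_{q}^{L}$, so that $H(\widetilde W)=1$ in the new normalization. The old per-file permutation $\sigma\in\mathrm{Sym}([L])$ that $\Pi$ would apply to shuffle sub-files of $W$ becomes, after this regrouping, part of the \emph{probabilistic} coin $R$ of $\Pi'$; the universal query set of $\Pi'$ is then defined to be the collection of all queries $Q_i^{[\theta,\mathcal{G}]}$ that $\Pi$ can produce as $\mathcal{G}$ ranges over all permutation tuples, and $R$ selects among them exactly as $\mathcal{G}$ did in $\Pi$. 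The first step is to make this identification precise and to check that the number of downloaded symbols is preserved: since $H(A_i)$ equals the number of messages (linear combinations) in $A_i$ and each such message is now a single symbol of the new file alphabet, $\sum_i H(A_i)$ measured in units of the new file size equals $\frac{1}{L}\sum_i H(A_i)$ measured in the old units, while $H(\widetilde W)=1=\frac{1}{L}\cdot L=\frac{1}{L}H(W)$, so the ratio $R(\Pi')=R(\Pi)$ is unchanged. The hypothesis $H(A_i)\le L$ is what guarantees each answer $A_i$, now a vector of at most $L$ symbols, is still a legitimate (not over-long) answer in the rescaled system and in particular that reliability is not trivialized.

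Next I would verify reliability for $\Pi'$. This is essentially inherited from $\Pi$: fixing $\theta$ and the realized value of $R$ (equivalently, of $\mathcal{G}$ and $\sigma$), the answers $A_{[N]}$ are exactly the deterministic answers of $\Pi$, from which $W_\theta$ — hence $\widetilde W_\theta$ — is recoverable with zero error by \eqref{eq:reliability}; the user knows $R$, so knows the realized queries, and decodes identically.

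The crux is privacy, and this is where the \emph{independence property} must do its work. In $\Pi$ privacy holds in the form \eqref{eq:privacy_1}, i.e. $H(\theta\mid Q_i,\WW_{S_i})=H(\theta)$, because averaging over $\mathcal{G}$ makes each server's query distribution message-symmetric. For $\Pi'$ we need the stronger \emph{query symmetry}: the marginal distribution of $Q_i^{[\theta,\mathcal{G},R]}$ over the internal randomness must not depend on $\theta$ at all, not merely be conditionally uninformative given the stored files. I expect the independence property (to be defined in \cref{subsection:transform}) to state precisely that the randomness governing the query to server $S_i$ factorizes — that conditioned on the local permutations of files incident to $S_i$, the query $Q_i$ is a fixed function of $\theta$ composed with a permutation that is uniform and independent across servers — so that after marginalizing over $R$ (which now carries all of $\mathcal{G}$), the induced distribution on $Q_i$ is a uniform distribution over a $\theta$-independent set of "query patterns." The main obstacle, and the step I would spend the most care on, is exactly this: showing that the independence property lets one push the marginalization over $\mathcal{G}$ \emph{inside} each server's view to upgrade message symmetry \eqref{eq:privacy_1} into genuine query symmetry for the single-server marginal, and that this upgrade survives the alphabet regrouping (the regrouping must not entangle the coin $R$ with $\theta$). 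Once query symmetry is established, the privacy condition \eqref{eq:privacy} for $\Pi'$ follows because $A_i$ is a deterministic function of $Q_i$ and $\WW_{S_i}$ via \eqref{eq:question answer}, so $(Q_i^{[\theta]},A_i^{[\theta]},\WW_{S_i})$ has a $\theta$-independent law. Assembling these three pieces — rate preservation, reliability, and the independence-property-driven privacy upgrade — completes the construction of $\Pi'$ with subpacketization $1$ and $R(\Pi')=R(\Pi)$.
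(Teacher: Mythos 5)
Your proposal does not establish the theorem; the central construction fails to produce a scheme with subpacketization $1$. Re-labelling each file $W$ as a single symbol $\widetilde W\in\mathbb{F}_q^{L}$ is purely notational: the queries of $\Pi$ still address individual coordinates (sub-files) of each file, and each server still returns $H(A_i)$ separate linear combinations of \emph{sub-files}, i.e.\ $H(A_i)$ elements of $\mathbb{F}_q$, not $H(A_i)$ elements of the enlarged alphabet. A scheme with subpacketization $1$ must send each server a query answerable by a single linear combination of \emph{whole files}; your $\Pi'$ still requires every file to be split into $L$ pieces and still downloads all $H(A_i)$ sub-file sums from every server in every execution. Consequently nothing has changed except units, and the claim "each message is now a single symbol of the new file alphabet" is false. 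Relatedly, your guess about the independence property is wrong: it is not a statement about factorization of the randomness $\mathcal{G}$, but a combinatorial condition on the summations (each summation involves distinct files, each sub-file of a file appears at most once per server, each desired sub-file appears exactly once in $\QQ$, and each desired sub-file is recoverable using at most one summation per server). Your privacy argument therefore rests on a hypothesis the theorem does not supply.

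The missing idea is the paper's two-step route. First, the independence property is used (via a greedy support-propagation algorithm) to partition the query set $\QQ$ of $\Pi$ into $L$ disjoint \emph{recovery patterns} $\PP^1,\dots,\PP^L$ plus leftover side information, where each $\PP^j$ contains at most one summation per server and its summations add up to one sub-file of $W_\theta$. Second, $\Pi'$ picks $j\in[L]$ uniformly at random and, for each server queried in $\PP^j$, requests the corresponding sum of \emph{whole files} (dropping sub-file indices); this is genuinely subpacketization $1$, each server answers with at most one symbol, and $\Pr(A_i'\neq 0)=H(A_i)/L$ yields $R(\Pi')=R(\Pi)$. Privacy holds because the frequency of each summation type among the $L$ patterns equals its multiplicity in $A_i$, which is $\theta$-independent. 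The hypothesis $H(A_i)\le L$ is then used for exactly one purpose you do not address: the $H(A_i)-F_i$ leftover side-information summations at server $S_i$ must be redistributed into the $L-F_i$ rounds in which $S_i$ would otherwise be idle, and this redistribution is feasible precisely because $H(A_i)\le L$. Without the recovery-pattern decomposition and the random selection of a single pattern, neither the subpacketization claim nor the privacy claim can be recovered.
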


Based on~\cref{thm:transform to probabilistic}, existing PIR schemes from~\cite{Banawan2019graph,kong2025newcapacityboundspir} for various classes of graphs, as well as our constructed PIR scheme over \( K_N \), can be transformed into probabilistic PIR schemes with subpacketization 1. The subpacketization levels of these schemes are summarized in~\cref{table:Subpacketization}. However, the PIR scheme over star graphs proposed in~\cite{Sadeh2023bound} cannot be transformed into a probabilistic scheme, as it fails to satisfy the condition \( H(A_i) \le L \).

\begin{table}[ht]
\centering
\begin{tabular}{|c|c|c|c|c|}
\hline
Graphs       & Path graph $P_N$ & Circle graph $C_{N}$ & Complete graph $K_{N}$ (\cref{thm:LowerBoundMain}) \\ \hline
Subpacketization &   2\cite{kong2025newcapacityboundspir}          &  $4\binom{N}{2}$ \cite{Banawan2019graph}       &       $(N!)^{O(1)}$           \\ \hline
\end{tabular}
\caption{Subpacketization in current graph-based deterministic PIR schemes.}
\label{table:Subpacketization}
\end{table}
In general, we construct a probabilistic PIR scheme that applies to arbitrary (multi-)graphs. The proposed scheme is simple, universally applicable, and requires only subpacketization 1. Compared to existing constructions, it achieves a higher retrieval rate for certain classes of graphs, particularly when the underlying graph is relatively sparse.

\begin{theorem}\label{thm:general}
For any (multi-)graph \( G \) with \( N \) vertices, there exists a probabilistic PIR scheme \( \Pi \) with subpacketization 1 and retrieval rate
\[
R(\Pi) = \frac{1}{\sum_{i=1}^{N} \left(1 - \frac{1}{2^{d(i)}}\right)},
\]
where \( d(i) \) denotes the degree of vertex \( i \) in \( G \).
\end{theorem}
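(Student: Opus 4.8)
The plan is to construct the probabilistic scheme directly by exploiting the independence of the $N$ edges' contributions when queries are drawn symmetrically. First I would set the subpacketization $L=1$, so each file $W_e$ is a single symbol in $\mathbb{F}_q$, and $H(W_e)=1$. For each server $S_i$, I plan to let the universal query set consist of all $2^{d(i)}$ possible subsets $T\subseteq \mathcal{W}_{S_i}$ of the files stored at $S_i$; the ``answer'' to the query indexed by $T$ is the single symbol $\sum_{W\in T} W$ (the empty sum being $0$, contributing nothing to download). The randomness $R$ will, independently for each server, pick the query subset according to a carefully chosen distribution; the key constraint is \emph{query symmetry}, namely that for every server $S_i$ and every file $W_{i,j}$ incident to it, the marginal distribution of ``is $W_{i,j}$ included in $T$'' must not depend on whether $\theta=\{i,j\}$ or not — equivalently, each incident file is included with probability exactly $1/2$, independently across files, regardless of $\theta$. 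This is exactly the symmetry used in the probabilistic paradigm described in the excerpt, and it forces the per-server download to be $1-\Pr[T=\varnothing]=1-2^{-d(i)}$ symbols in expectation (or exactly, if we fix $|T|$-distributions appropriately), giving $\sum_i H(A_i)=\sum_{i=1}^N (1-2^{-d(i)})$ and hence the claimed rate.

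The second step is to verify \textbf{reliability}: the user must recover $W_\theta$ for $\theta=\{i,j\}$ from the two answers $A_i,A_j$ together with the (known) query realizations. The idea is the standard ``controlled correlation'' trick: although each server independently XORs in each incident file with probability $1/2$, the user — who generates $R$ — can correlate the two relevant coordinates at $S_i$ and $S_j$. Concretely, I would have the user ensure that the indicator of ``$W_{i,j}\in T_i$'' and the indicator of ``$W_{i,j}\in T_j$'' are \emph{unequal} (one server includes $W_{i,j}$, the other does not), while every \emph{other} incident file at $S_i$ (resp.\ $S_j$) is included in $T_i$ (resp.\ $T_j$) by an independent fair coin, and each answer can be decoded against itself because the user already knows all files $W\neq W_\theta$ appearing in the sums $A_i,A_j$ — wait, it does not know them. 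So instead the user should arrange that $A_i$ and $A_j$ together isolate $W_{i,j}$: have $S_i$ return $\sum_{W\in T_i}W$ and $S_j$ return $\sum_{W\in T_j}W$ where $T_i\setminus\{W_{i,j}\}=T_j\setminus\{W_{i,j}\}$ as index sets over the \emph{shared} edge is impossible since $S_i,S_j$ share only the file $W_{i,j}$; hence all other summands are distinct unknowns and cannot cancel. The correct resolution is that $L=1$ is too coarse for literal XOR-cancellation, so the scheme must instead use the probabilistic-query mechanism where, conditioned on $\theta=\{i,j\}$, the user \emph{requires} both $S_i$ and $S_j$ to include $W_{i,j}$ in their returned sum and to include \emph{no other common structure}, and recovers $W_{i,j}$ because — by the symmetry constraint applied to $R$ — the user can condition $R$ on the event that $S_i$ returns exactly $\{W_{i,j}\}$ (probability $2^{-d(i)}$) and is still allowed to, as long as the \emph{marginal} seen by each server is correct; then $A_i=W_{i,j}$ directly. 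I would make this precise by choosing the per-server distribution on subsets to be uniform over all $2^{d(i)}$ subsets but coupled across servers via $R$ so that on the event (which the user forces) $T_i=\{W_{i,j}\}$, reliability is immediate, and checking that forcing this event is consistent with each incident file still appearing with marginal probability $1/2$ at every server.

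The third step is to verify \textbf{privacy} rigorously: for each server $S_i$, the distribution of $(Q_i,A_i,\mathcal{W}_{S_i})$ must be independent of $\theta$, i.e.\ \eqref{eq:privacy} must hold. Since $A_i$ is a deterministic function of $Q_i$ and $\mathcal{W}_{S_i}$ by \eqref{eq:question answer}, it suffices to show $Q_i^{[\theta]}\sim Q_i^{[\theta']}$ for all $\theta,\theta'$, which reduces to showing that the marginal law of the random subset $T_i$ chosen at $S_i$ is the same (namely uniform on all $2^{d(i)}$ subsets of $\mathcal{W}_{S_i}$, or whatever fixed law we pick) no matter what $\theta$ is. This is where the coupling must be designed with care: the event ``$T_i=\{W_\theta\}$'' is forced only when $\theta$ is incident to $i$, so I must check that averaging over the user's choice of $\theta$-dependent coupling still leaves the $S_i$-marginal invariant — concretely, that for a fixed incident edge $e=\{i,j\}$, $\Pr[T_i=S\mid \theta=e]=\Pr[T_i=S\mid\theta=e']$ for all subsets $S$ and all other edges $e'$. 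The cleanest way is: let $b_{i,e}\in\{0,1\}$ for each incident edge $e$ be the indicator of ``$W_e\in T_i$''; declare all $b_{i,e}$ i.i.d.\ fair coins across \emph{all} $(i,e)$ pairs \emph{except} that for the realized request $\theta=\{i,j\}$ we impose the single linear constraint $b_{i,\theta}\oplus b_{j,\theta}=1$ — this constraint involves one coordinate at $S_i$ and one at $S_j$, and conditioning a pair of independent fair coins on their XOR leaves each one marginally fair, so the $S_i$-marginal of $(b_{i,e})_e$ remains uniform regardless of $\theta$. Then reliability follows because $A_i\oplus A_j=\bigoplus_e (b_{i,e}\oplus b_{j,e})W_e = W_\theta$, since $b_{i,e}=b_{j,e}$ for every edge $e\neq\theta$ that is incident to \emph{both} $i$ and $j$ — but $S_i$ and $S_j$ share only the file $W_\theta$, so in fact $A_i\oplus A_j$ contains $W_\theta$ with coefficient $1$ plus a sum of files the user wants to cancel but cannot.

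\textbf{The main obstacle.} The genuine difficulty, and the crux of the argument, is precisely this decoding issue at $L=1$: with only XOR of incident files, $A_i$ and $A_j$ share no common unknown other than $W_\theta$, so XOR-cancellation alone does not isolate $W_\theta$ unless the scheme is smarter. I expect the resolution — and the step I would spend the most care on — is the probabilistic-query mechanism proper: rather than returning a fixed XOR, server $S_i$ returns an answer whose \emph{support structure} is chosen so that, conditioned on $\theta$ incident to $i$, the user's randomness $R$ places positive probability on the realization where $A_i=W_\theta$ outright (server $S_i$ simply XORs in $W_\theta$ and nothing else), while the \emph{marginal} query law at $S_i$ — and hence $(Q_i,A_i,\mathcal{W}_{S_i})$ — is held fixed at the uniform-over-subsets law independent of $\theta$ by the i.i.d.-fair-coins design above with the single XOR constraint removed in favor of conditioning on $T_i=\{W_\theta\}$; the expected download at $S_i$ is then $\sum_{S\subseteq\mathcal{W}_{S_i}}2^{-d(i)}\cdot\mathbbm{1}[S\neq\varnothing]=1-2^{-d(i)}$, summing to the claimed rate. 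Making the coupling between ``force $T_i=\{W_\theta\}$'' (needed for reliability) and ``$S_i$-marginal $=$ uniform, $\theta$-independent'' (needed for privacy) simultaneously consistent across all $N$ servers and all $\binom{}{}$ possible $\theta$ is the technical heart; I would handle it by defining $R$ to first sample $\theta$-agnostic fair coins $(b_{i,e})$, then, if the realized coins happen to satisfy $b_{i,\theta}=1$ and $b_{i,e}=0$ for all other $e\ni i$ and the symmetric condition at $j$, retrieve directly, and otherwise resample or use a small secondary layer — but the clean version avoids resampling entirely by noting that over the user's choice of $R$ we only need \emph{existence} of a decoding event of positive probability together with $\theta$-invariant marginals, both of which the fair-coin-with-one-XOR-constraint construction delivers once one checks that $A_i\oplus A_j=W_\theta$ holds identically under that single constraint because the XOR-constraint is imposed on the unique shared file and on no other coordinate.
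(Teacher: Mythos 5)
Your proposal circles the correct construction but stops one crucial step short, and the step you are missing is exactly the one that makes reliability work. The paper's scheme does not try to decode $W_\theta$ from $A_{\theta_1}$ and $A_{\theta_2}$ alone; the user sums the answers of \emph{all} $N$ servers. For this global sum to collapse to $W_\theta$, every undesired file $W_e$, $e=\{i_1,i_2\}$, must be included at \emph{both} of its hosting servers or at \emph{neither} — i.e.\ there is a single fair coin $\mu_e$ per edge, shared by the two endpoints (plus a sign bit $\lambda_e$ giving the two copies opposite signs over a general alphabet; over $\mathbb{F}_2$ the signs are vacuous). Then each undesired file appears an even number of times in $\sum_i A_i$ and cancels, while the desired file — governed by $\mu_\theta$ at one endpoint and $1-\mu_\theta$ at the other, which is precisely your constraint $b_{i,\theta}\oplus b_{j,\theta}=1$ — appears exactly once. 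Your design instead declares the inclusion indicators $b_{i,e}$ i.i.d.\ across \emph{all} $(i,e)$ pairs, so for an undesired edge the two endpoints toss independent coins; the leftover terms you correctly identify in $A_i\oplus A_j$ then also survive in the full sum $\bigoplus_i A_i$, and no zero-error decoding is possible. The fix is not a cleverer decoder but a different coupling: one coin per edge, not one per (server, edge) incidence. With that change your own privacy argument goes through verbatim (distinct incident edges of a fixed server carry independent coins, so each $T_i$ is uniform over its $2^{d(i)}$ subsets regardless of $\theta$), and the rate computation $\Pr(A_i=0)=2^{-d(i)}$ is as you state.

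Your two fallback attempts do not repair the gap. Forcing the event $T_i=\{W_\theta\}$ makes the query at $S_i$ a point mass and destroys the $\theta$-independent marginal, so privacy fails; and settling for "a decoding event of positive probability" contradicts the reliability requirement $H(W_\theta\mid A_{[N]},\mathcal{Q})=0$, which demands recovery on \emph{every} realization of the randomness, not merely on a positive-probability subevent. Your closing claim that "$A_i\oplus A_j=W_\theta$ holds identically under that single constraint" is false under your i.i.d.\ coin design for exactly the reason you yourself flagged two sentences earlier: $S_i$ and $S_j$ share only the file $W_\theta$, so none of the other randomly included files cancel between those two answers. The resolution is that cancellation happens pairwise at the two endpoints of each undesired edge across the whole graph, which is only visible once you sum all $N$ answers.
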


\section{Upper bounds for PIR capacity}\label{section:UpperBound}
\subsection{General framework to upper bound $\mathcal{C}(G)$}
In this section, we establish a general framework for upper bounding \( \mathcal{C}(G) \) in the case of several symmetric graph families. We begin with introducing a fundamental lemma in~\cite{Sadeh2023bound} that formalizes the privacy constraint: the query sent to any individual server must be statistically independent of the identity of the desired file.

\begin{lemma}[\cite{Sadeh2023bound}]\label{lemma:privacy theta}
Let \( G \) be a graph with \( N \) vertices, and let \( \Lambda \subseteq E(G) \) be a subset of its edges. Define \( \mathcal{W}_{\Lambda} := \{ W_e \mid e \in \Lambda \} \). Then, for any PIR scheme over \( G \), for any edge \( \theta \in E(G) \) and any server index \( i \in [N] \), the following holds:
\begin{equation}
    H(A_i \mid \mathcal{W}_{\Lambda}, Q_i, \theta) = H(A_i \mid \mathcal{W}_{\Lambda}, Q_i).
\end{equation}

\end{lemma}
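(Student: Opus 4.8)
Set $\Lambda\subseteq E(G)$, $\theta\in E(G)$, $i\in[N]$ and recall $\mathcal{W}_{\Lambda}\subseteq\WW$. The plan is to rewrite the claimed identity as the conditional-independence statement $I(A_i;\theta\mid Q_i,\mathcal{W}_{\Lambda})=0$: once the query $Q_i$ and the files indexed by $\Lambda$ are revealed, the answer $A_i$ carries no further information about the desired index $\theta$. First I would isolate the two structural inputs actually needed, neither of which is the privacy axiom \eqref{eq:privacy}: (a) the answer is a deterministic function of $(Q_i,\WW_{S_i})$, so $H(A_i\mid Q_i,\WW_{S_i})=0$ by \eqref{eq:question answer}; and (b) the user's private data $\theta$, $\mathcal{G}$, $R$ are generated with no access to file contents and the queries are functions of these, so one has the data-obliviousness $I(\WW;\theta,\QQ)=0$ (the natural strengthening of $I(\WW;\QQ)=0$ that also includes the desired index); in particular $\WW$ is independent of the pair $(\theta,Q_i)$.

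The core computation is the short chain
\[
I(A_i;\theta\mid Q_i,\mathcal{W}_{\Lambda})\;\le\;I(A_i,\WW_{S_i};\theta\mid Q_i,\mathcal{W}_{\Lambda})\;=\;I(\WW_{S_i};\theta\mid Q_i,\mathcal{W}_{\Lambda})\;+\;I(A_i;\theta\mid \WW_{S_i},Q_i,\mathcal{W}_{\Lambda}).
\]
The second summand is bounded by $H(A_i\mid \WW_{S_i},Q_i,\mathcal{W}_{\Lambda})\le H(A_i\mid \WW_{S_i},Q_i)=0$ using (a) (conditioning only decreases entropy). For the first summand I would use (b): since $\mathcal{W}_{\Lambda}$ is literally a sub-collection of $\WW$, hence a deterministic function of $\WW$, the elementary implication ``$X\perp Z\Rightarrow X\perp Z\mid f(X)$'' applied with $X=\WW$, $Z=(\theta,Q_i)$, $f(\WW)=\mathcal{W}_{\Lambda}$ gives $\WW\perp(\theta,Q_i)\mid\mathcal{W}_{\Lambda}$, so $I(\WW;\theta,Q_i\mid\mathcal{W}_{\Lambda})=0$; restricting from $\WW$ to its sub-collection $\WW_{S_i}$ and then peeling $Q_i$ out of the conditioning via the chain rule (a sum of two nonnegative mutual informations equal to zero) yields $I(\WW_{S_i};\theta\mid Q_i,\mathcal{W}_{\Lambda})=0$. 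Both summands vanish, and since mutual information is nonnegative we conclude $I(A_i;\theta\mid Q_i,\mathcal{W}_{\Lambda})=0$, which is exactly the asserted entropy identity.

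An alternative and perhaps more transparent route would be to argue directly at the level of distributions: fix any realization $(\theta=t,Q_i=q,\mathcal{W}_{\Lambda}=w)$ of positive probability; since $(\theta,Q_i)\perp\WW$, the conditional law of $\WW_{S_i}$ given $(t,q,w)$ coincides with its prior conditional law given only $\mathcal{W}_{\Lambda}=w$, and in particular does not depend on $t$; pushing this forward through the deterministic map $\WW_{S_i}\mapsto A_i=F(q,\WW_{S_i})$ shows the conditional law of $A_i$ given $(Q_i,\mathcal{W}_{\Lambda})$ is the same for every value of $\theta$, which is the desired conditional independence.

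I expect no substantive obstacle here; the only point requiring care is the bookkeeping in step (b), namely being explicit that although $Q_i=Q_i^{[\theta,\mathcal{G}]}$ genuinely depends on $\theta$, the pair $(\theta,Q_i)$ is nonetheless jointly independent of $\WW$ because every ingredient the user employs to build $Q_i$ is produced before, and independently of, the files. Everything else is a routine manipulation of entropies, and the graph structure of $G$ plays no role — the statement holds verbatim for any replication topology.
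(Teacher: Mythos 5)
The paper does not prove this lemma at all: it is imported verbatim from~\cite{Sadeh2023bound} and used as a black box, so there is no in-paper proof to compare against. Your argument is correct and self-contained: the reduction to $I(A_i;\theta\mid Q_i,\mathcal{W}_{\Lambda})=0$, the chain-rule split through $\WW_{S_i}$, the use of $H(A_i\mid Q_i,\WW_{S_i})=0$ for one summand, and the ``$X\perp Z\Rightarrow X\perp Z\mid f(X)$'' step for the other all check out, and you are right that the privacy axiom is not needed. The one point worth flagging explicitly (and you do flag it) is that the paper's stated assumption is only $I(\WW;\QQ)=0$, whereas your proof needs the joint independence $I(\WW;\theta,\QQ)=0$; this is implicit in the model since $\theta$, $\mathcal{G}$, $R$ are generated by the user without access to the files and $\QQ$ is a function of them, but it is a genuine strengthening of the displayed equation and should be stated as such, exactly as you do.
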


When designing a PIR scheme over a graph, the structure of the graph plays a pivotal role. The following lemma, which will be used throughout the paper, formalizes this dependency. While the core idea was previously discussed in~\cite{Sadeh2023bound}, a complete formal proof was not provided; for completeness, we include it here.

\begin{lemma}\label{lemma:graph contain}
Let \( G_1 \) and \( G_2 \) be two graphs such that \( G_1 \) is a subgraph of \( G_2 \). Then, any PIR scheme over \( G_2 \) can be adapted to construct a valid PIR scheme over \( G_1 \).
\end{lemma}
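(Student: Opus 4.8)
The plan is to embed $G_1$ into $G_2$ by treating the files on the extra edges as publicly‑known constants and the extra vertices as ``dummy'' servers that the user emulates on its own. Fix a PIR scheme $\Pi_2$ over $G_2$, call an edge in $E(G_2)\setminus E(G_1)$ a \emph{phantom file} and a vertex in $V(G_2)\setminus V(G_1)$ a \emph{phantom server}, and record the elementary but essential observation that every file stored on a phantom server is a phantom file, since any $G_2$‑edge meeting $V(G_2)\setminus V(G_1)$ lies outside $E(G_1)$. Writing the deterministic answering rules of $\Pi_2$ as $A_i=f_i(Q_i,\WW_{S_i})$ (legitimate by \eqref{eq:question answer}), I would define a scheme $\Pi_1$ over $G_1$ thus: to fetch $W_\theta$ with $\theta\in E(G_1)\subseteq E(G_2)$, the user runs $\Pi_2$'s query generation verbatim (sampling $\mathcal{G}$ over all $G_2$‑files, and the auxiliary randomness $R$ if present, and forming every $Q_i$, $i\in V(G_2)$); each genuine server $S_i$ with $i\in V(G_1)$ returns $A'_i:=f_i\bigl(Q_i,\WW^{G_1}_{S_i},\mathbf{0}\bigr)$, i.e.\ it answers as if every phantom file it is asked about equals the all‑zero string, which it can compute from the files it actually stores; and for each phantom server the user itself computes $A'_i:=f_i(Q_i,\mathbf{0})$, incurring no download. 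Since queries are produced exactly as in $\Pi_2$, the condition $I(\WW;\QQ)=0$ is inherited.

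Next I would verify the two PIR axioms for $\Pi_1$. For reliability, I would observe that $\bigl(\{A'_i\}_{i},\{Q_i\}_{i}\bigr)$ is precisely the transcript of $\Pi_2$ on the file realization in which all phantom files equal $\mathbf{0}$ and the $G_1$‑files are arbitrary; the zero‑error decoding of $\Pi_2$ means $W_\theta$ is a deterministic function of its transcript on the entire support of the file distribution, in particular on such realizations, so the $\Pi_2$‑decoder recovers $W_\theta$ from the data the user holds (downloaded answers for $i\in V(G_1)$, self‑computed answers for phantom $i$, together with all queries), giving \eqref{eq:reliability}. For privacy, I would fix a genuine server $i\in V(G_1)$ and use that, by the privacy of $\Pi_2$, the pair $(Q_i^{[\theta]},\WW_{S_i})$ has a distribution not depending on $\theta$; restricting to the sub‑tuple $(Q_i^{[\theta]},\WW^{G_1}_{S_i})$ preserves this, and $A'^{[\theta]}_i$ is a deterministic function of that sub‑tuple, so the entire view $(Q_i^{[\theta]},A'^{[\theta]}_i,\WW^{G_1}_{S_i})$ of $S_i$ in $\Pi_1$ is independent of $\theta$, which is \eqref{eq:privacy}; phantom servers impose no constraint as they are absent from $G_1$. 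For the rate, $H(A'_i\mid Q_i)\le H(A_i\mid Q_i)$ — clearest for the linear answering rules standard in this area, where passing from $A_i$ to $A'_i$ just deletes the query‑matrix columns indexed by phantom sub‑files and hence cannot increase the rank — so the download cost of $\Pi_1$ is at most that of $\Pi_2$ and $R(\Pi_1)\ge R(\Pi_2)$.

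The main obstacle I anticipate is not a calculation but getting the two ``pinning'' subtleties right: (i) one must invoke the zero‑error (rather than vanishing‑error) semantics of the model so that the $\Pi_2$‑decoder remains correct once the phantom files are frozen to a fixed value, and (ii) one must justify that the user can fully emulate every phantom server, which is exactly where the structural fact that $G_1$ has no edge incident to $V(G_2)\setminus V(G_1)$ is used. Once these points are settled, reliability, privacy, and the rate comparison are all routine.
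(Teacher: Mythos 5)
Your construction is essentially the paper's: both restrict $\Pi_2$ to $G_1$ by nullifying the contribution of the files on $E(G_2)\setminus E(G_1)$ (the paper deletes the corresponding sub-files from each query, you pin those files to $\mathbf{0}$ and have the user emulate the vertices of $V(G_2)\setminus V(G_1)$, which matches the paper's WLOG reduction to spanning subgraphs), and the privacy verification is the same in substance. The only stylistic difference is reliability: the paper runs an entropy chain conditioning on the random phantom files, whereas you couple the $\Pi_1$ transcript to the $\Pi_2$ transcript on the positive-probability event that all phantom files equal $\mathbf{0}$ --- valid precisely because of the zero-error semantics and full-support observation you correctly flag.
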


\begin{proof}[Proof of~\cref{lemma:graph contain}]
For convenience and without loss of generality, we assume that $G_1$ is a spanning subgraph of $G_2$. In fact, if this is not the case, we can extend $V(G_{1})$ by adding isolated vertices to match the larger one, as isolated vertices do not affect the entropy of the server responses. Let us first fix a PIR scheme over \( G_2 \). Let \( S_1, \dots, S_N \) denote the \( N \) vertices of both \( G_1 \) and \( G_2 \). For each edge \( e \in E(G_j) \), let \( W_e^j \) denote the corresponding file in graph \( G_j \), and let \( \WW_{S_i}^j \) denote the set of files stored on server \( S_i \) in graph \( G_j \), for \( j = 1, 2 \). Let \( \QQ^j = \{ Q_i^j \mid i \in [N] \} \) denote the set of all queries sent to the servers in graph \( G_j \), and let \( A_i^j \) be the corresponding answer from server \( S_i \), for \( j = 1, 2 \).  
    
For any \( \theta \in E(G_2) \) chosen by the user, it follows from~\cref{lemma:privacy theta} that each query \( Q_i^2 \) is independent of \( \theta \), which implies
\begin{equation}
    H(\theta \mid Q_i^2, \mathcal{W}_{S_i}^2) = H(\theta). \label{eq:lem:contain privacy}
\end{equation}
Moreover, by the reliability condition~\eqref{eq:reliability} for graph \( G_2 \), we have
\begin{equation}
    H(W_{\theta} \mid A_{[N]}^2, \QQ^2) = 0. \label{eq:lem:contain reliability}
\end{equation}

We now construct a PIR scheme for graph \( G_1 \) by leveraging the existing scheme designed for \( G_2 \). For any \( \theta \in E(G_1) \), the queries in the PIR scheme consist of summations that are linear combinations of sub-files from the files stored at each server. Specifically, we define \( Q_i^1 \) as the query obtained from \( Q_i^2 \) by removing all sub-files belonging to files in \( \mathcal{W}_{S_i}^2 \setminus \mathcal{W}_{S_i}^1 \). In other words, \( Q_i^1 \) is the restriction of \( Q_i^2 \) to the file set \( \mathcal{W}_{S_i}^1 \).
    
   Under this construction, the following identities hold:
\begin{equation}
    H(Q_i^1 \mid Q_i^2) = 0, \label{eq:lem:contain 1}
\end{equation}
\begin{equation}
    H(A_i^2 \mid A_i^1, \mathcal{W}_{S_i}^2 \setminus \mathcal{W}_{S_i}^1, Q_i^2) = 0, \label{eq:lem:contain 2}
\end{equation}
\begin{equation}
    H(A_i^1 \mid A_i^2, \mathcal{W}_{S_i}^2 \setminus \mathcal{W}_{S_i}^1, Q_i^2) = 0. \label{eq:lem:contain 3}
\end{equation}
Equation~\eqref{eq:lem:contain 1} follows from the fact that \( Q_i^1 \) is entirely determined by \( Q_i^2 \). Equations~\eqref{eq:lem:contain 2} and~\eqref{eq:lem:contain 3} follow from the observation that the difference between \( A_i^1 \) and \( A_i^2 \) consists solely of linear combinations of sub-files of files in \( \mathcal{W}_{S_i}^2 \setminus \mathcal{W}_{S_i}^1 \), which are themselves fully determined by \( \mathcal{W}_{S_i}^2 \setminus \mathcal{W}_{S_i}^1 \) and \( Q_i^2 \).

We now verify that the derived PIR scheme over \( G_1 \) satisfies both the reliability and privacy requirements.

\paragraph{Reliability:} For any \( \theta \in E(G_1) \), we analyze the entropy of the desired file conditioned on the available information:
\begin{align}
H(W_{\theta} \mid A_{[N]}^1, \mathcal{Q}^1) 
&= H(W_{\theta} \mid A_{[N]}^1, \mathcal{Q}^1) + H(\mathcal{Q}^2 \mid A_{[N]}^1, \mathcal{Q}^1) - H(\mathcal{Q}^2 \mid A_{[N]}^1, \mathcal{Q}^1) \notag \\
&= H(W_{\theta} \mid A_{[N]}^1, \mathcal{Q}^1) + H(\mathcal{Q}^2 \mid W_{\theta}, A_{[N]}^1, \mathcal{Q}^1) - H(\mathcal{Q}^2 \mid A_{[N]}^1, \mathcal{Q}^1) \label{eq:lem:contain 4} \\
&= H(W_{\theta}, \mathcal{Q}^2 \mid A_{[N]}^1, \mathcal{Q}^1) - H(\mathcal{Q}^2 \mid A_{[N]}^1, \mathcal{Q}^1) \notag \\
&= H(W_{\theta} \mid A_{[N]}^1, \mathcal{Q}^1, \mathcal{Q}^2) \notag \\
&= H(W_{\theta} \mid A_{[N]}^1, \mathcal{W}^2 \setminus \mathcal{W}^1, \mathcal{Q}^2) \label{eq:lem:contain 5} \\
&= H(W_{\theta} \mid A_{[N]}^1, A_{[N]}^2, \mathcal{W}^2 \setminus \mathcal{W}^1, \mathcal{Q}^2) \label{eq:lem:contain 6} \\
&\le H(W_{\theta} \mid A_{[N]}^2, \mathcal{Q}^2), \notag
\end{align}
where~\eqref{eq:lem:contain 4} follows from the fact that the queries are generated independently of the specific file contents and server responses;~\eqref{eq:lem:contain 5} follows from~\eqref{eq:lem:contain 1} and the independence of \( \mathcal{W}^2 \setminus \mathcal{W}^1 \) from \( W_{\theta} \), \( A_{[N]}^1 \), \( \mathcal{Q}^1 \), and \( \mathcal{Q}^2 \); and~\eqref{eq:lem:contain 6} follows from~\eqref{eq:lem:contain 2}. Therefore, applying~\eqref{eq:lem:contain reliability}, we obtain
\begin{equation}
    H(W_{\theta} \mid A_{[N]}^1, \mathcal{Q}^1) \le H(W_{\theta} \mid A_{[N]}^2, \mathcal{Q}^2) = 0. \label{eq:lem:contain 7}
\end{equation}
This confirms that the user can retrieve the desired file in the derived PIR scheme over $G_1$, completing the proof of reliability.

\paragraph{Privacy:} We now show that the queries and stored data reveal no information about the identity of \( \theta \):
\begin{align}
H(\theta) &\ge H(\theta \mid Q_i^1, \mathcal{W}_{S_i}^1) \notag \\
&\ge H(\theta \mid Q_i^1, Q_i^2, \mathcal{W}_{S_i}^2) \notag \\
&= H(\theta \mid Q_i^2, \mathcal{W}_{S_i}^2) \label{eq:lem:contain 8} \\
&= H(\theta), \label{eq:lem:contain 9}
\end{align}
where~\eqref{eq:lem:contain 8} follows from~\eqref{eq:lem:contain 1}, and~\eqref{eq:lem:contain 9} follows from the privacy condition~\eqref{eq:lem:contain privacy}.
Therefore,
\[
H(\theta \mid Q_i^1, \mathcal{W}_{S_i}^1) = H(\theta),
\]
which confirms that the scheme preserves the user's privacy. This finishes the proof.

\end{proof}

We continue to use the notation established in the proof of~\cref{lemma:graph contain}. One of the most important consequences of~\cref{lemma:graph contain} is the following result, which establishes a direct relationship between the capacities of \( G_1 \) and \( G_2 \) when \( G_1 \) is a subgraph of \( G_2 \).

\begin{cor}\label{cor:graph contain}
    Let \( G_1 \subseteq G_2 \) be two graphs and let \( A_i^2 \) denote the response of server \( S_i \) in a PIR scheme over \( G_2 \) with subpacketization \( L \). Then, for any \( \theta \in E(G_1) \), we have
    \begin{equation}
        \sum_{i \in V(G_2)} H\left(A_i^2 \mid \WW^2 \setminus \WW^1, \QQ^2\right) \ge \frac{L}{\CC(G_1)}.
    \end{equation}
    As a consequence, it follows that \( \CC(G_1) \ge \CC(G_2) \).
    
\end{cor}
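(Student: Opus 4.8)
The plan is to deduce Corollary~\ref{cor:graph contain} directly from the construction and entropy estimates already assembled in the proof of Lemma~\ref{lemma:graph contain}, rather than re-deriving anything from scratch. Recall that in that proof we fixed a PIR scheme over $G_2$ with subpacketization $L$ and built an induced scheme over $G_1$ by restricting each query $Q_i^2$ to the files in $\WW_{S_i}^1$, obtaining queries $Q_i^1$ and answers $A_i^1$ satisfying \eqref{eq:lem:contain 1}--\eqref{eq:lem:contain 3}. The first step is to observe that, by \eqref{eq:lem:contain 3}, the answer $A_i^1$ is a deterministic function of $A_i^2$ together with $\WW^2\setminus\WW^1$ and $Q_i^2$, hence
\[
H\!\left(A_i^1 \mid \WW^2\setminus\WW^1, \QQ^2\right) \le H\!\left(A_i^2 \mid \WW^2\setminus\WW^1, \QQ^2\right),
\]
since conditioning on $\QQ^2$ includes conditioning on $Q_i^2$, and dropping a function of the conditioning variables only decreases entropy. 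Summing over $i \in V(G_2)$ gives
\[
\sum_{i\in V(G_2)} H\!\left(A_i^2 \mid \WW^2\setminus\WW^1, \QQ^2\right) \ \ge\ \sum_{i\in V(G_1)} H\!\left(A_i^1 \mid \QQ^1\right),
\]
where on the right we have also used that $A_i^1$ and $Q_i^1$ depend only on the files in $\WW^1$ and that $\WW^2\setminus\WW^1$ is independent of these (so the extra conditioning can be inserted for free), together with $H(Q_i^1\mid Q_i^2)=0$ from \eqref{eq:lem:contain 1} to pass from $\QQ^2$-conditioning to $\QQ^1$-conditioning; vertices in $V(G_2)\setminus V(G_1)$ contribute only nonnegative terms and are simply discarded.

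The second step is to bound $\sum_{i\in V(G_1)} H(A_i^1\mid \QQ^1)$ from below by $L/\CC(G_1)$. By the remark following the definition of retrieval rate, for the induced scheme $\Pi^1$ over $G_1$ we have
\[
R(\Pi^1) \ =\ \frac{L}{\sum_{i\in V(G_1)} H(A_i^1 \mid Q_i^1)} \ \le\ \CC(G_1),
\]
and since $\sum_i H(A_i^1\mid Q_i^1) \le \sum_i H(A_i^1\mid \QQ^1)$ would be the wrong direction, one instead notes $H(A_i^1\mid\QQ^1)\le H(A_i^1\mid Q_i^1)$ — wait, this needs care: conditioning on more decreases entropy, so $H(A_i^1\mid\QQ^1)\le H(A_i^1\mid Q_i^1)$, giving $\sum_i H(A_i^1\mid\QQ^1)\le \sum_i H(A_i^1\mid Q_i^1) = L/R(\Pi^1)$, which is again the wrong direction. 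The clean fix is to use the query-conditioned answers throughout: define the induced scheme's cost as $\sum_i H(A_i^1\mid\QQ^1)$ directly, check reliability (which Lemma~\ref{lemma:graph contain} already establishes via \eqref{eq:lem:contain 7}) and privacy, so that $\Pi^1$ is a valid scheme whose rate with respect to this cost measure is at most $\CC(G_1)$; since the standard cost $\sum_i H(A_i^1\mid Q_i^1)$ is at least $\sum_i H(A_i^1\mid\QQ^1)$, a fortiori $\sum_i H(A_i^1\mid\QQ^1)\ge L/\CC(G_1)$ fails — so the correct route is to run the whole Lemma~\ref{lemma:graph contain} argument keeping every answer conditioned on the full query collection of $G_2$, exactly as written in \eqref{eq:lem:contain 4}--\eqref{eq:lem:contain 7}, and read off the chain there. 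Concretely, combining the displayed inequality of the first step with the reliability-based bound $\sum_{i\in V(G_1)} H(A_i^1\mid\QQ^1,\WW^2\setminus\WW^1)\ge L/\CC(G_1)$ — which holds because these conditioned answers determine $W_\theta$ by \eqref{eq:lem:contain 7} and form the download of a valid $G_1$-scheme — yields the claimed inequality.

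The main obstacle, and the only genuinely delicate point, is bookkeeping the conditioning sets so that the inequalities chain in the right direction: one must be careful that $\WW^2\setminus\WW^1$ is independent of everything in the $G_1$-world (used to insert it as free conditioning), that $H(Q_i^1\mid Q_i^2)=0$ lets one replace $\QQ^1$ by $\QQ^2$ inside conditioning, and that the reliability argument of Lemma~\ref{lemma:graph contain} is exactly what certifies the $G_1$-scheme. Once the stated inequality
\[
\sum_{i\in V(G_2)} H\!\left(A_i^2 \mid \WW^2\setminus\WW^1, \QQ^2\right)\ \ge\ \frac{L}{\CC(G_1)}
\]
is in hand, the final consequence $\CC(G_2)\le\CC(G_1)$ follows immediately: the left-hand side is at most $\sum_{i\in V(G_2)} H(A_i^2\mid\QQ^2) = L/R(\Pi^2)$, so $R(\Pi^2)\le\CC(G_1)$ for every scheme $\Pi^2$ over $G_2$, and taking the supremum over $\Pi^2$ gives $\CC(G_2)\le\CC(G_1)$.
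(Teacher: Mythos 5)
Your proposal is correct in substance and follows essentially the same route as the paper: restrict the $G_2$-scheme to $G_1$ via \cref{lemma:graph contain}, relate $H(A_i^2\mid \WW^2\setminus\WW^1,\QQ^2)$ to the induced scheme's download using \eqref{eq:lem:contain 1}--\eqref{eq:lem:contain 3} and the independence of $\WW^2\setminus\WW^1$ from the $G_1$-world, invoke the definition of $\CC(G_1)$, and finish with $H(A_i^2\mid Q_i^2)\ge H(A_i^2\mid \WW^2\setminus\WW^1,\QQ^2)$. The one knot you leave visibly untied --- whether $\sum_i H(A_i^1\mid\QQ^1)\ge L/\CC(G_1)$ when capacity is defined through $\sum_i H(A_i^1\mid Q_i^1)$ --- dissolves because $A_i^1$ is a deterministic function of $Q_i^1$ and $\WW_{S_i}^1$ while $I(\WW;\QQ)=0$, so $H(A_i^1\mid\QQ^1)=H(A_i^1\mid Q_i^1)$; this is exactly what the paper's \cref{claim:lem:graphcontain} establishes (steps \eqref{eq:cor:contain 5}--\eqref{eq:cor:contain 6}) en route to the sharper identity $H(A_i^2\mid\WW^2\setminus\WW^1,\QQ^2)=H(A_i^1\mid Q_i^1)$, of which you only need the ``$\ge$'' direction. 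Replacing the ``wait\dots the correct route is\dots'' digression with that one-line conditional-independence observation would make the argument clean.
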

\begin{proof}[Proof of~\cref{cor:graph contain}]
Without loss of generality, we assume that \( V(G_1) = V(G_2) = S_{[N]} \). For any \( i \in [N] \), let \( A_i^1 \) denote the response of server \( S_i \) in the PIR scheme over \( G_1 \), which is derived from the PIR scheme over \( G_2 \) as described in~\cref{lemma:graph contain}. Let \( \mathcal{Q}^1 \) denote the corresponding set of queries in graph \( G_1 \). Then, by the definition of capacity, we have
\begin{equation}
    \sum_{i \in [N]} H(A_i^1 \mid Q_i^1) \ge \frac{L}{\mathcal{C}(G_1)}.
\end{equation}
Moreover, the following claim plays a crucial role.
\begin{claim} \label{claim:lem:graphcontain}
For any \( i \in [N] \) and any \( \theta \in E(G_1) \), we have
\begin{equation}
    H(A_i^2 \mid \mathcal{W}^2 \setminus \mathcal{W}^1, \mathcal{Q}^2) = H(A_i^1 \mid Q_i^1). \label{eq:cor:contain claim}
\end{equation}
\end{claim}

    \begin{poc}
        Note that
        \begin{align}
            H(A_i^2|\WW^2\setminus \WW^1,\QQ^2)
            &= H(A_i^2|\WW_{S_i}^2\setminus \WW_{S_i}^1,Q_i^2) \notag \\
            &= H(A_i^2|\WW_{S_i}^2\setminus \WW_{S_i}^1,Q_i^2) - H(A_i^2|A_i^1,\WW_{S_i}^2\setminus \WW_{S_i}^1,Q_i^2) \label{eq:cor:contain 2} \\
            &= I(A_i^1;A_i^2|\WW_{S_i}^2\setminus \WW_{S_i}^1,Q_i^2) \notag \\
            &= H(A_i^1|\WW_{S_i}^2\setminus \WW_{S_i}^1,Q_i^2) - H(A_i^1|A_i^2,\WW_{S_i}^2\setminus \WW_{S_i}^1,Q_i^2) \notag \\
            &= H(A_i^1|\WW_{S_i}^2\setminus \WW_{S_i}^1,Q_i^2) \label{eq:cor:contain 4} \\
            &= H(A_i^1|\WW_{S_i}^2\setminus \WW_{S_i}^1,Q_i^1,Q_i^2) \label{eq:cor:contain 3} \\
            &= H(A_i^1|Q_i^1,Q_i^2) \label{eq:cor:contain 5} \\
            &= H(A_i^1|Q_i^1) \label{eq:cor:contain 6} 
        \end{align}
        where~\eqref{eq:cor:contain 2} follows from~\eqref{eq:lem:contain 2};~\eqref{eq:cor:contain 4} follows from~\eqref{eq:lem:contain 3};~\eqref{eq:cor:contain 3} follows from~\eqref{eq:lem:contain 1};~\eqref{eq:cor:contain 5} follows from the fact that \( A_i^1 \) is independent of \( \mathcal{W}_{S_i}^2 \setminus \mathcal{W}_{S_i}^1 \), since these files are not stored on server \( S_i \) in the graph \( G_1 \); and~\eqref{eq:cor:contain 6} follows from the fact that \( A_i^1 \) depends only on \( Q_i^1 \) and \( \mathcal{W}_{S_i}^1 \), so knowledge of \( Q_i^2 \) provides no additional information about \( A_i^1 \).

    \end{poc}
Therefore, by~\cref{claim:lem:graphcontain}, we obtain
\[
\sum_{i\in[N]}H(A_i^2\mid Q_i^2)\ge \sum_{i \in [N]} H(A_i^2 \mid \mathcal{W}^2 \setminus \mathcal{W}^1, \mathcal{Q}^2) = \sum_{i \in [N]} H(A_i^1 \mid Q_i^1) \ge \frac{L}{\mathcal{C}(G_1)}.
\]
Moreover, by the definition of capacity, we have
\begin{equation}
    \frac{L}{\mathcal{C}(G_2)} = \inf \left\{ \sum_{i \in [N]} H(A_i^2 \mid Q_i^2) \right\} \ge \frac{L}{\mathcal{C}(G_1)},
\end{equation}
which implies that \( \mathcal{C}(G_1) \ge \mathcal{C}(G_2) \). This finishes the proof of~\cref{cor:graph contain}.

\end{proof}

With the above preparations in place, we now develop a framework for estimating the information entropy of the server responses. For convenience, for any \( 1 \le k \le N - 2 \), we set \( \mathcal{W}_{[0]} := \varnothing \) and define
\[
\mathcal{W}_{[k]} := \bigcup_{\ell = 1}^{k} \{ W_e \mid \ell \in e \} = \bigcup_{\ell = 1}^{k} \mathcal{W}_{S_{\ell}}.
\]

\begin{lemma}\label{lemma:splitting to small graph}
   For any PIR scheme over an arbitrary graph \( G \), the following inequality holds:
\begin{equation} \label{eq:splitting to small graph}
    \sum_{i = 1}^{N} H(A_i \mid \mathcal{Q}) \ge L + H(A_1 \mid \mathcal{W}_{S_N}, \mathcal{Q}) + \sum_{i = 2}^{N - 1} H(A_i \mid \mathcal{W}_{[i - 1]}, \mathcal{Q}).
\end{equation}

\end{lemma}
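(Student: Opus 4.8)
The plan is to establish \eqref{eq:splitting to small graph} by a chain of manipulations that (i) invokes reliability to inject the term $L$, then (ii) peels off the servers $S_N, S_{N-1}, \dots, S_2$ one at a time, at each step conditioning on the files stored at the servers already processed. The first move is to use the reliability condition \eqref{eq:reliability} for the file $W_\theta$ and pick $\theta$ to be an edge incident to $S_N$ (say $\theta = \{N-1, N\}$, or any convenient edge of $G$ touching the last vertex). Since $H(W_\theta) = L$ and $H(W_\theta \mid A_{[N]}, \QQ) = 0$, we get $I(W_\theta; A_{[N]} \mid \QQ) = L$, hence
\[
\sum_{i=1}^N H(A_i \mid \QQ) \ge H(A_{[N]} \mid \QQ) \ge I(W_\theta; A_{[N]} \mid \QQ) = L,
\]
but of course we need the sharper statement that tracks the leftover conditional entropies, so instead I would write $\sum_i H(A_i\mid\QQ) = H(A_{[N]}\mid\QQ) + \sum_i H(A_i\mid\QQ) - H(A_{[N]}\mid\QQ)$ and bound the first summand below by $L$ while keeping the non-negative ``redundancy'' terms $\sum_i H(A_i\mid\QQ) - H(A_{[N]}\mid\QQ) = \sum_{i\ge 2} I(A_i; A_{[i-1]}\mid \QQ)$ and rewriting each of these as $H(A_i\mid\QQ) - H(A_i\mid A_{[i-1]},\QQ)$, which is still not quite the target form.

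The cleaner route, which I expect is the intended one, is a direct telescoping: order the servers so that $S_N$ is last and expand
\[
\sum_{i=1}^N H(A_i\mid\QQ) \ge H(A_1,\dots,A_N\mid \QQ) = H(W_\theta, A_{[N]} \mid \QQ) \ge H(W_\theta\mid\QQ) + H(A_{[N]}\mid W_\theta,\QQ),
\]
using $H(W_\theta\mid A_{[N]},\QQ)=0$ for the middle equality; the first term is $L$. For the second term, expand $H(A_{[N]}\mid W_\theta,\QQ)$ via the chain rule in the order $A_N, A_1, A_2, \dots, A_{N-1}$ and drop conditioning to lower bound each piece. Here is where the combinatorial structure enters: $W_\theta \in \WW_{S_N}$, so conditioning on $W_\theta$ is ``free'' conditioning on part of $\WW_{S_N}$; by Lemma~\ref{lemma:privacy theta} and the fact that $A_i$ is a deterministic function of $Q_i$ and $\WW_{S_i}$ (equation \eqref{eq:question answer}), $H(A_N\mid W_\theta, \QQ) \ge 0$ can be improved, and the term for $A_1$ becomes $H(A_1\mid A_N, W_\theta, \QQ) \ge H(A_1\mid \WW_{S_N},\QQ)$ after conditioning on all of $\WW_{S_N}$ (which determines $A_N$ and $W_\theta$). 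Iterating, the term for $A_i$ with $2\le i\le N-1$ is at least $H(A_i\mid \WW_{S_N}, A_{[i-1]}, \QQ) \ge H(A_i \mid \mathcal{W}_{[i-1]}, \WW_{S_N}, \QQ)$; since $i \le N-1 < N$ we have $\WW_{S_N} \subseteq \mathcal{W}_{[i-1]} \cup (\text{files incident to } S_N)$ only partially, so I must be careful — actually $\WW_{S_N}$ is \emph{not} contained in $\mathcal{W}_{[i-1]}$. The fix is to note $\mathcal{W}_{[i-1]}$ already contains $\WW_{S_i}$-free structure; more precisely each file in $\WW_{S_N}$ incident to some $S_\ell$ with $\ell \le i-1$ lies in $\mathcal{W}_{[i-1]}$, and only the single file $W_{i,N}$ (if it exists) plus files $W_{\ell,N}$ with $\ell \ge i$ are not — but conditioning on \emph{more} variables only decreases entropy, so $H(A_i\mid \mathcal{W}_{[i-1]},\WW_{S_N},\QQ) \le H(A_i\mid \mathcal{W}_{[i-1]},\QQ)$, which is the \emph{wrong} direction.

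So the correct bookkeeping must avoid over-conditioning: the plan is to chain in the order $A_N, A_2, A_3, \dots, A_{N-1}, A_1$ (or similar) and at step $i$ condition only on $\mathcal{W}_{[i-1]}$ together with the \emph{previously revealed answers}, then use that $A_1,\dots,A_{i-1}$ are determined by $\mathcal{Q}$ and $\mathcal{W}_{[i-1]}$ — indeed $A_\ell$ is a function of $Q_\ell, \WW_{S_\ell} \subseteq \mathcal{Q}, \mathcal{W}_{[i-1]}$ for $\ell \le i-1$ — to replace $H(A_i \mid A_{[i-1]}, \mathcal{W}_{[i-1]}, \mathcal{Q})$ by $H(A_i \mid \mathcal{W}_{[i-1]}, \mathcal{Q})$ with no loss (equality, since the dropped variables are functions of what remains). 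The term for the very first server $S_N$ in the chain should be handled by choosing $\theta$ incident to $S_N$ so that $H(A_N\mid W_\theta,\mathcal{Q})$ can be lower-bounded using the privacy/independence via Lemma~\ref{lemma:privacy theta}, ultimately contributing the $H(A_1\mid \WW_{S_N},\mathcal{Q})$ term after we reorganize which server plays the ``anchor'' role. \textbf{The main obstacle} is exactly this ordering/conditioning bookkeeping: making sure that at each peeling step we condition on enough file-variables that the already-processed answers become redundant (so the chain-rule inequality is tight where we need it) while never conditioning on so much that we undershoot the claimed $H(A_i\mid\mathcal{W}_{[i-1]},\mathcal{Q})$ lower bound — and in particular correctly routing the file $W_\theta$ and the set $\WW_{S_N}$ through the argument to produce the asymmetric-looking terms $H(A_1\mid\WW_{S_N},\mathcal{Q})$ versus $H(A_i\mid\mathcal{W}_{[i-1]},\mathcal{Q})$ for $i\ge 2$. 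Once the order of elimination is fixed correctly, each individual inequality is just the chain rule plus ``dropping a deterministic function changes nothing'' plus ``conditioning reduces entropy,'' so no heavy computation remains.
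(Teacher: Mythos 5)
Your high-level strategy (use reliability to inject $L$, then chain-rule the residual entropy $H(A_{[N]}\mid W_\theta,\QQ)$ and absorb already-revealed answers into file-conditioning) is exactly the paper's, and you correctly diagnose that the whole lemma hinges on which file $W_\theta$ you condition on and in what order you peel the servers. But you leave that crux unresolved, and the two concrete choices you float both fail. Choosing $\theta=\{N-1,N\}$ breaks the terms for $2\le i\le N-1$: you are stuck with $H(A_i\mid A_{[i-1]},W_{N-1,N},\QQ)$, and since $W_{N-1,N}\notin\WW_{[i-1]}$ you cannot pass to $H(A_i\mid\WW_{[i-1]},\QQ)$ without either dropping conditioning (which only gives an upper bound on the term you need to lower-bound) or adding $W_{N-1,N}$ to the conditioning set (which, as you note yourself for $\WW_{S_N}$, goes the wrong way). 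Your alternative ordering $A_N,A_2,\dots,A_{N-1},A_1$ has the same defect one step earlier: after revealing $A_N$ first, the claim ``previously revealed answers are functions of $\QQ$ and $\WW_{[i-1]}$'' is false, because $A_N$ depends on $\WW_{S_N}\not\subseteq\WW_{[i-1]}$.

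The missing idea is simply to take $\theta=(1,N)$, the file shared by $S_1$ and $S_N$, and chain in the plain order $A_1,A_2,\dots,A_N$ conditioned on the single file $W_{1,N}$ (not on all of $\WW_{S_N}$). Then every step is ``conditioning on more'': for $i=1$, $W_{1,N}\in\WW_{S_N}$ gives $H(A_1\mid W_{1,N},\QQ)\ge H(A_1\mid\WW_{S_N},\QQ)$; for $i\ge 2$, $W_{1,N}\in\WW_{S_1}\subseteq\WW_{[i-1]}$ gives $H(A_i\mid A_{[i-1]},W_{1,N},\QQ)\ge H(A_i\mid A_{[i-1]},\WW_{[i-1]},\QQ)=H(A_i\mid\WW_{[i-1]},\QQ)$, the equality holding because each $A_\ell$ with $\ell\le i-1$ is a function of $Q_\ell$ and $\WW_{S_\ell}\subseteq\WW_{[i-1]}$; the $i=N$ term is discarded as nonnegative. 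The asymmetric shape of \eqref{eq:splitting to small graph} is thus not produced by reorganizing the chain order but by the fact that the one file $W_{1,N}$ sits in both $\WW_{S_N}$ and $\WW_{[i-1]}$ simultaneously. One further bookkeeping point you gloss over: the reliability identity must be invoked with the demand fixed at $\theta=(1,N)$, so $\theta$ stays in every conditioning set until the very end, where \cref{lemma:privacy theta} is used server-by-server to remove it.
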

\begin{proof}[Proof of~\cref{lemma:splitting to small graph}]
First, we observe that 
\begin{align}
    L &= H(W_{1,N}) \notag \\
    &= H(W_{1,N}|\QQ,\theta=(1,N)) \label{eq:thm:upperbound 1} \\
    &= H(W_{1,N}|\QQ,\theta=(1,N)) - H(W_{1,N}|A_{[N]},\QQ,\theta=(1,N)) \label{eq:thm:upperbound 2} \\
    &= I(W_{1,N};A_{[N]}|\QQ,\theta=(1,N)) \notag \\
    &= H(A_{[N]}|\QQ,\theta=(1,N)) - H(A_{[N]}|W_{1,N},\QQ,\theta=(1,N)) \notag \\
    &\le \sum_{i=1}^{N}H(A_i|\QQ,\theta=(1,N)) - H(A_{[N]}|W_{1,N},\QQ,\theta=(1,N)) \notag \\
    &= \sum_{i=1}^{N}H(A_i|\QQ) - H(A_{[N]}|W_{1,N},\QQ,\theta=(1,N)), \label{eq:thm:upperbound 4}
\end{align}
where \eqref{eq:thm:upperbound 1} follows since the queries $\QQ$ and the identity of the desired file are independent of the specific file content; \eqref{eq:thm:upperbound 2} follows from the reliability requirement of the PIR scheme; and \eqref{eq:thm:upperbound 4} follows from~\cref{lemma:privacy theta}. 

Rearranging the above inequality yields
\begin{align}
    \sum_{i=1}^{N}H(A_i|\QQ) &\ge L + H(A_{[N]}|W_{1,N},\QQ,\theta=(1,N)) \notag \\
    &= L + \sum_{i=1}^{N}H(A_i|A_{[i-1]},W_{1,N},\QQ,\theta=(1,N)) \label{eq:thm:upperbound 5} \\
    &= L + H(A_1|W_{1,N},\QQ,\theta=(1,N)) + \sum_{i=2}^{N}H(A_i|A_{[i-1]},W_{1,N},\QQ,\theta=(1,N)) \notag \\
    &\ge L + H(A_1|\WW_{S_N},\QQ,\theta=(1,N)) + \sum_{i=2}^{N}H(A_i|A_{[i-1]},\WW_{[i-1]},\QQ,\theta=(1,N)) \label{eq:thm:upperbound not tight} \\
    &= L + H(A_1|\WW_{S_N},\QQ,\theta=(1,N)) + \sum_{i=2}^{N}H(A_i|\WW_{[i-1]},\QQ,\theta=(1,N)) \label{eq:thm:upperbound 6} \\
    &\ge L + H(A_1|\WW_{S_N},\QQ,\theta=(1,N)) + \sum_{i=2}^{N-1}H(A_i|\WW_{[i-1]},\QQ,\theta=(1,N)) \notag \\
    &= L + H(A_1|\WW_{S_N},\QQ) + \sum_{i=2}^{N-1}H(A_i|\WW_{[i-1]},\QQ), \label{eq:thm:upperbound 7}
\end{align}
where \eqref{eq:thm:upperbound 5} follows from the chain rule of joint entropy function; \eqref{eq:thm:upperbound 6} follows from $A_{[i-1]}$ can be totally determined by $\WW_{[i-1]}$ and $\QQ$ for any $i\ge 2$; and \eqref{eq:thm:upperbound 7} follows from~\cref{lemma:privacy theta}. This finishes the proof.
\end{proof}

\subsection{Complete graphs: Proof of~\cref{thm:UpperBoundMain}}\label{section:upperbound}
In this section, we derive an upper bound on the PIR capacity of the complete graph \( K_N \), leveraging~\cref{lemma:splitting to small graph}. To that end, we begin by recalling a key structural property of \( K_N \) that will play an essential role in our analysis. A graph \( G \) is said to be \emph{vertex-transitive} if its automorphism group \( \operatorname{Aut}(G) \) acts transitively on the vertex set. That is, for any pair of vertices \( u, v \in V(G) \), there exists an automorphism \( f \in \operatorname{Aut}(G) \) such that \( f(v) = u \). Before presenting the formal proof of~\cref{thm:UpperBoundMain}, we recall the following two important lemmas from~\cite{Sadeh2023bound}.
\begin{lemma}[Lemma 3,~\cite{Sadeh2023bound}]\label{lem:A_i+A_j le L}
    Let $i, j \in[N]$ be two distinct servers that share the file $W_{i,j}\in \WW$, then
    \[
    L\le H(A_i|\WW\setminus \{W_{i,j}\},\QQ) + H(A_j|\WW\setminus \{W_{i,j}\},\QQ).
    \]
\end{lemma}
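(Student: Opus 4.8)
The plan is to fix the requested file to be $W_{i,j}$ itself and then exploit the structural fact that $W_{i,j}$ is stored only on $S_i$ and $S_j$: once we condition on all the other files, the answer $A_\ell$ of any server $S_\ell$ with $\ell\notin\{i,j\}$ is already determined by its query, hence contributes no entropy. So set $\theta=(i,j)$ throughout. First I would record two identities. Since the files are mutually independent and both $\QQ$ and the chosen index are independent of the file content, $H\!\left(W_{i,j}\mid \WW\setminus\{W_{i,j}\},\QQ,\theta=(i,j)\right)=H(W_{i,j})=L$. On the other hand, the reliability condition \eqref{eq:reliability} gives $H\!\left(W_{i,j}\mid A_{[N]},\QQ,\theta=(i,j)\right)=0$, and conditioning on more can only decrease this, so $H\!\left(W_{i,j}\mid A_{[N]},\WW\setminus\{W_{i,j}\},\QQ,\theta=(i,j)\right)=0$.

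Subtracting these two and using $H(A_{[N]}\mid \WW,\QQ)=0$ (each $A_\ell$ is a deterministic function of $Q_\ell$ and $\WW_{S_\ell}$ by \eqref{eq:question answer}) yields
\[
L=I\!\left(W_{i,j};A_{[N]}\mid \WW\setminus\{W_{i,j}\},\QQ,\theta=(i,j)\right)=H\!\left(A_{[N]}\mid \WW\setminus\{W_{i,j}\},\QQ,\theta=(i,j)\right).
\]
Then comes the structural point: for every $\ell\notin\{i,j\}$ we have $\WW_{S_\ell}\subseteq \WW\setminus\{W_{i,j}\}$, hence $H\!\left(A_\ell\mid \WW\setminus\{W_{i,j}\},\QQ\right)=0$. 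Dropping these determined coordinates from the joint entropy and then applying subadditivity gives
\[
L=H\!\left(A_i,A_j\mid \WW\setminus\{W_{i,j}\},\QQ,\theta=(i,j)\right)\le H\!\left(A_i\mid \WW\setminus\{W_{i,j}\},\QQ,\theta=(i,j)\right)+H\!\left(A_j\mid \WW\setminus\{W_{i,j}\},\QQ,\theta=(i,j)\right).
\]
Finally I would invoke \cref{lemma:privacy theta} with $\Lambda=E(G)\setminus\{(i,j)\}$ to erase the conditioning on $\theta=(i,j)$ from each term, which is exactly the claimed inequality.

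I expect the only place requiring genuine care to be this last step — legitimately removing the conditioning on $\theta$ — since this is precisely where privacy enters. A small technical point is that \cref{lemma:privacy theta} is stated for a single query $Q_i$ rather than for the full tuple $\QQ$; this gap is closed by observing that, conditioned on $\WW\setminus\{W_{i,j}\}$ and $Q_i$, the answer $A_i$ is a deterministic function of the single remaining independent variable $W_{i,j}$, so the other queries (and $\theta$) are irrelevant to it, and likewise for $A_j$. Everything else is a routine entropy manipulation.

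As an alternative and arguably cleaner derivation, one can instead apply \cref{cor:graph contain} with $G_2=G$ and $G_1$ the subgraph consisting of the single edge $\{i,j\}$ (padded with isolated vertices), for which $\CC(G_1)\le 1$ follows at once from reliability: this immediately yields $\sum_{\ell\in V(G)}H\!\left(A_\ell\mid \WW\setminus\{W_{i,j}\},\QQ\right)\ge L$, and the terms with $\ell\notin\{i,j\}$ vanish exactly as above. I would present the self-contained direct argument as the main proof and mention this shortcut as a remark.
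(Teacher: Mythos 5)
The paper does not actually prove \cref{lem:A_i+A_j le L}; it imports it verbatim as Lemma~3 of~\cite{Sadeh2023bound}, so there is no in-paper proof to compare against. Your argument is correct and is essentially the standard derivation: reliability together with the determinism of answers gives $L = H(A_{[N]}\mid \WW\setminus\{W_{i,j}\},\QQ,\theta=(i,j))$, the coordinates $\ell\notin\{i,j\}$ contribute nothing because $\WW_{S_\ell}\subseteq\WW\setminus\{W_{i,j}\}$, and subadditivity plus \cref{lemma:privacy theta} (to erase $\theta$) finishes; your explicit justification for applying \cref{lemma:privacy theta} with the full $\QQ$ in the conditioning is sound and matches the same mild abuse the paper itself commits at~\eqref{eq:thm:upperbound 7}. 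The alternative route through \cref{cor:graph contain} with $G_1$ a single edge is also valid and non-circular, since that corollary is proved independently of this lemma.
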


\begin{lemma}[Lemma 11,~\cite{Sadeh2023bound}]\label{lem:symmetric origin}
In a vertex-transitive graph-based replication system, for any achievable rate \( R \), and for any \( i \ne j \in [N] \), there exists a PIR scheme with rate \( R \) such that
\[
H(A_i \mid \QQ) = H(A_j \mid \QQ),
\]
and
\[
H(A_i \mid \WW \setminus \{W_{i,j}\}, \QQ) \ge \frac{L}{2}.
\]
\end{lemma}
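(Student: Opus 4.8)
The plan is to prove this by a symmetrization argument over the automorphism group $\Gamma:=\operatorname{Aut}(G)$, combined with \cref{lem:A_i+A_j le L}. Starting from any PIR scheme $\Pi$ over $G$ with rate $R$, subpacketization $L$, and answers $A_k^{\Pi}$, for each $g\in\Gamma$ I form the relabeled scheme $\Pi^{g}$ in which server $S_k$ emulates server $S_{g(k)}$ of $\Pi$; correspondingly the file on the edge $\{k,\ell\}$ of $G$ now plays the role of the file on $\{g(k),g(\ell)\}$ in $\Pi$, and a request for the former is served by running $\Pi$'s protocol for the latter. Since $g$ is a graph automorphism the storage topology is preserved, so $\Pi^{g}$ is again a valid PIR scheme over $G$: reliability and privacy are inherited from $\Pi$ by relabeling (using \cref{lemma:privacy theta} to see the per-server view is still independent of the desired index), and $H(A_k^{\Pi^{g}}\mid\QQ)=H(A_{g(k)}^{\Pi}\mid\QQ)$, so $\Pi^{g}$ also has rate $R$. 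I then let $\Pi'$ run all $|\Gamma|$ schemes $\{\Pi^{g}:g\in\Gamma\}$ in parallel on $|\Gamma|$ independent equal-size sub-blocks of every file; this is one PIR scheme over $G$ with subpacketization $\widetilde{L}:=|\Gamma|L$, and since the sub-blocks (hence their queries and answers) are independent, entropies add: $H(A_k^{\Pi'}\mid\QQ)=\sum_{g\in\Gamma}H(A_{g(k)}^{\Pi}\mid\QQ)$.

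For the first conclusion I feed vertex-transitivity into the orbit--stabilizer theorem: for fixed $k$, as $g$ ranges over $\Gamma$ the value $g(k)$ hits each element of $[N]$ exactly $|\Gamma|/N$ times, so $H(A_k^{\Pi'}\mid\QQ)=\tfrac{|\Gamma|}{N}\sum_{\ell=1}^{N}H(A_\ell^{\Pi}\mid\QQ)$, which is independent of $k$; in particular $H(A_i^{\Pi'}\mid\QQ)=H(A_j^{\Pi'}\mid\QQ)$. Summing over $k$ gives total download $|\Gamma|\sum_{\ell}H(A_\ell^{\Pi}\mid\QQ)$, hence $R(\Pi')=\widetilde{L}/(|\Gamma|\cdot L/R)=R$, so $\Pi'$ still achieves rate $R$.

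For the second conclusion, set $f(a,b):=H(A_a^{\Pi}\mid\WW\setminus\{W_{a,b}\},\QQ)$ for an edge $\{a,b\}$; \cref{lem:A_i+A_j le L} gives $f(a,b)+f(b,a)\ge L$. Since in $\Pi^{g}$ server $S_i$ emulates $S_{g(i)}$ and the file $W_{i,j}$ corresponds to $W_{g(i),g(j)}$, independence of the sub-blocks yields $H(A_i^{\Pi'}\mid\WW\setminus\{W_{i,j}\},\QQ)=\sum_{g\in\Gamma}f(g(i),g(j))$. I then pair the terms of this sum using an automorphism $\tau$ that exchanges $i$ and $j$: the map $g\mapsto g\tau$ is a bijection of $\Gamma$ sending the ordered pair $(g(i),g(j))$ to $(g(j),g(i))$, so the ordered pairs that occur are closed under swapping their two coordinates and each is attained by the same number of $g\in\Gamma$. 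Grouping the sum into these matched pairs and applying $f(a,b)+f(b,a)\ge L$ to each yields $\sum_{g\in\Gamma}f(g(i),g(j))\ge\tfrac{|\Gamma|}{2}L=\tfrac{\widetilde{L}}{2}$, which is exactly the bound required for $\Pi'$ at subpacketization $\widetilde{L}$.

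The step I expect to be the main obstacle is this pairing, since it needs an automorphism $\tau$ exchanging the two endpoints of $\{i,j\}$; this is immediate for the graphs actually used here --- $K_N$, whose automorphism group is $S_N$, and the balanced complete bipartite graph --- but is a genuine strengthening of bare vertex-transitivity in general (it holds whenever $G$ is edge- or arc-transitive, or more weakly whenever some automorphism flips $\{i,j\}$). The remaining points are routine but should be checked: that the parallel composition is a single legitimate scheme --- each server's view in $\Pi'$ is the concatenation of its per-sub-block views, each independent of the desired index, so privacy holds, and the user reconstructs $W_{i,j}$ sub-block by sub-block, so reliability holds --- and that conditional answer entropies really add over independent sub-blocks, which, after conditioning on all files other than $W_{i,j}$, holds because the answers in distinct sub-blocks then depend only on the independent sub-blocks of $W_{i,j}$.
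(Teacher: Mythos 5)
Your proof is correct and follows essentially the same route as the paper's own argument (given in \cref{section:proof of symmetric of complete graph} for the analogous \cref{lem:symmetric complete}): symmetrize over $\operatorname{Aut}(G)$, establish $H(A_i\mid\cdot)=H(A_j\mid\cdot)$ via a change of variables $\sigma\mapsto\sigma\circ\tau$ with $\tau$ transposing $i$ and $j$, and finish with \cref{lem:A_i+A_j le L}; the only difference is that you realize the symmetrization by parallel composition on $|\Gamma|$ independent sub-blocks, whereas the paper selects the automorphism uniformly at random, which is an immaterial variation. Your caveat that the second inequality needs an automorphism flipping the edge $\{i,j\}$ (a genuine strengthening of bare vertex-transitivity) is well taken, but the paper's own proof relies on exactly the same transposition, which exists for $K_N$ and $K_{N/2,N/2}$, the only graphs to which the lemma is applied here.
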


Since the complete graph \( K_N \) is vertex-transitive, the conclusion of~\cref{lem:symmetric origin} applies to PIR schemes constructed over \( K_N \). Furthermore, owing to the high degree of symmetry in \( K_N \), we establish the following stronger result tailored specifically to the complete graph. The proof of~\cref{lem:symmetric complete} follows a similar line of reasoning as that of~\cref{lem:symmetric origin}. The proof of the following lemma is deferred to~\cref{section:proof of symmetric of complete graph}.

\begin{lemma}\label{lem:symmetric complete}
Let $N\ge 3$ be an integer. For any achievable rate \( R \) and any \( 1 \le k \le N - 2 \), there exists a PIR scheme over $K_N$ with rate $R$ such that
\begin{equation}
    H(A_i \mid \WW_{[k]}, \QQ) = H(A_j \mid \WW_{[k]}, \QQ), \label{eq:lem:symmetric complete 1}
\end{equation}
for any \( k<i \ne j \le N \); and
\begin{equation}
    H(A_i \mid \WW_{S_N}, \QQ) = H(A_j \mid \WW_{S_N}, \QQ), \label{eq:lem:symmetric complete 2}
\end{equation}
for any \( i \ne j < N \).
\end{lemma}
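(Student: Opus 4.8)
The plan is to mimic the standard symmetrization argument used for \cref{lem:symmetric origin}, but to exploit the much larger automorphism group of $K_N$ so that we can symmetrize not just over a single transposition of two servers, but over the whole subgroup of $\operatorname{Aut}(K_N)=S_N$ that fixes a prescribed prefix of vertices. Concretely, fix an achievable rate $R$ and a PIR scheme $\Pi$ attaining a rate arbitrarily close to $R$. For a permutation $\pi\in S_N$, let $\pi\cdot\Pi$ denote the scheme obtained by relabelling every server $S_\ell$ as $S_{\pi(\ell)}$ (and correspondingly relabelling the files $W_{a,b}\mapsto W_{\pi(a),\pi(b)}$ and the queries/answers); since $K_N$ is vertex-transitive and edge-transitive, $\pi\cdot\Pi$ is again a valid PIR scheme over $K_N$ with the same rate. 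The first step is to record the equivariance identities: under $\pi$, the quantity $H(A_i\mid\WW_{[k]},\QQ)$ for the scheme $\pi\cdot\Pi$ equals $H(A_{\pi^{-1}(i)}\mid \WW_{\pi^{-1}([k])},\QQ)$ for $\Pi$, and similarly for the $\WW_{S_N}$-conditioned version with $N$ playing a distinguished role.

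Next I would average. Let $\mathcal{H}\le S_N$ be the subgroup fixing each of $1,2,\dots,k$ pointwise (so $\mathcal{H}\cong S_{\{k+1,\dots,N\}}$, which acts transitively on $\{k+1,\dots,N\}$ and even $2$-transitively on that set). Form the "average scheme" $\Pi^{\star}$ by taking, as the user's first coin flip, a uniformly random $\pi\in\mathcal{H}$, then running $\pi\cdot\Pi$; because entropy of answers is linear under this kind of mixture in the relevant conditional form (each server's answer-entropy in the mixed scheme is the average of the answer-entropies over the branches, using that the branch index is part of the query and hence conditioned on), $\Pi^{\star}$ has the same rate $R$, is still private and reliable, and satisfies $H(A_i\mid\WW_{[k]},\QQ^{\star}) = \frac{1}{|\mathcal{H}|}\sum_{\pi\in\mathcal{H}} H(A_{\pi^{-1}(i)}\mid\WW_{[k]},\QQ)$ for $i>k$, since $\mathcal{H}$ fixes $\WW_{[k]}$ setwise (indeed pointwise on indices $\le k$). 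Because $\mathcal{H}$ acts transitively on $\{k+1,\dots,N\}$, this average is the same for every $i>k$, giving \eqref{eq:lem:symmetric complete 1}. For \eqref{eq:lem:symmetric complete 2}, repeat the construction with the subgroup $\mathcal{H}'\le S_N$ fixing only the vertex $N$; it acts transitively on $\{1,\dots,N-1\}$ and fixes $\WW_{S_N}$ setwise, so the same averaging argument forces $H(A_i\mid\WW_{S_N},\QQ)$ to be independent of $i<N$. One subtlety: to get \emph{both} symmetries in a single scheme one can either prove them as separate statements about separate schemes (which is all the lemma asks, since the two displays are used in disjoint parts of the later argument) or, if a common scheme is wanted, average over the subgroup generated by $\mathcal{H}$ and $\mathcal{H}'$ after noting these do not conflict on the overlap.

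The main obstacle I anticipate is the bookkeeping around \emph{why averaging preserves the rate and privacy} — specifically, justifying that $H(A_i\mid\QQ)$ in the mixed scheme is exactly the arithmetic mean of the branch values rather than something smaller (which would only help the rate but break the equality we need) or larger. The clean way is to make the random relabelling part of the public query: the server is told which $\pi$ was used (this leaks nothing about $\theta$ because the distribution of $\pi$ is independent of $\theta$), so conditioning on $\QQ^{\star}$ includes conditioning on $\pi$, and $H(A_i\mid\QQ^{\star}) = \mathbb{E}_\pi H(A_i^{\pi\cdot\Pi}\mid\QQ^{\pi\cdot\Pi})$ by the chain rule; the conditioned versions with $\WW_{[k]}$ or $\WW_{S_N}$ follow identically because those file sets are $\mathcal{H}$-invariant. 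I would also need to double-check that \cref{lemma:privacy theta} and the reliability condition are inherited by $\pi\cdot\Pi$ and by the mixture, which is immediate from the fact that relabelling is a bijection on everything in sight and mixing over a $\theta$-independent coin preserves \eqref{eq:privacy_1} and \eqref{eq:reliability}. Once this infrastructure is in place, the transitivity of $\mathcal{H}$ and $\mathcal{H}'$ does all the remaining work and the two displayed equalities drop out.
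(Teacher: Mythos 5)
Your proposal is essentially the paper's argument: relabel the scheme by an automorphism $\pi\in\operatorname{Aut}(K_N)$, mix uniformly over the group with the branch index treated as part of the (public, $\theta$-independent) query so that answer entropies average linearly, and then use a change of variables $\sigma\mapsto\sigma\circ\tau$ with a transposition $\tau$ that swaps $i,j$ and fixes the conditioning file set. The only substantive difference is that the paper averages over the \emph{full} group $S_N$ once and extracts each equality by choosing the appropriate $\tau$ inside that single average, whereas you average over the pointwise stabilizer of $[k]$ (resp.\ of $N$) for each display separately.

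One caution: your remark that proving the two displays ``as separate statements about separate schemes \dots is all the lemma asks'' is not accurate. The lemma asserts the existence of a \emph{single} scheme satisfying both equalities, and the downstream proof of the upper bound for $\mathcal{C}(K_N)$ actually instantiates one scheme $\Pi_N$ and applies \eqref{eq:lem:symmetric complete 1} for every $k$ from $1$ to $N-2$ together with \eqref{eq:lem:symmetric complete 2}, so all the symmetries must coexist. Your fallback is the correct fix: the subgroup generated by your $\mathcal{H}$ and $\mathcal{H}'$ is all of $S_N$ (already for $k=1$), and averaging over the full group still yields every equality because each required transposition lies in $S_N$ and fixes the relevant conditioning set — which is precisely what the paper does.
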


With the above preparations in place, we now formally prove~\cref{thm:UpperBoundMain}. The proof proceeds by induction on~\( N \). For the base case \( N = 3 \), it was shown in~\cite{Banawan2019graph} that \( \CC(K_3) = \frac{1}{2} \). For the induction step, assume that for every \( n \le N-1 \), we have 
\[
\CC(K_n) \le \frac{1}{\left( \sum_{i=2}^{n} \frac{1}{i!} \right) \cdot n}.
\]
Now, suppose \( R \) is an achievable rate for \( K_N \). Then there exists a PIR scheme \( \Pi_N \) with rate~\( R \) that satisfies the equations in~\cref{lem:symmetric complete}.

For any \( 2 \le i \le N-2 \), let \( G_i = K_{N - i + 1} \) be the complete graph whose vertex set consists of \( S_i, S_{i+1}, \dots, S_N \). Then, the set of files corresponding to the edges in \( E(K_N) \setminus E(G_i) \) is precisely the collection \( \WW_{[i-1]} \). Therefore, for any \( 2 \le i \le N-2 \), we have
\begin{align}
    (N - i + 1) \cdot H(A_i \mid \WW_{[i-1]}, \QQ) 
    &= \sum_{j=i}^{N} H(A_j \mid \WW_{[i-1]}, \QQ) \label{eq:thm:upperbound 8} \\
    &\ge \frac{L}{\CC(G_i)} \label{eq:thm:upperbound 9} \\
    &= \frac{L}{\CC(K_{N - i + 1})} \notag \\
    &\ge \left( \sum_{j=2}^{N - i + 1} \frac{1}{j!} \right) \cdot (N - i + 1) \cdot L, \label{eq:thm:upperbound 10}
\end{align}
where~\eqref{eq:thm:upperbound 8} follows from~\eqref{eq:lem:symmetric complete 1}; \eqref{eq:thm:upperbound 9} follows from~\cref{cor:graph contain}; and~\eqref{eq:thm:upperbound 10} follows from the inductive hypothesis. Thus, for \( 2 \le i \le N - 2 \), we have
\begin{equation}
    H(A_i \mid \WW_{[i-1]}, \QQ) 
    \ge \left( \sum_{j=2}^{N - i + 1} \frac{1}{j!} \right) \cdot L. 
    \label{eq:thm:upperbound 11}
\end{equation}
Similarly, we obtain
\begin{equation}
    H(A_1 \mid \WW_{S_N}, \QQ) 
    \ge \left( \sum_{j=2}^{N - 1} \frac{1}{j!} \right) \cdot L. 
    \label{eq:thm:upperbound 12}
\end{equation}
Combining \eqref{eq:lem:symmetric complete 1} for $k=N-2$ and~\cref{lem:A_i+A_j le L}, we have
\begin{equation}
    H(A_{N-1}|\WW_{[N-2]},\QQ) = H(A_{N-1}|\WW \setminus \{W_{N-1,N}\},\QQ)\ge \frac{L}{2}. \label{eq:thm:upperbound 13}
\end{equation}
Combining~\eqref{eq:thm:upperbound 7}, \eqref{eq:thm:upperbound 11}, \eqref{eq:thm:upperbound 12}, and~\eqref{eq:thm:upperbound 13}, we obtain
\begin{align}
    \sum_{i=1}^{N} H(A_i \mid \QQ) 
    &\ge L + H(A_1 \mid \WW_{S_N}, \QQ) 
    + \sum_{i=2}^{N-2} H(A_i \mid \WW_{[i-1]}, \QQ) 
    + H(A_{N-1} \mid \WW_{[N-2]}, \QQ) \notag \\ 
    &\ge L + \left( \sum_{j=2}^{N-1} \frac{1}{j!} \right) \cdot L 
    + \sum_{i=2}^{N-2} \left( \sum_{j=2}^{N-i+1} \frac{1}{j!} \right) \cdot L 
    + \frac{L}{2} \notag \\
    &= \left( 1 + \sum_{j=2}^{N-1} \frac{N - j + 1}{j!} \right) \cdot L \label{eq:thm:upperbound 14} \\
    &= \left( 1 + \left( \sum_{j=2}^{N-1} \frac{1}{j!} \right) \cdot (N + 1) 
    - \sum_{j=2}^{N-2} \frac{1}{(j - 1)!} \right) \cdot L \notag \\
    &= \left( \left( \sum_{j=2}^{N-1} \frac{1}{j!} \right) \cdot N 
    + \frac{1}{(N - 1)!} \right) \cdot L \notag \\
    &= \left( \sum_{j=2}^{N} \frac{1}{j!} \right) \cdot N L, \notag
\end{align}
where~\eqref{eq:thm:upperbound 14} follows by counting how many times each term \( \frac{1}{j!} \) appears in the summation. 
Thus, the rate of the scheme \( \Pi_N \) satisfies
\[
R(\Pi_N) = \frac{L}{\sum_{i=1}^{N} H(A_i \mid \QQ)} 
\le \frac{1}{\left( \sum_{i=2}^{N} \frac{1}{i!} \right) \cdot N},
\]
which completes the proof of~\cref{thm:UpperBoundMain}.

\subsection{Complete bipartite graphs: Proof of~\cref{thm:UpperBound CompleteBipartite}}
In this section, we derive an upper bound on the PIR capacity of the complete bipartite graph \( K_{\frac{N}{2},\frac{N}{2}} \) with parts \( U \) and \( V \), where \( |U| = |V| = \frac{N}{2} \). We consider a PIR scheme operating over this bipartition, with servers \( S_1, S_3, \dots, S_{N-1} \) assigned to \( U \), and servers \( S_2, S_4, \dots, S_N \) assigned to \( V \).

The following lemma relies on the symmetry of the complete bipartite graph \( K_{\frac{N}{2},\frac{N}{2}} \). We omit its proof, as it closely parallels the argument in~\cref{lem:symmetric complete}.
\begin{lemma}\label{lemma:symmetric bipartite}
Let \(N\ge 4\) be an even number. For any achievable rate \( R \) and any \( 1 \le k \le \frac{N}{2} - 1 \), there exists a PIR scheme over \( K_{\frac{N}{2},\frac{N}{2}} \) with rate \( R \) such that the following hold:
\begin{equation}
    H(A_i \mid \WW_{[2k]}, \QQ) = H(A_j \mid \WW_{[2k]}, \QQ), \label{eq:lem:symmetric bipartite 1}
\end{equation}
for any \( 2k < i \ne j \le N \);
\begin{equation}
    H(A_i \mid \WW_{[2k+1]}, \WW_{S_N}, \QQ) = H(A_j \mid \WW_{[2k+1]}, \WW_{S_N}, \QQ), \label{eq:lem:symmetric bipartite 2}
\end{equation}
for any \( 2k+1 < i \ne j \le N \); and
\begin{equation}
    H(A_i \mid \WW_{S_{N-1}}, \WW_{S_N}, \QQ) = H(A_j \mid \WW_{S_{N-1}}, \WW_{S_N}, \QQ), \label{eq:lem:symmetric bipartite 3}
\end{equation}
for any \( 1 \le i \ne j < N - 1 \).
\end{lemma}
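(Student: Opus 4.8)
The plan is to follow the symmetrisation argument behind~\cref{lem:symmetric complete}, now using the full automorphism group $\Gamma := \operatorname{Aut}\!\left(K_{\frac{N}{2},\frac{N}{2}}\right)$; recall that $\Gamma$ consists of the permutations internal to each side $U,V$ together with the side-swaps, so $|\Gamma| = 2\cdot((N/2)!)^2$. Given any PIR scheme $\Pi$ over $K_{\frac{N}{2},\frac{N}{2}}$ of rate $R$ and subpacketization $L$, I would build a scheme $\Pi'$ of subpacketization $L\,|\Gamma|$ as follows: split each file $W_e$ into $|\Gamma|$ equal parts $\big(W_e^{(\gamma)}\big)_{\gamma\in\Gamma}$, and in the $\gamma$-th layer run a copy of $\Pi$ with vertices and files relabelled by $\gamma$, so the query $\Pi'$ sends to server $S_i$ in layer $\gamma$ is the query $\Pi$ would send to $S_{\gamma^{-1}(i)}$ acting on the parts $\{W_{\gamma^{-1}(e)}^{(\gamma)} : e\in E\}$; set $A_i^{\Pi'} = \big(A_i^{(\gamma)}\big)_{\gamma\in\Gamma}$. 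Reliability and privacy of $\Pi'$ follow layer by layer from those of $\Pi$: to recover $W_\theta$ the user recovers $W_\theta^{(\gamma)}$ in each layer, where it plays the role of $W_{\gamma^{-1}(\theta)}$, and each per-layer query is independent of $\gamma^{-1}(\theta)$ hence of $\theta$. Since the layers use independent file parts and independent internal randomness, $\sum_i H(A_i^{\Pi'}\mid\QQ') = |\Gamma|\sum_i H(A_i\mid\QQ)$ while the retrieved file size is also multiplied by $|\Gamma|$, so $R(\Pi')=R$; thus a single such $\Pi'$ is a valid scheme of the same rate, and it remains only to check the three symmetry identities for it.

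The heart of the proof is an averaging identity. For a set of servers $T$ write $\WW_T$ for the set of files incident to $T$ (so $\WW_{[2k]}=\WW_{\{S_1,\dots,S_{2k}\}}$, $\WW_{S_N}=\WW_{\{S_N\}}$, and so on). In layer $\gamma$ the triple $\big(A_i^{(\gamma)},\WW_T^{(\gamma)},\QQ^{(\gamma)}\big)$ has the same joint distribution as $(A_{\gamma^{-1}(i)},\WW_{\gamma^{-1}(T)},\QQ)$ under $\Pi$, so by independence of the layers
\begin{equation*}
H\!\left(A_i^{\Pi'}\mid\WW_T,\QQ'\right) = \sum_{\gamma\in\Gamma} H\!\left(A_{\gamma^{-1}(i)}\mid\WW_{\gamma^{-1}(T)},\QQ\right).
\end{equation*}
If $\alpha\in\Gamma$ stabilises $T$ setwise and sends $S_i$ to $S_j$, then reindexing the sum by $\gamma\mapsto\alpha\gamma$ (which uses $\alpha^{-1}(T)=T$) shows $H(A_i^{\Pi'}\mid\WW_T,\QQ') = H(A_j^{\Pi'}\mid\WW_T,\QQ')$. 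Thus each of \eqref{eq:lem:symmetric bipartite 1}--\eqref{eq:lem:symmetric bipartite 3} reduces to the combinatorial statement that, for the relevant conditioning set --- $T=\{S_1,\dots,S_{2k}\}$ for \eqref{eq:lem:symmetric bipartite 1}; $T=\{S_1,\dots,S_{2k+1}\}\cup\{S_N\}$ for \eqref{eq:lem:symmetric bipartite 2}; $T=\{S_{N-1},S_N\}$ for \eqref{eq:lem:symmetric bipartite 3} --- the setwise stabiliser of $T$ in $\Gamma$ acts transitively on the servers outside $T$.

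To verify these transitivity claims one uses that $\Gamma$ contains, besides arbitrary permutations within each side, an involution swapping $U$ and $V$. The within-side permutations fixing $T\cap U$ and $T\cap V$ already act transitively on the free servers inside $U$ and, separately, on those inside $V$; to connect the two sides one chooses a side-swap stabilising $T$ setwise: for \eqref{eq:lem:symmetric bipartite 1} the pairing $S_{2\ell-1}\leftrightarrow S_{2\ell}$ for all $\ell$, which fixes $\{S_1,\dots,S_{2k}\}$ and exchanges the free odd- and even-indexed servers; for \eqref{eq:lem:symmetric bipartite 2} the swap pairing $S_{2\ell-1}\leftrightarrow S_{2\ell}$ for $\ell\le k$, pairing $S_{2k+1}\leftrightarrow S_N$, and pairing the remaining odd with the remaining even indices in order; for \eqref{eq:lem:symmetric bipartite 3} the pairing $S_{2\ell-1}\leftrightarrow S_{2\ell}$ for $\ell\le N/2-1$ together with $S_{N-1}\leftrightarrow S_N$. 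Composing such a swap with the within-side permutations gives transitivity on the complement of $T$. I expect this last, essentially bookkeeping, step to be the only real obstacle --- in each case one must exhibit a side-swap that \emph{simultaneously} stabilises the prescribed $T$ and, together with the within-side permutations, is transitive on the servers outside $T$ --- while boundary indices lying in $T$ (for instance $S_N$ in \eqref{eq:lem:symmetric bipartite 2}, where $H(A_N\mid\WW_{S_N},\QQ)=0$ since $A_N$ is a function of $Q_N$ and $\WW_{S_N}$) are handled exactly as in the proof of~\cref{lem:symmetric complete}.
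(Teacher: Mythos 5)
Your proposal is correct and follows essentially the same route as the paper, which omits this proof and refers to the argument for \cref{lem:symmetric complete} in Appendix A: symmetrize over $\operatorname{Aut}(K_{N/2,N/2})$ and reduce each identity to the setwise stabiliser of the conditioning set acting transitively on its complement, with your side-swap choices correctly supplying the bipartite-specific bookkeeping. The only (immaterial) difference is that you implement the symmetrization as a product of $|\Gamma|$ independent layers with subpacketization $L|\Gamma|$, whereas the paper's $\Pi'$ selects a uniformly random automorphism and runs the single relabelled scheme $\Pi_\sigma$; both yield the same averaging identity and preserve the rate.
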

    We now proceed to prove~\cref{thm:UpperBound CompleteBipartite} by induction on \( N \). For the base case \( N = 4 \), note that \( K_{2,2} \) is isomorphic to the 4-cycle, and it has been known in~\cite{Sadeh2023bound} that \( \mathcal{C}(K_{2,2}) = \mathcal{C}(C_4) = \frac{2}{5} \), which satisfies the claimed upper bound. Now assume as the induction hypothesis that for all even integers \( n \le N - 2 \), the following holds:
\[
\mathcal{C}\left(K_{\frac{n}{2}, \frac{n}{2}}\right) \le \frac{1}{\sum_{i=1}^{n/2} \frac{1}{i! \cdot 2^i}} \cdot \frac{1}{n}.
\]
We aim to prove the bound for \( \mathcal{C}(K_{\frac{N}{2}, \frac{N}{2}}) \). Suppose \( R \) is an achievable rate for this graph. Then, there exists a PIR scheme \( \Pi_N \) of rate \( R \) that satisfies the symmetric conditions stated in~\cref{lemma:symmetric bipartite}.

For any \( 1 \le k \le \frac{N}{2} - 1 \), define the subgraph \( G_k = K_{\frac{N}{2} - k, \frac{N}{2} - k} \), whose vertex set consists of the servers \( S_{2k+1}, S_{2k+2}, \dots, S_N \). Observe that the set of files corresponding to the edge set \( E(K_{\frac{N}{2}, \frac{N}{2}}) \setminus E(G_k) \) is precisely \( \WW_{[2k]} \). Therefore, for any \( 1 \le k \le \frac{N}{2} - 1 \), we obtain:
\begin{align}
    (N - 2k) \cdot H(A_{2k+1} \mid \WW_{[2k]}, \QQ) 
    &= \sum_{j=2k+1}^{N} H(A_j \mid \WW_{[2k]}, \QQ) \label{eq:thm:UpperBound CompleteBipartite 1} \\
    &\ge \frac{L}{\mathcal{C}(G_k)} \label{eq:thm:UpperBound CompleteBipartite 2} \\
    &= \frac{L}{\mathcal{C}(K_{\frac{N}{2} - k, \frac{N}{2} - k})} \notag \\
    &\ge \left( \sum_{i=1}^{\frac{N}{2} - k} \frac{1}{i! \cdot 2^i} \right) \cdot (N - 2k) \cdot L, \label{eq:thm:UpperBound CompleteBipartite 3}
\end{align}
where~\eqref{eq:thm:UpperBound CompleteBipartite 1} follows from the symmetry condition~\eqref{eq:lem:symmetric bipartite 1};~\eqref{eq:thm:UpperBound CompleteBipartite 2} follows from~\cref{cor:graph contain}; and~\eqref{eq:thm:UpperBound CompleteBipartite 3} follows from the induction hypothesis. Thus, for any \( 1 \le k \le \frac{N}{2} - 2 \), we obtain
\begin{equation}
    H(A_{2k+1} \mid \WW_{[2k]}, \QQ) \ge \left( \sum_{i=1}^{\frac{N}{2} - k} \frac{1}{i! \cdot 2^i} \right) \cdot L. \label{eq:thm:UpperBound CompleteBipartite 4}
\end{equation}
Similarly, for any \( 1 \le k \le \frac{N}{2} - 2 \), we have
\begin{equation}
    H(A_{2k} \mid \WW_{[2k-1]}, \WW_{S_N}, \QQ) \ge \left( \sum_{i=1}^{\frac{N}{2} - k} \frac{1}{i! \cdot 2^i} \right) \cdot L, \label{eq:thm:UpperBound CompleteBipartite 5}
\end{equation}
and
\begin{equation}
    H(A_1 \mid \WW_{S_{N-1}}, \WW_{S_N}, \QQ) \ge \left( \sum_{i=1}^{\frac{N}{2} - 1} \frac{1}{i! \cdot 2^i} \right) \cdot L. \label{eq:thm:UpperBound CompleteBipartite 6}
\end{equation}
Combining~\eqref{eq:lem:symmetric bipartite 1} with \( k = \frac{N}{2} - 1 \) and applying~\cref{lem:A_i+A_j le L}, we obtain
\begin{equation}
    H(A_{N-1} \mid \WW_{[N-2]}, \QQ) = H(A_{N-1} \mid \WW \setminus \{ \WW_{N-1,N} \}, \QQ) \ge \frac{L}{2}. \label{eq:thm:UpperBound CompleteBipartite 7}
\end{equation}
Similarly, combining~\eqref{eq:lem:symmetric bipartite 2} with \( k = \frac{N}{2} - 1 \) and using~\cref{lem:A_i+A_j le L}, we have
\begin{equation}
    H(A_{N-2} \mid \WW_{[N-3]}, \WW_{S_N}, \QQ) = H(A_{N-2} \mid \WW \setminus \{ \WW_{N-2,N-1} \}, \QQ) \ge \frac{L}{2}. \label{eq:thm:UpperBound CompleteBipartite 8}
\end{equation}
According to~\cref{lemma:splitting to small graph}, we have
\begin{align}
    \sum_{i=1}^N H(A_i \mid \QQ)
    &\ge L + H(A_1 \mid \WW_{S_N}, \QQ) + \sum_{i=2}^{N-1} H(A_i \mid \WW_{[i-1]}, \QQ) \notag \\
    &= L + H(A_1 \mid \WW_{S_N}, \QQ) 
    + \sum_{k=1}^{\frac{N}{2}-2} H(A_{2k} \mid \WW_{[2k-1]}, \QQ) 
    + \sum_{k=1}^{\frac{N}{2}-2} H(A_{2k+1} \mid \WW_{[2k]}, \QQ) \notag \\
    &\quad + H(A_{N-2} \mid \WW_{[N-3]}, \QQ) + H(A_{N-1} \mid \WW_{[N-2]}, \QQ) \notag \\
    &= L + H(A_1 \mid \WW_{S_N}, \QQ) 
    + \sum_{k=1}^{\frac{N}{2}-2} H(A_{2k} \mid \WW_{[2k-1]}, \WW_{S_N}, \QQ) 
    + \sum_{k=1}^{\frac{N}{2}-2} H(A_{2k+1} \mid \WW_{[2k]}, \QQ) \notag \\
    &\quad + H(A_{N-2} \mid \WW_{[N-3]}, \WW_{S_N}, \QQ) 
    + H(A_{N-1} \mid \WW_{[N-2]}, \QQ) \label{eq:thm:UpperBound CompleteBipartite 9} \\
    &\ge L + \left( \sum_{i=1}^{\frac{N}{2}-1} \frac{1}{i!2^i} \right) L 
    + \sum_{k=1}^{\frac{N}{2}-2} \left( \sum_{i=1}^{\frac{N}{2}-k} \frac{1}{i!2^i} \right) L 
    + \sum_{k=1}^{\frac{N}{2}-2} \left( \sum_{i=1}^{\frac{N}{2}-k} \frac{1}{i!2^i} \right) L 
    + \frac{L}{2} + \frac{L}{2} \label{eq:thm:UpperBound CompleteBipartite 10} \\
    &= \left( 1 + \sum_{i=1}^{\frac{N}{2}-1} \frac{1}{i!2^i} 
    + 2 \sum_{k=1}^{\frac{N}{2}-2} \sum_{i=1}^{\frac{N}{2}-k} \frac{1}{i!2^i} \right) L \notag \\
    &= \left( 1 + \sum_{i=1}^{\frac{N}{2}-1} \frac{N - 2i + 1}{i!2^i} \right) L \label{eq:thm:UpperBound CompleteBipartite 11} \\
    &= \left( (N+1) \sum_{i=1}^{\frac{N}{2}-1} \frac{1}{i!2^i} 
    - \sum_{i=1}^{\frac{N}{2}-1} \frac{1}{(i-1)!2^{i-1}} + 1 \right) L \notag \\
    &= \left( N \sum_{i=1}^{\frac{N}{2}-1} \frac{1}{i!2^i} 
    + \frac{1}{\left(\frac{N}{2}-1\right)! 2^{\frac{N}{2}-1}} \right) L \notag \\
    &= \left( \sum_{i=1}^{\frac{N}{2}} \frac{1}{i!2^i} \right) \cdot N L. \notag
\end{align}
Here,~\eqref{eq:thm:UpperBound CompleteBipartite 9} uses the fact that server \( S_{2k} \) and server \( S_N \) share no common files for any \( 1 \le k \le \frac{N}{2}-1 \);~\eqref{eq:thm:UpperBound CompleteBipartite 10} follows from~\eqref{eq:thm:UpperBound CompleteBipartite 4},~\eqref{eq:thm:UpperBound CompleteBipartite 5},~\eqref{eq:thm:UpperBound CompleteBipartite 6},~\eqref{eq:thm:UpperBound CompleteBipartite 7}, and~\eqref{eq:thm:UpperBound CompleteBipartite 8}; and~\eqref{eq:thm:UpperBound CompleteBipartite 11} follows by counting how many times each \( \frac{1}{i!2^i} \) appears in the summation. Therefore, the rate \( R(\Pi_N) = \frac{L}{\sum_{i=1}^N H(A_i \mid \QQ)} \) satisfies
\[
R(\Pi_N) \le \frac{1}{\sum_{i=1}^{N/2} \frac{1}{i!2^i}} \cdot \frac{1}{N},
\]
which completes the proof of~\cref{thm:UpperBound CompleteBipartite}.

\section{New construction of PIR scheme over $K_{N}$ and improved lower bound for $\mathcal{C}(K_{N})$}\label{section:lowerbound}

In this section, we present explicit constructions of PIR schemes over the complete graph \( K_N \). To build intuition, we proceed incrementally: we begin by revisiting the classical PIR scheme over \( K_3 \), introduce our new construction over \( K_4 \) as a warm-up, and then extend the construction to a general scheme over \( K_N \) for arbitrary \( N \).

For simplicity, we always assume that each file is a binary vector. More generally, if the files are vectors over an additive abelian group (such like \( \mathbb{Z}_q \)), our construction remains applicable by incorporating random signs before each sub-file in the queries. A similar technique is also employed in~\cref{subsection:Probabilistic General Graph}.

\subsection{PIR scheme over triangle: Revisited and a new framework}\label{subsection:PIR scheme k3 revisited}

In~\cite{Banawan2019graph}, Banawan and Ulukus presented an explicit capacity-achieving PIR scheme over the cycle \( C_N \) for any integer \( N \ge 3 \). Since \( K_3 = C_3 \), this result directly yields an optimal scheme over the complete graph $K_3$. More recently, Kong, Meel, Maranzatto, Tamo, and Ulukus~\cite{kong2025newcapacityboundspir} introduced a simplified variant specifically tailored to \( K_3 \). In this section, we reinterpret the optimal scheme over \( K_3 \) from a novel and arguably more natural perspective.

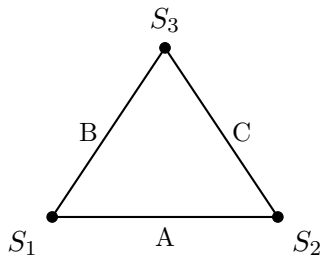
\begin{figure}[h]
\centering
\begin{tikzpicture}[
    vertex/.style={circle, draw, fill=black, inner sep=0pt, minimum size=4pt},
    edge/.style={midway, font=\small}
]

\coordinate (S1) at (0,0);
\coordinate (S2) at (3,0);
\coordinate (S3) at (1.5,2.25);

\draw[thick] (S1) -- node[edge, below] {A} (S2);
\draw[thick] (S1) -- node[edge, left] {B} (S3);
\draw[thick] (S2) -- node[edge, right] {C} (S3);

\node[vertex, label=below left:$S_1$] at (S1) {};
\node[vertex, label=below right:$S_2$] at (S2) {};
\node[vertex, label=above:$S_3$] at (S3) {};
\end{tikzpicture}
\caption{The corresponding PIR system over $K_3$} 
\label{fig:K_3} 
\end{figure}

The replication system over \( K_3 \) consists of three servers \( S_1, S_2, S_3 \), and three files \( A, B, C \), corresponding to the edges \( S_1S_2 \), \( S_1S_3 \), and \( S_2S_3 \), respectively; see~\cref{fig:K_3}. Without loss of generality, we assume the user wishes to retrieve file \( A \), with demand index \( \theta = A \). We begin by presenting an illustrative PIR scheme over \( K_3 \) that achieves the optimal retrieval rate of \( \frac{1}{2} \).

In this scheme, each file is divided into \( L = 6 \) equal-sized sub-files. As described in~\cref{section:problemstatement}, the user privately generates three independent random permutations \( \sigma_1, \sigma_2, \sigma_3 \) of \([6]\). For each \( i \in [6] \), define
\[
a_i := A(\sigma_1(i)), \quad b_i := B(\sigma_2(i)), \quad c_i := C(\sigma_3(i)),
\]
where \( a_i \) denotes the \( \sigma_1(i) \)-th sub-file of \( A \), and \( b_i, c_i \) are defined analogously. The user then requests and downloads four messages from each server \( S_1 \), \( S_2 \), and \( S_3 \), as summarized in~\cref{table:PIR scheme over k3}, which lists the exact sub-files requested from each server. Upon receiving the queries, the servers respond with the corresponding answers. For instance, from server \( S_1 \), the user downloads the sub-files \( a_1 \), \( b_3 \), \( a_3 + b_1 \), and \( a_5 + b_2 \).

\begin{table}[ht]
\centering
\begin{tabular}{|c|c|c|}
\hline
$S_1$ & $S_2$ & $S_3$ \\ \hline
$a_1$ & $a_2$ & $b_1$ \\ 
$b_3$ & $c_2$ & $c_1$ \\ 
$a_3+b_1$ & $a_4+c_1$ & $b_2+c_2$ \\ 
$a_5+b_2$ & $a_6+c_3$ & $b_3+c_3$ \\ \hline
\end{tabular}
\caption{Answer table of the PIR scheme over $K_3$ in~\cite{kong2025newcapacityboundspir}, for $\theta =A$. }
\label{table:PIR scheme over k3}
\end{table}
Based on the responses, the user can recover all six sub-files of \( A \) as follows:
\begin{itemize}
    \item \( a_1 \) is retrieved directly from \( S_1 \); \( a_2 \) is retrieved directly from \( S_2 \);
    \item \( a_3 \) is recovered by adding \( a_3 + b_1 \) (from \( S_1 \)) and \( b_1 \) (from \( S_3 \)); similarly, \( a_4 \) is recovered by adding \( a_4 + c_1 \) (from \( S_2 \)) and \( c_1 \) (from \( S_3 \));
    \item \( a_5 \) is recovered by adding \( a_5 + b_2 \) (from \( S_1 \)), \( c_2 \) (from \( S_2 \)) and \( b_2 + c_2 \) (from \( S_3 \));
    \item \( a_6 \) is recovered by adding \( b_3 \) (from \( S_1 \)), \( a_6+c_3 \) (from \( S_2 \)) and \( b_3 + c_3 \) (from \( S_3 \)).
\end{itemize}

Since the bit permutations are privately chosen and unknown to the servers, the queries appear uniformly random and independent of the desired file index \( \theta \). This guarantees the privacy of the retrieval. The scheme downloads a total of \( 4 \times 3 = 12 \) sub-files to recover 6 desired sub-files, resulting in a PIR rate of \( \frac{6}{12} = \frac{1}{2} \).

To gain a deeper understanding of the internal structure of PIR schemes over \( K_N \) for larger \( N \), it is helpful to categorize the types of summations that appear in the servers' responses. Each such summation involves sub-files stored on a given server and can be classified based on the number of distinct files it includes. This motivates the following definition.

\begin{defn}[\( k \)-sum]
    Let \( G \) be a graph, and let \( \Pi \) be a PIR scheme defined over \( G \). Fix a desired file index \( \theta \in E(G) \). For any server \( S_i \) with \( i \in [N] \) and any positive integer \( k \), suppose \( W_{i,j_1}, \dots, W_{i,j_k} \) are \( k \) distinct files stored on \( S_i \). A summation that consists of exactly one sub-file from each of \( W_{i,j_1}, \dots, W_{i,j_k} \) is called a \emph{\( k \)-sum} of type \( W_{i,j_1} + \dots + W_{i,j_k} \). Two such \( k \)-sums are said to be of the same type if they involve the same \( k \) files. The \emph{\( k \)-block} of the answer \( A_i \) refers to the collection of all \( k \)-sums of various types that appear in the answer returned by server \( S_i \).
\end{defn}

As an example, in the PIR scheme described in~\cref{table:PIR scheme over k3}, the sub-file \( a_2 \) from server \( S_2 \) is a 1-sum of type \( A \); the summation \( b_2 + c_2 \) from server \( S_3 \) is a 2-sum of type \( B + C \); the sub-files \( a_1 \) and \( b_3 \) from server \( S_1 \) are 1-sums of types \( A \) and \( B \), respectively; and the summations \( a_3 + b_1 \) and \( a_5 + b_2 \) from \( S_1 \) are 2-sums of type \( A + B \).

By definition, for any server \( S_i \) with degree \( \deg(S_i) \), the number of possible distinct types of \( k \)-sums is exactly \( \binom{\deg(S_i)}{k} \). In~\cref{table:PIR scheme over k3}, we observe that for \( k = 1 \) and \( k = 2 \), the number of \( k \)-sums appearing in the answer from each server is identical across all servers. Moreover, the answer \( A_i \) from each server can be decomposed into the union of its \( k \)-blocks for \( k = 1, 2 \). To further quantify the frequency of each distinct \( k \)-sum, we introduce the following concept.

\begin{defn}[Multiplicity of \( k \)-sums]\label{def:Multiplicity}
    Let \( \Pi \) be a PIR scheme over a graph \( G \), and fix a desired file index \( \theta \in E(G) \). For any positive integer \( k \), we say that the \( k \)-sums in the scheme have \emph{multiplicity} \( x_k \) if the following holds: for every server \( S_i \) with \( i \in [N] \), and for every choice of \( k \) distinct files \( W_{i,j_1}, \dots, W_{i,j_k} \) stored on \( S_i \), there exist exactly \( x_k \) distinct \( k \)-sums of type \( W_{i,j_1} + \dots + W_{i,j_k} \) in the answer \( A_i \).
\end{defn}

By applying~\cref{def:Multiplicity} to the scheme in~\cref{table:PIR scheme over k3}, one can verify that the \( 1 \)-sums have multiplicity \( 1 \), while the \( 2 \)-sums have multiplicity \( 2 \). Observe that by appropriately combining selected summations from the answers of different servers, one can recover individual sub-files of the desired file. This observation motivates the following definition.
\begin{defn}[Recovery Pattern]\label{def:Recovery Pattern}
    Let \( G \) be a graph, and let \( \Pi \) be a PIR scheme over \( G \). Fix a desired file index \( \theta \in E(G) \). A collection consisting of at most one requested summation from the query of each server is called a \emph{recovery pattern} for \( W_{\theta} \) if the sum of these selected summations equals a sub-file of \( W_{\theta} \). 
    
\end{defn}

For example, in~\cref{table:PIR scheme over k3}, the summation \( a_3 + b_1 \) from \( S_1 \), together with \( b_1 \) from \( S_3 \), constitutes a recovery pattern for \( A \), as their sum yields \( a_3 \).

To simplify notation and enhance readability, we omit sub-file indices when listing or analyzing recovery patterns. Instead, we use underlined symbols to indicate that a summand belongs to a particular file, without specifying its exact sub-file. For instance, a \( 2 \)-sum of type \( A + B \) is denoted by \( \underline{a} + \underline{b} \), where each underlined term represents some (possibly distinct) sub-file from the corresponding file. This abstraction allows us to concentrate on the structural aspects of recovery patterns without being distracted by indexing details. Throughout the rest of this section, we will consistently adopt this simplified notation in both our constructions and discussions.

Using this convention, we summarize all recovery patterns employed in the PIR scheme over \( K_3 \) in~\cref{table:recovery patterns in k3}. To avoid redundancy due to symmetry, equivalent patterns are grouped into equivalence classes. Each summation in~\cref{table:PIR scheme over k3} can then be classified according to the recovery pattern it participated in. Based on this classification, we present an adjusted version of the PIR scheme over \( K_3 \) in~\cref{table:adjusted scheme over k3}, which explicitly highlights the structure and functional role of each summation. Notably, in~\cref{table:adjusted scheme over k3}, each recovery pattern listed in~\cref{table:recovery patterns in k3} is used exactly once.

\begin{table}[htbp]
\centering
\begin{tabular}{|c|c|c|c|}
\hline
                  & $S_1$ & $S_2$ & $S_3$ \\ \hline
\multirow{2}{*}{recovery pattern 1} & $\underline{a}$ &  &  \\ \cline{2-4} 
                  &  & $\underline{a}$ &  \\ \hline
\multirow{2}{*}{recovery pattern 2} & $\underline{a}+\underline{b}$ &  & $\underline{b}$ \\ \cline{2-4} 
                  &  & $\underline{a}+\underline{c}$ & $\underline{c}$ \\ \hline
\multirow{2}{*}{recovery pattern 3} & $\underline{a}+\underline{b}$ & $\underline{c}$ & $\underline{b}+\underline{c}$ \\ \cline{2-4} 
                  & $\underline{b}$ & $\underline{a}+\underline{c}$ & $\underline{b}+\underline{c}$ \\ \hline
\end{tabular}
\caption{The table of the whole recovery patterns used in the PIR scheme in~\cref{table:PIR scheme over k3}.}
\label{table:recovery patterns in k3}
\end{table}

\begin{table}[ht]
\centering
\begin{tabular}{|c|c|c|c|}
\hline
                  & $S_1$ & $S_2$ & $S_3$ \\ \hline
\multirow{2}{*}{recovery pattern 1} & $a_1$ &  &  \\ \cline{2-4} 
                  &  & $a_2$ &  \\ \hline
\multirow{2}{*}{recovery pattern 2} & $a_3+b_1$ &  & $b_1$ \\ \cline{2-4} 
                  &  & $a_4+c_1$ & $c_1$ \\ \hline
\multirow{2}{*}{recovery pattern 3} & $a_5+b_2$ & $c_2$ & $b_2+c_2$ \\ \cline{2-4} 
                  & $b_3$ & $a_6+c_3$ & $b_3+c_3$ \\ \hline
\end{tabular}
\caption{The adjusted scheme over $K_3$ according to recovery pattern, for $\theta =A$.}
\label{table:adjusted scheme over k3}
\end{table}

Moreover, if the recovery patterns used in the PIR scheme and their frequencies are known a priori, the corresponding PIR scheme can be directly reconstructed. The reconstruction process involves generating the sub-file indices according to the following rules:
\begin{itemize}
    \item The subscripts of sub-files belonging to the desired file must be pairwise distinct;
    \item Within each recovery pattern, the subscripts of sub-files from the same non-desired file must be identical;
    \item In the query sent to any individual server, the subscripts of sub-files from the same non-desired file must be pairwise distinct.
\end{itemize}
By applying these three rules, the PIR scheme presented in~\cref{table:adjusted scheme over k3} can be reconstructed, given that each recovery pattern in~\cref{table:recovery patterns in k3} is used exactly once.

\subsection{Warm-up: A new construction of PIR scheme over \( K_4 \)}\label{subsection:PIR scheme K4}
\subsubsection{Construction of PIR scheme over $K_{4}$ with rate $\frac{7}{20}$}
Building on the PIR scheme over \( K_3 \), we now present our construction of a PIR scheme over \( K_4 \). In the spirit of~\cite{Sun2017capacity}, our approach follows a myopic (greedy) paradigm: it begins by focusing on the retrieval of specific sub-files from the desired message and gradually evolves into a complete PIR scheme through iterative refinements. This construction is guided by the following three fundamental design principles:
\begin{itemize}
    \item[\textup{(1)}] enforcing symmetry across servers;
    \item[\textup{(2)}] ensuring message symmetry within queries to each server;
    \item[\textup{(3)}] leveraging side information from undesired messages to recover additional desired sub-files whenever possible.
\end{itemize}
While the high-level strategy is reminiscent of that in~\cite{Sun2017capacity}, our scheme departs from prior approaches in a key aspect: the recovery of desired sub-files is orchestrated through a carefully selected collection of recovery patterns. The methodology for identifying and deploying these patterns is elaborated in~\cref{subsubsection:Selection of Recovery Patterns}.

In this subsection, we illustrate how these principles can be systematically applied to extend the PIR scheme over \( K_3 \) to a new construction over \( K_4 \). The resulting scheme achieves a retrieval rate of \( \frac{7}{20} \). We model the underlying replication system as the complete graph \( K_4 \), consisting of four servers \( S_1, S_2, S_3, S_4 \), where each file is represented as a binary vector. Without loss of generality, we assume that the desired file, denoted by \( A \), is uniquely stored on servers \( S_1 \) and \( S_2 \). The remaining files in the complete graph \( K_4 \) are labeled \( B, C, D, E, F \), as illustrated in~\cref{fig:K_4}. These files can be naturally categorized into two groups: those that are stored on either \( S_1 \) or \( S_2 \), and those that are stored on neither.

To clarify this classification, we employ a color-coded representation in~\cref{fig:K_4}. The edge corresponding to the desired file \( A \), which is shared between \( S_1 \) and \( S_2 \), is shown in black. Edges representing files stored on exactly one of \( S_1 \) or \( S_2 \) are depicted in blue. Lastly, edges corresponding to files stored on neither \( S_1 \) nor \( S_2 \) are drawn in red.

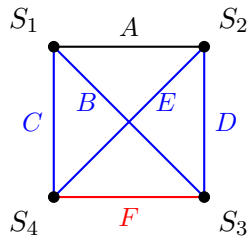
\begin{figure}[h]
\centering
\begin{tikzpicture}[
    vertex/.style={circle, draw, fill=black, inner sep=0pt, minimum size=4pt},
    edge/.style={midway, font=\small}
]

\coordinate (S1) at (0,2);
\coordinate (S2) at (2,2);
\coordinate (S3) at (2,0);
\coordinate (S4) at (0,0);
\draw[thick, black] (S1) -- node[edge, above] {$A$} (S2);
\draw[thick, blue] (S1) -- node[edge, above left, xshift=-8pt] {$B$} (S3);
\draw[thick, blue] (S1) -- node[edge, left] {$C$} (S4);
\draw[thick, blue] (S2) -- node[edge, right] {$D$} (S3);
\draw[thick, blue] (S2) -- node[edge, above right, xshift=6pt] {$E$} (S4);
\draw[thick, red] (S3) -- node[edge, below] {$F$} (S4);

\node[vertex, label=above left:$S_1$] at (S1) {};
\node[vertex, label=above right:$S_2$] at (S2) {};
\node[vertex, label=below right:$S_3$] at (S3) {};
\node[vertex, label=below left:$S_4$] at (S4) {};
\end{tikzpicture}
\caption{The corresponding PIR system to $K_4$} 
\label{fig:K_4} 
\end{figure}

As before, we omit the subscripts of sub-files when listing and analyzing recovery patterns, focusing instead on the structural role of each summation. The selected recovery patterns are carefully designed to combine known side information with a summation involving exactly one sub-file from the desired message. More precisely, our PIR scheme over the complete graph \( K_N \) is constructed in a sequential manner, proceeding through Steps \( 1 \) to \( N \! - \! 1 \). In each Step \( k \), the user recovers certain sub-files of the desired message by applying specific recovery patterns. These patterns consist exclusively of $(k\!-\!1)$-sums and $k$-sums composed of sub-files from undesired files, along with a $k$-sum that includes one sub-file of the desired file. Each recovery pattern is applied several times and is confined to its designated step in the construction.

Suppose a complete set of recovery patterns is predetermined, as shown in~\cref{table:recovery patterns in K4}. The table lists all recovery patterns for retrieving the desired file \( A \), where subscripts are omitted for simplicity. To eliminate redundancy, patterns are grouped into classes based on structural equivalence.

\begin{table}[ht]
\small
\centering
\begin{tabular}{|c|c|c|c|c|c|}
\hline
Steps & Recovery patterns & $S_1$ & $S_2$ & $S_3$ & $S_4$ \\ \hline
\multirow{2}{*}{1} & \multirow{2}{*}{1} & $\underline{a}$ &  &  &  \\ \cline{3-6}
            & &  & $\underline{a}$ &  &  \\ \hline
\multirow{12}{*}{2} & \multirow{4}{*}{2} & $\underline{a}+\underline{b}$ &  & $\underline{b}$ &  \\ \cline{3-6}
            & & $\underline{a}+\underline{c}$ &  &  & $\underline{c}$ \\ \cline{3-6}
            & &  & $\underline{a}+\underline{d}$ & $\underline{d}$ &  \\ \cline{3-6}
            & &  & $\underline{a}+\underline{e}$ &  & $\underline{e}$ \\ \cline{2-6}
 & \multirow{4}{*}{3} & $\underline{a}+\underline{b}$ & $\underline{d}$ & $\underline{b}+\underline{d}$ &  \\ \cline{3-6}
            & & $\underline{a}+\underline{c}$ & $\underline{e}$ &  & $\underline{c}+\underline{e}$ \\ \cline{3-6}
            & & $\underline{b}$ & $\underline{a}+\underline{d}$ & $\underline{b}+\underline{d}$ &  \\ \cline{3-6}
            & & $\underline{c}$ & $\underline{a}+\underline{e}$ &  & $\underline{c}+\underline{e}$ \\ \cline{2-6}
 & \multirow{4}{*}{4} & $\underline{a}+\underline{b}$ &  & $\underline{b}+\underline{f}$ & $\underline{f}$ \\ \cline{3-6}
            & & $\underline{a}+\underline{c}$ &  & $\underline{f}$ & $\underline{c}+\underline{f}$ \\ \cline{3-6}
            & &  & $\underline{a}+\underline{d}$ & $\underline{d}+\underline{f}$ & $\underline{f}$ \\ \cline{3-6}
            & &  & $\underline{a}+\underline{e}$ & $\underline{f}$ & $\underline{e}+\underline{f}$ \\ \hline
\multirow{6}{*}{3} & \multirow{2}{*}{5} & $\underline{a}+\underline{b}+\underline{c}$ &  & $\underline{b}+\underline{f}$ & $\underline{c}+\underline{f}$ \\ \cline{3-6}
            & &  & $\underline{a}+\underline{d}+\underline{e}$ & $\underline{d}+\underline{f}$ & $\underline{e}+\underline{f}$ \\ \cline{2-6}
 & \multirow{2}{*}{6} & $\underline{a}+\underline{b}+\underline{c}$ & $\underline{d}+\underline{e}$ & $\underline{b}+\underline{d}$ & $\underline{c}+\underline{e}$ \\ \cline{3-6}
            & & $\underline{b}+\underline{c}$ & $\underline{a}+\underline{d}+\underline{e}$ & $\underline{b}+\underline{d}$ & $\underline{c}+\underline{e}$ \\ \cline{2-6}
 & \multirow{2}{*}{7} & $\underline{a}+\underline{b}+\underline{c}$ & $\underline{d}+\underline{e}$ & $\underline{b}+\underline{d}+\underline{f}$ & $\underline{c}+\underline{e}+\underline{f}$ \\ \cline{3-6}
            & & $\underline{b}+\underline{c}$ & $\underline{a}+\underline{d}+\underline{e}$ & $\underline{b}+\underline{d}+\underline{f}$ & $\underline{c}+\underline{e}+\underline{f}$ \\ \hline
\end{tabular}
\caption{The table of the whole recovery patterns used in our PIR scheme over $K_4$.}
\label{table:recovery patterns in K4}
\end{table}

Our objective is to construct a PIR scheme over the complete graph \( K_4 \) such that, for each \( 1 \leq k \leq 3 \), the \( k \)-sums employed in the scheme appear with nonnegative multiplicities \( x_k \geq 0 \). The construction proceeds sequentially, with each \( x_k \) determined based on the residual contribution of \( (k\!-\!1) \)-sums selected in the preceding step.

The scheme is built over three steps. In each step \( 1 \leq k \leq 3 \), we select a tailored collection of recovery patterns for the desired file \( A \), where each pattern consists solely of \( (k\!-\!1) \)-sums and \( k \)-sums extracted from server responses. After step \( k\!-\!1 \), certain \( (k\!-\!1) \)-sums may remain unused. In step \( k \), these leftover side information terms are combined with carefully chosen \( k \)-sums to facilitate the recovery of additional sub-files of \( A \), thereby determining the value of \( x_k \). Throughout this process, the construction aims to make optimal use of the residual \( (k\!-\!1) \)-sums to maximize efficiency.

We now formally describe the procedure for constructing the PIR scheme. The process begins by setting the multiplicity \( x_1 \) of 1-sums to a fixed positive integer \( M \), which is assumed to be sufficiently large. Another key parameter is the sub-packetization level \( L \), which remains unspecified at this stage. Both \( M \) and \( L \) will be explicitly determined after the completion of Steps 1 through 3.

Let \( [a_1,\ldots, a_L] \), \( [b_1, \dots, b_L] \), \( [c_1, \dots, c_L] \), \( [d_1, \dots, d_L] \), \( [e_1, \dots, e_L] \), and \( [f_1, \dots, f_L] \) represent independent, uniformly random permutations of the \( L \) sub-files of files \( A, B, C, D, E, \) and \( F \), respectively. These permutations are generated privately by the user and serve to randomize the ordering of sub-files across the scheme.

In the following, we formally describe the operations performed in each of the three steps. Although the arrangement of sub-file indices within these steps may initially seem unintuitive, the reasoning behind these choices will become evident upon completion of the full construction.

\paragraph{Step 1.} The user initiates the scheme by directly querying sub-files \( a_1, \dots, a_M \) from server \( S_1 \), and \( a_{M+1}, \dots, a_{2M} \) from server \( S_2 \), thereby retrieving \( 2M \) portions of the desired file \( A \). To maintain message symmetry, an equal number of sub-file portions must be requested from the undesired files stored at \( S_1 \) and \( S_2 \). Accordingly, the user queries \( b_{2M+1}, \dots, b_{3M} \) and \( c_{2M+1}, \dots, c_{3M} \) from \( S_1 \), and \( d_{M+1}, \dots, d_{2M} \) and \( e_{M+1}, \dots, e_{2M} \) from \( S_2 \).

To enforce symmetry across servers, the user also issues queries to \( S_3 \) and \( S_4 \). From server \( S_3 \), the user requests \( b_1, \dots, b_M \), \( d_1, \dots, d_M \), \( f_{\frac{M}{2}+1}, \dots, f_M \), and \( f_{\frac{3M}{2}+1}, \dots, f_{2M} \); from server \( S_4 \), the user queries \( c_1, \dots, c_M \), \( e_1, \dots, e_M \), \( f_1, \dots, f_{\frac{M}{2}} \), and \( f_{M+1}, \dots, f_{\frac{3M}{2}} \). This ensures that all servers contribute equally to the initial stage of the scheme.

At the conclusion of Step 1, the user has collected a pool of side information comprising:
\begin{itemize}
  \item \( b_{2M+1}, \dots, b_{3M} \) and \( c_{2M+1}, \dots, c_{3M} \) from \( S_1 \),
  \item \( d_{M+1}, \dots, d_{2M} \) and \( e_{M+1}, \dots, e_{2M} \) from \( S_2 \),
  \item all sub-files queried from \( S_3 \) and \( S_4 \).
\end{itemize}

These portions constitute the available 1-sum side information, which will be utilized in subsequent steps. A summary of the remaining portions of each relevant 1-sum is provided in~\cref{table:side information K4}.

\paragraph{Step 2.} In this step, we exploit the remaining side information left in block~1 after Step~1 by applying the recovery patterns outlined in~\cref{table:recovery patterns in K4}.

We begin with Recovery Pattern~2. For all \( j = 1, 2, \dots, M \), the user queries \( a_{2M+j} + b_j \) from \( S_1 \). Since \( b_j \) is known from block~1 of \( S_3 \), the user can recover \( a_{2M+j} \). Similarly, the user obtains \( a_{3M+j} + c_j \) from \( S_1 \), and \( a_{4M+j} + d_j \), \( a_{5M+j} + e_j \) from \( S_2 \). With \( c_j \), \( d_j \), and \( e_j \) available from block~1 of \( S_4 \) and \( S_3 \), the user successfully recovers \( a_{3M+1}, \dots, a_{6M} \). This process fully utilizes the side information \( b_1, \dots, b_M \) and \( d_1, \dots, d_M \) from \( S_3 \), as well as \( c_1, \dots, c_M \) and \( e_1, \dots, e_M \) from \( S_4 \).

Next, we apply Recovery Pattern~3. For each \( j = 1, 2, \dots, M \), the user performs the following:
\begin{itemize}
  \item Requests \( a_{6M+j} + b_{M+j} \) from \( S_1 \) and \( b_{M+j} + d_{M+j} \) from \( S_3 \) to recover \( a_{6M+j} \), using \( d_{M+j} \) from \( S_2 \).
  \item Requests \( a_{7M+j} + c_{M+j} \) from \( S_1 \) and \( c_{M+j} + e_{M+j} \) from \( S_4 \) to recover \( a_{7M+j} \), using \( e_{M+j} \) from \( S_2 \).
  \item Requests \( a_{8M+j} + d_{2M+j} \) from \( S_2 \) and \( b_{2M+j} + d_{2M+j} \) from \( S_3 \) to recover \( a_{8M+j} \), using \( b_{2M+j} \) from \( S_1 \).
  \item Requests \( a_{9M+j} + e_{2M+j} \) from \( S_2 \) and \( c_{2M+j} + e_{2M+j} \) from \( S_4 \) to recover \( a_{9M+j} \), using \( c_{2M+j} \) from \( S_1 \).
\end{itemize}
This exhausts the remaining side information \( b_{2M+1}, \dots, b_{3M} \), \( c_{2M+1}, \dots, c_{3M} \) from \( S_1 \), and exhausts \( d_{M+1}, \dots, d_{2M} \), \( e_{M+1}, \dots, e_{2M} \) from \( S_2 \).

We now turn to the final portion of side information from block~1, consisting of:
\begin{itemize}
  \item \( f_{\frac{M}{2}+1}, \dots, f_M \), \( f_{\frac{3M}{2}+1}, \dots, f_{2M} \) from \( S_3 \),
  \item \( f_1, \dots, f_{\frac{M}{2}} \), \( f_{M+1}, \dots, f_{\frac{3M}{2}} \) from \( S_4 \).
\end{itemize}
To exploit this, the user applies recovery pattern~4 for all \( j = 1, 2, \dots, \frac{M}{2} \), as follows:
\begin{itemize}
  \item Requests \( a_{10M+j} + b_{3M+j} \) from \( S_1 \) and \( b_{3M+j} + f_j \) from \( S_4 \) to recover \( a_{10M+j} \), using \( f_j \) from \( S_4 \).
  \item Requests \( a_{\frac{21M}{2}+j} + c_{3M+j} \) from \( S_1 \) and \( c_{3M+j} + f_{\frac{M}{2}+j} \) from \( S_3 \) to recover \( a_{\frac{21M}{2}+j} \), using \( f_{\frac{M}{2}+j} \) from \( S_3 \).
  \item Requests \( a_{11M+j} + d_{3M+j} \) from \( S_2 \) and \( d_{3M+j} + f_{M+j} \) from \( S_4 \) to recover \( a_{11M+j} \), using \( f_{M+j} \) from \( S_4 \).
  \item Requests \( a_{\frac{23M}{2}+j} + e_{3M+j} \) from \( S_2 \) and \( e_{3M+j} + f_{\frac{3M}{2}+j} \) from \( S_3 \) to recover \( a_{\frac{23M}{2}+j} \), using \( f_{\frac{3M}{2}+j} \) from \( S_3 \).
\end{itemize}
As a result, the user successfully recovers \( a_{10M+1}, \dots, a_{12M} \), fully utilizing the remaining side information in block~1.

We now analyze the usage frequencies of the various 2-sum types. The most frequently used are \( A+B \), \( A+C \) from \( S_1 \), and \( A+D \), \( A+E \) from \( S_2 \), each appearing \( \frac{5M}{2} \) times. Denote this maximal multiplicity by \( x_2 = \frac{5M}{2} \).

To balance the number of occurrences across all 2-sum types and prepare new side information for subsequent steps, we introduce the following additional 2-sum queries (see~\cref{table:side information K4} for a summary):
\begin{itemize}
  \item From \( S_1 \): request \( b_{\frac{23M}{4}+j} + c_{\frac{23M}{4}+j} \) for \( j = 1, \dots, \frac{M}{4} \), and \( b_{\frac{33M}{4}+j} + c_{\frac{33M}{4}+j} \) for \( j = 1, \dots, \frac{9M}{4} \).
  \item From \( S_2 \): request \( d_{\frac{11M}{2}+j} + e_{\frac{11M}{2}+j} \) for \( j = 1, \dots, \frac{M}{4} \), and \( d_{6M+j} + e_{6M+j} \) for \( j = 1, \dots, \frac{9M}{4} \).
  \item From \( S_3 \): request \( b_{\frac{7M}{2}+j} + f_{2M+j} \) and \( d_{\frac{7M}{2}+j} + f_{4M+j} \) for \( j = 1, \dots, 2M \), and \( b_{\frac{11M}{2}+j} + d_{\frac{11M}{2}+j} \) for \( j = 1, \dots, \frac{M}{2} \).
  \item From \( S_4 \): request \( c_{\frac{7M}{2}+j} + f_{2M+j} \) and \( e_{\frac{7M}{2}+j} + f_{4M+j} \) for \( j = 1, \dots, 2M \), and \( c_{\frac{11M}{2}+j} + e_{\frac{11M}{2}+j} \) for \( j = 1, \dots, \frac{M}{2} \).
\end{itemize}

A summary of the remaining portions of each relevant 2-sum is provided in~\cref{table:side information K4}.

\paragraph{Step 3.} Following a similar approach to Step~2, we apply recovery patterns~5, 6, and 7 in sequence to exploit the remaining side information in block~2.

We begin with recovery pattern~5. For all \( j = 1, 2, \dots, 2M \), the user requests \( a_{12M+j} + b_{\frac{7M}{2}+j} + c_{\frac{7M}{2}+j} \) from \( S_1 \). Since \( b_{\frac{7M}{2}+j} + f_{2M+j} \) and \( c_{\frac{7M}{2}+j} + f_{2M+j} \) are available in block~2 of \( S_3 \) and \( S_4 \), respectively, the user can recover \( a_{12M+j} \). Similarly, the user requests \( a_{14M+j} + d_{\frac{7M}{2}+j} + e_{\frac{7M}{2}+j} \) from \( S_2 \), and uses the side information \( d_{\frac{7M}{2}+j} + f_{4M+j} \) from \( S_3 \) and \( e_{\frac{7M}{2}+j} + f_{4M+j} \) from \( S_4 \) to recover \( a_{14M+j} \). Thus, recovery pattern~5 exploits the side information \( b_{\frac{7M}{2}+j} + f_{2M+j} \), \( d_{\frac{7M}{2}+j} + f_{4M+j} \) from \( S_3 \), and \( c_{\frac{7M}{2}+j} + f_{2M+j} \), \( e_{\frac{7M}{2}+j} + f_{4M+j} \) from \( S_4 \).

We next apply recovery pattern~6. For all \( j = 1, 2, \dots, \frac{M}{4} \), the user proceeds as follows:
\begin{itemize}
    \item Requests \( a_{16M+j} + b_{\frac{11M}{2}+j} + c_{\frac{11M}{2}+j} \) from \( S_1 \) to recover \( a_{16M+j} \), leveraging the side information \( d_{\frac{11M}{2}+j} + e_{\frac{11M}{2}+j} \) from \( S_2 \), \( b_{\frac{11M}{2}+j} + d_{\frac{11M}{2}+j} \) from \( S_3 \), and \( c_{\frac{11M}{2}+j} + e_{\frac{11M}{2}+j} \) from \( S_4 \).
    \item Requests \( a_{\frac{65M}{4}+j} + d_{\frac{23M}{4}+j} + e_{\frac{23M}{4}+j} \) from \( S_2 \) to recover \( a_{\frac{65M}{4}+j} \), using \( b_{\frac{23M}{4}+j} + c_{\frac{23M}{4}+j} \) from \( S_1 \), \( b_{\frac{23M}{4}+j} + d_{\frac{23M}{4}+j} \) from \( S_3 \), and \( c_{\frac{23M}{4}+j} + e_{\frac{23M}{4}+j} \) from \( S_4 \).
\end{itemize}
At this point, all remaining side information in block~2 from \( S_3 \) and \( S_4 \) has been utilized. What remains in block~2 is:
\begin{itemize}
    \item \( b_{\frac{75M}{4}+j} + c_{\frac{75M}{4}+j} \), for \( j = 1, \dots, \frac{9M}{4} \), from \( S_1 \);
    \item \( d_{\frac{33M}{2}+j} + e_{\frac{33M}{2}+j} \), for \( j = 1, \dots, \frac{9M}{4} \), from \( S_2 \).
\end{itemize}
To exploit the remaining side information, the user applies recovery pattern~7 for all \( j = 1, 2, \dots, \frac{9M}{4} \):
\begin{itemize}
    \item Requests \( a_{\frac{33M}{2}+j} + b_{6M+j} + c_{6M+j} \) from \( S_1 \), \( b_{6M+j} + d_{6M+j} + f_{6M+j} \) from \( S_3 \), and \( c_{6M+j} + e_{6M+j} + f_{6M+j} \) from \( S_4 \) to recover \( a_{\frac{33M}{2}+j} \), using \( d_{6M+j} + e_{6M+j} \) from \( S_2 \).
    \item Requests \( a_{\frac{75M}{4}+j} + d_{\frac{33M}{4}+j} + e_{\frac{33M}{4}+j} \) from \( S_2 \), \( b_{\frac{33M}{4}+j} + d_{\frac{33M}{4}+j} + f_{\frac{33M}{4}+j} \) from \( S_3 \), and \( c_{\frac{33M}{4}+j} + e_{\frac{33M}{4}+j} + f_{\frac{33M}{4}+j} \) from \( S_4 \) to recover \( a_{\frac{75M}{4}+j} \), using \( b_{\frac{33M}{4}+j} + c_{\frac{33M}{4}+j} \) from \( S_1 \).
\end{itemize}

As a result, the user successfully recovers all symbols \( a_{\frac{33M}{2}+1}, \dots, a_{21M} \), thereby fully exhausting the side information in block~2, as shown in~\cref{table:side information K4}.

Finally, we analyze the usage frequency of each 3-sum in this step. Each type of 3-sum from each server is used exactly \( \frac{9M}{2} \) times. Therefore, we set the multiplicity of the 3-sum, denoted by \( x_3 \), to this value. No side information remains after Step~3.

\begin{table}[ht]
\centering
\begin{tabular}{|c|c|c|c|c|c|c|}
\hline
             Steps & $S_1$ & $S_2$ & Remaining & $S_3$ & $S_4$ & Remaining \\ \hline
\multirow{2}{*}{Step $1$} & $\underline{a}$ & $\underline{a}$ & $0$ & $\underline{b},\underline{d}$ & $\underline{c},\underline{e}$ & $M$ \\ \cline{2-7} 
            & $\underline{b},\underline{c}$ & $\underline{d},\underline{e}$ & $M$ & $\underline{f}$ & $\underline{f}$ & $M$ \\ \hline
            \hline
\multirow{2}{*}{Step $2$} & $\underline{a}+\underline{b},\underline{a}+\underline{c}$ & $\underline{a}+\underline{d},\underline{a}+\underline{e}$ & $0$ & $\underline{b}+\underline{d}$ & $\underline{c}+\underline{e}$ & 
            $\frac{M}{2}$ \\ \cline{2-7} 
            & $\underline{b}+\underline{c}$ & $\underline{d}+\underline{e}$ & $\frac{5M}{2}$ & $\underline{b}+\underline{f},\underline{d}+\underline{f}$ & $\underline{c}+\underline{f},\underline{e}+\underline{f}$ & $2M$ \\ \hline \hline
   Step $3$ & $\underline{a}+\underline{b}+\underline{c}$ & $\underline{a}+\underline{d}+\underline{e}$ & $0$ & $\underline{b}+\underline{d}+\underline{f}$ & $\underline{c}+\underline{e}+\underline{f}$ & $0$ \\ \hline
\end{tabular}
\caption{The number of remaining summations of each type in block \( k \) after completing step \( k \). }
\label{table:side information K4}
\end{table}

The parameter \( M \) is chosen to ensure that all index ranges in Steps~1--3 are valid integers. For instance, recovery pattern~6 involves an index range \( j = 1, 2, \dots, M/4 \), which requires \( M \) to be divisible by 4. To satisfy this and similar constraints, we set \( M = 4 \).

The subpacketization level \( L \) is defined as the total number of distinct subfiles of \( A \) recovered throughout all three steps. Since this total equals \( 21M \), we obtain \( L = 21 \times 4 = 84 \).

Upon completing Steps~1--3, we obtain a specific sequence of recovery patterns for \( A \), which can be grouped according to the general recovery patterns presented in~\cref{table:recovery patterns in K4} (suppressing subscripts for simplicity). Recall that the subpacketization corresponds to the entropy of each file. In our PIR scheme, each server returns the same number of symbols, and thus the entropy of the answer \( A_i \), conditioned on the query \( \QQ_i \), is determined by the multiplicities \( x_k \) of the \( k \)-sum patterns used:
\[
H(A_i \mid \QQ_i) = x_1 \binom{3}{1} + x_2 \binom{3}{2} + x_3 \binom{3}{3} = 60.
\]
This leads to a PIR scheme over \( K_4 \) with rate
\[
\frac{H(A)}{\sum_{i=1}^4 H(A_i \mid \QQ_i)} = \frac{84}{4 \times 60} = \frac{7}{20},
\]
as summarized in~\cref{table:specific K4} in~\cref{section:Table for PIR scheme over k4}.

The reliability of this PIR scheme over \( K_4 \) is ensured since for each \( j \in [84] \), we provide a specific recovery pattern to retrieve the sub-file \( a_j \). Furthermore, privacy is achieved through two key mechanisms.

First, each sub-file in the queries is indexed using a random permutation, independently chosen for each file. These permutations are generated privately by the user and remain unknown to the servers.

Second, the values of \( x_1, x_2, x_3 \)—which determine the number of summations of each type—are independent of the desired file. Consequently, for any server and any summation type, the frequency of occurrence in the queries is also independent of the desired file, being entirely determined by the fixed parameters \( x_1, x_2, x_3 \).

Since the permutations are chosen uniformly at random and independently across files, and the summation frequencies are independent of the desired file, all possible query realizations are equally likely regardless of which file is requested. This ensures that the scheme satisfies the privacy condition.

Moreover, from~\cref{table:specific K4}, we observe that the subscripts of each sub-file are assigned according to the order in which the recovery patterns are constructed. Specifically, our scheme sequentially generates a series of recovery patterns, each corresponding to a row in~\cref{table:specific K4}, and assigns sub-file subscripts based on the following rules:
\begin{itemize}
    \item[\textup{(1)}] For the desired file, the subscripts of its sub-files are unique across all server responses.
    \item[\textup{(2)}] For each non-desired file, the subscripts of its sub-files are unique within the response of each individual server.
    \item[\textup{(3)}] Within any given recovery pattern, all sub-files of the same file share the same subscript.
\end{itemize}

As an example, consider the first row of Step~2 in~\cref{table:specific K4}:
\begin{itemize}
    \item[\textup{(1)}] The sub-file \(a_9\) appears in the answer from \(S_1\) in this recovery pattern and does not appear in any other recovery pattern.
    \item[\textup{(2)}] The sub-file \(b_1\) appears in the responses involved in this recovery pattern and is not reused elsewhere.
    \item[\textup{(3)}] Since both \(S_1\) and \(S_3\) include sub-files of file \(B\) in this recovery pattern, those sub-files share the same subscript, namely \(b_1\).
\end{itemize}

This illustrates a general principle: in the \(n\)-th recovery pattern of our scheme, the subscript of a sub-file from a particular file depends on how many prior recovery patterns (from the 1st to the \((n\!-\!1)\)-th) have involved sub-files from that file. For instance, in the 9th recovery pattern in~\cref{table:specific K4} (i.e., the first row of Step~2), the sub-file of file \(A\) is assigned subscript 9 because sub-files of \(A\) have appeared in each of the first 8 recovery patterns. On the other hand, sub-files of file \(B\) receive subscript 1 since this is their first appearance, and within the same recovery pattern, all sub-files of \(B\) must share this subscript.

This observation implies that it suffices to generate a sequence of recovery patterns without specifying sub-file indices. The subscripts of the sub-files can then be naturally assigned based on the order in which these recovery patterns are constructed.

\subsubsection{A remark on tiny gap for $\mathcal{C}(K_{4})$}
We compare the upper bound given in~\cref{thm:UpperBoundMain} for \( K_4 \) with the rate achieved by the PIR scheme constructed above. Recall from~\cref{thm:UpperBoundMain} that the capacity of \( K_4 \) satisfies \( \CC(K_4) \le \frac{6}{17} \). However, a small gap remains between this upper bound and the rate achieved by our proposed scheme. To understand the source of this gap, we examine the tightness of each inequality used in the proof of~\cref{thm:UpperBoundMain}. Notably, when \( N = 4 \), the only inequality that is not tight occurs in~\eqref{eq:thm:upperbound not tight} when \( i = 3 \).

In particular, consider the case where the desired file is stored at servers \( S_1 \) and \( S_2 \). In the proof of~\cref{thm:UpperBoundMain}, the following chain of inequalities appears:
\begin{align}
    H(A_4 \mid A_2, A_3, W_{1,2} = A, \QQ, \theta = (1,2)) 
    &\ge H(A_4 \mid A_2, A_3, \WW_{S_2}, \WW_{S_3}, \QQ, \theta = (1,2)) \label{eq:compare 1} \\
    &= H(A_4 \mid \WW_{S_2}, \WW_{S_3}, \QQ, \theta = (1,2)) \label{eq:compare 2} \\
    &\ge \frac{1}{2}L. \label{eq:compare 3}
\end{align}

In our PIR scheme shown in~\cref{table:specific K4}, the term \( H(A_4 \mid A_2, A_3, W_{1,2} = A, \QQ, \theta = (1,2)) \) evaluates to \( 44 \), whereas the corresponding lower bounds in~\eqref{eq:compare 1}--\eqref{eq:compare 3} all evaluate to \( 42 \). This discrepancy indicates that our PIR scheme saturates all other inequalities and that the looseness in~\eqref{eq:compare 3} is the sole reason the upper bound \( \CC(K_4) \le \frac{6}{17} \) might be not tight. Actually, we believe that the scheme in~\cref{table:specific K4} is optimal.

Furthermore, since the proof of~\cref{thm:UpperBoundMain} proceeds recursively, any slackness in the bound for \( \CC(K_4) \) propagates and affects the upper bound on \( \CC(K_N) \) for general \( N \).

An additional noteworthy observation is that the PIR scheme over \( K_3 \), obtained by applying~\cref{lemma:graph contain} to the scheme in~\cref{table:specific K4}, coincides with a repeated version of the optimal PIR scheme over \( K_3 \) presented in~\cref{table:PIR scheme over k3}.

\subsection{General PIR scheme over $K_N$}
\subsubsection{Overview of the scheme}
We now present a recursive construction of a PIR scheme over the complete graph \( K_N \), applicable for any \( N \geq 3 \). In this construction, the answer of each server consists of carefully structured summations selected from \( k \)-sums, where \( 1 \leq k \leq N - 1 \). The scheme is built iteratively over \( N - 1 \) steps, indexed by \( k = 1 \) to \( N - 1 \).

At each step \( k \), we focus exclusively on the \((k\!-\!1)\)-sums and \( k \)-sums appearing in the server responses \( A_{[N]} \). These summations are strategically combined to recover portions of sub-files of the desired file. The key design principle is to define the summations used in step \( k \) based on the set of unused \((k\!-\!1)\)-sums remaining after steps \( 1 \) through \( k\!-\!1 \), which we refer to as the \emph{side information} after step \( k\!-\!1 \).

Throughout the construction, we aim to utilize the available side information in each step as efficiently as possible, ensuring that the scheme fully leverages the combinatorial structure of summations across steps.

Our construction can be viewed as a recursive generalization of the PIR scheme introduced in~\cite{Sun2017capacity}, which serves as a foundational component of our approach. The full construction procedure is described in~\cref{subsubsection:Description of Our Construction}. In~\cref{subsubsection:Verification the Feasibility of Our Construction}, we rigorously establish the feasibility of the proposed scheme by verifying both reliability and privacy. Subsequently,~\cref{subsubsection:Estimation of the Rate} analyzes the resulting PIR rate, highlighting the structural complexity and key challenges involved.

However, a direct implementation of this recursive scheme leads to prohibitively large subpacketization. To address this issue, we propose a probabilistic variant of the scheme in~\cref{subsection:transform}, which significantly reduces the subpacketization to 1.

Inspired by the methodology used to derive our upper bound, we conjecture that this recursive framework, arguably the core innovation of this work, can be extended to construct efficient PIR schemes for certain broader classes of graphs.

\subsubsection{Selection of recovery patterns}\label{subsubsection:Selection of Recovery Patterns}
In~\cref{subsection:PIR scheme K4}, we presented the PIR scheme over \( K_4 \) as an illustrative example, demonstrating how a scheme can be constructed based on a predefined sequence of recovery patterns. In this subsection, we extend this framework to the general complete graph \( K_N \), and describe the strategy for selecting recovery patterns in the general case.

As before, we denote the \( N \) servers in the \( K_N \)-based replication system by \( S_1, S_2, \dots, S_N \). Without loss of generality, we assume that the desired file, denoted by \( A \), is the unique file stored on both \( S_1 \) and \( S_2 \). The remaining files in the system can be partitioned into two categories: those stored on either \( S_1 \) or \( S_2 \), and those stored on neither. We label files in the first category as \( B_{i,j} \), representing files stored on servers \( S_i \) and \( S_j \), where exactly one of \( i \) or \( j \) belongs to \( \{1,2\} \). These are referred to as \emph{blue files}. Files in the second category are labeled as \( R_{i,j} \), corresponding to files stored on servers \( S_i \) and \( S_j \) with \( i, j \notin \{1,2\} \), and are referred to as \emph{red files}. As an illustration, consider the case \( N = 5 \); see~\cref{figure:colored K_5}. In this setting, the edge corresponding to the desired file \( A \) is depicted in black, those corresponding to blue files are shown in blue, and those corresponding to red files are shown in red.

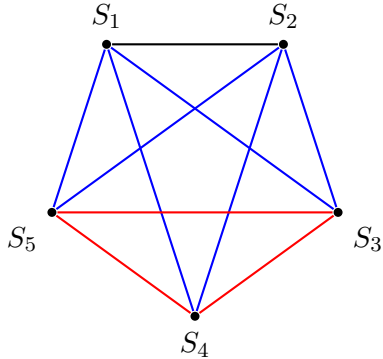
\begin{figure}[hbtp]
\centering
\begin{tikzpicture}[thick]
  \foreach \i/\angle/\pos in {
      1/126/above,
      2/54/above,
      3/342/below right,
      4/270/below,
      5/198/below left
  } {
    \node[fill=black, circle, inner sep=1.2pt, label=\pos:$S_\i$] (S\i) at (\angle:2cm) {};
  }

  \draw[black] (S1) -- (S2);
  
  \foreach \i in {1,2} {
    \foreach \j in {3,4,5} {
      \draw[blue] (S\i) -- (S\j);
    }
  }

  \foreach \i in {3,4,5} {
    \foreach \j in {\i,...,5} {
      \ifnum\i=\j
        \relax 
      \else
        \draw[red] (S\i) -- (S\j);
      \fi
    }
  }
\end{tikzpicture}
\caption{The corresponding PIR system for $K_5$, where blue (red) files are indicated by blue (red) edges.}
\label{figure:colored K_5}
\end{figure}

More generally, in the complete graph \( K_N \), server \( S_1 \) (and similarly \( S_2 \)) stores the desired file \( A \) along with \( N - 2 \) blue files. Each server \( S_j \), for \( 3 \leq j \leq N \), stores two blue files and \( N - 3 \) red files. Due to the inherent symmetry of the complete graph, all files of the same color stored on the same server are indistinguishable in their role within the scheme. This observation motivates the following definition.
\begin{defn}\label{def:simpleNotation4.4}
    Let \( k_0 \in \{0,1\} \) and \( k_1, k_2 \in \mathbb{Z}_{\ge 0} \), and set \( k = k_0 + k_1 + k_2 \). Suppose that server \( S_i \), for some \( i \in [N] \), stores \( k_1 \) distinct blue files \( B_{i,j_1}, \dots, B_{i,j_{k_1}} \) and \( k_2 \) distinct red files \( R_{i,j_1'}, \dots, R_{i,j_{k_2}'} \). If \( k_0 = 1 \), assume that the desired file \( A \) is also stored on \( S_i \). Then, regardless of the specific indices \( j_1, \dots, j_{k_1} \) and \( j_1', \dots, j_{k_2}' \), a \( k \)-sum of the form
    \[
        k_0 A + B_{i,j_1} + \cdots + B_{i,j_{k_1}} + R_{i,j_1'} + \cdots + R_{i,j_{k_2}'}
    \]
    is called a $k$-sum with label \( k_0 \ast A + k_1 \ast B + k_2 \ast R \).
\end{defn}

As an example, consider a replication system based on \( K_5 \), and suppose we examine server \( S_3 \). A 3-sum such as \( B_{3,1} + B_{3,2} + R_{3,5} \) is labeled as \( 0 \ast A + 2 \ast B + 1 \ast R \), which we abbreviate as a 3-sum with label \( 2 \ast B + R \) for simplicity. In general, for each block \( k \) in the replication system based on \( K_N \),~\cref{table:type of summation} lists all possible \( k \)-sum labels that may appear from each server, together with the number of distinct types associated with each label. For instance, consider server \( S_1 \). The number of distinct types of \( k \)-sums with label \( A + (k\!-\!1) \ast B \) equals the number of ways to choose \( k - 1 \) distinct blue files from those stored on \( S_1 \), which is given by \( \binom{N - 2}{k - 1} \).

\begin{table}[ht]
\centering
\renewcommand\arraystretch{1.2}
\begin{tabular}{|c|c|c|}
\hline
Server & Label of Summation & Number of Distinct Types \\ \hline
\multirow{2}{*}{\(S_1, S_2\)} 
& \( A + (k\!-\!1) \ast B\) & \(\binom{N - 2}{k - 1}\) \\ \cline{2-3}
& \(k \ast B\) & \(\binom{N - 2}{k}\) \\ \hline
\multirow{3}{*}{\(S_3, \dots, S_N\)} 
& \(2 \ast B + (k\!-\!2) \ast R\) & \(\binom{N - 3}{k - 2}\) \\ \cline{2-3}
& \( B + (k\!-\!1) \ast R\) & \(2\binom{N - 3}{k - 1}\) \\ \cline{2-3}
& \(k \ast R\) & \(\binom{N - 3}{k}\) \\ \hline
\end{tabular}
\caption{All labels of \(k\)-sum in block $k$ for replication system based on $K_N$.}
\label{table:type of summation}
\end{table}
With the aid of the simplified notation introduced in~\cref{def:simpleNotation4.4}, we now proceed to present the recovery patterns for the complete graph \( K_N \) in a direct and structured manner. Similar to the PIR schemes described in~\cref{subsection:PIR scheme K4}, the construction proceeds in \( N - 1 \) sequential steps.

For each \( 1 \leq k \leq N - 1 \), the recovery patterns in step \( k \) involve only \((k\!-\!1)\)-sums and \( k \)-sums extracted from the server responses. More precisely, the patterns at step \( k \) are designed to combine the side information provided by the \((k\!-\!1)\)-sums left in step \( k\!-\!1 \) with carefully selected \( k \)-sums, thereby enabling the recovery of specific sub-files of the desired file \( A \).

As before, structurally equivalent recovery patterns are grouped into equivalence classes to avoid redundancy. Within each class, side information with the same label is utilized by the same recovery pattern. Note that after step \( k\!-\!1 \), the available side information in step \( k \) may include all \((k\!-\!1)\)-sums with labels listed in~\cref{table:type of summation}, except for those with label \( A + (k\!-\!1) \ast B \) stored on servers \( S_1 \) and \( S_2 \). A more detailed description of the overall construction process is provided in~\cref{subsubsection:Description of Our Construction}; in this subsection, we focus solely on the selection of recovery patterns.

To systematically consume all available side information at each step, we define four types of recovery patterns, denoted by \( \alpha \), \( \beta \), \( \gamma \), and \( \zeta \). Each recovery pattern is responsible for utilizing side information with specific labels and recovering portions of the desired file. We now describe each of these patterns in detail.

\paragraph{Recovery Pattern \(\alpha\):} This pattern is designed to utilize side information consisting of \((k\!-\!1)\)-sums with label \( B + (k\!-\!2) \ast R \), obtained from servers \( S_3,\dots,S_N \). For any \( i \in \{1, 2\} \) and any distinct indices \( 3 \le j_1, j_2, \dots, j_{k-1} \le N \), the user sends the following queries:
\begin{itemize}
    \item From each server \( S_{j_\ell} \), for \( \ell \in [k - 1] \), request a \((k\!-\!1)\)-sum of type
    \[
    B_{i,j_\ell} + \sum_{s \in [k - 1] \setminus \{\ell\}} R_{j_\ell, j_s}.
    \]
    
    \item From server \( S_i \), request a \( k \)-sum of type
    \[
    A + \sum_{\ell \in [k - 1]} B_{i,j_\ell}.
    \]
\end{itemize}

Observe that
\[
A = \sum_{\ell \in [k - 1]} \left( B_{i,j_\ell} + \sum_{s \in [k - 1] \setminus \{\ell\}} R_{j_\ell, j_s} \right)
+ \left( A + \sum_{\ell \in [k - 1]} B_{i,j_\ell} \right),
\]
which implies that the user can recover a portion of \( A \) by summing all the responses obtained above.

We refer to the above construction as \emph{Recovery Pattern~\(\alpha\)} in step \( k \). In our PIR scheme over \( K_N \), this pattern is used to consume all available side information with label \( B + (k\!-\!2) \ast R \) in block \( k\!-\!1 \).

We now turn to the second class of recovery patterns, which is designed to utilize different types of available side information. Specifically, the next recovery pattern consumes \((k\!-\!1)\)-sums with label \( 2 \ast B + (k\!-\!3) \ast R \) from servers \( S_3,\dots,S_N \), as well as \((k\!-\!1)\)-sums with label \( (k\!-\!1) \ast B \) from servers \( S_1 \) and \( S_2 \). The construction of this pattern depends on the parity of \( k \).

We now introduce the second class of recovery patterns, which are designed to utilize different types of available side information. Specifically, the following construction consumes \((k\!-\!1)\)-sums with label \( 2 \ast B + (k\!-\!3) \ast R \) from servers \( S_3, \dots, S_N \), as well as a portion of the \((k\!-\!1)\)-sums with label \( (k\!-\!1) \ast B \) from servers \( S_1 \) and \( S_2 \). The construction depends on the parity of \( k \).

\paragraph{Recovery Pattern~\(\beta\):} Let \( i \in \{1, 2\} \), and let \( i' \) denote the other index in \( \{1, 2\} \). Let \( 3 \le j_1, j_2, \dots, j_{k - 1} \le N \) be distinct indices. The user proceeds as follows:

\paragraph{When \( k \) is odd:}
\begin{itemize}
    \item From each server \( S_{j_\ell} \) (for \( \ell \in [k - 1] \)), request a \((k\!-\!1)\)-sum of type
    \[
    B_{i,j_\ell} + B_{i',j_\ell} + \sum_{s \in [k - 1] \setminus \{\ell, k - \ell\}} R_{j_\ell, j_s}.
    \]

    \item From server \( S_{i'} \), request a \((k\!-\!1)\)-sum of type
    \(
    \sum_{\ell \in [k - 1]} B_{i',j_\ell}.
    \)

    \item From server \( S_i \), request a \( k \)-sum of type
    \(
    A + \sum_{\ell \in [k - 1]} B_{i,j_\ell}.
    \)
\end{itemize}

Observe that
\[
A = \sum_{\ell \in [k - 1]} \left( B_{i,j_\ell} + B_{i',j_\ell} + \sum_{s \in [k - 1] \setminus \{\ell, k - \ell\}} R_{j_\ell, j_s} \right)
+ \sum_{\ell \in [k - 1]} B_{i',j_\ell}
+ \left( A + \sum_{\ell \in [k - 1]} B_{i,j_\ell} \right),
\]
which enables the user to recover a portion of \( A \) by summing all the responses.

\paragraph{When \( k \) is even:}
\begin{itemize}
    \item From each server \( S_{j_\ell} \) (for \( \ell \in \{2, \dots, k - 1\} \)), request a \((k\!-\!1)\)-sum of type
    \[
    B_{i,j_\ell} + B_{i',j_\ell} + \sum_{s \in [k - 1] \setminus \{\ell, k - \ell + 1\}} R_{j_\ell, j_s}.
    \]

    \item From server \( S_{j_1} \), request a \( k \)-sum of type
    \[
    B_{i,j_1} + B_{i',j_1} + \sum_{s \in \{2, \dots, k - 1\}} R_{j_1, j_s}.
    \]

    \item From server \( S_{i'} \), request a \((k\!-\!1)\)-sum of type
    \(
    \sum_{\ell \in [k - 1]} B_{i',j_\ell}.
    \)

    \item From server \( S_i \), request a \( k \)-sum of type
    \(
    A + \sum_{\ell \in [k - 1]} B_{i,j_\ell}.
    \)
\end{itemize}
Summing all the responses allows the user to recover a portion of \( A \). This parity-dependent construction is referred to as \emph{Recovery Pattern~\(\beta\)} in step \( k \). It consumes all side information with label \( 2 \ast B + (k\!-\!3) \ast R \) in block \( k\!-\!1 \) from servers \( S_3, \dots, S_N \), and partially consumes side information with label \( (k\!-\!1) \ast B \) from servers \( S_1 \) and \( S_2 \).

To consume the \((k\!-\!1)\)-sums with label \( (k\!-\!1) \ast B \) from servers \( S_1 \) and \( S_2 \), we introduce the following recovery pattern.

\paragraph{Recovery Pattern~\(\gamma\):} For any \( i \in \{1, 2\} \), let \( i' \) denote the other element in \( \{1, 2\} \), and let \( 3 \le j_1, j_2, \dots, j_{k - 1} \le N \) be some distinct indices. The user proceeds as follows:

\begin{itemize}
    \item From server \( S_{i'} \), request a \((k\!-\!1)\)-sum of type
    \(
    \sum_{\ell \in [k - 1]} B_{i', j_\ell}.
    \)

    \item From each server \( S_{j_\ell} \), for \( \ell \in [k - 1] \), request a \( k \)-sum of type
    \[
    B_{i, j_\ell} + B_{i', j_\ell} + \sum_{s \in [k - 1] \setminus \{\ell\}} R_{j_\ell, j_s}.
    \]

    \item From server \( S_i \), request a \( k \)-sum of type
    \(
    A + \sum_{\ell \in [k - 1]} B_{i, j_\ell}.
    \)
\end{itemize}

Summing all of these responses enables the user to recover a portion of the desired file \( A \). This pattern guarantees full utilization of all \((k\!-\!1)\)-sum side information with label \( (k\!-\!1) \ast B \) in block \( k\!-\!1 \) stored on servers \( S_1 \) and \( S_2 \). We refer to this as \emph{recovery pattern~\(\gamma\)} in step \( k \).

To consume the remaining side information consisting of \((k\!-\!1)\)-sums with label \( (k\!-\!1) \ast R \) from servers \( S_3, \dots, S_N \), we define the following recovery pattern.

\paragraph{Recovery Pattern~\(\zeta\):} For any \( i \in \{1, 2\} \) and any distinct indices \( 3 \le j_0, j_1, j_2, \dots, j_{k - 1} \le N \), the user performs the following:

\begin{itemize}
    \item From server \( S_{j_0} \), request a \((k\!-\!1)\)-sum of type
    \(
    \sum_{\ell \in [k - 1]} R_{j_0, j_\ell}.
    \)

    \item From each server \( S_{j_\ell} \), for \( \ell \in [k - 1] \), request a \( k \)-sum of type
    \[
    B_{i, j_\ell} + \sum_{s \in [k - 1] \setminus \{\ell\}} R_{j_\ell, j_s}.
    \]

    \item From server \( S_i \), request a \( k \)-sum of type
    \(
    A + \sum_{\ell \in [k - 1]} B_{i, j_\ell}.
    \)
\end{itemize}

Summing all of these responses enables the user to recover a portion of \( A \). This recovery pattern consumes part of the side information with label \( (k\!-\!1) \ast R \) in block \( k\!-\!1 \), obtained from servers \( S_3, \dots, S_N \). We refer to this as \emph{recovery pattern~\(\zeta\)} in step \( k \).

\begin{table}[ht]
\centering
\begin{tabular}{|c|c|c|c|c|c|c|}
\hline
\multirow{2}{*}{} & \multirow{2}{*}{Servers} & \multirow{2}{*}{Label} & \multicolumn{4}{c|}{Recovery Patterns} \\ \cline{4-7}
 & & & $\alpha$ & $\beta$ & $\gamma$ & $\zeta$ \\ \hline
\multirow{4}{*}{$(k-1)$-sum} & $S_1,S_2$ & $(k-1)\ast B$ &  & \checkmark & $\bigstar$ &  \\ \cline{2-7} 
                  & \multirow{3}{*}{$S_3,\dots,S_N$} & $2\ast B+(k-3)\ast R$ &  & $\bigstar$ &  &  \\ \cline{3-7} 
                  &  & $B+(k-2)\ast R$ & $\bigstar$ &  &  &  \\ \cline{3-7} 
                  &  & $(k-1)\ast R$ &  &  &  & $\bigstar$ \\ \hline
\multirow{3}{*}{$k$-sum} & $S_1,S_2$ & $A+(k-1)\ast B$ & \checkmark & \checkmark & \checkmark & \checkmark \\ \cline{2-7} 
                  & \multirow{2}{*}{$S_3,\dots,S_N$} & $2\ast B+(k-2)\ast R$ &  & \checkmark & \checkmark &  \\ \cline{3-7} 
                  &  & $B+(k-1)\ast R$ &  &  &  & \checkmark \\ \hline
\end{tabular}
\caption{Label references in recovery patterns. The symbol $\bigstar$ indicates the target label consumed by each recovery pattern, while \checkmark denotes other labels utilized in the recovery pattern.}
\label{table:recovery pattern refer}
\end{table}

For easier understanding, we list the target label consumed and the other labels utilized by each recovery pattern in~\cref{table:recovery pattern refer}.

So far, we have presented all the recovery patterns which will be used in the construction. In fact, each recovery pattern listed in~\cref{table:recovery patterns in k3} and~\cref{table:recovery patterns in K4} can be viewed as a specific instantiation of one of the general patterns introduced above, determined by particular choices of parameters. It is important to note that, with the aid of recovery patterns~\(\alpha\), \(\beta\), \(\gamma\), and \(\zeta\), the user is able to consume side information with all labels in block \( k\!-\!1 \), except for \((k\!-\!1)\)-sums with label \( A + (k\!-\!2) \ast B \) stored on servers \( S_1 \) and \( S_2 \). However, our PIR scheme is constructed in such a way that this type of side information is never generated.

To ensure uniform utilization of all available side information and to avoid any imbalance in the use of different summation types, the user applies each recovery pattern uniformly over all valid combinations of parameters \( i, i', j_0, \dots, j_{k - 1} \), maintaining equal usage frequency for all \( k \)-sums and \((k\!-\!1)\)-sums with the same label.

\subsubsection{Description of general construction}\label{subsubsection:Description of Our Construction}

In this subsection, we focus on constructing a PIR scheme \( \Pi_N \) over the complete graph \( K_N \). Our goal is to demonstrate how recovery patterns~\(\alpha\), \(\beta\), \(\gamma\), and \(\zeta\), introduced in~\cref{subsubsection:Selection of Recovery Patterns}, can be systematically employed to generate the full PIR scheme over \( K_N \).

As discussed earlier, the construction proceeds in \( N - 1 \) sequential steps. In step \( k \), the user consumes as much of the remaining side information in block \( k - 1 \) as possible by applying recovery patterns~\(\alpha\), \(\beta\), \(\gamma\), and \(\zeta\). The next definition is used to describe the amount of side information in block $k$ after Step $k$.

\begin{defn}[Residual multiplicity]\label{def:residualMulti}
    For distinct $i,j_1,j_2,\dots,j_k\in [N]$, let $W_{i,j_{\ell}}$ denote the file shared by servers $S_i$ and $S_{j_{\ell}}$, for any $\ell\in [k]$. We say that the \emph{residual multiplicity} of a \(k\)-sum of type $W_{i,j_1}+\dots+W_{i,j_k}$ from server \(S_i\) is equal to \(r\) for some non-negative integer \(r\), if, after step \(k\), there are exactly \(r\) distinct \(k\)-sums of this type, from $S_i$, present as side information.
\end{defn}
According to~\cref{def:residualMulti}, it holds that, for any $k$, the residual multiplicity of $k$-sum with some label must be no more than the multiplicity of $k$-sum. Recall that for any \( k \), the recovery patterns \( \alpha, \beta, \gamma, \zeta \) each represents a class of symmetric, structured, specific recovery patterns. We emphasize that when the user consumes side information, all realizations of these recovery patterns are utilized uniformly. This uniform usage will be illustrated in detail using Step 2 as a representative example. Therefore, in our scheme, the residual multiplicity
of a \( k \)-sum depends only on its label and is uniform across all servers. For instance, on server \( S_1 \), the 2-sums \( B_{1,3} + B_{1,4} \) and \( B_{1,3} + B_{1,5} \) share the same label \( 2 \ast B \), and therefore have identical residual multiplicities. Similarly, the 2-sums \( B_{1,3} + R_{3,4} \) from \( S_3 \) and \( B_{1,4} + R_{4,5} \) from \( S_4 \), both labeled \( B + R \), have the same residual multiplicity. Hence, when all \( k \)-sums with a given label share the same residual multiplicity \( r \), we refer to \( r \) as the residual multiplicity of that label, without distinguishing specific types or servers. 

The following proposition provides a complete specification of our PIR scheme construction over the complete graph \( K_N \).

\begin{prop}\label{prop:construction of KN}
Let \( \{x_k\}_{k=1}^{N-1} \), \( \{y_k\}_{k=2}^{N-1} \), and \( \{z_k\}_{k=1}^{N-2} \) be sequences of non-negative numbers, with \( x_1 = z_1 = 1 \), \( y_2 = \frac{N-3}{2} \), and \( z_k, y_k \le x_k \), satisfying the following recursive relations:

If \( k \le \left\lfloor \frac{N}{2} \right\rfloor + 1 \),
\begin{align}
\left\{
\begin{aligned}
x_k &= \frac{N - k + 1}{2}\cdot x_{k-1} + z_{k-1}, \\
y_k &= x_k - 2x_{k-1} + \frac{k - 2}{N - k}\cdot y_{k-1}, \\
z_k &= x_k - \frac{k - 1}{2}\cdot x_{k-1}.
\end{aligned}
\right. \label{eq:case 1}
\end{align}

If \( k \ge \left\lfloor \frac{N}{2} \right\rfloor + 2 \),
\begin{align}
\left\{
\begin{aligned}
x_k &= \frac{k - 1}{2k - N} \cdot (x_{k-1} + z_{k-1}), \\
y_k &= x_k - 2x_{k-1} + \frac{k - 2}{N - k} \cdot y_{k-1}, \\
z_k &= 0.
\end{aligned}
\right. \label{eq:case 2}
\end{align}

These sequences must satisfy the following inequality for all \( k \):
\begin{equation}
    x_k \geq \frac{k}{2(N - k - 1)} \cdot y_k. \label{eq:construction in K_N}
\end{equation}

Let \( M \) be the smallest positive integer such that \( x_kM, y_kM, z_kM \) are all non-negative integers for all relevant \( k \). Then, there exists a PIR scheme \( \Pi_N \) over \( K_N \) satisfying the following properties for all available \( k \):

\begin{itemize}
    \item[\textup{(a)}] The multiplicity of \( k \)-sums is \( x_k M \),
    \item[\textup{(b)}] The residual multiplicity of \( k \)-sums with label \( A + (k\!-\!1) \ast B \) from \( S_1, S_2 \) is zero,
    \item[\textup{(c)}] The residual multiplicity of \( k \)-sums with label \( k \ast B \) from \( S_1, S_2 \) is \( x_k M \),
    \item[\textup{(d)}] The residual multiplicity of \( k \)-sums with label \( 2 \ast B + (k\!-\!2) \ast R \) from \( S_3, \dots, S_N \) is \( y_k M \),
    \item[\textup{(e)}] The residual multiplicity of \( k \)-sums with label \( B + (k\!-\!1) \ast R \) from \( S_3, \dots, S_N \) is \( z_k M \),
    \item[\textup{(f)}] The residual multiplicity of \( k \)-sums with label \( k \ast R \) from \( S_3, \dots, S_N \) is \( x_k M \).
\end{itemize}
\end{prop}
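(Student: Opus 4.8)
The plan is to prove \cref{prop:construction of KN} by explicit induction on the step index $k$, constructing the recovery patterns step by step and tracking the residual multiplicities of all six labels listed in items (b)--(f). First I would set up the bookkeeping: at the start of step $k$, the side information available consists precisely of the $(k-1)$-sums left over from step $k-1$, namely those with labels $(k-1)\ast B$ on $S_1,S_2$ (residual multiplicity $x_{k-1}M$ by (c)), $2\ast B+(k-3)\ast R$ on $S_3,\dots,S_N$ (residual multiplicity $y_{k-1}M$ by (d)), $B+(k-2)\ast R$ on $S_3,\dots,S_N$ (residual multiplicity $z_{k-1}M$ by (e)), and $(k-1)\ast R$ on $S_3,\dots,S_N$ (residual multiplicity $x_{k-1}M$ by (f)); crucially, by (b) there is no leftover side information with label $A+(k-2)\ast B$, which is what makes the four patterns $\alpha,\beta,\gamma,\zeta$ suffice. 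The base case $k=1$ is the direct download step: each of $S_1,S_2$ returns $M$ bits of $A$ and $M$ bits of each of its $N-2$ blue files, each of $S_3,\dots,S_N$ returns $M$ bits of each of its two blue files and $M$ bits of each of its $N-3$ red files — giving multiplicity $x_1M=M$ for $1$-sums, residual multiplicity $z_1M=M$ for label $B$, and residual multiplicity $x_1M=M$ for label $R$, matching $x_1=z_1=1$.

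For the inductive step, I would apply the four recovery patterns in the order $\alpha,\beta,\gamma,\zeta$, each used uniformly over all valid index choices $(i,j_1,\dots,j_{k-1})$ (and $j_0$ for $\zeta$), and count how many times each $k$-sum type is thereby requested. Pattern $\alpha$ consumes all $z_{k-1}M$ copies of label $B+(k-2)\ast R$; since each application of $\alpha$ uses $k-1$ such $(k-1)$-sums (one from each of $S_{j_1},\dots,S_{j_{k-1}}$) and one new $k$-sum $A+(k-1)\ast B$ on $S_i$, a counting argument over the symmetric orbit pins down how many $A+(k-1)\ast B$ sums and how many recovered bits of $A$ this produces. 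Pattern $\beta$ consumes all $y_{k-1}M$ copies of label $2\ast B+(k-3)\ast R$ on the $S_3,\dots,S_N$ side together with a portion of the $(k-1)\ast B$ side information on $S_1,S_2$, and introduces $2\ast B+(k-2)\ast R$ sums on $S_3,\dots,S_N$; here the parity split in the definition of $\beta$ forces a careful but routine separate count for $k$ even versus odd. Pattern $\gamma$ finishes off the remaining $(k-1)\ast B$ side information on $S_1,S_2$, producing more $2\ast B+(k-2)\ast R$ sums, and pattern $\zeta$ consumes part of the $(k-1)\ast R$ side information on $S_3,\dots,S_N$ while producing $B+(k-1)\ast R$ sums. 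After all four patterns, I would (i) total the number of $k$-sums of each label requested from each server — these must all be made equal, which is exactly the definition of multiplicity $x_kM$ and is what determines $x_k$ via the first equation of \eqref{eq:case 1} or \eqref{eq:case 2} (the constant $\tfrac{N-k+1}{2}$ versus $\tfrac{k-1}{2k-N}$ reflecting whether the $k$-sum count is dominated by the $A+(k-1)\ast B$ terms or by a balancing constraint once $k$ passes $\lfloor N/2\rfloor+1$); (ii) read off the residual multiplicities of $k\ast B$, $2\ast B+(k-2)\ast R$, $B+(k-1)\ast R$, $k\ast R$ as $x_kM$, $y_kM$, $z_kM$, $x_kM$ respectively, where $y_k$ and $z_k$ are forced by the second and third equations; and (iii) verify that label $A+(k-2)\ast B$ is never generated in step $k$ either, so (b) is preserved. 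The constraint \eqref{eq:construction in K_N} and the hypothesis $z_k,y_k\le x_k$ are precisely the conditions that the requested counts are achievable — i.e. that one never needs to "consume" more $(k-1)$-sums than exist or request a negative number of balancing $k$-sums — and I would check at each step that these inequalities guarantee all the orbit-counting quantities are non-negative; the choice of $M$ as a common denominator makes every count an integer.

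The main obstacle I anticipate is the combinatorial counting in the inductive step: each recovery pattern, used uniformly over its symmetric orbit, touches $k$-sums and $(k-1)$-sums of several labels, and one must count, for a \emph{fixed} $k$-sum type on a \emph{fixed} server, how many distinct applications of each pattern request it — this requires careful orbit-stabilizer bookkeeping (e.g. for a fixed $2\ast B+(k-2)\ast R$ sum $B_{1,j}+B_{2,j}+\sum R_{j,j_s}$ on $S_j$, counting how many $(i,j_1,\dots,j_{k-1})$ tuples in pattern $\gamma$ hit it), and the parity cases in $\beta$ double the work. Getting the \emph{balance} right — i.e. showing that after summing contributions from $\alpha,\beta,\gamma,\zeta$ the number of $k$-sums of label $A+(k-1)\ast B$ on $S_1$, of label $2\ast B+(k-2)\ast R$ on $S_j$, and of label $B+(k-1)\ast R$ on $S_j$ all come out equal (possibly after adding a controlled number of "filler" $k$-sums purely for symmetry, as was done explicitly in the $K_4$ warm-up), and that this common value is exactly $x_kM$ with $x_k$ given by the stated recursion — is the technical heart of the argument. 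The transition at $k=\lfloor N/2\rfloor+1$, where $z_k$ drops to $0$ and the recursion for $x_k$ changes form, needs its own careful treatment: once there are no more single-blue side-information sums to seed pattern $\zeta$, the $k$-sum count is instead limited by the requirement that the residual $(k-1)\ast R$ supply not be overdrawn, which is what produces the $\tfrac{k-1}{2k-N}$ coefficient. Reliability and privacy then follow as in the $K_4$ case: reliability because every recovered bit of $A$ comes with an explicit recovery pattern, and privacy because the multiplicities $x_k,y_k,z_k$ are fixed numbers independent of $\theta$ and the sub-file indices are assigned by uniformly random private permutations, so all query realizations are equiprobable regardless of the demand.
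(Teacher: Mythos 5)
Your proposal follows essentially the same route as the paper: induction on the step index $k$, uniform application of the recovery patterns $\alpha,\beta,\gamma,\zeta$ to consume the residual $(k\!-\!1)$-sum side information, orbit counting to balance the requested $k$-sums of each label (which forces the recursions for $x_k,y_k,z_k$ and explains the change of regime at $k=\lfloor N/2\rfloor+2$, where pattern $\zeta$ becomes supply-limited by the available $B+(k\!-\!1)\ast R$ sums), and the same privacy/reliability argument via fixed multiplicities and private random permutations. The counting details you flag as the technical heart are exactly what the paper carries out, so the plan is sound and matches the paper's proof.
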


\noindent
Note that \( k \)-sums with label \( 2 \ast B + (k\!-\!2) \ast R \) exist only for \( 2 \le k \le N - 1 \), and those with label \( B + (k\!-\!1) \ast R \) exist only for \( 1 \le k \le N - 2 \). These constraints restrict the valid ranges of the sequences \( \{y_k\}_{k=2}^{N-1} \) and \( \{z_k\}_{k=1}^{N-2} \).

\vspace{0.5em}
\noindent
\paragraph{Structure of the proof of~\cref{thm:LowerBoundMain}} The remainder of this section is devoted to establishing~\cref{prop:construction of KN}. Specifically,~\cref{subsubsection:Description of Our Construction} provides a detailed recursive procedure for constructing the PIR scheme \( \Pi_N \), while~\cref{subsubsection:Verification the Feasibility of Our Construction} rigorously verifies that the constructed scheme satisfies all the required properties, including reliability, privacy, and the recurrence and inequality conditions~\eqref{eq:construction in K_N}. The construction proceeds step-by-step from \( k = 1 \) to \( k = N - 1 \), using the residual multiplicities of \((k\!-\!1)\)-sums as the input to generate the queries and update the parameters at step \( k \).

We first illustrate the scheme through steps \( 1 \) and \( 2 \) as concrete examples. Let us start from Step 1. We assume that the multiplicity of $1$-sum is $M$, where \( M \) is a sufficiently large positive integer. As described in~\cref{prop:construction of KN}, the value of $M$ should be chosen so that \(x_kM,y_kM,z_kM\) are integers for all \(k\). Thus, the multiplicity of $1$-sum satisfies the property (a) in~\cref{prop:construction of KN}.

\paragraph{Step 1.} When \( k = 1 \), the only available recovery pattern is to directly request sub-files of the desired file \( A \) from servers \( S_1 \) and \( S_2 \). Hence, the user retrieves \( M \) portions of \( A \) from each of \( S_1 \) and \( S_2 \). Once this query pattern is specified, Step 1 is complete.

\begin{claim}\label{claim:complete step 1}
    The residual multiplicity of all \( 1 \)-sums after Step 1 satisfies the recurrence in~\eqref{eq:case 1} and properties~\textup{(b)} through~\textup{(f)} in~\cref{prop:construction of KN}.
\end{claim}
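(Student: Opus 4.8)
To prove \cref{claim:complete step 1} the plan is to make the (very short) construction of Step~1 completely explicit, read off the residual multiplicity of every $1$-sum label, and then match these values against the initialization $x_1=z_1=1$ of the recurrence together with items~(b)--(f) of~\cref{prop:construction of KN}. First I would record exactly what Step~1 downloads. Since $K_N$ is $(N-1)$-regular, each server stores exactly $N-1$ files. From $S_1$ and $S_2$ the user requests $M$ sub-files of the desired file $A$ from each, with all $2M$ indices pairwise distinct (the first indexing rule), and, to enforce message symmetry, $M$ sub-files of each of the $N-2$ blue files stored there, so each of $S_1,S_2$ answers with $(N-1)M$ sub-files. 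To enforce server symmetry, from every $S_j$ with $3\le j\le N$ the user requests $M$ sub-files of each of the two blue files and each of the $N-3$ red files on $S_j$, again $(N-1)M$ sub-files in total. Hence every $1$-sum label has multiplicity $M=x_1M$, which is item~(a) for $k=1$.

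Next I would determine the residual multiplicities after Step~1. The only recovery patterns used in this step are the trivial ``download $a_j$ directly'' patterns; there are $2M$ of them and together they consume precisely the $1$-sums with label $A$ (that is, $A+0\ast B$) on $S_1$ and $S_2$, so those have residual multiplicity $0$, which is item~(b). No $1$-sum carrying any other label participates in a Step~1 pattern, so each survives untouched as side information with residual multiplicity equal to its multiplicity $M$: the label $1\ast B$ on $S_1,S_2$ has residual multiplicity $M=x_1M$ (item~(c)), the label $B$ on $S_3,\dots,S_N$ has residual multiplicity $M=z_1M$ since $z_1=1$ (item~(e)), and the label $1\ast R$ on $S_3,\dots,S_N$ has residual multiplicity $M=x_1M$ (item~(f)). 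Item~(d) is vacuous at $k=1$, because a $1$-sum with label $2\ast B+(k-2)\ast R=2\ast B+(-1)\ast R$ cannot exist; consistently, the sequence $\{y_k\}$ is only defined from $k=2$ onward.

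Finally I would reconcile this with the recurrence. Since $N\ge3$ gives $1\le\lfloor N/2\rfloor+1$, the index $k=1$ nominally falls under~\eqref{eq:case 1}, but those relations reference $x_0,y_0,z_0$ and are meant to be \emph{initialized}, not evaluated, at $k=1$, with prescribed base values $x_1=z_1=1$ — which is exactly what the computation above produced. The side constraints hold trivially ($z_1\le x_1$, and~\eqref{eq:construction in K_N} at $k=1$ is vacuous since $y_1$ is absent), and as a sanity check one sees that feeding $x_1,z_1$ into~\eqref{eq:case 1} at $k=2$ returns $x_2=\frac{N+1}{2}$ and the prescribed $y_2=\frac{N-3}{2}$, the coefficient $\frac{k-2}{N-k}$ killing any dependence on $y_1$, so nothing is overdetermined. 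The only point genuinely requiring care is this edge behaviour at $k=1$ — the non-existence of the $2\ast B+(k-2)\ast R$ label and the absence of a $y_1$ term — since everything else is immediate bookkeeping; privacy and reliability of the Step~1 queries follow at once from the per-label multiplicities being independent of the requested file and from the triviality of the recovery patterns, with the full feasibility verification of $\Pi_N$ deferred to the later subsections.
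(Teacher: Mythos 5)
Your proof is correct and follows essentially the same route as the paper's: the only recovery pattern in Step~1 consumes exactly the $M$ portions of the $1$-sums labeled $A$ on $S_1,S_2$ (giving item~(b)), every other $1$-sum label is untouched and so retains residual multiplicity $M=x_1M=z_1M$ (items~(c), (e), (f)), and item~(d) is vacuous at $k=1$. The paper's own proof is just a two-sentence version of this bookkeeping; your additional remarks on the $k=1$ initialization of~\eqref{eq:case 1} and the non-existence of the $2\ast B+(k-2)\ast R$ label are accurate but not needed beyond what the paper states.
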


\begin{poc}
    Since the user requests \( M \) portions of the \( 1 \)-sum with label \( A \) from both \( S_1 \) and \( S_2 \), property~\textup{(b)} is satisfied. Moreover, as no other \( 1 \)-sums are involved in Step 1 and \( z_1 = 1 \), properties~\textup{(c)} through~\textup{(f)} hold trivially.
\end{poc}

\paragraph{Step 2.} When \( k = 2 \), recovery pattern~\(\beta\) is not applicable, as there are no \( 1 \)-sums with label \( 2 \ast B + (k\!-\!2) \ast R = 2 \ast B \) from servers \( S_3, \dots, S_N \). Thus, we apply recovery patterns~\(\alpha\),~\(\gamma\), and~\(\zeta\) to consume the side information left in Step 1.

To begin, we consume the \( 1 \)-sums with label \( B \) from servers \( S_3, \dots, S_N \), each with residual multiplicity \( M \), by applying recovery pattern~\(\alpha\). According to its definition, for each \( i \in \{1,2\} \) and \( j \in [3,N] \), the user:
\begin{itemize}
    \item Requests \( M \) portions of the \( 1 \)-sum of type \( B_{i,j} \) from \( S_j \),
    \item Requests \( M \) portions of the \( 2 \)-sum of type \( A + B_{i,j} \) from \( S_i \).
\end{itemize}

We apply each realization of recovery pattern~\(\alpha\), that is, each combination of \( i \in \{1,2\} \) and \( j \in [3,N] \), the same number of times. This ensures that all summation types with the same label are used uniformly; specifically, each type is used exactly \( M \) times. This observation motivates the way we compute the total number of summations with a given label when determining residual multiplicity.

As an example, we re-calculate the total number of requested portions of \( 2 \)-sums with label \( A + B \) under recovery pattern~\(\alpha\). Each consumed portion of a \( 1 \)-sum labeled \( B \) from servers \( S_3, \dots, S_N \) corresponds to one request of a \( 2 \)-sum labeled \( A + B \). Since the residual multiplicity of such \( 1 \)-sums is \( M \), the total number of consumed \( 1 \)-sums is
\[
( N - 2 ) \cdot 2 \cdot \binom{N - 3}{0} \cdot M = 2(N - 2)M,
\]
where the multiplier counts: the number of servers, the number of distinct summation types with label \( B \) per server, and the residual multiplicity. Consequently, the user must request \( 2(N - 2)M \) portions of \( 2 \)-sums with label \( A + B \) from \( S_1 \) and \( S_2 \). Each of these servers contains \( N - 2 \) distinct types of \( 2 \)-sums with this label, so the user must request \( M \) portions of each type, consistent with the uniformity requirement above.

Next, to consume the \( 1 \)-sums with label \( B \) from servers \( S_1 \) and \( S_2 \), each with residual multiplicity \( M \), we apply recovery pattern~\(\gamma\). By maintaining uniform use across all realizations of this pattern, each consumed \( 1 \)-sum with label \( B \) from \( S_1 \) or \( S_2 \) results in one request for a \( 2 \)-sum with label \( A + B \) from \( S_1 \) or \( S_2 \), and one request for a \( 2 \)-sum with label \( 2 \ast B \) from servers \( S_3, \dots, S_N \). The total number of consumed \( 1 \)-sums with label \( B \) is:
\[
2 \cdot \binom{N - 2}{1} \cdot M = 2(N - 2)M.
\]
Thus, the user also requests \( 2(N - 2)M \) portions of \( 2 \)-sums with label \( A + B \), and the same number with label \( 2 \ast B \).

To consume the \( 1 \)-sums with label \( R \) from servers \( S_3, \dots, S_N \), each with residual multiplicity \( M \), we apply recovery pattern~\(\zeta\). Under uniform application, each such \( 1 \)-sum requires one request of a \( 2 \)-sum with label \( A + B \), and one with label \( B + R \). From~\cref{table:type of summation}, the total number of consumed \( 1 \)-sums with label \( R \) is:
\[
(N - 2) \cdot \binom{N - 3}{1} \cdot M = (N - 3)(N - 2)M.
\]
Hence, the user additionally requests \( (N - 3)(N - 2)M \) portions of \( 2 \)-sums with labels \( A + B \) and \( B + R \), respectively.

At this point, all side information remaining after Step 1 has been consumed, and the number of times each recovery pattern applied in Step 2 is fully determined. We now compute the multiplicities and residual multiplicities of the \( 2 \)-sums. Note that the total number of requested portions of \( 2 \)-sums with label \( A + B \) is:
\[
2(N - 2)M + 2(N - 2)M + (N - 3)(N - 2)M = (N - 2)(N + 1)M.
\]
Each unit of multiplicity of \( 2 \)-sums contributes \( 2 \cdot \binom{N - 2}{1} = 2(N - 2) \) such summations. Therefore, to exactly provide the required portions of sum with label \( A + B \), we set the multiplicity of \( 2 \)-sums to be
\[
\frac{(N - 2)(N + 1)M}{2(N - 2)} = \frac{N + 1}{2}M.
\]
This ensures that the multiplicity provides the exact number of \( 2 \)-sums with label \( A + B \) required by recovery patterns~\(\alpha\),~\(\gamma\), and~\(\zeta\).

According to the recurrence in~\eqref{eq:case 1}, we have \( x_2 = \frac{N + 1}{2} \). Hence, the multiplicity of \( 2 \)-sums equals \( x_2 M \), satisfying property~\textup{(a)} in~\cref{prop:construction of KN}.

\begin{claim}
    The residual multiplicity of \( 2 \)-sums with each label after Step 2 satisfies the recurrence in~\eqref{eq:case 1} and properties~\textup{(b)} through~\textup{(f)} in~\cref{prop:construction of KN}.
\end{claim}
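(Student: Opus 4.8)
The plan is to finish the accounting begun in the two paragraphs preceding the claim. At this point the number of times each of recovery patterns $\alpha$, $\gamma$, $\zeta$ is applied in Step~$2$ is already fixed (recovery pattern $\beta$ is inapplicable at $k=2$, as noted above, there being no $1$-sum with label $2\ast B$), and it has already been shown that every $2$-sum has multiplicity $x_2 M=\tfrac{N+1}{2}M$, which is precisely what the first line of~\eqref{eq:case 1} returns at $k=2$. Evaluating the other two lines at $k=2$ gives $y_2=x_2-2x_1=\tfrac{N-3}{2}$ (the term with coefficient $\tfrac{k-2}{N-k}$ vanishing) and $z_2=x_2-\tfrac12 x_1=\tfrac{N}{2}$, so what remains is to compute the residual multiplicity of each label and check the list against properties~\textup{(b)}--\textup{(f)} of~\cref{prop:construction of KN}: namely $0$ for label $A+B$ on $S_1,S_2$; $x_2M$ for label $2\ast B$ on $S_1,S_2$; $y_2M$ for label $2\ast B$ on $S_3,\dots,S_N$; $z_2M$ for label $B+R$ on $S_3,\dots,S_N$; and $x_2M$ for label $2\ast R$ on $S_3,\dots,S_N$.

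First I would dispose of~\textup{(b)}: by construction every $2$-sum with label $A+B$ issued to $S_1$ or $S_2$ is the unique $A$-bearing summand of some realization of $\alpha$, $\gamma$, or $\zeta$, and $x_2M$ was defined exactly so that the supply of these $2$-sums meets the demand with nothing left, whence the residual multiplicity is $0$. For the remaining four labels the recipe is the same: specialize~\cref{table:recovery pattern refer} to $k=2$ to identify which Step-$2$ recovery patterns can request a $2$-sum of that label, count---using the uniform-application convention---how many copies of a given type are requested \emph{inside} those patterns (hence consumed in a recovery), and subtract from $x_2M$. For label $2\ast B$ on $S_1,S_2$, no Step-$2$ pattern requests such a $2$-sum (it is first needed as a $(k\!-\!1)$-sum with label $(k\!-\!1)\ast B$ by $\beta$ and $\gamma$ only at Step~$3$), so all $x_2M$ survive. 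For label $2\ast B$ on $S_3,\dots,S_N$, only $\gamma$ requests it, and uniform application over $i\in\{1,2\}$ and $j\in[3,N]$ consumes $2M$ copies of each type $B_{1,j}+B_{2,j}$, leaving $x_2M-2M=y_2M$. For label $B+R$ on $S_3,\dots,S_N$, only $\zeta$ requests it; since the $1$-sums with label $R$ on $S_3,\dots,S_N$ have residual multiplicity $M$ after Step~$1$ (property~\textup{(f)} at $k=1$, from~\cref{claim:complete step 1}) and each realization of $\zeta$ consumes exactly one such $1$-sum from a prescribed server, exhausting all of them forces $\tfrac M2$ repetitions per realization, so $\tfrac M2$ copies of each type $B_{i,j}+R_{j,j'}$ are consumed and the residual is $x_2M-\tfrac M2=z_2M$. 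For label $2\ast R$ on $S_3,\dots,S_N$, no Step-$2$ pattern requests it, so all $x_2M$ survive. Each of these agrees with the value predicted by~\eqref{eq:case 1} at $k=2$, and the inequality~\eqref{eq:construction in K_N} at $k=2$, which reads $x_2\ge\tfrac{1}{N-3}y_2=\tfrac12$, together with $y_2,z_2\le x_2$, is immediate.

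The main obstacle is the bookkeeping behind property~\textup{(e)}: one must pair each realization of $\zeta$ with the particular $1$-sum of label $R$ it consumes while keeping the copy of a red file $R_{j,j'}$ stored on $S_j$ distinct from the copy stored on $S_{j'}$ (the ordered pairs $(j_0,j_1)$ in the definition of $\zeta$ account for both, but this must be stated carefully), and only then is the count $\tfrac M2$ per realization---hence the leftover $z_2M$---justified. A related point to make explicit is that the padding queries appended at the end of Step~$2$ to raise every $2$-sum label to multiplicity $x_2M$ are never used inside a recovery pattern and therefore contribute in full to the residual counts; this is exactly what produces the full $x_2M$ for labels $2\ast B$ on $S_1,S_2$ and $2\ast R$ on $S_3,\dots,S_N$, as asserted in~\textup{(c)} and~\textup{(f)}.
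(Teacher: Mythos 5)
Your proposal is correct and follows essentially the same route as the paper's proof: for each label, count the total number of $2$-sum portions requested by the applicable recovery patterns ($\alpha$, $\gamma$, $\zeta$, with $\beta$ vacuous at $k=2$), divide by the number of (server, type) pairs carrying that label, and subtract from $x_2M$. One point worth flagging in your favour: your residual $x_2M-\tfrac{M}{2}=\tfrac{N}{2}M$ for label $B+R$, hence $z_2=\tfrac{N}{2}$, is the value consistent with the recurrence $z_2=x_2-\tfrac12 x_1$ in \eqref{eq:case 1} and with the general Step-$k$ computation later in the paper; the paper's own display for property~(e) at $k=2$ divides the total $(N-3)(N-2)M$ by $2(N-3)$ rather than by $(N-2)\cdot 2(N-3)$ and arrives at $\tfrac{3M}{2}$, which is an arithmetic slip, so your accounting here is the correct one.
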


\begin{poc}
    The choice of the multiplicity \( x_2 M \) for \( 2 \)-sums guarantees that all \( 2 \)-sums with label \( A + B \) are fully consumed, satisfying property~\textup{(b)}.

    When applying recovery pattern~\(\gamma\), the user requests \( 2(N - 2)M \) portions of \( 2 \)-sums with label \( 2 \ast B \) from servers \( S_3, \dots, S_N \). These summations are not used in any other pattern. Since each unit of multiplicity provides
    \[
    (N - 2) \cdot \binom{N - 3}{0} = N - 2
    \]
    such summations, the residual multiplicity of this label is:
    \[
    x_2 M - \frac{2(N - 2)M}{N - 2} = x_2 M - 2M = \left( \frac{N + 1}{2} - 2 \right) M = \frac{N - 3}{2}M.
    \]
    Since \( y_2 = \frac{N - 3}{2} \), property~\textup{(d)} is satisfied.

    Likewise, under recovery pattern~\(\zeta\), the user requests \( (N - 3)(N - 2)M \) portions of \( 2 \)-sums with label \( B + R \). Each unit of multiplicity provides
    \[
   2 \cdot (N - 2) \cdot \binom{N - 3}{1} = 2(N - 2)(N - 3)
    \]
    such summations, so the residual multiplicity is:
    \[
    x_2 M - \frac{(N - 3)(N - 2)M}{2(N - 3)} = x_2 M - \frac{(N - 2)M}{2} = \left( \frac{N + 1}{2} - \frac{N - 2}{2} \right) M = \frac{3M}{2}.
    \]
    By \eqref{eq:case 1}, \( z_2 = \frac{3}{2} \), which  satisfies the property~\textup{(e)}.

    Finally, observe that \( 2 \)-sums with label \( 2 \ast B \) from \( S_1, S_2 \), and \( 2 \ast R \) from \( S_3, \dots, S_N \), are not involved in any recovery pattern at Step 2. Therefore, their residual multiplicities remain equal to the full multiplicity \( x_2 M \), which establishes property~\textup{(c)} and~\textup{(f)}.
\end{poc}

\vspace{0.5em}
So far, we have fully determined the multiplicities of all recovery patterns used in Step 2, and computed the resulting residual multiplicities for \( 2 \)-sums with each label. This completes the verification of Step 2.

We proceed by induction on \( k \) to construct and analyze the PIR scheme. The base cases \( k = 1 \) and \( k = 2 \) have already been established. Now, we consider the inductive step for a general \( 2 \le k \le N\!-\!1 \), and assume as the inductive hypothesis that Step \( k\!-\!1 \) satisfies properties~\textup{(a)} through~\textup{(f)} in~\cref{prop:construction of KN}. We also assume that the parameters \( x_k, y_k, z_k \) defined in the proposition satisfy the inequality~\eqref{eq:construction in K_N}.

As in previous steps, Step \( k \) consists of two parts: first, determining the multiplicities of the recovery patterns~\(\alpha\), \(\beta\), \(\gamma\), and \(\zeta\); and second, verifying that Step \( k \) satisfies the desired properties.

\paragraph{Step \( k \).} As in Step 2, the user successively applies recovery patterns~\(\alpha\), \(\beta\), \(\gamma\), and \(\zeta\) to consume the side information remaining from Step \( k\!-\!1 \). Each realization of every recovery pattern is applied uniformly, that is, the same number of times across all valid parameter choices, so it suffices to compute the total number of portions of \( k \)-sums with each label required by these patterns.

For a given \( k \), the number of portions of \( k \)-sums with each label contributed by one unit of multiplicity (that is, one complete copy of each summation type) is as follows, as derived from~\cref{table:type of summation}:

\begin{itemize}
    \item \( 2 \cdot \binom{N - 2}{k - 1} \) portions of \( k \)-sums with label \( A + (k\!-\!1) \ast B \) from \( S_1, S_2 \),
    \item \( 2 \cdot \binom{N - 2}{k} \) portions of \( k \)-sums with label \( k \ast B \) from \( S_1, S_2 \),
    \item \( (N - 2) \cdot \binom{N - 3}{k - 2} \) portions of \( k \)-sums with label \( 2 \ast B + (k\!-\!2) \ast R \) from \( S_3, \dots, S_N \),
    \item \(2 \cdot  (N - 2) \cdot \binom{N - 3}{k - 1} \) portions of \( k \)-sums with label \( B + (k\!-\!1) \ast R \) from \( S_3, \dots, S_N \),
    \item \( (N - 2) \cdot \binom{N - 3}{k} \) portions of \( k \)-sums with label \( k \ast R \) from \( S_3, \dots, S_N \).
\end{itemize}
Each expression above is obtained by multiplying the number of servers storing summations of the corresponding label by the number of distinct summation types of that label per server. To determine the total number of summations that must be consumed in Step \( k \), we simply multiply each of these values by the corresponding residual multiplicity inherited from Step \( k\!-\!1 \).

\paragraph{Recovery Pattern~\(\alpha\).}
By applying recovery pattern~\(\alpha\), the user consumes \((k\!-\!1)\)-sums with label \( B + (k\!-\!2) \ast R \) from servers \( S_3, \dots, S_N \). By the inductive hypothesis, the residual multiplicity of this label is \( z_{k - 1}M \). To consume \( k\!-\!1 \) such portions, the user must request one portion of a \( k \)-sum with label \( A + (k\!-\!1) \ast B \) from either \( S_1 \) or \( S_2 \).

The total number of consumed \((k\!-\!1)\)-sums with label \( B + (k\!-\!2) \ast R \) is
\[
(N - 2) \cdot 2 \cdot \binom{N - 3}{k - 2} \cdot z_{k - 1}M.
\]
Hence, the number of requested portions of \( k \)-sums with label \( A + (k\!-\!1) \ast B \) is
\begin{equation}
    \frac{(N - 2) \cdot 2 \cdot \binom{N - 3}{k - 2} \cdot z_{k - 1}M}{k - 1} = 2\binom{N - 2}{k - 1} z_{k - 1}M. \label{eq:recovery pattern alpha xk}
\end{equation}

\paragraph{Recovery patterns~\(\beta\) and~\(\gamma\).}
We now jointly consider recovery patterns~\(\beta\) and~\(\gamma\), as they interact through shared side information. These two patterns consume the following types of \((k\!-\!1)\)-sums:
\begin{itemize}
    \item \((k\!-\!1) \ast B\) from servers \( S_1, S_2 \), and
    \item \( 2 \ast B + (k\!-\!3) \ast R \) from servers \( S_3, \dots, S_N \).
\end{itemize}

By the inductive hypothesis, the residual multiplicities of these two labels are \( x_{k - 1}M \) and \( y_{k - 1}M \), respectively. The corresponding \( k \)-sums used in these patterns are:
\begin{itemize}
    \item \( A + (k\!-\!1) \ast B \) from \( S_1, S_2 \), and
    \item \( 2 \ast B + (k\!-\!2) \ast R \) from \( S_3, \dots, S_N \).
\end{itemize}

In recovery pattern~\(\beta\), to consume \( k\!-\!1 \) (or \( k\!-\!2 \)) portions of the \((k\!-\!1)\)-sum with label \( 2 \ast B + (k\!-\!3) \ast R \), the user consumes one portion of \((k\!-\!1) \ast B\). The assumption
\[
x_{k - 1} \geq \frac{k - 1}{2(N - k)} y_{k - 1}
\]
implies the inequality
\[
2 \cdot(k - 2) \cdot \binom{N - 2}{k - 1} \cdot x_{k - 1}M \geq (N - 2) \cdot \binom{N - 3}{k - 3} \cdot y_{k - 1}M,
\]
which ensures that the total number of available \((k\!-\!1)\)-sums with label \( (k\!-\!1) \ast B \) suffices to consume all \((k\!-\!1)\)-sums with label \( 2 \ast B + (k\!-\!3) \ast R \). Thus, recovery patterns~\(\beta\) and~\(\gamma\) can fully exhaust the side information with these two labels.

Moreover, each application of these patterns that consumes a portion of \( (k\!-\!1) \ast B \) also triggers a request for a \( k \)-sum with label \( A + (k\!-\!1) \ast B \). Since the total number of such \((k\!-\!1)\)-sums is
\[
2 \cdot \binom{N - 2}{k - 1} \cdot x_{k - 1}M,
\]
the total number of requested \( k \)-sums with label \( A + (k\!-\!1) \ast B \) from this part is
\begin{equation}
    2\binom{N - 2}{k - 1} \cdot x_{k - 1}M. \label{eq:recovery pattern beta xk}
\end{equation}

In addition, each consumed portion of \( (k\!-\!1) \ast B \) requires the user to request \( k\!-\!1 \) portions of \( k \)-sums with label \( 2 \ast B + (k\!-\!2) \ast R \), unless some can be reused from previous steps. Subtracting the requested summations originating from the label \( 2 \ast B + (k\!-\!3) \ast R \), the total number of requested portions of \( k \)-sums with label \( 2 \ast B + (k\!-\!2) \ast R \) is
\begin{equation}
    (k - 1) \cdot 2 \cdot \binom{N - 2}{k - 1} \cdot x_{k - 1}M - (N - 2) \cdot \binom{N - 3}{k - 3} \cdot y_{k - 1}M. \label{eq:recovery pattern beta yk}
\end{equation}

\paragraph{Recovery Pattern~\(\zeta\).}
Recovery Pattern~\(\zeta\) is used to consume \((k\!-\!1)\)-sums with label \( (k\!-\!1) \ast R \) from servers \( S_3, \dots, S_N \), whose residual multiplicity is \( x_{k-1}M \) by the inductive hypothesis. To consume one such portion, the user must request:
\begin{itemize}
    \item One portion of a \( k \)-sum with label \( A + (k\!-\!1) \ast B \) from \( S_1 \) or \( S_2 \),
    \item \( k - 1 \) portions of \( k \)-sums with label \( B + (k\!-\!1) \ast R \) from \( S_3, \dots, S_N \).
\end{itemize}

However, the available types of \( k \)-sums provide these two labels in a fixed ratio:
\[
\frac{(N - 2) \cdot 2 \cdot \binom{N - 3}{k - 1}}{2 \cdot \binom{N - 2}{k - 1}} = N - k - 1.
\]
This may be smaller than the required ratio \( k - 1 \), especially when \( k \) is close to \( N \), potentially limiting the feasibility of consuming all residual \((k\!-\!1)\)-sums with label \( (k\!-\!1) \ast R \).

The outcome depends on whether \( k < \left\lfloor \frac{N}{2} \right\rfloor + 2 \). We distinguish two cases.

\paragraph{Case 1: \( k \le \left\lfloor \frac{N}{2} \right\rfloor + 1 \).} 

In this case, all side information with label \( (k\!-\!1) \ast R \) can be consumed. The number of requested portions of \( k \)-sums with label \( B + (k\!-\!1) \ast R \) is
\begin{equation}
    (k - 1)(N - 2)\binom{N - 3}{k - 1}x_{k - 1}M. \label{eq:recovery pattern zeta zk 1}
\end{equation}
Correspondingly, the number of required portions of \( k \)-sums with label \( A + (k\!-\!1) \ast B \) is
\begin{equation}
    (N - 2)\binom{N - 3}{k - 1}x_{k - 1}M. \label{eq:recovery pattern zeta xk 1}
\end{equation}

\paragraph{Case 2: \( k \ge \left\lfloor \frac{N}{2} \right\rfloor + 2 \).} 

In this case, only part of the side information with label \( (k\!-\!1) \ast R \) can be consumed. The total number of consumed multiplicities is
\[
\frac{2}{2k - N}\cdot (x_{k - 1} + z_{k - 1})M,
\]
which corresponds to
\begin{equation}
    \frac{2(N - 2)}{2k - N}\cdot\binom{N - 3}{k - 1}(x_{k - 1} + z_{k - 1})M \label{eq:recovery pattern zeta consumed}
\end{equation}
portions of \((k\!-\!1)\)-sums with label \( (k\!-\!1) \ast R \). To consume this amount, the user must request
\begin{equation}
    (k - 1) \cdot \frac{2(N - 2)}{2k - N}\binom{N - 3}{k - 1}(x_{k - 1} + z_{k - 1})M \label{eq:recovery pattern zeta zk 2}
\end{equation}
portions of \( k \)-sums with label \( B + (k\!-\!1) \ast R \), and
\begin{equation}
    \frac{2(N - 2)}{2k - N}\binom{N - 3}{k - 1}(x_{k - 1} + z_{k - 1})M \label{eq:recovery pattern zeta xk 2}
\end{equation}
portions of \( k \)-sums with label \( A + (k\!-\!1) \ast B \).

\medskip
In both cases, we define the multiplicity of \( k \)-sums so that it suffices to provide the total required portions of \( k \)-sums with label \( A + (k\!-\!1) \ast B \) across recovery patterns \(\alpha,\beta,\gamma,\zeta\). More precisely, when \( k \le \left\lfloor \frac{N}{2} \right\rfloor + 1 \), the total number of requested portions with label \( A + (k\!-\!1) \ast B \) is:
\[
2\binom{N-2}{k-1}z_{k-1}M + 2\binom{N-2}{k-1}x_{k-1}M + (N - 2)\binom{N - 3}{k - 1}x_{k - 1}M.
\]
Dividing by \( 2\binom{N - 2}{k - 1} \), by~\eqref{eq:case 1}, the required multiplicity is
\[
\left( \frac{N - k + 1}{2}x_{k-1} + z_{k-1} \right)M = x_kM.
\]
When \( k \ge \left\lfloor \frac{N}{2} \right\rfloor + 2 \), the total number of required portions is:
\[
2\binom{N-2}{k-1}z_{k-1}M + 2\binom{N-2}{k-1}x_{k-1}M + \frac{2}{2k - N}\binom{N - 3}{k - 1}(x_{k - 1} + z_{k - 1})M.
\]
Dividing by \( 2\binom{N - 2}{k - 1} \), the required multiplicity is
\[
\frac{k - 1}{2k - N}(x_{k - 1} + z_{k - 1})M = x_kM,
\]
by~\eqref{eq:case 2}. Thus, in either case, the multiplicity of \( k \)-sums is set to \( x_k M \), satisfying property~\textup{(a)} in~\cref{prop:construction of KN}.

\begin{claim}
    The residual multiplicity of $k$-sums with each label after Step~$k$ satisfies~\eqref{eq:case 1} or~\eqref{eq:case 2}, as well as properties~\textup{(b)} through~\textup{(f)} in~\cref{prop:construction of KN}.
\end{claim}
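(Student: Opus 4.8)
The plan is a straightforward supply-versus-demand accounting, carried out one label at a time. Recall from~\cref{table:type of summation} that, for each label appearing in block~$k$, the number of portions contributed by one unit of multiplicity is a fixed binomial expression; since property~\textup{(a)} (already established above) gives the multiplicity of $k$-sums as $x_kM$, the total \emph{supply} of $k$-sums of a given label is $x_kM$ times that expression. Against this I will set the total number of portions of that label \emph{requested} by recovery patterns~$\alpha,\beta,\gamma,\zeta$ during Step~$k$, which has already been tallied in~\eqref{eq:recovery pattern alpha xk}--\eqref{eq:recovery pattern zeta xk 2}. Dividing the difference by the per-unit count gives the residual multiplicity, and the task is to check that it matches the prescription of~\eqref{eq:case 1} or~\eqref{eq:case 2} together with properties~\textup{(b)}--\textup{(f)}.

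First I would dispose of the three labels requiring no arithmetic. By~\cref{table:recovery pattern refer}, among the $k$-sums only those with label $A+(k\!-\!1)\ast B$ on $S_1,S_2$, label $2\ast B+(k\!-\!2)\ast R$ on $S_3,\dots,S_N$, and label $B+(k\!-\!1)\ast R$ on $S_3,\dots,S_N$ are touched in Step~$k$. Hence the $k$-sums with label $k\ast B$ on $S_1,S_2$ and with label $k\ast R$ on $S_3,\dots,S_N$ are never requested by a recovery pattern at this step, so their residual multiplicity equals the full multiplicity $x_kM$, establishing properties~\textup{(c)} and~\textup{(f)}. For label $A+(k\!-\!1)\ast B$, the multiplicity $x_kM$ was by definition chosen so that $2\binom{N-2}{k-1}x_kM$ equals the sum of the $A+(k\!-\!1)\ast B$ requests from~$\alpha$, from~$\beta$ and~$\gamma$, and from~$\zeta$; this is exactly the computation that closes the description of Step~$k$ above, valid in both cases~\eqref{eq:case 1} and~\eqref{eq:case 2}. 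Thus supply equals demand and the residual multiplicity is $0$, which is property~\textup{(b)}.

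Next I would verify property~\textup{(d)}. The supply of $k$-sums with label $2\ast B+(k\!-\!2)\ast R$ is $(N-2)\binom{N-3}{k-2}x_kM$, and only patterns~$\beta$ and~$\gamma$ request this label, in the amount~\eqref{eq:recovery pattern beta yk}. Subtracting, dividing by $(N-2)\binom{N-3}{k-2}$, and simplifying with the elementary identities $\binom{N-2}{k-1}/\binom{N-3}{k-2}=\frac{N-2}{k-1}$ and $\binom{N-3}{k-3}/\binom{N-3}{k-2}=\frac{k-2}{N-k}$, the residual multiplicity collapses to $\bigl(x_k-2x_{k-1}+\frac{k-2}{N-k}y_{k-1}\bigr)M$, which is exactly $y_kM$; the $y$-recurrence is identical in~\eqref{eq:case 1} and~\eqref{eq:case 2}, so no case split is needed here. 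For property~\textup{(e)} I would split exactly as in the description of Step~$k$. When $k\le\lfloor N/2\rfloor+1$, only~$\zeta$ requests label $B+(k\!-\!1)\ast R$, in the amount~\eqref{eq:recovery pattern zeta zk 1}; subtracting from the supply $2(N-2)\binom{N-3}{k-1}x_kM$ and dividing by $2(N-2)\binom{N-3}{k-1}$ yields residual multiplicity $\bigl(x_k-\frac{k-1}{2}x_{k-1}\bigr)M=z_kM$ as in~\eqref{eq:case 1}. When $k\ge\lfloor N/2\rfloor+2$, the very definition $x_k=\frac{k-1}{2k-N}(x_{k-1}+z_{k-1})$ in~\eqref{eq:case 2} is equivalent, after the same simplification, to the statement that the $\zeta$-requests~\eqref{eq:recovery pattern zeta zk 2} consume the entire supply $2(N-2)\binom{N-3}{k-1}x_kM$, so the residual multiplicity is $0=z_kM$.

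The main obstacle here is not conceptual but organizational: one must keep the per-unit counts of~\cref{table:type of summation}, the request totals~\eqref{eq:recovery pattern alpha xk}--\eqref{eq:recovery pattern zeta xk 2}, and the binomial-ratio simplifications perfectly synchronized, and, crucially, in Case~2 of property~\textup{(e)} invoke the \emph{definition} of $x_k$ rather than re-deriving it, so that the argument does not become circular. One should also note in passing that the inductive hypothesis~\eqref{eq:construction in K_N}, used with $k$ replaced by $k-1$, is what guarantees the $(k\!-\!1)\ast B$ side information on $S_1,S_2$ suffices for patterns~$\beta$ and~$\gamma$ to absorb all the $2\ast B+(k\!-\!3)\ast R$ side information, the feasibility fact underlying the formula~\eqref{eq:recovery pattern beta yk} used above; the nonnegativity and the bounds $y_k,z_k\le x_k$ on the resulting parameters are then precisely the conditions checked in the next subsection.
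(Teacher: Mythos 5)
Your proposal is correct and follows essentially the same supply-versus-demand accounting as the paper's own proof: properties (b), (c), (f) are immediate from the definition of the multiplicity $x_kM$ and the fact that labels $k\ast B$ and $k\ast R$ are untouched, while (d) and (e) follow by subtracting the request totals~\eqref{eq:recovery pattern beta yk},~\eqref{eq:recovery pattern zeta zk 1},~\eqref{eq:recovery pattern zeta zk 2} from the supply and normalizing by the per-type counts of~\cref{table:type of summation}. Your observation that the $y$-recurrence needs no case split (the paper formally splits but performs the identical computation) and your remark that Case~2 of (e) must invoke the definition of $x_k$ from~\eqref{eq:case 2} rather than re-derive it are both consistent with the paper's argument.
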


\begin{poc}
    The definition of the $k$-sum multiplicity directly guarantees property~\textup{(b)}. Additionally, $k$-sums with label $k \ast B$ from \( S_1, S_2 \) and $k \ast R$ from \( S_3, \dots, S_N \) are never used in any of the recovery patterns $\alpha$, $\beta$, $\gamma$, or $\zeta$. Thus, their residual multiplicities remain equal to the full multiplicity $x_k M$, verifying properties~\textup{(c)} and~\textup{(f)}.

    We now compute the residual multiplicities for the other labels.

    \paragraph{Case 1:} \( k \le \left\lfloor \frac{N}{2} \right\rfloor + 1 \). From~\eqref{eq:recovery pattern beta yk}, the residual multiplicity of $k$-sums with label \( 2 \ast B + (k - 2) \ast R \) from \( S_{3},\dots,S_N \) is
    \[
    x_k M - \frac{(k - 1) \cdot 2 \cdot \binom{N - 2}{k - 1} \cdot x_{k-1} M - (N - 2) \cdot \binom{N - 3}{k - 3} \cdot y_{k-1} M}{(N - 2) \cdot \binom{N - 3}{k - 2}} = x_k M - 2x_{k-1}M + \frac{k - 2}{N - k} y_{k-1} M,
    \]
    which equals \( y_k M \) by~\eqref{eq:case 1}.

    Similarly, from~\eqref{eq:recovery pattern zeta zk 1}, the residual multiplicity of $k$-sums with label \( B + (k\!-\!1) \ast R \) from \( S_3,\dots,S_N \) is
    \[
    x_k M - \frac{(k - 1)(N - 2)\binom{N - 3}{k - 1} \cdot x_{k-1} M}{(N - 2) \cdot 2\binom{N - 3}{k - 1}} = \left(x_k  - \frac{k - 1}{2} x_{k-1}\right) M,
    \]
    which matches \( z_k M \) by~\eqref{eq:case 1}.

    \paragraph{Case 2:} \( k \ge \left\lfloor \frac{N}{2} \right\rfloor + 2 \). By the same computation from~\eqref{eq:recovery pattern beta yk}, the residual multiplicity of $k$-sums with label \( 2 \ast B + (k - 2) \ast R \) is
    \[
    x_k M - 2x_{k-1}M + \frac{k - 2}{N - k} y_{k-1} M = y_k M,
    \]
    consistent with~\eqref{eq:case 2}.

    From~\eqref{eq:recovery pattern zeta zk 2}, the residual multiplicity of $k$-sums with label \( B + (k\!-\!1) \ast R \) is
    \[
    x_k M - \frac{(k - 1) \cdot \frac{2(N - 2)}{2k - N} \cdot \binom{N - 3}{k - 1} \cdot (x_{k-1} + z_{k-1}) M}{(N - 2) \cdot 2 \binom{N - 3}{k - 1}} = \left(x_k - \frac{k - 1}{2k - N}(x_{k-1} + z_{k-1})\right) M,
    \]
    which equals \( z_k M \) by~\eqref{eq:case 2}.

    Hence, Step~$k$ satisfies properties~\textup{(d)} and~\textup{(e)}, completing the proof.
\end{poc}

\medskip

It is worth emphasizing that when \( k \ge \left\lfloor \frac{N}{2} \right\rfloor + 2 \), we have \( z_k = 0 \), which implies that no additional \( k \)-sums with label \( B + (k\!-\!1) \ast R \) are available for applying recovery pattern~\(\zeta\). In other words, the residual \((k\!-\!1)\)-sums with label \( (k\!-\!1) \ast R \) cannot be consumed further beyond this point.

So far, we have completed the construction of Step~$k$ for all \( 1 \le k \le N - 1 \). The union of all these steps yields a complete description of the scheme \( \Pi_N \), which (though not yet verified as a PIR scheme) satisfies properties~\textup{(a)} through~\textup{(f)} from~\cref{prop:construction of KN}, with parameters \( x_k, y_k, z_k \) satisfying the recursive relations~\eqref{eq:case 1} and~\eqref{eq:case 2}.

Observe that the values \( x_k, y_k, z_k \) may not be integers. However, the residual multiplicities \( x_k M, y_k M, z_k M \) must be integers. For this reason, we introduce a scaling factor \( M \), defined to be the smallest positive integer such that all of \( x_k M, y_k M, z_k M \) are integers for all \( k \).

Assuming that \( \Pi_N \) is indeed a valid PIR scheme, an assumption we will justify in the next subsection, we define the subpacketization of the scheme as
\begin{equation}
    L := 2 \sum_{k=1}^{N-1} \binom{N-2}{k-1} x_k M. \label{eq:expression subpacketization}
\end{equation}
This expression reflects the total number of recovery patterns involving sub-files of the desired file \( A \) used throughout \( \Pi_N \). Specifically, Step~\( k \) contributes \( 2 \binom{N-2}{k-1} x_k M \) such recovery patterns, each corresponding to one distinct sub-file of \( A \).

Consequently, every file in the system is divided into \( L \) equal-sized sub-files. Following the conventions in~\cref{subsection:PIR scheme k3 revisited} and~\cref{subsection:PIR scheme K4}, the user privately samples \( \binom{N}{2} \) independent random permutations \( \sigma_{i,j} \) over \( [L] \) for each file \( W_{i,j} \) stored on servers \( S_i \) and \( S_j \). For \( s \in [L] \), define the \( s \)-th sub-file of \( W_{i,j} \) to be
\[
(w_{i,j})_s := W_{i,j}(\sigma_{i,j}(s)).
\]

Using this subpacketization, we enumerate the \( L \) recovery patterns and their corresponding side information. Subscripts are assigned by recording, for each file \( W_{i,j} \), the number of times it has already appeared in earlier patterns. Specifically, if the sub-file of \( W_{i,j} \) appears in the \( n \)-th recovery pattern, and has appeared \( r \) times previously, then it is indexed as \( (w_{i,j})_{r+1} \).

This indexing method guarantees that, within any single recovery pattern, all sub-files belonging to the same file share the same subscript. As a result, when the summation corresponding to the recovery pattern is computed, the user can correctly isolate and recover the targeted sub-file from the corresponding recovery pattern.

As seen in~\cref{table:specific K4}, this rule ensures that for each server, all queried sub-files are distinct. In cases where a sub-file appears only as redundant side information, its index can be chosen arbitrarily, provided it does not conflict with previously assigned indices. This indexing scheme, together with the use of independent permutations \( \sigma_{i,j} \), guarantees the uniform randomness of all transmitted sub-files and ensures the structural correctness of the scheme \( \Pi_N \).

\subsubsection{Verification of the feasibility of the general construction}\label{subsubsection:Verification the Feasibility of Our Construction}

In~\cref{subsubsection:Description of Our Construction}, we presented a recursive construction of the scheme \( \Pi_N \), which satisfies properties~\textup{(a)} through~\textup{(f)} in~\cref{prop:construction of KN}. In this subsection, we aim to verify the feasibility of this construction.

Our verification proceeds in three parts. First, we show that the sequences \( \{x_k\}_{k=1}^{N-1} \), \( \{y_k\}_{k=2}^{N-1} \), and \( \{z_k\}_{k=1}^{N-2} \), defined by the recurrence relations in~\eqref{eq:case 1} and~\eqref{eq:case 2}, satisfy the constraints \( 0 \le y_k, z_k \le x_k \) and the key inequality in~\eqref{eq:construction in K_N}. This confirms the internal consistency and validity of the multiplicity parameters used in the construction.

Second, we verify that the only remaining side information in the scheme \( \Pi_N \) occurs in the form of \( k \)-sums with label \( k \ast R \) from servers \( S_3, \dots, S_N \), and this happens only for \( k \ge \left\lfloor \frac{N}{2} \right\rfloor + 2 \). This observation ensures that no further recovery patterns can be applied and justifies the termination of our recursive process at Step~\( N - 1 \).

Finally, we establish that \( \Pi_N \) is a valid PIR scheme by proving its correctness (reliability) and privacy. This confirms that our construction not only satisfies the structural properties but also achieves the desired PIR functionality.

We begin by verifying the nonnegativity and boundedness of the sequence \( \{z_k\} \) with respect to \( \{x_k\} \).

\begin{lemma}\label{lemma:zk xk}
    For any \( 1 \le k \le N - 1 \), the sequence \( \{x_k\}_{k=1}^{N-1} \) is strictly positive and monotonically non-decreasing. Moreover, for all \( 1 \le k \le N - 2 \), we have \( 0 \le z_k \le x_k \).
\end{lemma}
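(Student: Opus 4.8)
I would prove the three assertions---$x_k>0$, $x_k\ge x_{k-1}$, and $0\le z_k\le x_k$---simultaneously by induction on $k$, which is natural because the recurrences~\eqref{eq:case 1} and~\eqref{eq:case 2} at step $k$ involve only the quantities $x_{k-1}$ and $z_{k-1}$. The base case $k=1$ is immediate: $x_1=z_1=1$ gives $x_1>0$ and $0\le z_1\le x_1$. For the inductive step, fix $k$ with $2\le k\le N-1$, assume the three properties at $k-1$, and split according to the regime governing step $k$: either $k\le\lfloor N/2\rfloor+1$, in which case~\eqref{eq:case 1} applies, or $k\ge\lfloor N/2\rfloor+2$, in which case~\eqref{eq:case 2} applies. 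One preliminary remark keeps the bookkeeping clean: $k\le N-1$ forces $k-1\le N-2$, so $z_{k-1}$ is defined wherever it is used, and when $k\le\lfloor N/2\rfloor+1$ we also have $k-1\le\lfloor N/2\rfloor$, so the data at $k-1$ is produced by the same regime~\eqref{eq:case 1} (or by the base case).

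\textbf{Regime~\eqref{eq:case 1}.} Since $\frac{N-k+1}{2}>0$ and, by the inductive hypothesis, $x_{k-1}>0$ and $z_{k-1}\ge0$, we get $x_k>0$. For monotonicity, substituting the formula for $x_k$ gives the identity $x_k-x_{k-1}=\frac{N-k-1}{2}x_{k-1}+z_{k-1}$; the constraint $k\le\lfloor N/2\rfloor+1$ yields $N-k-1\ge\lceil N/2\rceil-2\ge0$ for $N\ge3$, so $x_k\ge x_{k-1}$, and in fact $x_k\ge x_{k-1}+z_{k-1}$. For the bound on $z_k$ (needed only when $k\le N-2$): substituting the first line of~\eqref{eq:case 1} into the third yields $z_k=\frac{N-2k+2}{2}x_{k-1}+z_{k-1}$, and since, for integer $k$, the condition $k\le\lfloor N/2\rfloor+1$ is equivalent to $N-2k+2\ge0$, we obtain $z_k\ge z_{k-1}\ge0$; the upper bound $z_k\le x_k$ is immediate from $z_k=x_k-\frac{k-1}{2}x_{k-1}$ and $x_{k-1}>0$.

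\textbf{Regime~\eqref{eq:case 2}.} Here $k\ge\lfloor N/2\rfloor+2$ gives $2k-N>0$ (indeed $2k-N\ge3$), so $\frac{k-1}{2k-N}>0$ and therefore $x_k>0$. For monotonicity, the key point is that $\frac{k-1}{2k-N}\ge1$ holds precisely because $k\le N-1$ (this is equivalent to $k-1\ge 2k-N$), whence $x_k\ge x_{k-1}+z_{k-1}\ge x_{k-1}$. Finally $z_k=0$ in this regime, so $0\le z_k\le x_k$ is trivial; this completes the induction and, since $\{z_k\}$ is defined only for $k\le N-2$, establishes both claims.

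\textbf{Main obstacle.} The inequalities themselves are elementary; the only delicate point is the floor bookkeeping---the equivalence $k\le\lfloor N/2\rfloor+1\iff N-2k+2\ge0$, the sign of $N-k-1$ in regime~\eqref{eq:case 1}, and the positivity of $2k-N$ in regime~\eqref{eq:case 2}---together with checking that the small cases $N=3,4,5$, where the ranges $1\le k\le N-1$ and $1\le k\le N-2$ are short and the threshold $\lfloor N/2\rfloor+1$ lies at a boundary, are covered by the same estimates. I expect this (minor) bookkeeping, rather than any genuine inequality, to be the main obstacle.
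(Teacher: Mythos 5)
Your proposal is correct and follows essentially the same route as the paper: the same base case, the same split into the two regimes of~\eqref{eq:case 1} and~\eqref{eq:case 2}, the same key identity $z_k=\frac{N-2k+2}{2}x_{k-1}+z_{k-1}$ for the lower bound on $z_k$, and the same ratio bounds $\frac{N-k+1}{2}\ge1$ and $\frac{k-1}{2k-N}\ge1$ for monotonicity. The only difference is organizational (you run one simultaneous induction, while the paper first establishes $0<z_k\le x_k$ and then treats monotonicity separately), which is immaterial.
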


\begin{proof}[Proof of~\cref{lemma:zk xk}]
    We prove the two assertions in turn.

    \paragraph{ Behavior when $k\le \floor{\frac{N}{2}}+1$.} We proceed by induction to show that \( 0 < z_k \le x_k \).

    \emph{Base case:} When \( k = 1 \), we have \( x_1 = z_1 = 1 \), so the claim holds.

    \emph{Inductive step:} Suppose the statement holds for some \( k \le \left\lfloor \frac{N}{2} \right\rfloor + 1 \), i.e., \( 0 < z_{k-1} \le x_{k-1} \). Then by the recurrence~\eqref{eq:case 1},
    \begin{align*}
        z_k &= x_k - \frac{k - 1}{2}x_{k - 1} \\
            &= \left( \frac{N - k + 1}{2}x_{k - 1} + z_{k - 1} \right) - \frac{k - 1}{2}x_{k - 1} \\
            &= \frac{N - 2k + 2}{2}x_{k - 1} + z_{k - 1} \\
            &\ge z_{k - 1} > 0,
    \end{align*}
    where we used the assumption \( k \le \left\lfloor \frac{N}{2} \right\rfloor + 1 \), which implies \( N - 2k + 2 \ge 0 \), and the inductive hypothesis \( x_{k-1}, z_{k-1} > 0 \).

    Since \( z_k = x_k - \frac{k - 1}{2}x_{k - 1} \), it immediately follows that \( z_k \le x_k \). Thus, \( 0 < z_k \le x_k \) holds for all \( k \le \left\lfloor \frac{N}{2} \right\rfloor + 1 \) by induction.

    \paragraph{ Behavior when \( k \ge \left\lfloor \frac{N}{2} \right\rfloor + 2 \).} From~\eqref{eq:case 2}, we define
    \[
    x_k = \frac{k - 1}{2k - N}(x_{k - 1} + z_{k - 1}), \qquad z_k = 0.
    \]
    Since \( x_{k - 1} > 0 \) and \( z_{k - 1} \ge 0 \) by Step 1, this implies \( x_k > 0 \) and clearly \( z_k = 0 < x_k \), so the claim holds in this range as well.

    \paragraph{ Monotonicity of \( \{x_k\} \).} We show that \( x_k \ge x_{k - 1} \) for all \( 2 \le k \le N - 1 \).

    \emph{Case 1:} \( 2 \le k \le \left\lfloor \frac{N}{2} \right\rfloor + 1 \). From~\eqref{eq:case 1},
    \[
    x_k = \frac{N - k + 1}{2}x_{k - 1} + z_{k - 1} \ge \frac{N - k + 1}{2}x_{k - 1} \ge x_{k - 1},
    \]
    since \( \frac{N - k + 1}{2} \ge 1 \) and \( z_{k - 1} \ge 0 \).

    \emph{Case 2:} \( k \ge \left\lfloor \frac{N}{2} \right\rfloor + 2 \). Then from~\eqref{eq:case 2},
    \[
    x_k = \frac{k - 1}{2k - N}(x_{k - 1} + z_{k - 1}) \ge \frac{k - 1}{2k - N}x_{k - 1} \ge x_{k - 1},
    \]
    since \( \frac{k - 1}{2k - N} \ge 1 \) in this range and \( z_{k - 1} \ge 0 \). This completes the proof.
\end{proof}

Having verified the properties of the sequence \( \{z_k\} \) in~\cref{lemma:zk xk}, we now turn our attention to the sequence \( \{y_k\} \). Our goal is to show that each \( y_k \) lies within the range \([0, x_k]\), and that the inequality~\eqref{eq:construction in K_N} is satisfied. To this end, we derive an explicit expression for \( y_k \) in terms of \( x_{k-1} \) and \( x_k \), and use it to confirm all required properties.

\begin{lemma}\label{lemma:yk}
    For any $2\le k\le N-1$, we have $0\le y_k\le x_k$ and 
    \begin{equation}
        x_k\ge \frac{k\cdot y_{k}}{2(N-k-1)}. \label{eq:lemma:yk construction equation}
    \end{equation}
    Moreover, we have 
    \begin{equation}\label{eq:relation y_k x_k}
        y_k = 
        \begin{cases}
            \frac{N - k - 1}{2} x_{k-1}, & \text{if  } 2\le k \leq \floor{\frac{N}{2}} + 1, \\
            \frac{N - k - 1}{k - 1} x_k,  & \text{if  } \floor{\frac{N}{2}} + 2\le k\le N-1.
        \end{cases}
    \end{equation}
\end{lemma}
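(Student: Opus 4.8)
The plan is to prove the closed-form expression \eqref{eq:relation y_k x_k} for $y_k$ first, by induction on $k$, and then read off the inequalities $0\le y_k\le x_k$ and \eqref{eq:lemma:yk construction equation} directly from it, using the positivity and monotonicity of $\{x_k\}$ together with $z_k\ge 0$ from \cref{lemma:zk xk}. For the base case $k=2$, since $x_1=1$ and $y_2=\frac{N-3}{2}$ by hypothesis, we have $y_2=\frac{N-2-1}{2}x_1$, the first branch of \eqref{eq:relation y_k x_k}. For the inductive step I would split into three cases according to where $k$ and $k-1$ sit relative to the threshold $\lfloor N/2\rfloor+1$ (recall $2k-N>0$ throughout the second regime, so all divisions below are legitimate, and note the second regime is empty for $N\in\{3,4\}$).

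\emph{Case (i): $3\le k\le \lfloor N/2\rfloor+1$, so $k$ and $k-1$ both use \eqref{eq:case 1}.} The inductive hypothesis gives $y_{k-1}=\frac{N-k}{2}x_{k-2}$, hence $\frac{k-2}{N-k}y_{k-1}=\frac{k-2}{2}x_{k-2}$ and $y_k=x_k-2x_{k-1}+\frac{k-2}{2}x_{k-2}$. Substituting $x_k=\frac{N-k+1}{2}x_{k-1}+z_{k-1}$ and $z_{k-1}=x_{k-1}-\frac{k-2}{2}x_{k-2}$ (the $z$-recurrence at index $k-1$, valid since $k-1\le\lfloor N/2\rfloor+1$), the $x_{k-2}$ terms cancel and one is left with $y_k=\frac{N-k-1}{2}x_{k-1}$. \emph{Case (ii): $k=\lfloor N/2\rfloor+2$, so $k$ uses \eqref{eq:case 2} but $k-1$ uses \eqref{eq:case 1}.} Again $y_k=x_k-2x_{k-1}+\frac{k-2}{2}x_{k-2}$; using $z_{k-1}=x_{k-1}-\frac{k-2}{2}x_{k-2}$ gives $x_{k-1}+z_{k-1}=2x_{k-1}-\frac{k-2}{2}x_{k-2}$, and then $x_k=\frac{k-1}{2k-N}(x_{k-1}+z_{k-1})$ rearranges to $2x_{k-1}-\frac{k-2}{2}x_{k-2}=\frac{2k-N}{k-1}x_k$; substituting, $y_k=x_k-\frac{2k-N}{k-1}x_k=\frac{N-k-1}{k-1}x_k$. \emph{Case (iii): $\lfloor N/2\rfloor+3\le k\le N-1$, both use \eqref{eq:case 2}.} Now $z_{k-1}=0$, so $x_k=\frac{k-1}{2k-N}x_{k-1}$; the inductive hypothesis gives $y_{k-1}=\frac{N-k}{k-2}x_{k-1}$, hence $\frac{k-2}{N-k}y_{k-1}=x_{k-1}$ and $y_k=x_k-x_{k-1}$. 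Finally $x_{k-1}=\frac{2k-N}{k-1}x_k$ yields $y_k=\frac{N-k-1}{k-1}x_k$.

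With \eqref{eq:relation y_k x_k} in hand, the remaining claims are elementary. Nonnegativity $y_k\ge 0$ is immediate since $N-k-1\ge 0$ (as $k\le N-1$) and $x_{k-1},x_k>0$ by \cref{lemma:zk xk}. For $y_k\le x_k$: in the first regime $y_k=\frac{N-k-1}{2}x_{k-1}<\frac{N-k+1}{2}x_{k-1}\le x_k$ using $z_{k-1}\ge 0$; in the second regime $\frac{N-k-1}{k-1}\le 1$ because $k\ge \lfloor N/2\rfloor+2>N/2$. For \eqref{eq:lemma:yk construction equation}, written in the cleared form $2(N-k-1)x_k\ge k\,y_k$: in the first regime, using $x_k\ge\frac{N-k+1}{2}x_{k-1}$ it reduces to $2(N-k+1)\ge k$, i.e. $3k\le 2N+2$, which holds since $k\le\lfloor N/2\rfloor+1$; in the second regime it reduces to $2(k-1)\ge k$, i.e. $k\ge 2$; and when $N-k-1=0$ (which forces $y_k=0$) it holds trivially.

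The main obstacle is bookkeeping rather than genuine difficulty: one must keep straight which recurrence (\eqref{eq:case 1} versus \eqref{eq:case 2}) governs $x_k,y_k,z_k$ and $x_{k-1},y_{k-1},z_{k-1}$, treat the single transition index $k=\lfloor N/2\rfloor+2$ where the two regimes meet, and check that the denominators $N-k$, $2k-N$, $k-1$ never vanish in the ranges in which they occur. The algebraic cancellations themselves are short once the correct substitutions for $x_k$ and $z_{k-1}$ are inserted.
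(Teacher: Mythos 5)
Your proposal is correct and follows essentially the same route as the paper: induction on $k$ to establish the closed form \eqref{eq:relation y_k x_k}, with the same three-case split at the threshold $\lfloor N/2\rfloor+2$ and the same algebraic substitutions of the $x$- and $z$-recurrences, after which the inequalities are read off. Your handling of \eqref{eq:lemma:yk construction equation} in the first regime (reducing via $x_k\ge\frac{N-k+1}{2}x_{k-1}$ to $3k\le 2N+2$, and treating $N-k-1=0$ separately) is in fact slightly more careful than the paper's chain $y_k\le x_k\le\frac{2(N-k-1)}{k}x_k$, which degenerates at the boundary.
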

\begin{proof}[Proof of~\cref{lemma:yk}]
We now proceed to prove \eqref{eq:relation y_k x_k} by induction.

When $k = 2$, we have 
$$
y_2 = x_2 - 2x_1 = \frac{N - 3}{2}x_1,
$$
as desired. Now assume that for some $m$ with $3 \le m \le \left\lfloor \frac{N}{2} \right\rfloor+1$, the identity \eqref{eq:relation y_k x_k} holds for $k = m - 1$. We consider the case $k = m$:
\begin{align}
y_{m} &= x_m - 2x_{m-1} + \frac{m - 2}{N - m} y_{m-1} \notag \\
&= x_m - 2x_{m-1} + \frac{m - 2}{N - m} \cdot \frac{N - m}{2} x_{m-2} \label{eq:lemma:rationality 1} \\
&= \frac{N - m + 1}{2} x_{m-1} + z_{m-1} - 2x_{m-1} + \frac{m - 2}{2} x_{m-2} \label{eq:lemma:rationality 2} \\
&= \frac{N - m + 1}{2} x_{m-1} + x_{m-1} - \frac{m - 2}{2} x_{m-2} - 2x_{m-1} + \frac{m - 2}{2} x_{m-2} \label{eq:lemma:rationality 3} \\
&= \frac{N - m - 1}{2} x_{m-1},
\end{align}
where \eqref{eq:lemma:rationality 1} follows from the induction hypothesis, and \eqref{eq:lemma:rationality 2} and \eqref{eq:lemma:rationality 3} follow from \eqref{eq:case 1}. This completes the proof for the case $2 \le k \le \left\lfloor \frac{N}{2} \right\rfloor + 1$.

Now consider the case $k = \left\lfloor \frac{N}{2} \right\rfloor + 2$. Then,
\begin{align}
y_k &= x_k - 2x_{k-1} + \frac{k - 2}{N - k} y_{k-1} \notag \\
&= x_k - 2x_{k-1} + \frac{k - 2}{N - k} \cdot \frac{N - k}{2} x_{k-2} \label{eq:lemma:rationality 4} \\
&= \frac{k - 1}{2k - N} (x_{k-1} + z_{k-1}) - x_{k-1} - \left(x_{k-1} - \frac{k - 2}{2} x_{k-2} \right) \label{eq:lemma:rationality 5} \\
&= \frac{k - 1}{2k - N} (x_{k-1} + z_{k-1}) - x_{k-1} - z_{k-1} \label{eq:lemma:rationality 6} \\
&= \frac{N - k - 1}{2k - N} (x_{k-1} + z_{k-1}) \notag \\
&= \frac{N - k - 1}{k - 1} x_k, \label{eq:lemma:rationality 7}
\end{align}
where \eqref{eq:lemma:rationality 4} uses the fact that $k - 1 \le \left\lfloor \frac{N}{2} \right\rfloor + 1$; \eqref{eq:lemma:rationality 5} and \eqref{eq:lemma:rationality 7} follow from \eqref{eq:case 2}; and \eqref{eq:lemma:rationality 6} follows from \eqref{eq:case 1}. This completes the proof for the case $k = \left\lfloor \frac{N}{2} \right\rfloor + 2$.

Finally, assume that for some $m$ with $\left\lfloor \frac{N}{2} \right\rfloor + 2 \le m - 1 \le N - 1$, the identity \eqref{eq:relation y_k x_k} holds for $k = m - 1$. We now consider the case $k = m$:
\begin{align}
y_m &= x_m - 2x_{m-1} + \frac{m - 2}{N - m}\cdot y_{m-1} \notag \\
&= x_m - 2x_{m-1} + \frac{m - 2}{N - m} \cdot \frac{N - m}{m - 2} \cdot x_{m-1} \label{eq:lemma:rationality 8} \\
&= x_m - (x_{m-1} + z_{m-1}) + z_{m-1} \notag \\
&= x_m - (x_{m-1} + z_{m-1}) \label{eq:lemma:rationality 9} \\
&= x_m - \frac{2m - N}{m - 1}\cdot x_m \label{eq:lemma:rationality 10} \\
&= \frac{N - m - 1}{m - 1}\cdot x_m, \notag
\end{align}
where \eqref{eq:lemma:rationality 8} follows from the induction hypothesis, and \eqref{eq:lemma:rationality 9} and \eqref{eq:lemma:rationality 10} follow from \eqref{eq:case 2}. This completes the proof for the case $\left\lfloor \frac{N}{2} \right\rfloor + 2 \le k \le N - 1$.

Therefore, the proof of \eqref{eq:relation y_k x_k} is complete. This identity further implies that $y_k \ge 0$ for all $k$, since we have $x_k\ge 0$. For $2 \le k \le \left\lfloor \frac{N}{2} \right\rfloor + 1$, we have
\begin{align}
y_k &= \frac{N - k - 1}{2} x_{k-1} \notag \\
&= x_k - x_{k-1} + z_{k-1} \label{eq:lemma:rationality 11} \\
&\le x_k \le \frac{2(N - k-1)}{k} x_k, \notag
\end{align}
where \eqref{eq:lemma:rationality 11} follows from \eqref{eq:case 1}.

For $\left\lfloor \frac{N}{2} \right\rfloor + 2 \le k \le N - 1$, observe that $\frac{N - k - 1}{k - 1} \le 1$ and $\frac{N - k - 1}{k - 1} \le \frac{2(N - k-1)}{k}$. Since $y_k = \frac{N - k - 1}{k - 1} x_k$, this implies $y_k \le x_k$ and hence $x_k \ge \frac{k - 1}{2(N - k)} y_k$. This completes the proof.
\end{proof}

Observe that~\eqref{eq:relation y_k x_k} implies \( y_{N-1} = 0 \), which confirms that the only remaining unused side information in our scheme consists of \( k \)-sums with label \( k \ast R \) for all \( k \ge \left\lfloor \frac{N}{2} \right\rfloor + 2 \). These sub-files are never involved in any recovery pattern, and thus do not affect the correctness of the scheme.

We now verify that the proposed construction indeed constitutes a valid PIR scheme over the complete graph \( K_N \), by proving both privacy and reliability.

\begin{lemma}\label{lemma:KN privacy reliability}
    The scheme \( \Pi_N \) satisfies the privacy and reliability requirements of PIR.
\end{lemma}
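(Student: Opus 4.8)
The plan is to verify the two defining conditions of a PIR scheme from~\cref{section:problemstatement}: reliability~\eqref{eq:reliability} and privacy~\eqref{eq:privacy_1}. Reliability is essentially already in hand from the construction. Recall that~\cref{subsubsection:Description of Our Construction} produces exactly $L$ recovery patterns that involve a sub-file of the desired file $W_\theta=A$ -- namely $2\binom{N-2}{k-1}x_kM$ of them in Step $k$ -- and, by the indexing rule, the desired file receives the $L$ distinct subscripts $1,\dots,L$, one per pattern, while within each recovery pattern every undesired file appears with a single common subscript. For each of the patterns $\alpha,\beta,\gamma,\zeta$ the algebraic identity showing that the sum of the requested summations telescopes to a single sub-file of $A$ has already been recorded in~\cref{subsubsection:Selection of Recovery Patterns}: all undesired sub-files cancel precisely because they carry the same subscript inside the pattern, and~\eqref{eq:question answer} guarantees the user can reconstruct each requested summation from $A_{[N]}$ and $\QQ$. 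Hence the user recovers $a_1,\dots,a_L$, i.e.\ all of $W_\theta$, so $H(W_\theta\mid A_{[N]},\QQ)=0$.

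For privacy, fix a server $S_i$ and two candidate demands $\theta,\theta'\in E(K_N)$; since $A_i$ is determined by $Q_i$ and $\WW_{S_i}$ via~\eqref{eq:question answer}, $\QQ$ is independent of the file contents, and $\WW_{S_i}$ does not depend on $\theta$, it suffices to show that $Q_i^{[\theta]}$ and $Q_i^{[\theta']}$ are identically distributed. The first ingredient is \emph{message symmetry}: by properties~\textup{(a)} through~\textup{(f)} of~\cref{prop:construction of KN}, for every $1\le k\le N-1$ and every choice of $k$ distinct files stored on $S_i$, the number of $k$-sums of that type appearing in $A_i$ equals $x_kM$, a value depending only on $k$. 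Summing the binomial counts in~\cref{table:type of summation} shows that, regardless of whether $\theta$ is incident to $S_i$, the number of $k$-sum types available at $S_i$ equals $\binom{N-1}{k}=\binom{\deg(S_i)}{k}$, each with multiplicity $x_kM$; the blue/red/desired labels are merely bookkeeping for the construction and are invisible to the server. Thus the combinatorial shape of $Q_i$ -- which subsets of $\WW_{S_i}$ are summed and how often -- is the same for $\theta$ and $\theta'$ up to a relabeling of the files on $S_i$.

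The second ingredient is the randomization of sub-file indices. Within $Q_i$ the subscripts assigned to each file $W_{i,j}\in\WW_{S_i}$ are pairwise distinct (the indexing rule, using that $W_{i,j}$ is non-desired from the standpoint of at least one endpoint, together with the global uniqueness of the desired file's subscripts), and each actual index equals $\sigma_{i,j}(s)$ for the private, uniformly random permutation $\sigma_{i,j}$. Since the $\sigma_{i,j}$ are mutually independent and uniform, conditioning on the ($\theta$-independent) combinatorial shape, the induced joint distribution of the indices displayed in $Q_i$ is uniform over all assignments respecting that shape; the inter-server correlations built into the patterns are needed only for reliability and do not survive marginalization to $S_i$. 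Combining the two ingredients, every realization of $Q_i$ consistent with the message-symmetric shape is equally likely for every demand, so $Q_i^{[\theta]}\sim Q_i^{[\theta']}$, which yields~\eqref{eq:privacy} and hence~\eqref{eq:privacy_1}.

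The step I expect to be the main obstacle is making the second ingredient fully rigorous: one must check that, for each $W_{i,j}$, the number of distinct subscripts it occupies in $Q_i$ and the incidence pattern ``which summation uses which subscript'' are themselves $\theta$-independent (this again reduces to the uniformity of the multiplicities $x_kM$ across labels), and then invoke the exchangeability of a uniformly random permutation to conclude that composing with $\sigma_{i,j}$ erases any residual dependence on $\theta$ and on the correlations with the other servers. Everything else -- the algebraic identities for $\alpha,\beta,\gamma,\zeta$ and the enumeration of recovery patterns -- has already been carried out in~\cref{subsubsection:Selection of Recovery Patterns} and~\cref{subsubsection:Description of Our Construction}.
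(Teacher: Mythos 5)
Your proposal is correct and follows essentially the same route as the paper's proof: reliability via the $L$ disjoint recovery patterns each telescoping to a distinct sub-file of $A$, and privacy via the two-part decomposition into a $\theta$-independent combinatorial shape (determined solely by the multiplicities $x_kM$, with the label bookkeeping invisible to the server since the type counts sum to $\binom{N-1}{k}$) plus the uniform random permutations $\sigma_{i,j}$ erasing any residual dependence of the subscripts. The paper handles the step you flag as the main obstacle in the same way, simply by noting that subscripts of each file within $Q_i$ need only be pairwise distinct and are then randomized by the independent permutations, citing the symmetrization lemma of Sun--Jafar rather than arguing from scratch.
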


\begin{proof}[Proof of~\cref{lemma:KN privacy reliability}]
We verify those properties as follows.

\paragraph{Privacy:} The privacy argument follows directly from~\cite[Lemma 3]{Sun2017capacity}. Specifically, it suffices to show that, for any \( i \in [N] \), the distribution of the query $\QQ_i$ sent to server \( S_i \) is independent of the index \( \theta \) of the desired file. We break this argument into two parts:

\begin{itemize}
    \item \emph{Structural independence:} Ignoring the subscripts of the requested sub-files, the structure of the query to each server depends solely on the pre-determined multiplicity sequence \( \{x_k\}_{k=1}^{N-1} \). As this sequence is independent of \( \theta \), the overall query structure is also independent of \( \theta \).
    \item \emph{Subscript independence:} For each single server, the subscripts of the referred sub-files of some specific file $W_{i,j}$ only need to be pairwise distinct. And the specific subscripts are determined using random permutations \( \sigma_{i,j} \), which are independently and uniformly chosen and do not depend on \( \theta \). Therefore, the joint distribution of subscripts of sub-files of $W_{i,j}$ referred in the query is independent of the identity of the desired file and only determined by the permutation \( \sigma_{i,j} \).
\end{itemize}

To elaborate, for any server \( S_i \), since the selection of subscripts of sub-files of distinct files is independent of each other, the joint distribution of the whole subscripts is only dependent on the permutations and is independent of the desired file. Moreover, the query consists of \( x_kM \) portions of \( k \)-sums of type \( W_{i,j_1} + \cdots + W_{i,j_k} \), for any distinct \( j_1, \dots, j_k \in [N] \setminus \{i\} \). This amount is also independent of the desired file. Hence, the joint distribution of the entire query to \( S_i \) is independent of \( \theta \), establishing privacy.

\paragraph{Reliability:} The reliability of the scheme follows by construction. The queries set \( \mathcal{Q} \) consists of \( L \) valid recovery patterns and some side information, where each pattern allows the recovery of a distinct sub-file of the desired file \( A \). As shown in~\eqref{eq:expression subpacketization}, the subpacketization \( L \) equals the total number of such recovery patterns. Since each pattern recovers a different sub-file and the entire file \( A \) is partitioned into \( L \) sub-files, the user can correctly reconstruct \( A \). This confirms the reliability of the scheme.
\end{proof}

So far, combining~\cref{lemma:zk xk},~\cref{lemma:yk}, and~\cref{lemma:calculate rate}, the scheme $\Pi_N$ is shown to satisfy~\cref{prop:construction of KN}.

\subsubsection{Estimation of the rate}\label{subsubsection:Estimation of the Rate}

In this subsection, we analyze the rate of the PIR scheme \( \Pi_N \) constructed above.

\begin{lemma}\label{lemma:calculate rate}
The rate of the PIR scheme \( \Pi_N \) over \( K_N \) is given by
\begin{equation}
R(\Pi_{N}) = \frac{2\sum_{k=1}^{N-1} \binom{N-2}{k-1} x_k}{N \sum_{k=1}^{N-1} \binom{N-1}{k} x_k} \ge \left(1-\epsilon_N\right)\frac{4}{3N}, \label{eq:expression rate}
\end{equation}
where \( \epsilon_N \to 0 \) as \( N \to \infty \).
\end{lemma}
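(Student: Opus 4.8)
The plan is to first pin down the rate exactly, then recast the inequality as a concentration statement about the weighted sequence $a_k := \binom{N-1}{k}x_k$, and finally read off the answer $\tfrac23 N$ from where this sequence peaks.

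\textbf{The rate formula.} By the subpacketization expression \eqref{eq:expression subpacketization} we have $L = 2M\sum_{k=1}^{N-1}\binom{N-2}{k-1}x_k$. Since every server $S_i$ in $K_N$ stores exactly $N-1$ files, property~(a) of \cref{prop:construction of KN} says its answer consists of $\sum_{k=1}^{N-1}\binom{N-1}{k}x_kM$ requested sums (one summand per distinct $k$-sum type, each with multiplicity $x_kM$), so $H(A_i\mid\QQ_i)=\sum_{k=1}^{N-1}\binom{N-1}{k}x_kM$, exactly as in the $K_4$ computation preceding \cref{table:specific K4}. Summing over $i$ and dividing gives the claimed identity. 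Using the elementary identity $\binom{N-2}{k-1}=\tfrac{k}{N-1}\binom{N-1}{k}$, this rewrites as
\[
R(\Pi_N)=\frac{2}{N(N-1)}\cdot\frac{\sum_{k=1}^{N-1}k\,a_k}{\sum_{k=1}^{N-1}a_k}=\frac{2}{N(N-1)}\,\mathbb{E}[K],
\]
where $K\in\{1,\dots,N-1\}$ is the random index with $\Pr[K=k]\propto a_k$. Thus it suffices to show $\mathbb{E}[K]\ge(\tfrac23-o(1))N$.

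\textbf{Locating the peak of $a_k$.} I would compute $a_k/a_{k-1}$ from the recurrences of \cref{prop:construction of KN}. For $2\le k\le\floor{N/2}+2$ one has $a_k/a_{k-1}=\tfrac{N-k}{k}\cdot\tfrac{x_k}{x_{k-1}}$ with $x_k/x_{k-1}$ of order $N$ (this uses $0\le z_{k-1}\le x_{k-1}$ from \cref{lemma:zk xk}), so the ratio is $\Omega(N)$ on that range; for $k\ge\floor{N/2}+3$ we have $z_{k-1}=0$, hence $a_k/a_{k-1}=f(k):=\tfrac{(N-k)(k-1)}{k(2k-N)}$. Solving $f(k)=1$ gives $3k^2-(2N+1)k+N=0$, whose larger root is $k^\ast=\tfrac23N+O(1)$; and a one-line log-derivative estimate shows $f$ is strictly decreasing on $(N/2,2N/3)$. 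Hence $a_k$ increases sharply up to $k\approx k^\ast$ and decreases afterwards, so essentially all the mass of $a_\bullet$ sits near $\tfrac23N$.

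\textbf{Quantifying the concentration.} Fix a small $\delta>0$, put $k_0:=\floor{(\tfrac23-\delta)N}$ and $k^{\ast\ast}:=\floor{(\tfrac23-\tfrac\delta2)N}$, both exceeding $\floor{N/2}+3$ for $N$ large. By monotonicity of $f$, $a_k/a_{k-1}\ge f(k_0)\ge 1+3\delta$ for every $2\le k\le k_0$ (the first-regime ratios being far larger), so a geometric tail bound gives $\sum_{k<k_0}a_k\le\tfrac{1}{3\delta}\,a_{k_0}$. On $\{k_0{+}1,\dots,k^{\ast\ast}\}$ we have $a_k/a_{k-1}=f(k)\ge f(k^{\ast\ast})\ge 1+2\delta$ over $\Theta(\delta N)$ steps, hence $a_{k^{\ast\ast}}\ge a_{k_0}\,e^{\Omega(\delta^2N)}$. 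Combining, $\Pr[K<k_0]\le\tfrac{\sum_{k<k_0}a_k}{a_{k^{\ast\ast}}}\le\tfrac{1}{3\delta}e^{-\Omega(\delta^2N)}$, so $\mathbb{E}[K]\ge k_0\Pr[K\ge k_0]\ge(\tfrac23-\delta)N\bigl(1-\tfrac{1}{3\delta}e^{-\Omega(\delta^2N)}\bigr)$. Choosing $\delta=\delta_N\to0$ with $\delta_N^2N\to\infty$ (e.g.\ $\delta_N=N^{-1/3}$) yields $\mathbb{E}[K]\ge(\tfrac23-o(1))N$, and therefore $R(\Pi_N)\ge\tfrac{2}{N(N-1)}\cdot(\tfrac23-o(1))N\ge(1-o(1))\tfrac{4}{3N}$, which is the asserted bound with $\epsilon_N=o(1)$.

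\textbf{Main obstacle.} The delicate step is the exponential-growth bound $a_{k^{\ast\ast}}\ge a_{k_0}e^{\Omega(\delta^2N)}$: a naive comparison of $\sum_{k<k_0}a_k$ to $a_{k_0}$ alone only shows $\mathbb{E}[K]\ge cN$ for some constant $c<\tfrac23$, since $a_k/a_{k-1}$ degrades to $1+O(\delta)$ as $k\uparrow k_0$. Recovering the sharp constant $\tfrac23$ forces one to compare the sub-$k_0$ tail against the near-peak value $a_{k^{\ast\ast}}$ and exploit that $a_k$ still multiplies by a factor $(1+\Omega(\delta))$ over $\Theta(\delta N)$ further steps; carrying this through cleanly while tracking the two recurrence regimes and the residual terms $z_k$ is where the care is needed.
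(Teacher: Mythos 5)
Your proposal is correct and follows essentially the same route as the paper: derive the exact rate formula from property (a) of \cref{prop:construction of KN} and the subpacketization \eqref{eq:expression subpacketization}, locate the peak of the weighted multiplicity sequence at $k\approx\tfrac{2N}{3}$ via the ratio of consecutive terms in the two recurrence regimes, and prove exponential concentration of the mass near that peak (your $f(k)=\tfrac{(N-k)(k-1)}{k(2k-N)}$ is the paper's ratio $\tfrac{N-k}{2k-N}$ up to the harmless binomial shift, and your geometric-tail estimate plays the role of \cref{claim:estimate rate}). The one genuine, if modest, streamlining is the reformulation of the ratio of sums as $\tfrac{2}{N(N-1)}\mathbb{E}[K]$ with $\Pr[K=k]\propto\binom{N-1}{k}x_k$, which lets you get away with bounding only the lower tail $\Pr[K<(\tfrac23-\delta)N]$, whereas the paper separately controls both tails of both sequences $A_k$ and $B_k$ before invoking $A_k/B_k=k/(N-1)$ on the central window.
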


\begin{proof}[Proof of~\cref{lemma:calculate rate}]
Recall from~\eqref{eq:expression subpacketization} that the subpacketization of our scheme is 
\[
L = 2\sum_{k=1}^{N-1} \binom{N-2}{k-1} x_k,
\]
and that each server sends
\[
\sum_{k=1}^{N-1} \binom{N-1}{k} x_k
\]
messages in response to the user's query. Since the total download is \( N \) times this quantity, the rate is given by
\[
R(\Pi_N) = \frac{L}{N \cdot \sum_{k=1}^{N-1} \binom{N-1}{k} x_k} = \frac{2\sum_{k=1}^{N-1} \binom{N-2}{k-1} x_k}{N \cdot\sum_{k=1}^{N-1} \binom{N-1}{k} x_k}.
\]

Define \( A_k := \binom{N-2}{k-1} x_k \) and \( B_k := \binom{N-1}{k} x_k \), so that
\[
R(\Pi_N) = \frac{2\sum_{k=1}^{N-1} A_k}{N\sum_{k=1}^{N-1} B_k}.
\]

We now estimate the behavior of the ratio. First, by~\eqref{eq:case 1} and~\eqref{eq:case 2}, we analyze the growth rate of the sequence \( \{A_k\} \) as follows.

From the recurrence relations, we have for \( k \le \left\lfloor \frac{N}{2} \right\rfloor + 1 \),
\[
\frac{x_k}{x_{k-1}} \ge \frac{N - k + 1}{2},
\]
which implies that
\[
\frac{A_k}{A_{k-1}} = \frac{\binom{N-2}{k-1} x_k}{\binom{N-2}{k-2} x_{k-1}}\ge \frac{(N-k)(N-k+1)}{2(k-1)}>1.
\]

For \( k \ge \left\lfloor \frac{N}{2} \right\rfloor + 2 \),
\[
\frac{x_k}{x_{k-1}} = \frac{k - 1}{2k - N},
\]
which implies that
\begin{equation}
\frac{A_k}{A_{k-1}} = \frac{\binom{N-2}{k-1} x_k}{\binom{N-2}{k-2} x_{k-1}} = \frac{N - k}{2k - N}. \label{eq:lem:calculate rate 1}
\end{equation}

By analyzing this expression, we find that \( \{A_k\} \) increases for \( k \le \lfloor\frac{2N}{3}\rfloor \) and decreases afterward. In particular, if we assume for simplicity that \( \frac{2N}{3} \in \mathbb{Z} \), then \( A_k \) attains its maximum at \( k = \frac{2N}{3} - 1 \) or \( k = \frac{2N}{3} \). To lower bound the rate, we estimate the total contribution of the dominant terms near the peak of \( A_k \). Let \( \epsilon > 0 \) be an arbitrarily small constant. Then, we claim:

\begin{claim}\label{claim:estimate rate}
    For any \( \epsilon > 0 \),
    \[
    \sum_{k \in [1, (2/3 - \epsilon)N] \cup [(2/3 + \epsilon)N, N-1]} A_k
    \le (N - 1) \cdot e^{-\frac{3}{2} \epsilon^2 N}\cdot A_{2N/3}. 
    \]
\end{claim}

\begin{poc}
Assume \( \epsilon N \in \mathbb{Z} \). For any \( k \in [1, (2/3 - \epsilon)N] \cup [(2/3 + \epsilon)N, N-1] \), we have
\[
A_k \le \max \left\{ A_{(2/3 - \epsilon)N},\, A_{(2/3 + \epsilon)N} \right\}.
\]
Let \( R := A_{2N/3} \). We estimate \( A_{(2/3 - \epsilon)N} \) as follows:
\begin{align}
A_{(2/3 - \epsilon)N} &= R\cdot \prod_{t=0}^{\epsilon N-1}\frac{A_{2N/3-t-1}}{A_{2N/3-t}} \notag \\
&= R \cdot \prod_{t=0}^{\epsilon N - 1} \frac{1 - 6t/N}{1 + 3t/N} \label{eq:lem:calculate rate 2} \\
&\le R \cdot \prod_{t=0}^{\epsilon N - 1} \left(1 - \frac{6t}{N}\right), \notag
\end{align}
where \eqref{eq:lem:calculate rate 2} follows from \eqref{eq:lem:calculate rate 1}. Taking logarithms:
\[
\ln A_{(2/3 - \epsilon)N} \le \ln R - \sum_{t=0}^{\epsilon N - 1} \frac{6t}{N} = \ln R - 3\epsilon(\epsilon N - 1) \le \ln R - 3\epsilon^2 N.
\]
Thus, we have
\[
A_{(2/3 - \epsilon)N} \le R \cdot e^{-3\epsilon^2 N}.
\]
Similarly,
\[
A_{(2/3 + \epsilon)N} \le R \cdot e^{-\frac{3}{2} \epsilon^2 N}.
\]
Therefore,
\[
\sum_{k \in [1, (2/3 - \epsilon)N] \cup [(2/3 + \epsilon)N, N-1]} A_k
\le (N - 1) \cdot R \cdot e^{-c \epsilon^2 N}
\]
for any \( c \le \min\left\{ \frac{3}{2},\, 3 \right\} \), which completes the proof.
\end{poc}
\cref{claim:estimate rate} means that when $N\rightarrow\infty$, then 
\[
\sum_{k \in [1, (2/3 - \epsilon)N] \cup [(2/3 + \epsilon)N, N-1]} A_k=o(A_{2N/3}).
\]
A similar argument shows that when $N\rightarrow\infty$, we have
\[
\sum_{k \in [1, (2/3 - \epsilon)N] \cup [(2/3 + \epsilon)N, N-1]} B_k = o(B_{2N/3}).
\]
Moreover, for all \( k \in \left[(2/3 - \epsilon)N, (2/3 + \epsilon)N\right] \), we have
\[
\frac{A_k}{B_k} = \frac{k}{N - 1} \ge \frac{2}{3} - \epsilon.
\]
Therefore, we can estimate
\begin{align*}
R(\Pi_N) &= \frac{2}{N} \cdot \frac{\sum_{k=1}^{N-1} A_k}{\sum_{k=1}^{N-1} B_k} \\
&= \frac{2}{N} \cdot \frac{\sum_{k=(2/3 - \epsilon)N}^{(2/3 + \epsilon)N} A_k + o(A_{2N/3})}{\sum_{k=(2/3 - \epsilon)N}^{(2/3 + \epsilon)N} B_k + o(B_{2N/3})} \\
&\ge \frac{2}{N} \cdot \left( \frac{(\frac{2}{3} - \epsilon)\sum B_k}{\sum B_k + o(B_{2N/3})} - o(1) \right) \\
&= \left( \frac{2}{3} - \epsilon - o(1) \right) \cdot \frac{2}{N}.
\end{align*}
Since \( \epsilon > 0 \) is arbitrary, we conclude that
\[
R(\Pi_N) \ge (1 - o(1)) \cdot \frac{4}{3N},
\]
as \( N \to \infty \), completing the proof.
\end{proof}

Combining~\cref{prop:construction of KN} and~\cref{lemma:calculate rate}, we construct a PIR scheme $\Pi_N$ with rate at least $(\frac{4}{3}-\epsilon_N)\frac{1}{N}$, which completes the proof of~\cref{thm:LowerBoundMain}.

Overall, we have recursively constructed a PIR scheme \( \Pi_N \) for the complete graph \( K_N \), achieving a rate of \( \frac{4}{3N} \). However, the explicit values of the parameters \( x_k \), \( y_k \), and \( z_k \) have not yet been specified. The following lemma provides a closed-form expression for \( x_k \); its proof is deferred to~\cref{section:specific formula of xk}. Once \( x_k \) is determined, the values of \( y_k \) and \( z_k \) follow immediately from the recursive relations \eqref{eq:case 1}, \eqref{eq:case 2}, and the identity in~\cref{lemma:yk}.

\begin{lemma}\label{lemma:specific formula of xk}
Let \( N \ge 3 \) be an integer, and let \( k_0 = \floor{\frac{N}{2}} + 2 \). Then, for \( 1 \le k \le k_0 - 1 \), we have
\begin{equation}
x_k = \sum_{j=0}^{k-1} \frac{(N+3)!}{2^j (N-j+3)!} (-1)^{k-j-1} \binom{k-1}{j}, \label{eq:explicit expression x_k case 1}
\end{equation}
and for \( k_0 \le k \le N - 1 \), we have
\begin{equation}
x_k = \left( \prod_{i = k_0}^{k} \frac{i - 1}{2i - N} \right) \cdot \left( \sum_{j=0}^{k_0 - 2} \frac{(N+3)!}{2^j (N-j+3)!} (-1)^{k_0 - j - 2} \cdot \frac{k_0 - j + 2}{2} \binom{k_0 - 2}{j} \right). \label{eq:explicit expression x_k case 2}
\end{equation}
\end{lemma}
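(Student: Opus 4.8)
The plan is to reduce the two–parameter recursion to a single, standard second–order recurrence with polynomial coefficients, solve it by induction, and then dispatch the second regime by a short telescoping argument. \textbf{Step 1 (eliminating $z_k$ in the first regime).} For $2\le k\le \floor{N/2}+1$, the relation $z_{k-1}=x_{k-1}-\tfrac{k-2}{2}x_{k-2}$ from~\eqref{eq:case 1} may be substituted into $x_k=\tfrac{N-k+1}{2}x_{k-1}+z_{k-1}$ to yield
\begin{equation*}
x_k=\frac{N-k+3}{2}\,x_{k-1}-\frac{k-2}{2}\,x_{k-2},
\end{equation*}
valid for all $2\le k\le\floor{N/2}+1$ (the $x_{k-2}$ term is absent at $k=2$, its coefficient being $0$). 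The base cases $x_1=1$ and $x_2=\tfrac{N+1}{2}$ match~\eqref{eq:explicit expression x_k case 1} by direct evaluation.

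\textbf{Step 2 (solving the recurrence).} Write $c_j:=\frac{(N+3)!}{2^j(N-j+3)!}$, so the target formula is $x_k=\sum_{j=0}^{k-1}(-1)^{k-1-j}\binom{k-1}{j}c_j$, and note the key identity $c_{j+1}=\frac{N-j+3}{2}c_j$. Assuming the formula for $x_{k-1}$ and $x_{k-2}$, substitute into the recurrence of Step 1: the term $\frac{N-k+3}{2}x_{k-1}$ is treated by writing $\frac{N-k+3}{2}c_j=c_{j+1}-\frac{k-j}{2}c_j$ and reindexing $c_{j+1}\mapsto c_j$ (the out-of-range contribution has coefficient $\binom{k-2}{k-1}=0$ and disappears). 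Collecting the coefficient of $c_j$ on the right and comparing with $(-1)^{k-1-j}\binom{k-1}{j}$ via Pascal's rule $\binom{k-1}{j}=\binom{k-2}{j-1}+\binom{k-2}{j}$, the whole identity collapses to the elementary relation
\begin{equation*}
(k-j-2)\binom{k-2}{j}=(k-2)\binom{k-3}{j},
\end{equation*}
which holds for every integer $j$ (both sides equal $\tfrac{(k-2)!}{j!\,(k-3-j)!}$, with binomials of negative lower index taken to be $0$). This establishes~\eqref{eq:explicit expression x_k case 1}.

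\textbf{Step 3 (the second regime).} For $k\ge k_0=\floor{N/2}+2$ we have $z_k=0$, hence $x_k=\frac{k-1}{2k-N}(x_{k-1}+z_{k-1})$, and for $k>k_0$ already $z_{k-1}=0$, so $x_k=\frac{k-1}{2k-N}x_{k-1}$. Telescoping from $k_0$ to $k$ gives
\begin{equation*}
x_k=\left(\prod_{i=k_0}^{k}\frac{i-1}{2i-N}\right)\bigl(x_{k_0-1}+z_{k_0-1}\bigr).
\end{equation*}
Since $k_0-1,k_0-2\le\floor{N/2}+1$, use $z_{k_0-1}=x_{k_0-1}-\tfrac{k_0-2}{2}x_{k_0-2}$ and the just-proved formula~\eqref{eq:explicit expression x_k case 1} for $x_{k_0-1}$ and $x_{k_0-2}$, so that $x_{k_0-1}+z_{k_0-1}=2x_{k_0-1}-\tfrac{k_0-2}{2}x_{k_0-2}$ equals $\sum_j(-1)^{k_0-2-j}\bigl(2\binom{k_0-2}{j}+\tfrac{k_0-2}{2}\binom{k_0-3}{j}\bigr)c_j$; the same binomial identity with $k$ replaced by $k_0$ turns the bracket into $\tfrac{k_0-j+2}{2}\binom{k_0-2}{j}$, which is precisely the summand in~\eqref{eq:explicit expression x_k case 2}.

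\textbf{Main obstacle.} The argument is almost entirely bookkeeping; the one place requiring genuine care is the sign and index-shift accounting in Step 2 — tracking $(-1)^{k-1-j}$ through the reindexing $c_{j+1}\mapsto c_j$ and confirming that the boundary terms vanish because $\binom{k-2}{k-1}=0$ — after which everything reduces to the single one-line binomial identity above, applied once in each regime. No generating-function machinery or hypergeometric identities are needed.
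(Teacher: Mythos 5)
Your proof is correct, and it takes a genuinely different route from the paper's. Both arguments begin identically, by eliminating $z_{k-1}$ to obtain the second-order recurrence $x_k=\tfrac{N-k+3}{2}x_{k-1}-\tfrac{k-2}{2}x_{k-2}$ in the first regime and by telescoping $x_k=\tfrac{k-1}{2k-N}x_{k-1}$ down to $x_{k_0-1}+z_{k_0-1}=2x_{k_0-1}-\tfrac{k_0-2}{2}x_{k_0-2}$ in the second. The divergence is in how the first-regime formula is established: the paper introduces the generating function $G_N(t)=\sum_k g_k t^k$, converts the recurrence into the first-order linear ODE $(t^2+t^3)G_N'+(2-(N+2)t)G_N-2t=0$, solves it by variation of parameters (computing the antiderivatives $J_n(t)$ by an integration-by-parts recursion), and then reads off $g_k$ from the $k$-th derivative of $G_N$ at $0$. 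You instead verify the closed form directly by strong induction, using the ratio identity $c_{j+1}=\tfrac{N-j+3}{2}c_j$ to absorb the variable coefficient, Pascal's rule, and the single identity $(k-j-2)\binom{k-2}{j}=(k-2)\binom{k-3}{j}$; I checked the sign and index-shift bookkeeping and it is sound, including the boundary terms at $j=k-1$ and $j=k-2$ (where the relevant binomials vanish) and the degenerate $k=2$ case where the $x_{k-2}$ coefficient is $0$. Your route is shorter and entirely elementary, avoiding the ODE and integration machinery; the paper's route is constructive in the sense that it derives the formula rather than merely verifying a guessed answer, which is its main advantage. Your Step 3 then reuses the same binomial identity with $k$ replaced by $k_0$ to turn $2\binom{k_0-2}{j}+\tfrac{k_0-2}{2}\binom{k_0-3}{j}$ into $\tfrac{k_0-j+2}{2}\binom{k_0-2}{j}$, which matches the paper's final expression exactly.
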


\begin{rmk}
We now estimate the subpacketization required by the scheme \( \Pi_N \). According to the expression above, we have \( x_{N-1} = (N!)^{O(1)} \). Since the integer \( M \) is defined as the smallest positive number such that all \( x_k M \), \( y_k M \), and \( z_k M \) are integers, it follows from \eqref{eq:case 1}, \eqref{eq:case 2}, and~\cref{lemma:specific formula of xk} that \( M = O(2^{N/2}(N!)^3) = (N!)^{O(1)} \). Thus, using~\eqref{eq:expression subpacketization}, the subpacketization \( L \) satisfies
\begin{align*}
    L &= 2\sum_{k=1}^{N-1} \binom{N-2}{k-1} x_k M \\
      &\le 2^{N - 1} \cdot \max_k x_k \cdot M = (N!)^{O(1)},
\end{align*}
which grows super-exponentially with \( N \), making it impractical for large-scale applications.

To address this limitation, we apply~\cref{thm:transform to probabilistic} to transform the deterministic PIR scheme \( \Pi_N \) over \( K_N \), constructed in~\cref{prop:construction of KN}, into a probabilistic scheme with subpacketization reduced to \( 1 \), while preserving both correctness and privacy.

\end{rmk}

\begin{rmk}
Another interesting phenomenon arises when comparing our construction with previously known schemes for smaller \( N \). Specifically, for \( N = 3 \), the PIR scheme obtained from~\cref{table:specific K4} (by applying~\cref{lemma:graph contain}) is exactly a repetition of the optimal PIR scheme shown in~\cref{table:adjusted scheme over k3}. Moreover, for \( N = 4, 5, 6 \), the PIR scheme over \( K_{N-1} \), obtained from \( \Pi_N \) via~\cref{lemma:graph contain}, coincides precisely with the earlier-constructed scheme \( \Pi_{N-1} \) for \( K_{N-1} \). 

However, this structural self-consistency breaks down when \( N \ge 7 \). The discrepancy arises from a shift in the recursive relation at \( k = \lfloor N/2 \rfloor + 2 \), where \eqref{eq:case 2} replaces \eqref{eq:case 1}, introducing a fundamentally different growth pattern in the sequences \( x_k, y_k, z_k \). Consequently, the PIR scheme over \( K_{N-1} \), obtained by applying~\cref{lemma:graph contain} to \( \Pi_N \), no longer matches the recursive structure of \( \Pi_{N-1} \).

\end{rmk}

\section{Probabilistic PIR schemes}\label{section:probabilistic schemes}

In this section, we present several results concerning \emph{probabilistic} PIR schemes. In~\cref{subsection:transform}, we provide a general transformation that converts any deterministic PIR scheme over graph satisfying the \emph{independence property} into a probabilistic PIR scheme with subpacketization reduced to~$1$. In~\cref{subsection:Probabilistic General Graph}, we propose a universal construction of probabilistic PIR schemes applicable to arbitrary graph-based storage systems.

\subsection{Generating probabilistic PIR schemes from the deterministic one}\label{subsection:transform}

We then prove~\cref{thm:transform to probabilistic} by establishing a transformation from any deterministic PIR scheme to a corresponding probabilistic PIR scheme. The proof can be mainly divided into two parts:

\begin{itemize}
    \item First, we show that for any deterministic PIR scheme satisfying the independence property, its queries can be decomposed into a collection of recovery patterns, possibly accompanied by a small amount of residual side information.
    
    \item Second, we demonstrate that any deterministic PIR scheme admitting such a decomposition can be naturally converted into a probabilistic PIR scheme with subpacketization equal to~$1$.
\end{itemize}

\subsubsection{Generating recovery patterns}
In this subsection, we focus on the procedure for extracting recovery patterns from a given deterministic PIR scheme. The key idea relies on a structural property that is commonly satisfied by many existing PIR schemes, which we formalize in the following definition.

\begin{defn}[Independence Property]\label{def:independence}
Let \( G \) be a (multi-)graph with \( N \) vertices. Let \( \Pi \) be a deterministic PIR scheme over \( G \), and let \( Q_i \) denote the query sent to server \( S_i \), with the full queries \( \QQ = \{ Q_1, \dots, Q_N \} \). We say that the PIR scheme \( \Pi \) satisfies the \emph{independence property} if the following conditions hold:
\begin{itemize}
    \item[\textup{(1)}] For every \( i \in [N] \), each summation in \( Q_i \) is a sum of sub-files from distinct files stored on server \( S_i \).
    
    \item[\textup{(2)}] For every \( i \in [N] \) and every file \( W \) stored on server \( S_i \), each sub-file of \( W \) appears at most once among all summations in \( Q_i \).
    
    \item[\textup{(3)}] For every requested file \( W_\theta \), where \( \theta \in E(G) \), each sub-file of \( W_\theta \) appears exactly once across all queries \( \QQ \).
    
    \item[\textup{(4)}] For every \( \theta \in E(G) \), each sub-file of the desired file \( W_\theta \) can be recovered using at most one requested summation from each server.
\end{itemize}
\end{defn}

\begin{rmk}
    In fact, all existing graph-based PIR schemes satisfy the independence property. Consequently, provided the condition $H(A_i) \le L$ holds for all $i \in [N]$, these schemes can be converted into corresponding probabilistic schemes via~\cref{thm:transform to probabilistic}.
\end{rmk}

Next, we present an algorithm to extract the recovery patterns from a deterministic PIR scheme~\(\Pi\) that satisfies the independence property. Without loss of generality, assume that the subpacketization of \(\Pi\) equals \(L\). To facilitate the description of the algorithm, we begin with the following definition.

\begin{defn}[Support of a Summation]\label{def:Support of Summation}
    Let \(W_{i,j_1}, W_{i,j_2}, \dots, W_{i,j_k}\) be \(k\) distinct files stored on server \(S_i\), where \(k\) is a positive integer and \(j_1, j_2, \dots, j_k \in [N] \setminus \{i\}\) are distinct indices. Suppose \(\Pi\) is a deterministic PIR scheme with subpacketization \(L\). For any integers \(s_1, s_2, \dots, s_k \in [L]\) and any permutations \(\sigma_1,\sigma_2,\dots,\sigma_k\) of $[L]$, define the \emph{support} of the summation
    \[
    \omega := (W_{i,j_1})_{\sigma_1(s_1)} + (W_{i,j_2})_{\sigma_2(s_2)} + \dots + (W_{i,j_k})_{\sigma_k(s_k)}
    \]
    as the set
    \[
    \mathrm{supp}(\omega) := \{ (W_{i,j_1})_{\sigma_1(s_1)}, (W_{i,j_2})_{\sigma_2(s_2)},\dots ,(W_{i,j_k})_{\sigma_k(s_k)} \}.
    \]
\end{defn}
For example, in the PIR scheme shown in~\cref{table:adjusted scheme over k3}, server~\(S_1\) sends the summation \(a_2 + b_2\), whose support is the set \(\{a_2, b_2\}\).

In the deterministic PIR scheme~\(\Pi\), the user privately generates a collection of permutations of \([L]\). In particular, let \(\sigma_{\theta}\) be the permutation associated with the desired file \(W_{\theta}\); then we write \(a_i := (W_{\theta})_{\sigma_{\theta}(i)}\) to denote the \(\sigma_{\theta}(i)\)-th sub-file of \(W_{\theta}\).

Algorithm~\ref{alg:generation of recovery patterns} provides a method to extract recovery patterns from a given deterministic PIR scheme~\(\Pi\) that satisfies the independence property. For simplicity, we write \(\omega \in \QQ\) if a summation~\(\omega\) appears in the query set~\(\QQ\). Furthermore, when referring to a summation~\(\omega\), we consider not only its algebraic expression but also the specific query~\(\QQ_i\) in which it appears.

\begin{algorithm}[h]
\caption{Generation of Recovery Patterns for Deterministic PIR Scheme Satisfying the Independence Property}
\label{alg:generation of recovery patterns}
\KwIn{Query set \(\QQ = \{\QQ_1, \dots, \QQ_N\}\) of a deterministic PIR scheme \(\Pi\) with subpacketization \(L\), satisfying the independence property; index \(\theta\) of the desired file.}
\KwOut{The \(L\) recovery patterns \(\{\PP^1, \dots, \PP^L\}\) and the remaining side information \(\mathcal{D}\).}

\(\mathcal{D} \leftarrow \varnothing\) \tcp*[f]{Set of side information} \\

\For(\tcp*[f]{For each desired sub-file \(a_j\)}){\(j \leftarrow 1\) \KwTo \(L\)}{
    \(\PP^j \leftarrow \varnothing\) \tcp*[f]{Set of recovery pattern} \\
    \(\mathcal{S}^j \leftarrow \{a_j\}\) \tcp*[f]{Set of support of summations in \(\PP^j\)} \\
    \While{there exists \(\omega \in \QQ\) such that \(\mathrm{supp}(\omega) \cap \mathcal{S}^j \ne \varnothing\)}{
        \(\PP^j \leftarrow \PP^j \cup \{\omega\}\) \tcp*[f]{Add summation to recovery pattern} \\
        \(\mathcal{S}^j \leftarrow \mathcal{S}^j \cup \mathrm{supp}(\omega)\)
    }
    Output \(\PP^j\)
}
\(\mathcal{D} \leftarrow \QQ \setminus \bigcup_{j=1}^L \PP^j\) \tcp*[f]{Remaining side information} \\
Output \(\mathcal{D}\)
\end{algorithm}

Before formally proving the correctness of Algorithm~\ref{alg:generation of recovery patterns}, let us first consider a concrete example. Consider a replication system over the 4-star graph consisting of five servers \( S_1, \dots, S_5 \) and four files \( A, B, C, D \).~\cref{fig:4 star} illustrates the relationship between servers and files.

\begin{figure}[hbtp]
    \centering
    \begin{tikzpicture}
        \draw[](0,0)--(2,0);
        \draw[](0,0)--(-2,0);
        \draw[](0,0)--(0,2);
        \draw[](0,0)--(0,-2);
        \fill(0,0)circle(.1);
        \fill(2,0)circle(.1);
        \fill(-2,0)circle(.1);
        \fill(0,2)circle(.1);
        \fill(0,-2)circle(.1);
        \node at (-2,0)[above]{$S_3$};
        \node at (0.3,0)[above]{$S_1$};
        \node at (2,0)[above]{$S_4$};
        \node at (0,2)[right]{$S_2$};
        \node at (0,-2)[right]{$S_5$};
        \node at (0,1)[right]{$A$};
        \node at (-1,0)[above]{$B$};
        \node at (1,0)[above]{$C$};
        \node at (0,-1)[right]{$D$};
    \end{tikzpicture}
    \caption{The replication system based on the 4-star graph.}
    \label{fig:4 star}
\end{figure}
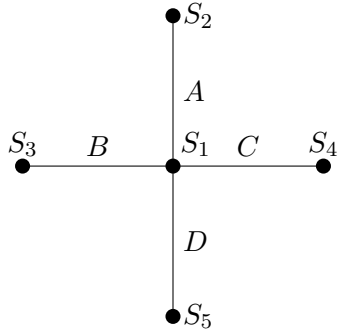

Assume each file is divided into \( L = 5 \) bits, and the user wants to privately retrieve file \( A \). To achieve this, the user privately generates four independent random permutations of \( [L] \), denoted by \( \sigma_i \) for \( i \in [4] \). For each \( i \in [L] \), define \( a_i := A(\sigma_1(i)) \) to be the \( \sigma_1(i) \)-th bit of \( A \). Similarly, define \( b_i := B(\sigma_2(i)) \), \( c_i := C(\sigma_3(i)) \), and \( d_i := D(\sigma_4(i)) \) to be the permuted sub-files of \( B, C \), and \( D \), respectively. An optimal PIR scheme with rate \( \frac{5}{12} \) and subpacketization \( L = 5 \) is given in~\cite{Yao2023star}, as shown in~\cref{table:4 star}.

\begin{table}[htbp]
\centering
\begin{tabular}{|c|c|c|c|c|}
\hline
$S_1$ & $S_2$ & $S_3$ & $S_4$ & $S_5$ \\ \hline
$a_1 + b_1 + c_1$ & $a_4$ & $b_1$ & $c_1$ & $d_1$ \\
$a_2 + b_2 + d_1$ & $a_5$ & $b_2$ & $c_2$ & $d_2$ \\
$a_3 + c_2 + d_2$ &       &       &       &       \\
$b_3 + c_3 + d_3$ &       &       &       &       \\ \hline
\end{tabular}
\caption{Answer table of the optimal PIR scheme over the 4-star graph in~\cite{Yao2023star}, for $\theta = A$.}
\label{table:4 star}
\end{table}

It can be easily verified that the PIR scheme shown in~\cref{table:4 star} satisfies the independence property. Applying Algorithm~\ref{alg:generation of recovery patterns} for \(j = 1\), we initialize \(\mathcal{S}^1 \leftarrow \{a_1\}\). The only summation involving \(a_1\) is \(a_1 + b_1 + c_1\) from server \(S_1\). Thus, we update \(\PP^1 \leftarrow \{a_1 + b_1 + c_1\}\) and \(\mathcal{S}^1 \leftarrow \{a_1, b_1, c_1\}\).

Next, we search for any summations in \(\QQ\) whose supports intersect \(\mathcal{S}^1\). We find \(b_1\) from \(S_3\) and \(c_1\) from \(S_4\). Adding them, we get \(\PP^1 \leftarrow \{a_1 + b_1 + c_1, b_1, c_1\}\). No other summation shares support with \(\mathcal{S}^1\), so the process for \(j=1\) terminates. 

By repeating this process, we can obtain $5$ recovery patterns and remaining side information. By grouping the requested summations into recovery patterns, we obtain an adjusted version of the PIR scheme, shown in~\cref{table:4 star adjusted}.

\begin{table}[htbp]
\centering
\begin{tabular}{|c|c|c|c|c|c|}
\hline
 & $S_1$ & $S_2$ & $S_3$ & $S_4$ & $S_5$ \\ \hline
Recovery Pattern 1 & $a_1 + b_1 + c_1$ &  & $b_1$ & $c_1$ &  \\ \hline
Recovery Pattern 2 & $a_2 + b_2 + d_1$ &  & $b_2$ &       & $d_1$ \\ \hline
Recovery Pattern 3 & $a_3 + c_2 + d_2$ &  &       & $c_2$ & $d_2$ \\ \hline
Recovery Pattern 4 &                   & $a_4$ &       &       &       \\ \hline
Recovery Pattern 5 &                   & $a_5$ &       &       &       \\ \hline
Side Information   & $b_3 + c_3 + d_3$ &       &       &       &       \\ \hline
\end{tabular}
\caption{Adjusted answer table of the PIR scheme over the 4-star graph, for $\theta = A$.}
\label{table:4 star adjusted}
\end{table}

Next, we formally verify the correctness and feasibility of the algorithm.

\begin{lemma}\label{lemma:Behavior of Algorithm 1}
If a deterministic PIR scheme \(\Pi\) satisfies the independence property, then for each \(j \in [L]\), the output \(\PP^j\) generated by Algorithm~\ref{alg:generation of recovery patterns} is a valid recovery pattern for the desired file \(W_\theta\).
\end{lemma}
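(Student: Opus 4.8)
The plan is to show two things for each $j\in[L]$: first, that the \textsc{while}-loop in Algorithm~\ref{alg:generation of recovery patterns} terminates and selects \emph{at most one} summation from each server's query $\QQ_i$; and second, that the selected summations sum to the sub-file $a_j$ of $W_\theta$, so that $\PP^j$ is a valid recovery pattern in the sense of~\cref{def:Recovery Pattern}. Termination is immediate since each iteration of the loop either adds a new summation from the finite set $\QQ$ (and a summation, viewed together with the server it belongs to, is never added twice because once $\omega\in\PP^j$ all of $\supp(\omega)$ lies in $\mathcal{S}^j$), or the loop condition fails; hence the loop runs for finitely many steps.

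The heart of the argument is the ``at most one summation per server'' property, which is where I expect the main obstacle to lie. I would track which sub-files enter $\mathcal{S}^j$ and argue by a clean invariant: throughout the execution of the $j$-th loop, $\mathcal{S}^j$ consists of $a_j$ together with sub-files of undesired files $W\ne W_\theta$, and moreover for each such file $W$ at most one of its sub-files lies in $\mathcal{S}^j$. The key inputs are parts (2) and (4) of the independence property (\cref{def:independence}): part (2) guarantees that within a single query $\QQ_i$, each sub-file of each file appears in at most one summation, and part (4) guarantees that $a_j$ can be recovered using at most one summation per server. I would use part (4) to identify, server by server, exactly which summation (if any) the recovery of $a_j$ uses, and then show inductively that Algorithm~\ref{alg:generation of recovery patterns} selects precisely these summations: when a sub-file in $\mathcal{S}^j$ forces us to pick up a summation $\omega$ from some server $S_i$, the support of $\omega$ contributes new sub-files, but by part (2) no \emph{other} summation in $\QQ_i$ shares support with $\mathcal{S}^j$ at that moment, so server $S_i$ contributes exactly one summation. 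The subtle point to nail down is that $\mathcal{S}^j$ never revisits a server that has already contributed, which follows because once $\omega$ from $S_i$ is added, all of $\supp(\omega)$ is in $\mathcal{S}^j$, and by part~(2) any later summation of $\QQ_i$ touching $\mathcal{S}^j$ would have to touch one of these already-included sub-files — but part~(2) says distinct summations of $\QQ_i$ have disjoint supports, contradiction.

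Once the structural claim is in place, I would invoke part~(3) of the independence property: the sub-file $a_j$ of $W_\theta$ appears in exactly one summation across all of $\QQ$, so the recovery pattern $\PP^j$ is nonempty and contains that unique summation. Finally, to see that the elements of $\PP^j$ sum to $a_j$, I would argue that every sub-file of an undesired file appearing in $\mathcal{S}^j$ at termination appears an \emph{even} number of times (in fact exactly twice) among the summations of $\PP^j$: the \textsc{while}-loop only stops once there is no summation in $\QQ\setminus\PP^j$ whose support meets $\mathcal{S}^j$, which combined with part~(2) forces each undesired sub-file in $\mathcal{S}^j$ to be covered twice inside $\PP^j$; hence these cancel over $\mathbb{F}_q$ (or over the abelian group, using the random-sign convention mentioned in the paper), leaving only $a_j$. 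Assembling these observations shows $\PP^j$ is a recovery pattern for $W_\theta$, completing the proof. The delicate bookkeeping — that the loop terminates with a support set in which every undesired sub-file is doubly covered and no server is used twice — is the step I would write most carefully.
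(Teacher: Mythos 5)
Your overall architecture matches the paper's: prove (i) at most one summation per server and (ii) the summations in \(\PP^j\) cancel down to \(a_j\), with part (ii) handled via reliability and the fact that each undesired sub-file occurs in at most two summations globally. However, your justification of (i) has a genuine gap. You claim that once \(\omega\in\QQ_i\) is added, ``any later summation of \(\QQ_i\) touching \(\mathcal{S}^j\) would have to touch one of these already-included sub-files,'' and derive a contradiction from the disjointness of supports within \(\QQ_i\) (property (2)). This is false: \(\mathcal{S}^j\) keeps growing with sub-files contributed by summations from \emph{other} servers, and a second summation \(\omega'\in\QQ_i\) can intersect \(\mathcal{S}^j\) at such a sub-file. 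Concretely, if \(S_i\) and \(S_{i''}\) share a file \(W\) and some \(\omega''\in\QQ_{i''}\) added later to \(\PP^j\) contains the sub-file \((W)_s\), then any \(\omega'\in\QQ_i\) containing \((W)_s\) now meets \(\mathcal{S}^j\) and gets added, even though \(\supp(\omega)\cap\supp(\omega')=\varnothing\). Property (2) alone cannot exclude this; it is exactly the scenario that property (4) of \cref{def:independence} is there to forbid. (Your auxiliary invariant that each undesired file contributes at most one sub-file to \(\mathcal{S}^j\) is likewise not a consequence of property (2) and would need the same missing argument.)

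The paper closes this gap differently: it first argues that \emph{every} summation the algorithm adds is \emph{necessary} for any recovery of \(a_j\) — since each undesired sub-file appears in at most two summations across all of \(\QQ\), the cancelling partner of each sub-file entering \(\mathcal{S}^j\) is uniquely forced, so the whole chain built by the \textsc{while}-loop must be contained in whatever collection the user employs to recover \(a_j\). Property (4) then says such a collection exists with at most one summation per server, and since \(\PP^j\) is contained in it, \(\PP^j\) inherits that property. You gesture at this route (``use part (4) to identify, server by server, exactly which summation the recovery of \(a_j\) uses''), but the detailed argument you actually supply bypasses it and does not work; to repair the proof you should replace the support-disjointness step with the necessity-plus-property-(4) argument.
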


\begin{proof}[Proof of~\cref{lemma:Behavior of Algorithm 1}]
To verify that \(\PP^j\) is a recovery pattern, we need to show the following:
\begin{itemize}
    \item \(\PP^j\) includes at most one summation from each query set \(\QQ_i\) for all \(i \in [N]\);
    \item The sum of all summations in \(\PP^j\) recovers the sub-file \(a_j\).
\end{itemize}

For the first statement, by property~(3) of the independence property in~\cref{def:independence}, there exists exactly one summation in \( \QQ \), denoted \( \omega_j \), such that \( a_j \in \mathrm{supp}(\omega_j) \). Therefore, due to the reliability of \( \Pi \), the user must be able to recover \( a_j \) using \( \omega_j \) together with some additional summations. Since sub-files behave like distinct free variables, recovering a target sub-file requires canceling out all other interfering sub-files. To cancel any non-desired sub-file that appears in \( \omega_j \), another summation containing that sub-file must exist.

By the structure of the replication system and property (2) of~\cref{def:independence}, each non-desired sub-file appears in at most two summations across all queries (because each file is stored on exactly two servers). Hence, for any sub-file involved in \(\omega_j \setminus \{a_j\}\), there exists at most one additional summation that includes it. Thus, to cancel all interfering sub-files in \(\omega_j\), the algorithm iteratively adds other summations that share these sub-files, updating the support set \(\mathcal{S}^j\) in the process. If the user can indeed recover \(a_j\), then all these added summations are necessary for the recovery, and so all the summations in set \(\PP^j\) must be included in the required recovery pattern. By property (4) in~\cref{def:independence}, this implies that each \(\PP^j\) includes at most one summation from each \(\QQ_i\).

We now verify the second statement: the sum of all summations in \(\PP^j\) equals \(a_j\). Suppose not. Then there must be some sub-file in \(\mathcal{S}^j \setminus \{a_j\}\) that is not canceled out in the total sum of summations in \(\PP^j\). However, by construction (line 5 of Algorithm~\ref{alg:generation of recovery patterns}), there is no other summation in \(\QQ \setminus \PP^j\) whose support intersects \(\mathcal{S}^j\). This would contradict the assumption that \(\Pi\) is a reliable scheme. Hence, the summations in \(\PP^j\) must sum to \(a_j\).
\end{proof}

\begin{lemma}\label{lemma:recovery pattern partition}
Let \(\Pi\) be a deterministic PIR scheme satisfying the independence property. Then, the query set \(\QQ\) of \(\Pi\) can be partitioned into the recovery patterns \(\PP^1, \PP^2, \dots, \PP^L\) and the remaining side information \(\mathcal{D}\).
\end{lemma}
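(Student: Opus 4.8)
The plan is to show that the sets $\PP^1,\dots,\PP^L,\mathcal D$ produced by Algorithm~\ref{alg:generation of recovery patterns} are pairwise disjoint and that their union is all of $\QQ$; disjointness together with $\mathcal D=\QQ\setminus\bigcup_j\PP^j$ (line~8) then gives the partition immediately. The union claim is free by the definition of $\mathcal D$, so the real content is disjointness of the $\PP^j$'s. First I would set up the key invariant: for each $j$, every summation $\omega\in\PP^j$ has $\supp(\omega)\cap\mathcal S^j\ne\varnothing$ at the time it is added, and at termination $\mathcal S^j$ is exactly $\bigcup_{\omega\in\PP^j}\supp(\omega)\cup\{a_j\}$, i.e.\ $\mathcal S^j$ is a ``connected component'' in the bipartite incidence structure between sub-files and summations, rooted at $a_j$. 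This is essentially what Algorithm~\ref{alg:generation of recovery patterns} computes: a flood-fill of the component of $a_j$.

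Next I would prove that distinct desired sub-files $a_j,a_{j'}$ lie in distinct components. Suppose a summation $\omega$ belonged to both $\PP^j$ and $\PP^{j'}$ with $j\ne j'$. Then, by the component/flood-fill description, the supports traced out from $a_j$ and from $a_{j'}$ would both reach $\omega$, so by following the chain of shared sub-files one obtains a sequence of summations linking $a_j$ to $a_{j'}$ whose consecutive supports overlap in non-desired sub-files. Using property~(2) of the independence property (each non-desired sub-file of a file stored on $S_i$ appears at most once in $Q_i$, hence at most twice globally since each file sits on two servers) and property~(3) (each sub-file of $W_\theta$ appears exactly once across $\QQ$), such a chain would force $a_j$ and $a_{j'}$ to be ``connected,'' but then summing the chain would express one desired sub-file as a linear combination of the other plus non-desired sub-files that all cancel — contradicting property~(3), which says $a_j$ and $a_{j'}$ are each canvassed by exactly one summation, and contradicting reliability/linear independence of the sub-files as free symbols. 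More carefully: if $a_j$ and $a_{j'}$ were in the same component, then the recovery pattern $\PP^j$ (which by Lemma~\ref{lemma:Behavior of Algorithm 1} sums to $a_j$) would also contain the unique summation covering $a_{j'}$, so $a_{j'}$ would appear in the total sum and fail to cancel, contradicting Lemma~\ref{lemma:Behavior of Algorithm 1}. Hence the components are disjoint, so the $\PP^j$ are pairwise disjoint, and $\mathcal D$ is disjoint from all of them by construction.

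The main obstacle I anticipate is making the ``component'' / ``chain'' argument fully rigorous rather than hand-wavy: one must argue that the while-loop in Algorithm~\ref{alg:generation of recovery patterns} terminates (finitely many summations), that the order in which summations are absorbed does not affect the final set $\PP^j$ (confluence of the flood-fill — standard, but needs a sentence), and above all that two components sharing a summation would share their root desired sub-files, which is where properties~(2), (3) and Lemma~\ref{lemma:Behavior of Algorithm 1} must be combined precisely. A clean way to package this is to define an auxiliary graph $\mathcal H$ on the vertex set $\QQ$ with an edge between two summations whenever their supports intersect, observe that $\PP^j$ is exactly the connected component of $\mathcal H$ containing the unique summation $\omega_j$ with $a_j\in\supp(\omega_j)$ (with the isolated-vertex case $\PP^j=\{\omega_j\}$ when $\omega_j=\{a_j\}$), note that connected components of any graph partition its vertex set, and then invoke Lemma~\ref{lemma:Behavior of Algorithm 1} to conclude that no two of the $\omega_j$'s lie in the same component of $\mathcal H$ — otherwise some $\PP^j$ would contain $\omega_{j'}$ and fail to sum to $a_j$. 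The leftover vertices of $\mathcal H$ form exactly $\mathcal D$, completing the partition.
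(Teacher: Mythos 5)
Your proposal is correct and follows essentially the same route as the paper: the paper's proof also assumes a shared summation $\omega\in\PP^j\cap\PP^{j'}$, uses the same chain-of-overlapping-supports argument (your flood-fill/connected-component observation) to conclude $\PP^j=\PP^{j'}$, and then derives the same contradiction that one pattern would contain two distinct desired sub-files, violating property~(3) of the independence property and the conclusion of \cref{lemma:Behavior of Algorithm 1}. Your packaging via the auxiliary intersection graph $\mathcal{H}$ is a slightly cleaner formalization of the identical idea, so no gap remains.
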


\begin{proof}[Proof of~\cref{lemma:recovery pattern partition}]
Let \( \PP^1, \dots, \PP^L \) and their corresponding support sets \( \mathcal{S}^1, \dots, \mathcal{S}^L \) be the outputs of Algorithm~\ref{alg:generation of recovery patterns}. We need to show that the recovery patterns \( \PP^1, \dots, \PP^L \) are pairwise disjoint.

Assume, for the sake of contradiction, that there exists a summation \( \omega \in \PP^1 \cap \PP^2 \). Then, since \( \omega \) appears in both patterns, we must have \( \supp(\omega) \subseteq \mathcal{S}^1 \cap \mathcal{S}^2 \), implying \( \mathcal{S}^1 \cap \mathcal{S}^2 \ne \varnothing \). We now establish the following claim.

\begin{claim}\label{claim:P1P2}
\(\PP^1 = \PP^2\).
\end{claim}

\begin{poc}
We show that \(\PP^1 \subseteq \PP^2\); the reverse inclusion follows symmetrically.

By the construction in Algorithm~\ref{alg:generation of recovery patterns} (lines 5–8), the pattern \(\PP^1\) is built by successively including all summations whose supports intersect with the growing support set, starting from \(\mathcal{S}^1 = \{a_1\}\). Thus, there exists a chain of summations \(\omega_1, \omega_2, \dots, \omega_m = \omega \in \PP^1\), such that:
\begin{itemize}
    \item \(a_1 \in \supp(\omega_1)\),
    \item \(\supp(\omega_i) \cap \supp(\omega_{i+1}) \ne \varnothing\) for all \(i \in [m-1]\).
\end{itemize}
Since \(\omega = \omega_m \in \PP^2\), and Algorithm~\ref{alg:generation of recovery patterns} applies the same support-propagation rule, it follows inductively that all previous \(\omega_i\) must also be included in \(\PP^2\). In particular, \(\omega_1 \in \PP^2\), so \(a_1 \in \mathcal{S}^2\). As a result, by the construction of the algorithm, all of \(\PP^1\) must be contained in \(\PP^2\). Hence, \(\PP^1 \subseteq \PP^2\). This completes the proof.
\end{poc}

However, \cref{claim:P1P2} leads to a contradiction. Since both \( a_1 \in \mathcal{S}^1 \) and \( a_2 \in \mathcal{S}^2 \), and \( \PP^1 = \PP^2 \), it follows that a single recovery pattern \( \PP^1 \) contains two distinct sub-files of the desired file. This contradicts Property~(3) of the independence property in~\cref{def:independence}, which requires that each sub-file of the desired file appears in exactly one summation in \( \QQ \). As a result, \( a_1 \) cannot be recovered as the sum of all summations in \( \PP^1 \), leading to a contradiction.

Therefore, the recovery patterns \(\PP^1, \dots, \PP^L\) must be pairwise disjoint. It follows that the query set \(\QQ\) can be partitioned as
\[
\QQ = \left(\bigsqcup_{j = 1}^L \PP^j\right) \sqcup \mathcal{D},
\]
where \(\mathcal{D} := \QQ \setminus \left(\bigcup_{j=1}^L \PP^j\right)\) consists of side information that is not involved in the recovery of the desired sub-files. This finishes the proof of~\cref{lemma:recovery pattern partition}.
\end{proof}

\subsubsection{Relation between deterministic PIR schemes and probabilistic PIR schemes}
Recall from~\cref{def:Recovery Pattern} that a recovery pattern is defined as a collection containing at most one summation from each query set \(\QQ_i\). For a recovery pattern \(\mathcal{P}\), if the request to server \(S_i\) is represented by a (possibly empty) set \(q_i\), then we write \(\mathcal{P} = (q_1, q_2, \dots, q_N)\), or more compactly, \(\mathcal{P} = (q_i)_{i \in [N]}\). For instance, in~\cref{table:adjusted scheme over k3}, there exists a recovery pattern \(\mathcal{P}\) where the user requests \(\{a_2 + b_2\}\) from \(S_1\), nothing from \(S_2\), and \(\{b_2\}\) from \(S_3\). This pattern is denoted by \(\mathcal{P} = (\{a_2 + b_2\}, \varnothing, \{b_2\})\).

To formally describe how recovery patterns cover or partition the query sets in a PIR scheme, we introduce the following definitions of union and intersection for recovery patterns.

\begin{defn}
Let \(\Pi\) be a deterministic PIR scheme with query sets \(\QQ = \{\QQ_1, \dots, \QQ_N\}\). Suppose \(\{\mathcal{P}^j = (q_i^j)_{i \in [N]}\}_{j \in \Lambda}\) is a family of recovery patterns indexed by a set \(\Lambda\). The \emph{union} of these recovery patterns is defined as
\[
\bigcup_{j \in \Lambda} \mathcal{P}^j := \left( \bigcup_{j \in \Lambda} q_i^j \right)_{i \in [N]}.
\]
That is, for each server \(S_i\), the union of requests consists of all summations requested from \(S_i\) across the entire family of recovery patterns.

Similarly, the \emph{intersection} of recovery patterns is defined as
\[
\bigcap_{j \in \Lambda} \mathcal{P}^j := \left( \bigcap_{j \in \Lambda} q_i^j \right)_{i \in [N]}.
\]

We say that the query set \(\QQ = (\QQ_1, \dots, \QQ_N)\) is \emph{partitioned} by the recovery patterns \(\{\mathcal{P}^j\}_{j \in \Lambda}\) if, for every \(i \in [N]\), the query set \(\QQ_i\) is the disjoint union of the corresponding requests:
\[
\QQ_i = \bigsqcup_{j \in \Lambda} q_i^j.
\]
\end{defn}

As an example, the queries in the PIR scheme shown in~\cref{table:adjusted scheme over k3} can be partitioned into six disjoint recovery patterns.

The following lemma formalizes the connection between deterministic and probabilistic PIR schemes. Specifically, we consider a deterministic PIR scheme with subpacketization level \(L\), where the queries are partitioned into \(L\) recovery patterns—each enabling recovery of a distinct sub-file of the desired file using at most one summation per server.

\begin{lemma}\label{lemma:transform probabilistic}
Let $\Pi$ be a deterministic PIR scheme over a graph $G$ with subpacketization $L$ and rate $R$, where every summation in the queries involves sub-files from distinct files. Suppose that, for each server $S_i$, the entropy $H(A_i)$ of its response equals the number of summations in $A_i$. If the query set $\QQ$ across all servers can be partitioned into $L$ recovery patterns, then there exists a probabilistic PIR scheme $\Pi'$ over $G$ with subpacketization $1$ and the same rate $R$.
\end{lemma}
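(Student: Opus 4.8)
The plan is to turn the $L$ recovery patterns of $\Pi$ into the ``universal query set'' of a probabilistic scheme, where the subpacketized file $W$ (of $L$ sub-files) is replaced by a single symbol, and the user samples one recovery pattern uniformly at random to decide which queries to actually send. Concretely, write the partition of $\QQ$ as $\QQ = \bigsqcup_{j=1}^{L} \mathcal{P}^j$ with $\mathcal{P}^j = (q_i^j)_{i\in[N]}$, where each $\mathcal{P}^j$ recovers the sub-file $a_j$ of the desired file $W_\theta$ using at most one summation per server. In $\Pi'$ every file is a single symbol over the same alphabet; the user privately picks $R$ uniform on $[L]$ and, for the demand $\theta$, sends to server $S_i$ exactly the ``collapsed'' version of the (at most one) summation in $q_i^{R}$ — i.e.\ replace each sub-file $(W_{i,j})_{\sigma(s)}$ occurring in that summation by the whole file $W_{i,j}$ (keeping the same $\pm$ signs, using the random-sign trick of~\cref{subsection:Probabilistic General Graph} over a general abelian group). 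Server $S_i$ answers with the corresponding single-symbol linear combination. Since $\mathcal{P}^{R}$ was a valid recovery pattern for $a_{R}$, summing the received answers yields $W_\theta$ in the collapsed system, so $\Pi'$ is reliable.

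The first key step is to verify the \textbf{rate}. In $\Pi$, by the hypothesis $H(A_i)$ equals the number of summations in $Q_i$, and summing over $i$ the total is $\sum_i H(A_i) = L/R$. Since $\QQ$ is partitioned into $L$ recovery patterns, the average number of summations a recovery pattern places at server $S_i$ is $|\QQ_i|/L$, and summing over $i$ the expected total number of symbols downloaded in $\Pi'$ is $\tfrac{1}{L}\sum_i |\QQ_i| = \tfrac{1}{L}\sum_i H(A_i) = \tfrac{1}{R}$. With $H(W)=1$ in $\Pi'$, the rate of $\Pi'$ is $1/(1/R) = R$, as claimed. (One should phrase the download cost of a probabilistic scheme as the expected number of downloaded symbols, consistent with the paper's conventions.)

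The second, and \textbf{main}, step is \textbf{privacy} of $\Pi'$: for each fixed server $S_i$, the distribution of the query it receives must not depend on $\theta$. Here one uses that in $\Pi$ the full query $\QQ_i$ is, by~\cref{lemma:privacy theta}-type reasoning (or directly by the privacy of $\Pi$), message-symmetric, so the multiset of ``summation types'' appearing in $\QQ_i$ is independent of $\theta$; partitioning $\QQ_i = \bigsqcup_j q_i^j$ and picking $j=R$ uniformly means the query $S_i$ sees in $\Pi'$ is a uniformly random element of $\QQ_i$ together with ``empty'' with the right probability — and the induced distribution on (collapsed) summation types is exactly the type-frequency profile of $\QQ_i$ normalized by $L$, which is $\theta$-independent. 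One then invokes the independence of the random signs / permutations (still privately chosen and uniform) to conclude that the full query distribution to $S_i$, including coefficients, is independent of $\theta$; combined with $H(A_i\mid Q_i,\WW_{S_i})=0$ this gives~\eqref{eq:privacy}. The delicate point to get right is that collapsing distinct sub-files of the \emph{same} file to one symbol does not create a collision that leaks information — this is exactly where condition (1) of the independence property (each summation sums sub-files from \emph{distinct} files) and the hypothesis ``every summation involves sub-files from distinct files'' are used, so that each collapsed summation is a genuine $k$-sum of $k$ distinct files and its ``type'' is well defined. I expect assembling this privacy argument cleanly — making precise the claim that the per-server frequency profile of summation types is $\theta$-invariant and that the random signs restore full symmetry of coefficients — to be the crux of the proof; the reliability and rate computations are essentially bookkeeping.
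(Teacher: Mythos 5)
Your proposal is correct and follows essentially the same route as the paper's proof: uniformly sample one of the $L$ recovery patterns, collapse each sub-file summation to the corresponding full-file summation, and observe that because the patterns partition $\QQ_i$, the induced query distribution at server $S_i$ is exactly the ($\theta$-independent) frequency profile of summation types in $\QQ_i$ normalized by $L$, with $\Pr(A_i'=0)=(L-H(A_i))/L$ giving the rate identity $R(\Pi')=R(\Pi)$. The paper makes the same privacy argument via the counting identity $N(\cdot)=M(\cdot)$ between pattern occurrences and query occurrences of each $k$-sum type, so no substantive difference remains.
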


\begin{proof}[Proof of~\cref{lemma:transform probabilistic}]
Assume the $L$ recovery patterns are denoted by $\PP^j = (q_i^j)_{i \in [N]}$ for $j \in [L]$. For any file $W$, let $(W)_s$ denote its $s$-th sub-file. For each $i$ and $j$, suppose $q_i^j$ corresponds to a summation of $k(i,j)$ sub-files from $k$ distinct files. Then:
\begin{itemize}
    \item If $k(i,j) = 0$, then $q_i^j = \varnothing$.
    \item If $k(i,j) \ge 1$, there exist $t_1, \dots, t_k \in [N] \setminus \{i\}$, $s_1, \dots, s_k \in [L]$, and permutations $\sigma_1, \dots, \sigma_k$ of $[L]$ such that
    \[
    q_i^j = \left\{ (W_{i,t_1})_{\sigma_1(s_1)} + \cdots + (W_{i,t_k})_{\sigma_k(s_k)} \right\}.
    \]
\end{itemize}

We now construct a probabilistic PIR scheme $\Pi'$ with subpacketization $1$. The user selects an index $j \in [L]$ uniformly at random and sends queries according to the recovery pattern $\PP^j$, with the following simplification:
\begin{itemize}
    \item If $k(i,j) = 0$, the user sends a null query (or requests zero) to server $S_i$.
    \item If $k(i,j) \ge 1$, the user requests $W_{i,t_1} + \cdots + W_{i,t_k}$ from $S_i$.
\end{itemize}

This replaces sub-file summations with corresponding full-file summations, using the structure of $\PP^j$. Let $A_i$ and $A_i'$ denote the responses from $S_i$ in schemes $\Pi$ and $\Pi'$, respectively.

\begin{claim}
$\Pi'$ is a valid probabilistic PIR scheme.
\end{claim}

\begin{poc}
We verify both correctness and privacy.

\paragraph{Correctness.} Each recovery pattern $\PP^j$ in $\Pi$ recovers one sub-file of the desired file. In $\Pi'$, the user retrieves the corresponding full-file summation. Since the summation structure is preserved, the user can recover the desired file by applying the randomly chosen recovery pattern.

\paragraph{Privacy.} For any server $S_i$, the response $A_i'$ is independent of the desired file index $\theta$. Suppose $A_i' = W_{i,t_1} + \cdots + W_{i,t_k}$. Let $N(W_{i,t_1} + \cdots + W_{i,t_k})$ be the number of recovery patterns $\PP^j$ in which $S_i$ returns a $k$-sum of type $W_{i,t_1} + \cdots + W_{i,t_k}$ in the deterministic PIR scheme $\Pi$. Then,
\[
\Pr(A_i' = W_{i,t_1} + \cdots + W_{i,t_k}) = \frac{1}{L} \cdot N(W_{i,t_1} + \cdots + W_{i,t_k}).
\]
Let $M(W_{i,t_1} + \cdots + W_{i,t_k})$ be the number of times this $k$-sum appears in $A_i$ of scheme $\Pi$. Since the recovery patterns partition the queries,
\[
N(W_{i,t_1} + \cdots + W_{i,t_k}) = M(W_{i,t_1} + \cdots + W_{i,t_k}).
\]
Thus,
\[
\Pr(A_i' = W_{i,t_1} + \cdots + W_{i,t_k}) = \frac{1}{L} \cdot M(W_{i,t_1} + \cdots + W_{i,t_k}),
\]
which is independent of $\theta$, as required by privacy.

Now consider the case $A_i' = 0$. This occurs precisely when $q_i^j = \varnothing$ in the chosen recovery pattern $\PP^j$. Since $A_i$ contains $H(A_i)$ summations, the number of patterns in which $S_i$ is not queried is $L - H(A_i)$, yielding
\begin{equation}\label{eq:lemma:transform probabilistic 1}
\Pr(A_i' = 0) = \frac{L - H(A_i)}{L},
\end{equation}
which is also independent of $\theta$.
\end{poc}

\paragraph{Rate.} Finally, we show that $\Pi'$ has the same rate as $\Pi$:
\begin{align}
R(\Pi') &= \frac{1}{\sum_{i=1}^N H(A_i')} \notag \\
&= \frac{1}{\sum_{i=1}^N (1 - \Pr(A_i' = 0))} \label{eq:lemma:transform probabilistic 2} \\
&= \frac{1}{\sum_{i=1}^N \left(1 - \frac{L - H(A_i)}{L} \right)} \label{eq:lemma:transform probabilistic 3} \\
&= \frac{L}{\sum_{i=1}^N H(A_i)} = R(\Pi), \notag
\end{align}
where~\eqref{eq:lemma:transform probabilistic 2} computes the expected answer length, and~\eqref{eq:lemma:transform probabilistic 3} uses~\eqref{eq:lemma:transform probabilistic 1}.
\end{proof}

The PIR scheme over $K_3$ presented in~\cref{table:adjusted scheme over k3} serves as a simple example illustrating the transformation from a deterministic PIR scheme to its probabilistic counterpart. The corresponding probabilistic PIR scheme, obtained by applying~\cref{lemma:transform probabilistic}, is shown in~\cref{table:transform k3}.

\begin{table}[ht]
\centering
\begin{tabular}{|c|c|c|c|}
\hline
Probability & $S_1$ & $S_2$ & $S_3$ \\ \hline
$1/6$ & $A$ & $0$ & $0$ \\ \hline
$1/6$ & $0$ & $A$ & $0$ \\ \hline
$1/6$ & $A + B$ & $0$ & $B$ \\ \hline
$1/6$ & $0$ & $A + C$ & $C$ \\ \hline
$1/6$ & $A + B$ & $C$ & $B + C$ \\ \hline
$1/6$ & $B$ & $A + C$ & $B + C$ \\ \hline
\end{tabular}
\caption{A probabilistic PIR scheme over $K_3$, for $\theta=A$, obtained by transforming the deterministic scheme in~\cref{table:adjusted scheme over k3} using~\cref{lemma:transform probabilistic}.}
\label{table:transform k3}
\end{table}

Furthermore, the following lemma extends~\cref{lemma:transform probabilistic} by incorporating the case where the deterministic PIR scheme includes additional side information.

\begin{lemma}\label{lemma:transform probabilistic side}
Let $\Pi$ be a deterministic PIR scheme over a graph $G$ with subpacketization $L$ and rate $R$, such that every summation in the queries is composed of sub-files from distinct files. Moreover, assume that the entropy $H(A_i)$ of the answer from each server $S_i$ equals the number of summations in $A_i$, for all $i \in [N]$. Suppose the queries $\QQ = (\QQ_1, \dots, \QQ_N)$ can be written as a disjoint union of $L$ recovery patterns and some additional side information. Then, if $H(A_i) \le L$ for all $i \in [N]$, there exists a probabilistic PIR scheme $\Pi'$ over $G$ with subpacketization $1$ and rate $R$.
\end{lemma}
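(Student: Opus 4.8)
The plan is to reduce \cref{lemma:transform probabilistic side} to \cref{lemma:transform probabilistic} by first \emph{absorbing} the residual side information into the $L$ recovery patterns, so that the enlarged patterns again partition the entire query set, and then running the transformation of \cref{lemma:transform probabilistic} essentially verbatim. Concretely, write $\QQ = \left( \bigsqcup_{j=1}^{L} \PP^j \right) \sqcup \mathcal{D}$, where $\PP^j = (q_i^j)_{i\in[N]}$ are the $L$ recovery patterns and $\mathcal{D}$ is the leftover side information. For each server $S_i$, let $\mathcal{D}_i$ be the multiset of side-information summations addressed to $S_i$, and let $m_i := \#\{\, j\in[L] : q_i^j \neq \varnothing \,\}$. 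Since the nonempty $q_i^j$ together with the summations of $\mathcal{D}_i$ enumerate exactly the summations of $\QQ_i$, and $H(A_i)$ equals the number of such summations, we get $|\mathcal{D}_i| = H(A_i) - m_i$. The hypothesis $H(A_i) \le L$ then yields $|\mathcal{D}_i| = H(A_i) - m_i \le L - m_i$, and $L - m_i$ is precisely the number of indices $j$ with $q_i^j = \varnothing$. Hence, independently for each $i$, I would injectively assign the summations of $\mathcal{D}_i$ to distinct such empty slots, obtaining augmented patterns $\widetilde{\PP}^j = (\widetilde q_i^j)_{i\in[N]}$ with $\widetilde q_i^j \supseteq q_i^j$, still containing at most one summation per server, and now satisfying $\QQ_i = \bigsqcup_{j=1}^{L} \widetilde q_i^j$ for every $i$.

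With this partition in hand, I would define $\Pi'$ exactly as in the proof of \cref{lemma:transform probabilistic}: the user samples $j \in [L]$ uniformly at random, and for each server $S_i$ replaces the (unique, if present) sub-file summation $\widetilde q_i^j = \{(W_{i,t_1})_{\sigma_1(s_1)} + \cdots + (W_{i,t_k})_{\sigma_k(s_k)}\}$ by the full-file summation $W_{i,t_1} + \cdots + W_{i,t_k}$, sending a null query when $\widetilde q_i^j = \varnothing$. Correctness is immediate: the un-augmented pattern $\PP^j \subseteq \widetilde{\PP}^j$ already recovers one full file of $W_\theta$ via the same linear combination (this is the correctness argument of \cref{lemma:transform probabilistic}), and the extra summands lying in $\widetilde{\PP}^j \setminus \PP^j$ are simply discarded. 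For the rate, the number of slots $j$ in which $S_i$ is queried equals $m_i + |\mathcal{D}_i| = H(A_i)$, so, using the same normalization convention as in \cref{lemma:transform probabilistic}, $H(A_i') = \Pr(A_i' \neq 0) = H(A_i)/L$, and thus
\[
R(\Pi') = \frac{1}{\sum_{i=1}^{N} H(A_i')} = \frac{L}{\sum_{i=1}^{N} H(A_i)} = R(\Pi).
\]
For privacy, I would observe that privacy of $\Pi$ forces the multiplicity $M_i(\tau)$ of each summation \emph{type} $\tau = W_{i,t_1} + \cdots + W_{i,t_k}$ in $Q_i$ (recording which files, not which sub-files, are combined) to be independent of $\theta$, since the type multiset of $Q_i$ is a deterministic function of $Q_i$ and hence has the same distribution for every $\theta$. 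Because $\{\widetilde q_i^j\}_j$ partitions $\QQ_i$, server $S_i$ returns type $\tau$ in $\Pi'$ with probability $M_i(\tau)/L$ and a null response with probability $(L - H(A_i))/L$, where $H(A_i) = \sum_\tau M_i(\tau)$; both probabilities are independent of $\theta$, which establishes privacy.

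The main obstacle I anticipate lies entirely in the first step: one must verify carefully that the counting identity $|\mathcal{D}_i| = H(A_i) - m_i$ holds and that $H(A_i) \le L$ is exactly the condition making the injection of $\mathcal{D}_i$ into the empty slots of $\{\PP^j\}_j$ possible — this is where the hypothesis enters, and it is also the reason the star scheme of~\cite{Sadeh2023bound}, for which $H(A_i) > L$ at the center vertex, is excluded. The remaining pieces — correctness, the rate computation, and the $\theta$-independence of $M_i(\tau)$ — are light adaptations of the corresponding arguments in \cref{lemma:transform probabilistic}, so I do not expect them to present genuine difficulty.
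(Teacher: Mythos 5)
Your proposal is correct and follows essentially the same route as the paper: the paper likewise absorbs each side-information summation of $\mathcal{D}_i$ into an empty slot $q_i^j=\varnothing$ of the recovery patterns, justifies feasibility by the identical count $L-F_i \ge H(A_i)-F_i$ (your $m_i$ is the paper's $F_i$), and then invokes the argument of \cref{lemma:transform probabilistic} for correctness, privacy, and rate. Your explicit verification that the type multiplicities $M_i(\tau)$ are $\theta$-independent is a slightly more detailed rendering of the privacy step the paper delegates to the earlier lemma, but it is not a different approach.
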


\begin{proof}[Proof of~\cref{lemma:transform probabilistic side}]
We begin by describing the construction of the corresponding probabilistic PIR scheme $\Pi'$. Denote the $L$ recovery patterns as $\PP^j = (q_i^j)_{i \in [N]}$ for $j \in [L]$. As in~\cref{lemma:transform probabilistic}, for a uniformly random index $j \in [L]$, we construct the corresponding queries in $\Pi'$ by transforming $\PP^j$: for each $i$, if $q_i^j$ is non-empty, we ignore sub-file indices and request the corresponding full-file summation.

However, since the original scheme $\Pi$ contains some additional side information, simply discarding it could violate privacy in $\Pi'$. To preserve privacy, we modify the construction as follows.

Let $\QQ_i^j$ denote the query sent to server $S_i$ in round $j$ of $\Pi'$ generated from the recovery pattern $\PP^j$. Note that for some $j$, it may be that $\QQ_i^j = \varnothing$. We replace such empty queries with side information from $\QQ_i$ in the original scheme. Specifically, suppose the side information for $S_i$ includes a $k$-sum of type $W_{i,j_1} + \dots + W_{i,j_k}$, for some $k \ge 1$ and distinct $j_1, \dots, j_k \in [N] \setminus \{i\}$. We randomly select some $j$ such that $\QQ_i^j = \varnothing$ and replace $\QQ_i^j$ with the set $\{W_{i,j_1} + \dots + W_{i,j_k}\}$.

If there are multiple such side information terms, we repeat this process for each, replacing as many empty $\QQ_i^j$ as needed.

To confirm that this replacement is always possible, let $F_i$ denote the number of indices $j$ for which $\QQ_i^j \ne \varnothing$, i.e., the number of recovery pattern contributions to $A_i$. Then:
\begin{itemize}
    \item The number of remaining (empty) queries is $L - F_i$.
    \item The number of summations from side information is $H(A_i) - F_i$.
\end{itemize}
Since $H(A_i) \le L$ by assumption, we have:
\[
L - F_i \ge H(A_i) - F_i,
\]
so the replacement is feasible.

Finally, the reliability, privacy, and rate analysis of the resulting probabilistic scheme $\Pi'$ follow directly from the arguments in~\cref{lemma:transform probabilistic}, since all queries are now drawn from the union of recovery patterns and redistributed side information, preserving both correctness and distributional indistinguishability with respect to the desired file index.
\end{proof}

Next, we illustrate an example of~\cref{lemma:transform probabilistic side}. 

Consider the PIR scheme over 4-star graph shown in~\cref{table:4 star adjusted}, which is a disjoint union of 5 recovery patterns and some side information. Observe that the only side information in~\cref{table:4 star adjusted} is $b_3 + c_3 + d_3$ from server $S_1$. By~\cref{lemma:transform probabilistic side}, we can construct a probabilistic PIR scheme by replacing an empty query in Recovery Patterns 4 or 5 at $S_1$ with this side information. The resulting probabilistic PIR scheme is shown in~\cref{table:4 star probabilistic}.

\begin{table}[ht]
\centering
\begin{tabular}{|c|c|c|c|c|c|}
\hline
Probability & $S_1$ & $S_2$ & $S_3$ & $S_4$ & $S_5$ \\ \hline
$1/5$ & $A + B + C$ & $0$ & $B$ & $C$ & $0$ \\ \hline
$1/5$ & $A + B + D$ & $0$ & $B$ & $0$ & $D$ \\ \hline
$1/5$ & $A + C + D$ & $0$ & $0$ & $C$ & $D$ \\ \hline
$1/5$ & $B + C + D$ & $A$ & $0$ & $0$ & $0$ \\ \hline
$1/5$ & $0$         & $A$ & $0$ & $0$ & $0$ \\ \hline
\end{tabular}
\caption{Answer table of the probabilistic PIR scheme over the 4-star graph, transformed from the scheme in~\cref{table:4 star adjusted}, for $\theta = A$.}
\label{table:4 star probabilistic}
\end{table}

Combining~\cref{lemma:transform probabilistic side} and~\cref{lemma:Behavior of Algorithm 1}, we complete the proof of~\cref{thm:transform to probabilistic}.

\subsection{Probabilistic PIR schemes over general graphs}
\subsubsection{Capacities for probabilistic PIR schemes: Proof of~\cref{thm:general}}\label{subsection:Probabilistic General Graph}
We now present a probabilistic PIR scheme that achieves the retrieval rate stated in~\cref{thm:general}. This scheme applies to any (multi-)graph, requires at most one summation from each server, and is implemented through a simple randomized strategy. Moreover, we identify scenarios in which it provides improved performance over existing constructions.

Consider a (multi-)graph-based replication system with \( N \) servers \( S_1, S_2, \dots, S_N \), and \( K \) files \( W_1, W_2, \dots, W_K \). Let \( G = (V, E) \) be an undirected graph representing this replication system, where \( |V| = N \) and \( |E| = K \). Each edge in \( G \) corresponds to a file, so we label the edges as \( E = \{e_1, e_2, \dots, e_K\} \), where \( e_k \) corresponds to file \( W_k \). For a vertex (server) \( S_i \in V \), with \( i \in [N] \), define \( N_i \) as the set of indices of all edges incident to \( S_i \). The degree of server \( S_i \) is then \( d(i) = |N_i| \).

Suppose the user wishes to retrieve file \( W_\theta \) for some \( \theta \in [K] \), which is stored on servers \( S_{\theta_1} \) and \( S_{\theta_2} \), where \( \theta_1, \theta_2 \in [N] \). Without loss of generality, we assume \( \theta_1 < \theta_2 \).

We begin by outlining the core strategy of our PIR scheme. The user generates two independent binary vectors of length \( K \):
\[
\boldsymbol{\mu} = (\mu_1, \mu_2, \dots, \mu_K) \quad \text{and} \quad \boldsymbol{\lambda} = (\lambda_1, \lambda_2, \dots, \lambda_K),
\]
where each entry is drawn independently and uniformly at random from \(\{0,1\}\). Intuitively, the vector \(\boldsymbol{\mu}\) specifies which files participate in the queries, while \(\boldsymbol{\lambda}\) determines the sign (positive or negative) assigned to each file within the queries.

For each \( i \in [N] \) and each \( j \in [K] \), if file \( W_j \) is stored on server \( S_i \) (i.e., \( j \in N_i \)), we define a sign value \( \epsilon_{i,j} \in \{-1, +1\} \) based on the edge \( e_j \) and a uniformly random bit \( \lambda_j \in \{0,1\} \) associated with file \( W_j \).

Fix any \( j \in [K] \), and suppose file \( W_j \) is stored on servers \( S_{i_1} \) and \( S_{i_2} \), where \( i_1 < i_2 \). We define
\[
\epsilon_{i_1,j} = (-1)^{\lambda_j}, \quad \text{and} \quad \epsilon_{i_2,j} = (-1)^{\lambda_j + 1}.
\]
This ensures that the contributions from the two servers cancel out:
\begin{equation}
    \sum_{i \in e_j} \epsilon_{i,j} = \epsilon_{i_1,j} + \epsilon_{i_2,j} = 0.
    \label{eq:epsilon_summation_general_PIR_scheme}
\end{equation}

Thus, the values \( \epsilon_{i,j} \) are now fully defined for all pairs \( (i, j) \) with \( j \in N_i \), including the case \( j = \theta \). Since each \( \lambda_j \) is independently and uniformly chosen from \( \{0,1\} \), the signs \( \epsilon_{i,j} \) are also independent across different \( j \). In particular, for any fixed server \( i \), the collection \( \{\epsilon_{i,j} : j \in N_i\} \) consists of mutually independent random variables, each uniformly distributed over \( \{-1, +1\} \).

Based on the vectors \( \boldsymbol{\mu} \), \( \boldsymbol{\lambda} \), and the sign values \( \epsilon_{i,j} \), the user generates the queries according to the following strategy:

\begin{itemize}
    \item For each server \( S_i \) with \( i \in [N] \setminus \{\theta_1, \theta_2\} \), the user sends the linear combination:
    \[
        \sum_{j \in N_i} \mu_j \epsilon_{i,j} \cdot W_j.
    \]
    
    \item For server \( S_{\theta_1} \), the user sends:
    \[
        \mu_{\theta} \epsilon_{\theta_1,\theta} \cdot W_{\theta} + \sum_{j \in N_{\theta_1} \setminus \{\theta\}} \mu_j \epsilon_{\theta_1,j} \cdot W_j.
    \]
    
    \item For server \( S_{\theta_2} \), the user sends:
    \[
        (1 - \mu_{\theta}) \epsilon_{\theta_2,\theta} \cdot W_{\theta} + \sum_{j \in N_{\theta_2} \setminus \{\theta\}} \mu_j \epsilon_{\theta_2,j} \cdot W_j.
    \]
\end{itemize}
In this construction, the value \( \mu_j \) determines whether a non-desired file \( W_j \) is included in the queries. If \( \mu_j = 1 \), then \( W_j \) appears in the queries to both of the servers storing it, but with opposite signs due to the definition of \( \epsilon_{i,j} \), ensuring cancellation in the overall response. For the desired file \( W_{\theta} \), only one of the two servers holding it includes it in its response. The bit \( \mu_{\theta} \) determines which server it is: if \( \mu_{\theta} = 1 \), then \( W_{\theta} \) appears only in the query to \( S_{\theta_1} \); otherwise, it appears only in the query to \( S_{\theta_2} \).

Next, we verify that the constructed queries form a valid PIR scheme and compute the corresponding retrieval rate.

\paragraph{Reliability:} Consider the sum of all server responses:
\begin{align}
    \sum_{i \in [N]} A_i 
    &= \mu_{\theta} \epsilon_{\theta_1,\theta} \cdot W_{\theta} 
    + \sum_{j \in N_{\theta_1} \setminus \{\theta\}} \mu_j \epsilon_{\theta_1,j} \cdot W_j 
    + (1 - \mu_{\theta}) \epsilon_{\theta_2,\theta} \cdot W_{\theta} 
    + \sum_{j \in N_{\theta_2} \setminus \{\theta\}} \mu_j \epsilon_{\theta_2,j} \cdot W_j \notag \\
    &\quad + \sum_{i \in [N] \setminus \{\theta_1, \theta_2\}} \sum_{j \in N_i} \mu_j \epsilon_{i,j} \cdot W_j \notag \\
    &= \mu_{\theta} \epsilon_{\theta_1,\theta} \cdot W_{\theta} 
    + (1 - \mu_{\theta}) \epsilon_{\theta_2,\theta} \cdot W_{\theta} 
    + \sum_{j \in [K] \setminus \{\theta\}} \sum_{i \in e_j} \mu_j \epsilon_{i,j} \cdot W_j \label{eq:general_scheme_reliability_1} \\
    &= (1 - 2\mu_{\theta}) \epsilon_{\theta_2,\theta} \cdot W_{\theta} 
    + \mu_{\theta} (\epsilon_{\theta_1,\theta} + \epsilon_{\theta_2,\theta}) \cdot W_{\theta} 
    + \sum_{j \in [K] \setminus \{\theta\}} \mu_j W_j \sum_{i \in e_j} \epsilon_{i,j} \notag \\
    &= (1 - 2\mu_{\theta}) \epsilon_{\theta_2,\theta} \cdot W_{\theta}, \label{eq:general_scheme_reliability_2}
\end{align}
where~\eqref{eq:general_scheme_reliability_1} follows by regrouping terms, and~\eqref{eq:general_scheme_reliability_2} uses the identity from~\eqref{eq:epsilon_summation_general_PIR_scheme}, which states that \( \sum_{i \in e_j} \epsilon_{i,j} = 0 \) for every \( j \in [K] \).

Since both \( 1 - 2\mu_{\theta} \) and \( \epsilon_{\theta_2,\theta} \) take values in \( \{-1, +1\} \), the user can recover \( W_{\theta} \) up to a sign. As the user knows both values, the recovery is exact, thus ensuring the reliability of the PIR scheme.

\paragraph{Privacy:} Fix some \( i \in [N] \). We need to show that the distribution of the response \( A_i \) is independent of the choice of the desired file \( \theta \in [K] \). Without loss of generality, suppose there exists a subset \( \Lambda \subseteq N_i \) such that
\[
A_i = \sum_{j \in \Lambda} \hat{\epsilon}_{i,j} W_j,
\]
for some signs \( \hat{\epsilon}_{i,j} \in \{-1, +1\} \). We now compute the probability of this event.

First, consider the case where \( \theta \notin N_i \). Then, $\Pr\left(A_i = \sum_{j \in \Lambda} \hat{\epsilon}_{i,j} W_j \right)$ is equal to
\begin{align}
    &\Pr\left(
        (\mu_j = 0 \text{ for all } j \in N_i \setminus \Lambda) ~\wedge~
        (\mu_j = 1 \text{ for all } j \in \Lambda) ~\wedge~
        (\epsilon_{i,j} = \hat{\epsilon}_{i,j} \text{ for all } j \in \Lambda)
    \right) \notag \\
    &= \prod_{j \in N_i \setminus \Lambda} \Pr(\mu_j = 0) 
    \cdot\prod_{j \in \Lambda} \Pr(\mu_j = 1) 
    \cdot\prod_{j \in \Lambda} \Pr(\epsilon_{i,j} = \hat{\epsilon}_{i,j}) \label{eq:general scheme privacy 1} \\
    &= \left(\frac{1}{2}\right)^{|N_i \setminus \Lambda|} 
    \cdot \left(\frac{1}{2}\right)^{|\Lambda|} 
    \cdot \left(\frac{1}{2}\right)^{|\Lambda|} \label{eq:general scheme privacy 2} \\
    &= \frac{1}{2^{d(i) + |\Lambda|}}, \notag
\end{align}
where~\eqref{eq:general scheme privacy 1} follows from the independence of the bits \( \mu_j \) and the fact that each \( \epsilon_{i,j} \) depends only on the independent random bit \( \lambda_j \), which is independent of \( \mu_j \). Equation~\eqref{eq:general scheme privacy 2} uses that both \( \mu_j \) and \( \lambda_j \) are sampled uniformly and independently from \( \{0,1\} \).

Now consider the case \( \theta \in N_i \). The only difference is that the query involving \( W_\theta \) may use either \( \mu_\theta \) or \( 1 - \mu_\theta \), depending on which server sends the desired term. However, since \( \mu_\theta \sim \text{Unif}(\{0,1\}) \), this does not affect the distribution of \( A_i \). Hence, for any fixed server \( S_i \), the distribution of \( A_i \) is independent of the identity of \( \theta \), which completes the proof of privacy.

\paragraph{Rate:} For any \( i \in [N] \setminus \{\theta_1, \theta_2\} \), we have
\begin{align}
    \Pr(A_i = 0) &= \Pr(\mu_j = 0 \text{ for all } j \in N_i) \notag \\
    &= \prod_{j \in N_i} \Pr(\mu_j = 0) \notag \\
    &= \frac{1}{2^{d(i)}},
\end{align}
where \( d(i) = |N_i| \) is the degree of server \( S_i \) in \( G \).

For \( i \in \{\theta_1, \theta_2\} \), the same argument applies since \( 1 - \mu_\theta \) is also uniformly random and independent of \( \mu_j \) for \( j \in N_i \setminus \{\theta\} \). Moreover, observe that whenever \( A_i \ne 0 \), the number of bits of $A_i$ is \( 1 \), which equals the subpacketization of scheme. Therefore, the rate of the PIR scheme is given by
\begin{align*}
    R(\Pi) &= \frac{1}{\sum_{i \in [N]} H(A_i)} \\
    &= \frac{L}{\sum_{i \in [N]} \Pr(A_i \ne 0) \cdot 1} \\
    &= \frac{1}{\sum_{i \in [N]} \left(1 - \frac{1}{2^{d(i)}}\right)}.
\end{align*}

\subsubsection{Examples and comparison}
We now illustrate the proposed scheme with a simple example to highlight its construction and performance. For simplicity, we use $\theta$ to represent the target file.

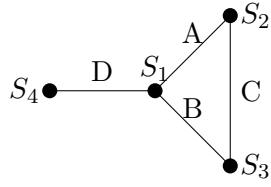
\begin{figure}
    \centering
\begin{tikzpicture}
    \draw[](-1.4,0)--(0,0);
    \draw[](0,0)--(1,1);
    \draw[](0,0)--(1,-1);
    \draw[](1,1)--(1,-1);
    \fill(0,0)circle(.1);
    \fill(-1.4,0)circle(.1);
    \fill(1,1)circle(.1);    
    \fill(1,-1)circle(.1);    
    \node at (0,0)[above]{$S_1$};
    \node at (-1.4,0)[left]{$S_4$};
    \node at (1,1)[right]{$S_2$};
    \node at (1,-1)[right]{$S_3$};
    \node at (-0.7,0)[above]{D};
    \node at (0.5,0.5)[above]{A};
    \node at (0.5,-0.5)[above]{B};
    \node at (1,0)[right]{C};
\end{tikzpicture}
    \caption{An example for general graph}
    \label{fig:genral}
\end{figure}

For example, consider the graph in~\cref{fig:genral}, if $\theta=A,\boldsymbol{\mu}=(1,0,1,0),\boldsymbol{\lambda}=(0,1,0,1)$, the queries of four servers are $\mathcal{Q}_1=A,\mathcal{Q}_2=C,\mathcal{Q}_3=-C,\mathcal{Q}_4=0$; if $\theta=C,\boldsymbol{\mu}=(1,1,0,1),\boldsymbol{\lambda}=(0,0,0,1)$, the queries of four servers are $\mathcal{Q}_1=A+B-D,\mathcal{Q}_2=-A,\mathcal{Q}_3=-B-C,\mathcal{Q}_4=D$.

Next, we will respectively fix the values of $\boldsymbol{\mu}$ and $\boldsymbol{\lambda}$ to demonstrate the roles of these two random vectors in the scheme. For simplicity, we represent the queries to servers \(S_1, S_2, S_3, S_4\) as a vector \((\mathcal{Q}_1, \mathcal{Q}_2, \mathcal{Q}_3, \mathcal{Q}_4)\). Fix the vector \(\boldsymbol{\lambda} = (0, 0, 0, 0)\). In~\cref{table:random demand}, we list all possible choices of \(\boldsymbol{\mu}\) along with the corresponding queries sent to the servers in the PIR scheme over the graph in~\cref{fig:genral}. 

\begin{table}[htbp]
\scriptsize
\centering  
\begin{tabular}{|c|c|c|c|c|}
\hline

$\boldsymbol{\mu}$ & $(0,0,0,0)$ & $(1,0,0,0)$ & $(0,1,0,0)$ & $(0,0,1,0)$ \\ \hline
query & $(0,-A,0,0)$ & $(A,0,0,0)$ & $(B,-A,-B,0)$ & $(0,-A+C,-C,0)$ \\ \hline
$\boldsymbol{\mu} $ & $(0,0,0,1)$ & $(1,1,0,0)$ & $(1,0,1,0)$& $(1,0,0,1)$ \\ \hline
query& $(D,-A,0,-D)$ & $(A+B,0,-B,0)$ & $(A,C,-C,0)$ & $(A+D,0,0,-D)$ \\ \hline
$\boldsymbol{\mu}$ & $(0,1,1,0)$ & $(0,1,0,1)$ & $(0,0,1,1)$ & $(1,1,1,0)$\\ \hline
query & $(B,-A+C,-B-C,0)$ & $(B+D,-A,-B,-D)$ & $(D,-A+C,-C,-D)$ & $(A+B,C,-B-C,0)$ \\ \hline
$\boldsymbol{\mu}$ & $(1,1,0,1)$ & $(1,0,1,1)$ & $(0,1,1,1)$ & $(1,1,1,1)$\\ \hline
query & $(A+B+D,0,-B,-D)$ & $(A+D,C,-C,-D)$ & $(B+D,-A+C,-B-C,-D)$ & $(A+B+D,C,-B-C,-D)$ \\ \hline
\end{tabular}

\caption{Table of queries of probabilistic scheme when $\boldsymbol{\lambda}=(0,0,0,0)$, for $\theta=A$.}
\label{table:random demand}
\end{table}

\cref{table:random demand} illustrates the impact of the vector \( \boldsymbol{\mu} \), which determines the set of files involved in the queries. Moreover, for any value of \( \boldsymbol{\mu} \), the user can recover the desired file by summing the responses from all servers, thereby ensuring the reliability of the PIR scheme.

Next, we fix the vector $\boldsymbol{\mu}=(1,1,1,1)$. In~\cref{tabel:mu=1111}, we present all possible choices of $\boldsymbol{\lambda}$ and the corresponding queries when $\theta=A$. 

\begin{table}[htbp]
\tiny
\centering
\begin{tabular}{|c|c|c|c|c|}
\hline
$\boldsymbol{\lambda}$ & $(0,0,0,0)$ & $(1,0,0,0)$ & $(0,1,0,0)$ & $(0,0,1,0)$ \\ \hline
query & $(A+B+D,C,-B-C,-D)$ & $(-A+B+D,C,-B-C,-D)$ & $(A-B+D,C,-B-C,-D)$ & $(A+B+D,-C,C-D,-D)$ \\ \hline
$\boldsymbol{\lambda}$ & $(0,0,0,1)$ & $(1,1,0,0)$ & $(1,0,1,0)$ & $(1,0,0,1)$ \\ \hline
query & $(A+B-D,C,-B-C,D)$ & $(-A-B+D,C,B-C,-D)$ & $(-A+B+D,-C,-B+C,-D)$ & $(-A+B-D,C,-B-C,D)$ \\ \hline
$\boldsymbol{\lambda}$ & $(0,1,1,0)$ & $(0,1,0,1)$ & $(0,0,1,1)$ & $(1,1,1,0)$ \\ \hline
query & $(A-B+D,-C,B+C,-D)$ & $(A-B-D,C,B-C,D)$ & $(A+B-D,-C,-B+C,D)$ & $(-A-B+D,-C,B+C,-D)$ \\ \hline
$\boldsymbol{\lambda}$ & $(1,1,0,1)$ & $(1,0,1,1)$ & $(0,1,1,1)$ & $(1,1,1,1)$ \\ \hline
query & $(-A-B-D,C,B-C,D)$ & $(-A+B-D,-C,-B+C,D)$ & $(A-B-D,-C,B+C,D)$ & $(-A-B-D,-C,B+C,D)$ \\ \hline
\end{tabular}
\caption{Table of queries of probabilistic scheme when $\boldsymbol{\mu}=(1,1,1,1)$, for $\theta=A$.}
\label{tabel:mu=1111}
\end{table}

\cref{tabel:mu=1111} actually demonstrates the effect of vector $\boldsymbol{\lambda}$, which controls the positivity or negativity of the corresponding file in the queries through the corresponding components in the vector, ensuring that the probability of requesting positive or negative values from the server is the same.

Finally, from the perspective of the servers, we have a more intuitive understanding of how this scheme achieves privacy.

If the query $\mathcal{Q}_1$ received by server $S_1$ is $A+B+D$, then the possible values of $\theta,\boldsymbol{\mu} ,  \boldsymbol{\lambda}$ and all queries are as shown in~\cref{table:a+b+d}. \cref{table:a+b+d} clearly shows that when the server $S_1$ receives query $\mathcal{Q}_1=A+B+D$, the target file $W_{\theta}$ has an equal probability being any one of the four available files. Therefore, for each server, the received query does not reflect the actual needs of the users.
\begin{table}[htbp]
\footnotesize
\centering
\begin{tabular}{|c|cc|}
\hline
           & \multicolumn{2}{c|}{$\mathcal{Q}_1=A+B+D$}                                                                                                     \\ \hline
vector     & \multicolumn{1}{c|}{$\boldsymbol{\mu}=(1,1,0,1),\boldsymbol{\lambda}=(0,0,0,0)$} & $\boldsymbol{\mu}=(1,1,0,1),\boldsymbol{\lambda}=(0,0,1,0)$ \\ \hline
$\theta=A$ & \multicolumn{1}{c|}{$(A+B+D,0,-B,-D)$}                                           & $(A+B+D,0,-B,-D)$                                           \\ \hline
$\theta=B$ & \multicolumn{1}{c|}{$(A+B+D,-A,0,-D)$}                                           & $(A+B+D,-A,0,-D)$                                           \\ \hline
$\theta=C$ & \multicolumn{1}{c|}{$(A+B+D,-A,-B-C,-D)$}                                        & $(A+B+D,-A,-B+C,-D)$                                         \\ \hline
$\theta=D$ & \multicolumn{1}{c|}{$(A+B+D,-A,-B,0)$}                                           & $(A+B+D,-A,-B,0)$                                           \\ \hline
vector     & \multicolumn{1}{c|}{$\boldsymbol{\mu}=(1,1,1,1),\boldsymbol{\lambda}=(0,0,0,0)$} & $\boldsymbol{\mu}=(1,1,1,1),\boldsymbol{\lambda}=(0,0,1,0)$ \\ \hline
$\theta=A$ & \multicolumn{1}{c|}{$(A+B+D,C,-B-C,-D)$}                                         & $(A+B+D,-C,-B+C,-D)$                                        \\ \hline
$\theta=B$ & \multicolumn{1}{c|}{$(A+B+D,-A+C,-C,-D)$}                                         & $(A+B+D,-A-C,C,-D)$                                         \\ \hline
$\theta=C$ & \multicolumn{1}{c|}{$(A+B+D,-A+C,-B,-D)$}                                         & $(A+B+D,-A-C,-B,-D)$                                        \\ \hline
$\theta=D$ & \multicolumn{1}{c|}{$(A+B+D,-A+C,-B-C,0)$}                                        & $(A+B+D,-A-C,-B+C,0)$                                        \\ \hline
\end{tabular}
\caption{Possible value of $\theta,\boldsymbol{\mu} \ and\  \boldsymbol{\lambda}$ when $\mathcal{Q}_1=A+B+D$.}
\label{table:a+b+d}
\end{table}

For a general graph, compared to the baseline rate \( \frac{2^{N-1}}{2^{N-1}-1} \cdot \frac{1}{N} \) from~\cite{Sadeh2023bound}, our scheme achieves a strictly better rate:
\[
R(\Pi) = \frac{1}{\sum_{i \in [N]} \left(1 - \frac{1}{2^{d(i)}}\right)},
\]
with subpacketization 1. Furthermore, our construction does not require generating a PIR scheme over complete graph as a subroutine; the scheme can be directly implemented on any general graph.

Compared to the scheme derived from the complete-graph-based construction, our scheme achieves better performance when the average degree of the graph is below 2. While the complete-graph-based approach may yield a higher retrieval rate when the average degree exceeds 2, this construction offers a practical advantage by avoiding the need to solve auxiliary derived schemes. This trade-off between rate and implementation complexity may make our approach more favorable in many practical scenarios.

Finally, in the case of \( r \)-multigraphs ($r \ge 2$), for a general graph $G$, the \emph{r-multigraph extension} of $G$ is $G^{(r)}$, the lower bound of $\mathcal{C}(G)$ is given by \cref{cor:anygraph}, and the lower bound of $\mathcal{C}(G^{(r)})$ is derived from~\cref{thm:mutigraph}, which gives
\[
\mathcal{C}(G^{(r)}) \ge \left(\frac{1}{2 - \frac{1}{2^{r-1}}} \right) \cdot R(\Pi) 
= \left(\frac{1}{2 - \frac{1}{2^{r-1}}} \right) \cdot \left( \frac{4}{3} - \epsilon_N \right) \cdot \frac{1}{N}.
\]
Our probabilistic construction yields the lower bound
\[
\mathcal{C}(G^{(r)}) \ge \frac{1}{N - \sum_{i=1}^{N} \frac{1}{2^{r \cdot d(i)}}},
\]
which is strictly better. In particular, we have:
\[
\left ( \frac{1}{2-\frac{1}{2^{r-1}} }  \right ) \cdot \frac{4}{3} \cdot \frac{1}{N} \le \frac{8}{9} \cdot\frac{1}{N} \le \frac{1}{N} \le \frac{1}{N-\sum_{i=1}^{N}\frac{1}{2^{r\cdot d(i)} }  }.
\]
This confirms that, for general multi-graphs, the probabilistic scheme presented in this section is strictly superior.

\section{Concluding Remarks}\label{section:conclusion}
In this work, we investigated the PIR capacity of graph-based replication systems with non-colluding servers, focusing particularly on complete graphs. We introduced a novel approach to derive improved upper bounds that exploit the symmetry of the underlying graph structure, yielding results that are substantially tighter than existing bounds. From an algorithmic perspective, we developed an explicit recursive construction of PIR schemes that provides enhanced lower bounds on the capacity. Consequently, we reduced the asymptotic gap between the upper and lower bounds for the PIR capacity of complete graphs to approximately~1.0444, thereby making significant progress toward resolving this fundamental problem.

Beyond these concrete advances, our results reveal deeper structural insights into how graph topology constrains information flow in PIR protocols. Notably, our techniques extend naturally to other highly symmetric graphs such as complete bipartite graphs and certain multigraphs. Moreover, the connection we established between deterministic and probabilistic PIR schemes offers new perspectives for designing low-subpacketization protocols, which are crucial for practical deployment. Despite these developments, many fundamental questions remain open:
\begin{itemize}
    \item \textbf{Exact capacity of small graphs.} Can the remaining gap between upper and lower bounds for $\mathcal{C}(K_4)$ be closed? Determining the precise capacity of small graphs remains a challenging but potentially tractable problem.

    \item \textbf{Beyond complete graphs.} Can the recursive construction framework developed for $K_N$ be generalized to yield bounds for other graph classes such as trees, expanders, or bounded-degree graphs?

    \item \textbf{Tight bounds for multigraphs.} The probabilistic PIR scheme we propose for the multigraph $K_N^{(r)}$ is relatively simple and likely far from optimal. Is it possible to establish tighter upper or lower bounds on $\mathcal{C}(K_N^{(r)})$, especially as $r$ increases?

    \item \textbf{Beyond 2-replication.} Our current model assumes each file is replicated on exactly two servers. Can our techniques be extended to hypergraph-based systems, where each file may be stored on more than two servers?

    \item \textbf{Graph density vs PIR capacity.} In~\cref{cor:graph contain}, we showed that $\mathcal{C}(G_1) \ge \mathcal{C}(G_2)$ whenever $G_1 \subseteq G_2$. This raises a broader question: is there a general monotonic relationship between the density or degree distribution of a graph and its PIR capacity?
\end{itemize}

We believe that the methods introduced in this paper, from entropy-based upper bounds to structure-aware PIR constructions, lay the groundwork for further progress in understanding PIR over structured storage systems. In particular, exploring connections with extremal graph theory, information inequalities, and coding-theoretic tools may yield deeper insights into the combinatorial limits of private information retrieval.

\bibliographystyle{abbrv}
\bibliography{PIRoverGraph}

\appendix

\section{Proof of~\cref{lem:symmetric complete}}\label{section:proof of symmetric of complete graph}
   Consider the automorphism group \( \operatorname{Aut}(K_N) \), which coincides with the symmetric group on \( N \) vertices. For an arbitrary PIR scheme \( \Pi \) over \( K_N \), and a desired file indexed by \( \theta \), corresponding to an edge \( e \in E(K_N) \), let \( \mathcal{Q}_i^{\Pi}(\theta) \) denote the query sent to the \( i \)-th server, and let \( A_i^{\Pi} \) denote the corresponding response. Let \( \mathcal{Q}^{\Pi} \) denote the complete collection of queries generated by the user.

Now, for any achievable rate \( R \), let \( \Pi \) be a PIR scheme over \( K_N \) that achieves rate \( R \). For any automorphism \( \sigma \in \operatorname{Aut}(K_N) \), define \( \sigma(\theta) \) to be the index of the file corresponding to the edge \( \sigma(e) \), where \( e \in E(K_N) \) is the edge associated with the original index \( \theta \). 

Given the scheme \( \Pi \) and the automorphism \( \sigma \), we construct a new PIR scheme \( \Pi_{\sigma} \) for retrieving the file indexed by \( \theta \) as follows: the user sends to server \( i \) the query
\begin{equation}
    \mathcal{Q}_i^{\Pi_{\sigma}} := \mathcal{Q}_{\sigma(i)}^{\Pi}(\sigma(\theta)),
\end{equation}
and server \( i \) responds with
\begin{equation}
    A_i^{\Pi_{\sigma}} := A_{\sigma(i)}^{\Pi}.
\end{equation}
    In other words, the scheme \( \Pi_{\sigma} \) can be viewed as applying the original scheme \( \Pi \) to the relabeled graph \( \sigma(K_N) \), with the desired file index transformed to \( \sigma(\theta) \). Since \( \sigma(K_N) \) is merely a permutation of the vertices of \( K_N \), the reliability of \( \Pi \) guarantees that the user can correctly recover the file corresponding to \( \sigma(\theta) \). The privacy of \( \Pi_{\sigma} \) is also preserved, as the scheme effectively executes \( \Pi \) under a relabeling of the servers, which does not reveal any information about the identity of the requested file. Furthermore, the rate of \( \Pi_{\sigma} \) is identical to that of \( \Pi \).

   Next, to construct a PIR scheme that satisfies~\eqref{eq:lem:symmetric complete 1} and~\eqref{eq:lem:symmetric complete 2} while preserving the rate of \( \Pi \), we define a new scheme \( \Pi' \) as follows. The scheme \( \Pi' \) randomly selects one of the schemes \( \Pi_{\sigma} \), where \( \sigma \in \operatorname{Aut}(K_N) \), and uses it to retrieve the desired file. The procedure is described below:
\begin{itemize}
    \item The user selects a desired file \( \theta \in E(K_N) \) and independently samples an automorphism \( \sigma \in \operatorname{Aut}(K_N) \) uniformly at random.
    \item The user generates the queries \( \mathcal{Q}^{\Pi'} := \mathcal{Q}^{\Pi_{\sigma}} \) and sends them to the corresponding servers.
    \item Each server \( i \) responds with the answer \( A_i^{\Pi'} := A_i^{\Pi_{\sigma}} \).
\end{itemize}
Clearly, \( \Pi' \) achieves the same rate as \( \Pi \), since it is equivalent to taking a uniform average over all schemes \( \Pi_{\sigma} \) for \( \sigma \in \operatorname{Aut}(K_N) \). 

To prove \eqref{eq:lem:symmetric complete 1} for any \( k \in [N-2] \), let \( G_k \) denote the induced subgraph of \( K_N \) on the vertex set \( S_{[k+1, N]} \). Consider two servers \( S_i \) and \( S_j \) with \( i \neq j > k \). By the vertex-transitivity of \( K_N \), there exists an automorphism \( \tau \in \operatorname{Aut}(K_N) \) such that \( \tau(S_i) = S_j \), \( \tau(S_j) = S_i \), and \( \tau \) fixes all other vertices.
    
    Then, we have
    \begin{align}
        H(A_i^{\Pi'} \mid \WW_{[k]}, \QQ^{\Pi'}) 
        &= \frac{1}{|\operatorname{Aut}(K_N)|} \sum_{\sigma \in \operatorname{Aut}(K_N)} H(A_i^{\Pi_{\sigma}} \mid \WW_{[k]}, \QQ^{\Pi_{\sigma}}) \label{eq:lem:symmetric complete 3} \\
        &= \frac{1}{|\operatorname{Aut}(K_N)|} \sum_{\sigma \in \operatorname{Aut}(K_N)} H(A_{\sigma(i)}^{\Pi} \mid \sigma(\WW_{[k]}), \QQ^{\Pi}) \notag \\
        &= \frac{1}{|\operatorname{Aut}(K_N)|} \sum_{\sigma \in \operatorname{Aut}(K_N)} H(A_{\sigma(\tau(i))}^{\Pi} \mid \sigma(\tau(\WW_{[k]})), \QQ^{\Pi}) \label{eq:lem:symmetric complete 4} \\
        &= \frac{1}{|\operatorname{Aut}(K_N)|} \sum_{\sigma \in \operatorname{Aut}(K_N)} H(A_{\sigma(j)}^{\Pi} \mid \sigma(\WW_{[k]}), \QQ^{\Pi}) \label{eq:lem:symmetric complete 5} \\
        &= \frac{1}{|\operatorname{Aut}(K_N)|} \sum_{\sigma \in \operatorname{Aut}(K_N)} H(A_j^{\Pi_{\sigma}} \mid \WW_{[k]}, \QQ^{\Pi_{\sigma}}) \notag \\
        &= H(A_j^{\Pi'} \mid \WW_{[k]}, \QQ^{\Pi'}), \notag
    \end{align}
    where \eqref{eq:lem:symmetric complete 3} follows from the uniform randomness of the choice of \( \sigma \); \eqref{eq:lem:symmetric complete 4} uses the change of variables \( \sigma \mapsto \sigma \circ \tau \) (noting that \( \sigma \circ \tau \) runs over all automorphisms as \( \sigma \) does); and \eqref{eq:lem:symmetric complete 5} follows from \( \tau(i) = j \) and \( \tau(\WW_{[k]}) = \WW_{[k]} \) (since \( \tau \) only swaps \( i \) and \( j \) and fixes the rest). Thus, \eqref{eq:lem:symmetric complete 1} is established.
    
   An almost identical argument establishes~\eqref{eq:lem:symmetric complete 2}, and we omit the repeated details here.

\section{Table for PIR scheme over $K_4$}\label{section:Table for PIR scheme over k4}
In~\cref{subsection:PIR scheme K4}, we construct a deterministic PIR scheme over \( K_4 \) using a greedy approach. The specific queries of this PIR scheme are presented in~\cref{table:specific K4}.

\begin{table}[hbtp]
\centering
\begin{tabular}{|c|c|c|c|c|c|}
\hline
Step & Patterns & $S_1$ & $S_2$ & $S_3$ & $S_4$ \\ \hline
\multirow{8}{*}{1} & \multirow{8}{*}{1} & $a_1$ &  &  &  \\ \cline{3-6}
 &  & $a_2$ &  &  &  \\ \cline{3-6}
 &  & $a_3$ &  &  &  \\ \cline{3-6}
 &  & $a_4$ &  &  &  \\ \cline{3-6}
 &  &  & $a_5$ &  &  \\ \cline{3-6}
 &  &  & $a_6$ &  &  \\ \cline{3-6}
 &  &  & $a_7$ &  &  \\ \cline{3-6}
 &  &  & $a_8$ &  &  \\ \hline
\multirow{40}{*}{2} & \multirow{16}{*}{2} & $a_9+b_1$ &  & $b_1$ &  \\ \cline{3-6}
 &  & $a_{10}+b_2$ &  & $b_2$ &  \\ \cline{3-6}
 &  & $a_{11}+b_3$ &  & $b_3$ &  \\ \cline{3-6}
 &  & $a_{12}+b_4$ &  & $b_4$ &  \\ \cline{3-6}
 &  & $a_{13}+c_1$ &  &  & $c_1$ \\ \cline{3-6}
 &  & $a_{14}+c_2$ &  &  & $c_2$ \\ \cline{3-6}
 &  & $a_{15}+c_3$ &  &  & $c_3$ \\ \cline{3-6}
 &  & $a_{16}+c_4$ &  &  & $c_4$ \\ \cline{3-6}
 &  &  & $a_{17}+d_1$ & $d_1$ &  \\ \cline{3-6}
 &  &  & $a_{18}+d_2$ & $d_2$ &  \\ \cline{3-6}
 &  &  & $a_{19}+d_3$ & $d_3$ &  \\ \cline{3-6}
 &  &  & $a_{20}+d_4$ & $d_4$ &  \\ \cline{3-6}
 &  &  & $a_{21}+e_1$ &  & $e_1$ \\ \cline{3-6}
 &  &  & $a_{22}+e_2$ &  & $e_2$ \\ \cline{3-6}
 &  &  & $a_{23}+e_3$ &  & $e_3$ \\ \cline{3-6}
 &  &  & $a_{24}+e_4$ &  & $e_4$ \\ \cline{2-6}
 & \multirow{16}{*}{3} & $a_{25}+b_{5}$ & $d_5$ & $b_5+d_5$ &  \\ \cline{3-6}
 &  & $a_{26}+b_{6}$ & $d_6$ & $b_6+d_6$ &  \\ \cline{3-6}
 &  & $a_{27}+b_{7}$ & $d_7$ & $b_7+d_7$ &  \\ \cline{3-6}
 &  & $a_{28}+b_{8}$ & $d_8$ & $b_8+d_8$ &  \\ \cline{3-6}
 &  & $a_{29}+c_{5}$ & $e_5$ &  & $c_5+e_5$ \\ \cline{3-6}
 &  & $a_{30}+c_{6}$ & $e_6$ &  & $c_6+e_6$ \\ \cline{3-6}
 &  & $a_{31}+c_{7}$ & $e_7$ &  & $c_7+e_7$ \\ \cline{3-6}
 &  & $a_{32}+c_{8}$ & $e_8$ &  & $c_8+e_8$ \\ \cline{3-6}
 &  & $b_9$ & $a_{33}+d_9$ & $b_9+d_9$ &  \\ \cline{3-6}
 &  & $b_{10}$ & $a_{34}+d_{10}$ & $b_{10}+d_{10}$ &  \\ \cline{3-6}
 &  & $b_{11}$ & $a_{35}+d_{11}$ & $b_{11}+d_{11}$ &  \\ \cline{3-6}
 &  & $b_{12}$ & $a_{36}+d_{12}$ & $b_{12}+d_{12}$ &  \\ \cline{3-6}
 &  & $c_{9}$ & $a_{37}+e_{9}$ &  & $c_{9}+e_{9}$ \\ \cline{3-6}
 &  & $c_{10}$ & $a_{38}+e_{10}$ &  & $c_{10}+e_{10}$ \\ \cline{3-6}
 &  & $c_{11}$ & $a_{39}+e_{11}$ &  & $c_{11}+e_{11}$ \\ \cline{3-6}
 &  & $c_{12}$ & $a_{40}+e_{12}$ &  & $c_{12}+e_{12}$ \\ \cline{2-6}
 & \multirow{8}{*}{4} & $a_{41}+b_{13}$ &  & $b_{13}+f_1$ & $f_1$ \\ \cline{3-6}
 &  & $a_{42}+b_{14}$ &  & $b_{14}+f_2$ & $f_2$ \\ \cline{3-6}
 &  & $a_{43}+c_{13}$ &  & $f_3$ & $c_{13}+f_3$ \\ \cline{3-6}
 &  & $a_{44}+c_{14}$ &  & $f_4$ & $c_{14}+f_4$ \\ \cline{3-6}
 &  &  & $a_{45}+d_{13}$ & $d_{13}+f_5$ & $f_5$ \\ \cline{3-6}
 &  &  & $a_{46}+d_{14}$ & $d_{14}+f_6$ & $f_6$ \\ \cline{3-6}
 &  &  & $a_{47}+e_{13}$ & $f_7$ & $e_{13}+f_7$ \\ \cline{3-6}
 &  &  & $a_{48}+e_{14}$ & $f_8$ & $e_{14}+f_8$ \\ \hline
\end{tabular}
\end{table}

\begin{table}[hbtp]
\centering
\begin{tabular}{|c|c|c|c|c|c|}
\hline
Step & Patterns & $S_1$ & $S_2$ & $S_3$ & $S_4$ \\ \hline
\multirow{36}{*}{3} & \multirow{16}{*}{5} & $a_{49}+b_{15}+c_{15}$ &  & $b_{15}+f_{9}$ & $c_{15}+f_{9}$ \\ \cline{3-6}
 &  & $a_{50}+b_{16}+c_{16}$ &  & $b_{16}+f_{10}$ & $c_{16}+f_{10}$ \\ \cline{3-6}
 &  & $a_{51}+b_{17}+c_{16}$ &  & $b_{17}+f_{11}$ & $c_{17}+f_{11}$ \\ \cline{3-6}
 &  & $a_{52}+b_{18}+c_{16}$ &  & $b_{18}+f_{12}$ & $c_{18}+f_{12}$ \\ \cline{3-6}
 &  & $a_{53}+b_{19}+c_{16}$ &  & $b_{19}+f_{13}$ & $c_{19}+f_{13}$ \\ \cline{3-6}
 &  & $a_{54}+b_{20}+c_{16}$ &  & $b_{20}+f_{14}$ & $c_{20}+f_{14}$ \\ \cline{3-6}
 &  & $a_{55}+b_{21}+c_{16}$ &  & $b_{21}+f_{15}$ & $c_{21}+f_{15}$ \\ \cline{3-6}
 &  & $a_{56}+b_{22}+c_{16}$ &  & $b_{22}+f_{16}$ & $c_{22}+f_{16}$ \\ \cline{3-6}
 &  &  & $a_{57}+d_{15}+e_{15}$ & $d_{15}+f_{17}$ & $e_{15}+f_{17}$ \\ \cline{3-6}
 &  &  & $a_{58}+d_{16}+e_{16}$ & $d_{16}+f_{18}$ & $e_{16}+f_{18}$ \\ \cline{3-6}
 &  &  & $a_{59}+d_{17}+e_{17}$ & $d_{17}+f_{19}$ & $e_{17}+f_{19}$ \\ \cline{3-6}
 &  &  & $a_{60}+d_{18}+e_{18}$ & $d_{18}+f_{20}$ & $e_{18}+f_{20}$ \\ \cline{3-6}
 &  &  & $a_{61}+d_{19}+e_{19}$ & $d_{19}+f_{21}$ & $e_{19}+f_{21}$ \\ \cline{3-6}
 &  &  & $a_{62}+d_{20}+e_{20}$ & $d_{20}+f_{22}$ & $e_{20}+f_{22}$ \\ \cline{3-6}
 &  &  & $a_{63}+d_{21}+e_{21}$ & $d_{21}+f_{23}$ & $e_{21}+f_{23}$ \\ \cline{3-6}
 &  &  & $a_{64}+d_{22}+e_{22}$ & $d_{22}+f_{24}$ & $e_{22}+f_{24}$ \\ \cline{2-6}
 & \multirow{2}{*}{6} & $a_{65}+b_{23}+c_{23}$ & $d_{23}+e_{23}$ & $b_{23}+d_{23}$ & $c_{23}+e_{23}$ \\ \cline{3-6}
 &  & $b_{24}+c_{24}$ & $a_{66}+d_{24}+e_{24}$ & $b_{24}+d_{24}$ & $c_{24}+e_{24}$ \\ \cline{2-6}
 & \multirow{18}{*}{7} & $a_{67}+b_{25}+c_{25}$ & $d_{25}+e_{25}$ & $b_{25}+d_{25}+f_{25}$ & $c_{25}+e_{25}+f_{25}$ \\ \cline{3-6}
 &  & $a_{68}+b_{26}+c_{26}$ & $d_{26}+e_{26}$ & $b_{26}+d_{26}+f_{26}$ & $c_{26}+e_{26}+f_{26}$ \\ \cline{3-6}
 &  & $a_{69}+b_{27}+c_{27}$ & $d_{27}+e_{27}$ & $b_{27}+d_{27}+f_{27}$ & $c_{27}+e_{27}+f_{27}$ \\ \cline{3-6}
 &  & $a_{70}+b_{28}+c_{28}$ & $d_{28}+e_{28}$ & $b_{28}+d_{28}+f_{28}$ & $c_{28}+e_{28}+f_{28}$ \\ \cline{3-6}
 &  & $a_{71}+b_{29}+c_{29}$ & $d_{29}+e_{29}$ & $b_{29}+d_{29}+f_{29}$ & $c_{29}+e_{29}+f_{29}$ \\ \cline{3-6}
 &  & $a_{72}+b_{30}+c_{30}$ & $d_{30}+e_{30}$ & $b_{30}+d_{30}+f_{30}$ & $c_{30}+e_{30}+f_{30}$ \\ \cline{3-6}
 &  & $a_{73}+b_{31}+c_{31}$ & $d_{31}+e_{31}$ & $b_{31}+d_{31}+f_{31}$ & $c_{31}+e_{31}+f_{31}$ \\ \cline{3-6}
 &  & $a_{74}+b_{32}+c_{32}$ & $d_{32}+e_{32}$ & $b_{32}+d_{32}+f_{32}$ & $c_{32}+e_{32}+f_{32}$ \\ \cline{3-6}
 &  & $a_{75}+b_{33}+c_{33}$ & $d_{33}+e_{33}$ & $b_{33}+d_{33}+f_{33}$ & $c_{33}+e_{33}+f_{33}$ \\ \cline{3-6}
 &  & $b_{34}+c_{34}$ & $a_{76}+d_{34}+e_{34}$ & $b_{34}+d_{34}+f_{34}$ & $c_{34}+e_{34}+f_{34}$ \\ \cline{3-6}
 &  & $b_{35}+c_{35}$ & $a_{77}+d_{35}+e_{35}$ & $b_{35}+d_{35}+f_{35}$ & $c_{35}+e_{35}+f_{35}$ \\ \cline{3-6}
 &  & $b_{36}+c_{36}$ & $a_{78}+d_{36}+e_{36}$ & $b_{36}+d_{36}+f_{36}$ & $c_{36}+e_{36}+f_{36}$ \\ \cline{3-6}
 &  & $b_{37}+c_{37}$ & $a_{79}+d_{37}+e_{37}$ & $b_{37}+d_{37}+f_{37}$ & $c_{37}+e_{37}+f_{37}$ \\ \cline{3-6}
 &  & $b_{38}+c_{38}$ & $a_{80}+d_{38}+e_{38}$ & $b_{38}+d_{38}+f_{38}$ & $c_{38}+e_{38}+f_{38}$ \\ \cline{3-6}
 &  & $b_{39}+c_{39}$ & $a_{81}+d_{39}+e_{39}$ & $b_{39}+d_{39}+f_{39}$ & $c_{39}+e_{39}+f_{39}$ \\ \cline{3-6}
 &  & $b_{40}+c_{40}$ & $a_{82}+d_{40}+e_{40}$ & $b_{40}+d_{40}+f_{40}$ & $c_{40}+e_{40}+f_{40}$ \\ \cline{3-6}
 &  & $b_{41}+c_{41}$ & $a_{83}+d_{41}+e_{41}$ & $b_{41}+d_{41}+f_{41}$ & $c_{41}+e_{41}+f_{41}$ \\ \cline{3-6}
 &  & $b_{42}+c_{42}$ & $a_{84}+d_{42}+e_{42}$ & $b_{42}+d_{42}+f_{42}$ & $c_{42}+e_{42}+f_{42}$ \\ \hline
\end{tabular}
\caption{The table of specific PIR scheme over $K_4$ with rate $\frac{7}{20}$ and sub-packetization $84$.}
\label{table:specific K4}
\end{table}

\section{Proof of~\cref{lemma:specific formula of xk}}\label{section:specific formula of xk}
We first focus on deriving a closed-form expression for \( x_k \) when \( 1 \le k \le \left\lfloor \frac{N}{2} \right\rfloor + 1 \). Combining the recursive relations for \( x_k \) and \( z_k \) given in~\eqref{eq:case 1}, we obtain:
\begin{equation}
    x_{k+2} = \frac{N - k + 1}{2} x_{k+1} - \frac{k}{2} x_k. \label{eq:recursive relation x_k}
\end{equation}

Fix an integer \( N \ge 3 \). Define a new sequence \( \{g_k\}_{k \ge 0} \) satisfying the same recurrence:
\[
g_{k+2} = \frac{N - k + 1}{2} g_{k+1} - \frac{k}{2} g_k, \quad \text{with } g_0 = 0,\, g_1 = 1.
\]
Then, by uniqueness of solutions to the recurrence, we have \( x_k = g_k \) for all \( k \le \left\lfloor \frac{N}{2} \right\rfloor + 1 \).

Let \( G_N(t) := \sum_{k=0}^\infty g_k t^k \) denote the generating function of the sequence \( \{g_k\} \). We now derive a differential equation for \( G_N(t) \). The recurrence implies:
\begin{align*}
\sum_{k=0}^\infty g_{k+2} t^k &= \frac{G_N(t) - t}{t^2}, \\
\sum_{k=0}^\infty \frac{N - k + 1}{2} g_{k+1} t^k &= -\frac{1}{2} G_N'(t) + \frac{N + 2}{2} \cdot \frac{G_N(t)}{t}, \\
\sum_{k=0}^\infty \frac{kg_k}{2} t^k &= \frac{1}{2} t G_N'(t).
\end{align*}

Substituting these into the recurrence yields the following first-order linear differential equation:
\begin{equation}
    (t^2 + t^3) G_N'(t) + (2 - (N + 2)t) G_N(t) - 2t = 0. \label{eq:original differential equation}
\end{equation}

We first solve the associated homogeneous equation:
\begin{equation}
    (t^2 + t^3) G_N'(t) + (2 - (N + 2)t) G_N(t) = 0. \label{eq:homogeneous equation}
\end{equation}
Rewriting this as a separable ODE gives:
\[
\frac{dG_N(t)}{G_N(t)} = \frac{(N + 2)t - 2}{t^2(1 + t)} \, dt.
\]
Integrating both sides, we find that the general solution to~\eqref{eq:homogeneous equation} is:
\[
G_N(t) = C \left( \frac{t}{1 + t} \right)^{N + 4} e^{\frac{2}{t}},
\]
for some constant \( C \).

To solve the original inhomogeneous equation~\eqref{eq:original differential equation}, we apply the method of variation of parameters. We consider a solution of the form
\begin{equation}
    G_N(t) = C(t) \left( \frac{t}{1 + t} \right)^{N + 4} e^{\frac{2}{t}}. \label{eq:variation ansatz}
\end{equation}
Substituting~\eqref{eq:variation ansatz} into~\eqref{eq:original differential equation} and simplifying yields the following differential equation for \( C(t) \):
\begin{equation}
    C'(t) = \frac{2}{t^2} \left( \frac{1 + t}{t} \right)^{N + 3} e^{- \frac{2}{t}}. \label{eq:C differential equation}
\end{equation}

For each integer \( n \ge 0 \), define
\[
J_n(t) := \int \frac{1}{t^2} \left( \frac{1 + t}{t} \right)^n e^{- \frac{2}{t}} \, dt.
\]
Then the general solution to~\eqref{eq:original differential equation} is
\[
G_N(t) = \left( 2 J_{N + 3}(t) + C_0 \right) \left( \frac{t}{1 + t} \right)^{N + 4} e^{\frac{2}{t}},
\]
for some constant \( C_0 \).

We now compute \( J_n(t) \). For \( n \ge 1 \), we have:
\begin{align*}
    2J_n(t) &= \int \frac{2}{t^2} \left( \frac{1 + t}{t} \right)^n e^{- \frac{2}{t}} \, dt \\
    &= \int \left( \frac{1 + t}{t} \right)^n \, d \left( e^{- \frac{2}{t}} \right) \\
    &= \left( \frac{1 + t}{t} \right)^n e^{- \frac{2}{t}} + n \int \frac{1}{t^2} \left( \frac{1 + t}{t} \right)^{n - 1} e^{- \frac{2}{t}} \, dt \\
    &= \left( \frac{1 + t}{t} \right)^n e^{- \frac{2}{t}} + n J_{n - 1}(t).
\end{align*}
In particular,
\[
2J_0(t) = \int \frac{2}{t^2} e^{- \frac{2}{t}} \, dt = e^{- \frac{2}{t}} + C.
\]
By induction, we obtain:
\[
J_n(t) = \sum_{i = 0}^{n} \frac{n!}{2^{n - i + 1} i!} \left( \frac{1 + t}{t} \right)^i e^{- \frac{2}{t}} + C,
\]
for some constant \( C \). Hence,
\begin{equation}
    C(t) = \sum_{i = 0}^{N + 3} \frac{(N + 3)!}{2^{N - i + 3} i!} \left( \frac{1 + t}{t} \right)^i e^{- \frac{2}{t}} + C. \label{eq:Ct expression}
\end{equation}

Substituting into~\eqref{eq:variation ansatz} and $G_N(0)=0$, we obtain
\begin{equation}
    G_N(t) = \sum_{i = 0}^{N + 3} \frac{(N + 3)!}{2^{N - i + 3} i!} \left( \frac{t}{1 + t} \right)^{N + 4 - i}. \label{eq:explicit expression G_N}
\end{equation}

To extract \( g_k \), we compute \( G_N^{(k)}(0) \). Let \( f_i(t) := \left( \frac{t}{1 + t} \right)^i \). Then:
\[
f_i^{(k)}(0) =
\begin{cases}
0, & \text{if } k < i, \\
(-1)^{k - i} k! \binom{k - 1}{i - 1}, & \text{if } k \ge i.
\end{cases}
\]
Hence,
\begin{align*}
    k! g_k &= G_N^{(k)}(0) = \sum_{i = 0}^{N + 3} \frac{(N + 3)!}{2^{N - i + 3} i!} f_{N - i + 4}^{(k)}(0) \\
    &= \sum_{i = 0}^{N + 3} \frac{(N + 3)!}{2^{N - i + 3} i!} (-1)^{k - N + i - 4} k! \binom{k - 1}{N - i + 3}.
\end{align*}
Therefore, for \( k \le \left\lfloor \frac{N}{2} \right\rfloor + 1 \), we have:
\begin{align*}
    x_k = g_k 
    &= \sum_{i = N - k + 4}^{N + 3} \frac{(N + 3)!}{2^{N - i + 3} i!} (-1)^{k - N + i - 4} \binom{k - 1}{N - i + 3} \\
    &= \sum_{j = 0}^{k - 1} \frac{(N + 3)!}{2^j (N - j + 3)!} (-1)^{k - j - 1} \binom{k - 1}{j}.
\end{align*}

For \( k_0 := \left\lfloor \frac{N}{2} \right\rfloor + 2 \), we have:
\begin{align}
    x_{k_0} &= \frac{k_0 - 1}{2k_0 - N} (x_{k_0 - 1} + z_{k_0 - 1}) \label{eq:xk0 step 1} \\
    &= \frac{k_0 - 1}{2k_0 - N} \left( 2 x_{k_0 - 1} - \frac{k_0 - 2}{2} x_{k_0 - 2} \right). \label{eq:xk0 step 2}
\end{align}
Substituting the closed forms of \( x_{k_0 - 1} \) and \( x_{k_0 - 2} \), we obtain:
\begin{align*}
    x_{k_0} = \frac{k_0 - 1}{2k_0 - N}\cdot \sum_{j = 0}^{k_0 - 2} \left((-1)^{k_0 - j - 2}\cdot\frac{(N + 3)!}{2^j (N - j + 3)!} \cdot \frac{k_0 - j + 2}{2} \cdot\binom{k_0 - 2}{j}\right).
\end{align*}

For \( k > k_0 \), by iterating the recurrence in~\eqref{eq:case 2}, we get:
\begin{align*}
    x_k &= x_{k_0} \prod_{i = k_0 + 1}^{k} \frac{i - 1}{2i - N} \\
    &= \left( \prod_{i = k_0}^{k} \frac{i - 1}{2i - N} \right) \cdot \sum_{j = 0}^{k_0 - 2} \left((-1)^{k_0 - j - 2}\cdot\frac{(N + 3)!}{2^j (N - j + 3)!} \cdot \frac{k_0 - j + 2}{2} \cdot\binom{k_0 - 2}{j}\right).
\end{align*}
This completes the proof of~\cref{lemma:specific formula of xk}.

\end{document}